\providecommand{\paperversion}{final}\providecommand{\paperversion}{draft}
\def\anonymoussubmission{1}
\RequirePackage{paperversions}

\ifblinded
\documentclass[acmsmall,review,screen,dvipsnames,anonymous,nonacm,nocopyright]{acmart}
\else
\documentclass[acmsmall,screen,dvipsnames,nonacm]{acmart}
\fi

\AtBeginDocument{%
  }

\usepackage{stmaryrd}
\usepackage{mathpartir}
\usepackage{scalerel}
\usepackage{olmacros}
\usepackage{algorithm}
\usepackage{algpseudocode}
\usepackage{lscape}
\usepackage{rotating}
\usepackage{subcaption}
\usepackage{subfiles}
\usepackage{cleveref}
\usepackage{aliascnt}
\usepackage{xcolor}
\usepackage[all, cmtip]{xy}
\usepackage{mdframed}
\usepackage{enumitem}
\usepackage{transparent}
\usepackage{checkpagelimit} \pagelimit{25}
\usepackage{titletoc}
\hypersetup{
    linktocpage=true
}
\ifshowcomments
  \usepackage[checknocomments]{authcomments}
\else
  \usepackage[disable]{authcomments}
\fi

\newcommenter{gary}{1,0,0}
\newcommenter{noam}{0,0,1}
\newcommenter{alex}{0,1,0}
\newcommenter{andrew}{1.0,0.8,0.7}    

\makeatletter
\def\@acmplainindent{0pt}
\def\@acmdefinitionindent{0pt}
\def\@proofindent{\noindent}
\makeatother

\setcopyright{acmlicensed}
\copyrightyear{2018}
\acmYear{2018}
\acmDOI{XXXXXXX.XXXXXXX}

\acmJournal{JACM}
\acmVolume{37}
\acmNumber{4}
\acmArticle{111}
\acmMonth{8}

\theoremstyle{acmplain}
\newtheorem{theorem}{Theorem}[section]

\newaliascnt{lemma}{theorem}
\newtheorem{lemma}[lemma]{Lemma}                \aliascntresetthe{lemma}
\newaliascnt{corollary}{theorem}
\newtheorem{corollary}[corollary]{Corollary}    \aliascntresetthe{corollary}
\newaliascnt{proposition}{theorem}
 \aliascntresetthe{proposition}
\newaliascnt{conjecture}{theorem}
 \aliascntresetthe{conjecture}

\theoremstyle{acmdefinition}
\newaliascnt{definition}{theorem}
\newtheorem{definition}[definition]{Definition} \aliascntresetthe{definition}
\newaliascnt{example}{theorem}
          \aliascntresetthe{example}

\newaliascnt{remark}{theorem}
             \aliascntresetthe{remark}

\theoremstyle{acmplain}
\numberwithin{equation}{section}

\crefname{theorem}{theorem}{theorems}              \Crefname{theorem}{Theorem}{Theorems}
\crefname{lemma}{lemma}{lemmas}                    \Crefname{lemma}{Lemma}{Lemmas}
\crefname{corollary}{corollary}{corollaries}       \Crefname{corollary}{Corollary}{Corollaries}
\crefname{proposition}{proposition}{propositions}  \Crefname{proposition}{Proposition}{Propositions}
\crefname{definition}{definition}{definitions}     \Crefname{definition}{Definition}{Definitions}
\crefname{example}{example}{examples}              \Crefname{example}{Example}{Examples}
\newcommand{\opcoltitle}{Oblivious Probabilistic Outcome Logic}
\providecommand{\shorttitle}{\opcoltitle}
\providecommand{\shortauthors}{}

\begin{document}

\title{\opcoltitle}
\subtitle{Verifying Probabilistic Programs with an Oblivious Adversary}


\author{Hanxi Chen}
\email{hc2296@cornell.edu}
\orcid{0009-0006-4486-7222}
\affiliation{%
  \institution{Cornell University}
  \city{Ithaca}
  \state{New York}
  \country{USA}
}

\author{Noam Zilberstein}
\email{noamzilberstein@me.com}
\orcid{0000-0001-6388-063X}
\affiliation{%
  \institution{New York University}
  \city{New York}
  \state{New York}
  \country{USA}
}

\author{Andrew C. Myers}
\email{andru@cs.cornell.edu}
\orcid{0000-0001-5819-7588}
\affiliation{%
  \institution{Cornell University}
  \city{Ithaca}
  \state{New York}
  \country{USA}
}

\author{Alexandra Silva}
\email{alexandra.silva@cornell.edu}
\orcid{0000-0001-5014-9784}
\affiliation{%
  \institution{Cornell University}
  \city{Ithaca}
  \state{New York}
  \country{USA}
}


\begin{abstract}
In the context of probabilistic programs, an \emph{oblivious adversary} resolves nondeterminism without seeing the outcomes of random draws.
Obliviousness is a common assumption in online algorithms and
distributed protocols, but the complex interaction between random
draws and adversarial choices makes it challenging to reason about 
correctness.
While there has been significant progress toward reasoning about programs that
combine randomization with nondeterminism, most of the work has focused on the
\emph{adaptive} model, whose omniscient view of program state is too powerful to
establish correctness for certain classes of programs.


We introduce \emph{Oblivious Probabilistic Outcome Logic} (\LogicName), a new logic for
reasoning about probabilistic programs with nondeterminism controlled by an
\emph{oblivious} adversary.
Building on Outcome Logic and Probabilistic Separation Logic, \LogicName
models adversarial choice as a resource and uses probabilistic independence
to ensure that random outcomes are hidden from the adversary.
The \LogicName proof system
provides expressive and compositional rules
for case analysis on both random and nondeterministic outcomes, and for
proving almost-sure termination.
Expressivity is tested through several case studies,
including a paging algorithm and a leader election protocol.
The \LogicName metatheory and case studies are mechanized in Lean 4.

\end{abstract}

\maketitle

\bibliographystyle{ACM-Reference-Format}


\section{Introduction}

Randomized algorithms achieve improved accuracy and performance compared to
deterministic ones, especially in sensitive settings like
distributed systems and operating systems.
In such settings, an adversary or
scheduler resolves nondeterministic choices such as choosing requests or scheduling processes.
Randomization makes the program flow unpredictable,
which prevents the adversary from picking the \emph{worst-case} option at every
step \cite{motwani1995randomized}.

The interaction between randomized computation and adversarial schedulers is very complex,
motivating the use of formal methods to establish correctness.
Recent program logics for randomized nondeterminism focus primarily on the \emph{adaptive}
model \cite{zilberstein2025demonic,zilberstein2026probabilistic,polaris}, where the adversary is fully omniscient
and can thus \emph{adapt} its strategy in real time as random draws are revealed.
Adaptive adversaries are 
mathematically convenient and yield well-studied
powerdomain semantics~\cite{jifeng1997probabilistic,mciver2005abstraction,varacca2002powerdomain}.

However, the competitive advantage of a randomized algorithm is stronger when
the adversary is \emph{oblivious}, meaning that its strategy is fixed ahead of the random draws \cite{ben-david1990power}.
Many randomized algorithms, including caching \cite{fiat1991competitive,raghavan1994memory} and
consensus \cite{aspnes2010modular,aspnes2012faster,chor1994wait-free} protocols,
rely on obliviousness to satisfy their intended guarantees.
The adaptive model is unrealistically strong; many threat models assume that
schedulers base their decisions on system-level metadata rather than internal random state,
and correctness relies on the weaker capabilities of the adversary.

On the other hand, attempting to leverage obliviousness surfaces subtle challenges.
Even with a fixed strategy, the adversary can exploit control flow to expose
random entropy, much like implicit
flows leak confidential data in information-flow control
\cite{denning1976lattice}.
A carefully chosen schedule can correlate adversarial choices with prior random draws,
which negates the competitive advantage of using randomization after a leak has occurred.
Care must be taken to design intuitive abstractions and sound
inference rules that handle entropy leakage in a first-class way.

We introduce Oblivious Probabilistic Outcome Logic (\LogicName), a program logic for reasoning about
\emph{randomized} programs with an \emph{oblivious} adversary.
The adversary's strategy is modeled by a predetermined tape of choices, which is
consulted to resolve
nondeterminism \cite{ben-david1990power,fatourou1997efficiency}.
The key insight is our use of \emph{probabilistic independence} to distinguish
private and leaked sources of randomness.
Each random computation branch records how
many tape entries it consumes, and a random choice remains private if it is probabilistically independent from the position in the tape.
For usability, \LogicName hides the schedule-index bookkeeping by tracking which
sources of entropy have been leaked,
yielding intuitive and expressive inference rules.


Taking inspiration
from \emph{probabilistic separation logics}
\cite{psl,li2023lilac,bao2025bluebell,zilberstein2026probabilistic}, \LogicName
extends compositional independence reasoning with 
\emph{schedule consumption} as a new resource.
Tracking schedule consumption allows
us to develop a more powerful \ruleref{Frame} rule
to prove that adversarial behavior is independent from prior random state.
Compared to prior work, which leverages obliviousness in a highly localized way
\cite{fan2025program}, \LogicName's separation logic approach supports compositional reasoning about
obliviousness in a global context, enabling verification of programs that
were previously out of scope,
 including an online paging algorithm and a leader election protocol. 
Our contributions are summarized below:

\paragraph{Denotational Semantics.}
In \Cref{sec:semantics}, we define a denotational model for probabilistic
programs with an oblivious adversary. Random computation is modeled using a probability distribution monad
and the scheduling tape is stored in a state monad.
We establish that the monad composition is well-behaved,
give fixed point semantics for unbounded loops, and show that the
resulting model is a refinement of the adaptive powerdomain semantics,
justifying the reuse of adaptive proof rules.

\paragraph{Schedule Consumption as a Resource.}
\Cref{sec:assertions} develops an assertion language for expressing properties
about probability distributions, including probabilistic independence. Going
beyond prior work, we encode schedule consumption as an Iris-style resource
\cite{iris1,iris}, and use it to rule out  leakage of entropy via side channels,
where subtle correlations can creep in.

\paragraph{Oblivious Probabilistic Outcome Logic.}

\Cref{sec:logic} presents \LogicName, a new logic for reasoning about
probabilistic programs with an oblivious adversary. The rules of \LogicName are
designed to be compositional,
which is challenging in cases where earlier
program steps leak entropy to the scheduler via side channels, thereby
changing the amount of information available in future steps. Schedule
consumption is therefore crucial to soundness, but \LogicName neatly hides the
details behind intuitive yet expressive inference rules. We also provide a rule
to prove almost sure termination.

\paragraph{Case Studies.}

In \Cref{sec:examples}, we validate \LogicName on representative examples where the distinction between
adaptive and oblivious scheduling is essential. Those examples come
from diverse domains, including online algorithms and consensus protocols.
Together, the case studies illustrate that \LogicName{} is powerful
enough to intuitively handle the kinds of patterns that arise when
designing algorithms to leverage the obliviousness of the adversary,
even those using unbounded loops.

\paragraph{Lean Mechanization.}
The full metatheory of \LogicName and case study proofs are
mechanized in Lean 4 \cite{moura2015lean,moura2021lean}.
All proofs were first written on paper (see appendix) and then mechanized in Lean with the help of Claude Code,
which was given the \LaTeX\ source of the proofs. The resulting artifact was then inspected and modified by hand for accuracy.

\medskip

\noindent We begin in \Cref{sec:overview} with the key subtleties and approach.
The technical content
(\Cref{sec:semantics,sec:assertions,sec:logic,sec:examples}) is followed by a
discussion of related work (\Cref{sec:related}) and next steps
(\Cref{sec:conclusion}).

\section{Overview}
\label{sec:overview}




Oblivious adversaries were first studied in the context of online algorithms,
which process inputs as they arrive.
\citet{ben-david1990power} showed that correctness for some randomized online
algorithms requires the adversary choosing the inputs to be
oblivious, meaning that the adversary cannot see random choices.
For example, consider the paging algorithm
in \Cref{fig:intro-paging}.
There are two memory pages, indexed by 0 and 1, and one of them is held in the
cache $c$. Each round, the adversary requests a page $r$ from the list
$[0,1]$.
If $r=c$, then no fetch is needed. But if $r \neq c$, then $r$ must be
fetched and the cache miss count $m$ is incremented. In that case, a random page
is also evicted from the cache.

\begin{figure}
\[
\begin{array}{l}
c \samp \bern{\tfrac12}\fatsemi i\coloneqq 0\fatsemi m \coloneqq 0\fatsemi \; \\
\whl{i < n}{} \\
  \quad i \coloneqq i+1\fatsemi r \gets [0,1] \fatsemi  \phantom{x} \\
  \quad \ift{r \neq c}{
  c \samp \bern{\tfrac12} \fatsemi m \coloneqq m+1}
\end{array}
\]
\caption{A simple online paging algorithm.}
\label{fig:intro-paging}
\end{figure}

An adaptive adversary could see the outcome of the random eviction
$c\samp\bern{\frac12}$, and choose its request accordingly to force a cache miss
on every round.
However, an oblivious adversary cannot see which page was evicted, so its next
request is \emph{probabilistically independent} from the cached page.
The probability of getting all misses is only $\frac{1}{2^n}$ against an
oblivious adversary whereas an adaptive one can force all requests to be
misses.
We will now give a glimpse of how Oblivious Probabilistic Outcome Logic (\LogicName) enables reasoning about programs like the one in \Cref{fig:intro-paging}.

\subsection{Semantics of Oblivious Adversaries}
\label{sec:overview-semantics}

Informally, \citet{ben-david1990power} defined an oblivious adversary as \emph{``one who must construct the request sequence in
advance (based only on the description of the online algorithm but before any
moves are made!).''}
To demonstrate the definition, let us consider two different compositions of
random and adversarial choices. Below on the left, a coin is first flipped and
then the adversary makes a move. On the right, the adversary picks first, before
the coin flip.
\begin{align}
{\color{NavyBlue}x\samp\bern{\tfrac12}} \fatsemi {\color{BrickRed}y \gets [0,1]}
\qquad\qquad
{\color{BrickRed}y \gets[0,1]} \fatsemi {\color{NavyBlue}x\samp\bern{\tfrac12}}
  \label{eq:adaptive-oblivious-ordering}
\end{align}
The two programs in (\ref{eq:adaptive-oblivious-ordering}) correspond to the following two execution
trees, where $\opnodeinl{p}$ nodes correspond to random choices and $\ndnode$
nodes correspond to adversarial choices.
\begin{mathpar}
\xymatrix@R=.5em@C=.75em{
  &&& \opnode{\frac12}
    \ar[dll]_{x=1}
    \ar[drr]^{x=0}
  \\
  & \ndnode \ar[dl]_{y=1} \ar[dr]^{y=0}
  &&&& \ndnode \ar[dl]_{y=1} \ar[dr]^{y=0}
  \\
  \bullet &&\bullet &&\bullet && \bullet
}

\xymatrix@R=.5em@C=.75em{
  &&& \ndnode
    \ar[dll]_{y=1}
    \ar[drr]^{y=0}
  \\
  & \opnode{\frac12} \ar[dl]_{x=1} \ar[dr]^{x=0}
  &&&& \opnode{\frac12} \ar[dl]_{x=1} \ar[dr]^{x=0}
  \\
  \bullet &&\bullet &&\bullet && \bullet
}
\end{mathpar}
An adaptive adversary \emph{``makes the next request based on the algorithm's answers to previous ones''} \cite{ben-david1990power}.
The left tree (with two $\ndnode$ nodes) gives an adaptive adversary more power;
it can act differently for each outcome of $x$ and could, for example, force $x = y$ at the end of the execution. In the right tree, the adversary makes only one choice and the subsequent coin flips are probabilistically independent from that choice. Therefore, in the second program, $\PP[x=y] = \frac12$.

By contrast, 
the two programs in (\ref{eq:adaptive-oblivious-ordering}) are semantically equivalent in the oblivious adversary
model.
For example, if the
adversary predetermines 1 as its response to the first request, then $y$
is set to $1$ regardless of whether the
adversarial assignment occurs before or after the coin flip.
\begin{mathpar}
\xymatrix@R=.4em@C=.75em{
  &&& \opnode{\frac12}
    \ar[dll]_{x=1}
    \ar[drr]^{x=0}
  \\
  & \ndnode \ar[dl]_{y=1} \ar@{.>}@[gray][dr]^{\color{gray} y=0}
  &&&& \ndnode \ar[dl]_{y=1} \ar@{.>}@[gray][dr]^{\color{gray} y=0}
  \\
  \bullet && {\color{gray} \bullet} &&\bullet && {\color{gray}\bullet}
}

\xymatrix@R=.4em@C=.75em{
  &&& \ndnode
    \ar[dll]_{y=1}
    \ar@{.>}@[gray][drr]^{\color{gray}y=0}
  \\
  & \opnode{\frac12} \ar[dl]_{x=1} \ar[dr]^{x=0}
  &&&& {\transparent{0.5}\opnode{\frac12}} \ar@[gray][dl]_{\color{gray}x=1} \ar@[gray][dr]^{\color{gray}x=0}
  \\
  \bullet &&\bullet &&{\color{gray}\bullet} && {\color{gray}\bullet}
}
\end{mathpar}
Formally, we represent the adversary as an infinite tape or schedule
$s \in\schedule \triangleq \bbN \to \bbN$.
The tape is read starting at index $0$. When the current
index is $i$, the entry $s(i) \bmod n$
supplies an $n$-ary adversarial choice, and then the index
is incremented to $i + 1$.
Random computation and tape manipulation are encapsulated within 
$\calO(X)$, a composition of probability and state monads.
\[
  \calO(X) = {\color{BrickRed}\schedule} \to {\color{BrickRed}\bbN} \to {\color{NavyBlue}\D(}X \times {\color{BrickRed}\bbN}{\color{NavyBlue})}
\]
Above, the state of type $\bbN$ is simply the current index into the scheduling
tape, which
captures the key property of oblivious scheduling: the adversary cannot adapt the
tape to random outcomes, but adversarial choices still advance the current
index. Since different probabilistic branches may consume different numbers of
schedule entries, the denotation records both the memory output and the updated
scheduling index.
The adversarial choice operator $C_1 \nd C_2$ illustrates how the
current schedule entry selects the branch, and the selected branch continues at
the next index.
\[
  \de{C_1 \nd C_2}(\sigma)(s)(n) \triangleq \left\{
    \begin{array}{ll}
      \de{C_1}(\sigma)(s)(n+1) & \text{if}~ s(n) \bmod 2 = 0 \\
      \de{C_2}(\sigma)(s)(n+1) & \text{if}~ s(n) \bmod 2 = 1
    \end{array}
  \right.
\]
Thus, later adversarial choices remain independent of an earlier random branch
only when the branches arrive at the \emph{same scheduling index}, which implies reading
the same schedule entry.
In \Cref{sec:semantics}, we show that $\calO$ is a monad, and
a directed-complete partial order (DCPO), allowing construction of the denotational model $\de{C} \colon \Mem \to \calO(\Mem)$, where $\Mem \triangleq \Var\to\Val$.


\subsection{Obliviousness as Probabilistic Independence}
\label{sec:overview-sep}
Probabilistic independence is a powerful tool for \emph{compositional} reasoning
in randomized programs, as demonstrated in Probabilistic Separation Logic (\psl)
\cite{psl} and subsequent work
\cite{li2023lilac,bao2025bluebell,zilberstein2026probabilistic}.
As in standard separation logic \cite{sl,localreasoning}, the \emph{separating
  conjunction} $\varphi\sep\psi$ requires the variables of $\varphi$ and
$\psi$ to be disjoint. In probabilistic separation logics, separation
additionally asserts that the events described by
$\varphi$ and $\psi$ are probabilistically independent. For example, $x \sim \bern{\frac12} \sep y \sim\bern{\frac12}$ means that $x$ and
$y$ are distributed according to \emph{independent} Bernoulli
distributions, which allows us to reconstruct the joint distribution over $x$ and
$y$ through the product of individual probabilities:
\[
  \PP[x = 1 \land y=1]
  \quad=\quad
  \PP[x = 1]\cdot \PP[y=1]
  \quad=\quad
  \tfrac12\cdot\tfrac12
  \quad=\quad
  \tfrac14
\]
Specifications in probabilistic separation logics are given as triples $\triple\varphi{C}\psi$, which roughly mean
that if the program states are initially distributed according to $\varphi$ and
then the command $C$ is run, then the program states will be distributed
according to $\psi$ after the program execution.
Compositional reasoning is usually achieved via the \ruleref{Frame} rule,
which extends a local specification $\triple\varphi{C}\psi$ with an
independently distributed context $\vartheta$. In a naive form, the \ruleref{Frame} rule is:
\[
\inferrule{
  \triple{\varphi}C{\psi}
}{
  \triple{\varphi\sep\vartheta}C{\psi\sep\vartheta}
}{\rulename{Frame-Naive}}
\]
The \ruleref{Frame-Naive} rule is well behaved for purely probabilistic programs.
If $x$ is already distributed according to a Bernoulli distribution, and then $y$ is sampled,
we can use the \ruleref{Frame-Naive} rule to obtain the joint distribution
over $x$ and $y$.
\[
    \inferrule*[right=\ruleref{Frame-Naive}]{
      \triple{\sure{y \mapsto -}}{y \samp \bern{\tfrac12}}{y \sim \bern{\tfrac12}}
    }{
      \triple{x \sim \bern{\tfrac12} \sep \sure{y \mapsto-}}{y \samp \bern{\tfrac12}}{x \sim \bern{\tfrac12} \sep y \sim\bern{\tfrac12}}
    }
\]
However, Demonic Outcome Logic (\dol) and Probabilistic Concurrent Outcome Logic
(\pcol) showed that compositional probabilistic reasoning does not play nicely
with nondeterminism. Let $\varphi \nd \psi$ be an assertion stating an adversary
decides whether $\varphi$ or $\psi$ holds. Then in \dol and \pcol, which are
based on an adaptive adversary model, the following application of
\ruleref{Frame-Naive}
is unsound since the adversary may correlate its
choice of $y$ with $x$, as we saw in \Cref{sec:overview-semantics}.
\begin{equation}\label{eq:frame-naive}
    \inferrule*[right=\ruleref{Frame-Naive}]{
      \triple{\sure{y \mapsto -}}{y \gets [0,1]}{\sure{y\mapsto 0} \nd \sure{y \mapsto 1}}
    }{
      \triple{x \sim \bern{\tfrac12} \sep \sure{y \mapsto-}}{y \gets [0,1]}{x \sim \bern{\tfrac12} \sep (\sure{y\mapsto 0} \nd \sure{y \mapsto 1})}
    }
\end{equation}
Our first achievement in \LogicName was to recover a sound version of
\ruleref{Frame-Naive} for nondeterministic choices, such as the one above.
The key insight in (\ref{eq:frame-naive}) is that the tape index is \emph{probabilistically independent} from the variable $x$, and so the choice of $y$ is also independent from $x$, making the application sound.
Returning to the paging algorithm from \Cref{fig:intro-paging}, the \ruleref{Frame} rule
allows us to prove that the adversary cannot force a cache
miss. Since $c$ is uniformly random at the point where $r$ is adversarially
chosen, \ruleref{Frame} tells us that $c$ and $r$ are independent and therefore
the probability of a cache miss $r \neq c$ is exactly $\frac12$.
However, soundness of \ruleref{Frame} is not so simple, and requires us to carefully track correlations between variables and the tape index.

\subsection{A Hiccup: Leaking Random Bits Through Side Channels}
\label{sec:overview-leak}

The previous example may give the impression that the oblivious adversary is
toothless, since the program behaves like a purely probabilistic one once a schedule is fixed.
That impression is misleading; the schedule $s : \schedule$ is still chosen with \emph{full} knowledge
of the program structure, and a correctness claim must hold against \emph{every}
such schedule.
When control flow is randomized, different branches may consume different
numbers of schedule entries, so a nondeterministic choice may read from a
branch-dependent index. A carefully chosen schedule
can exploit the index offsets to correlate an adversarial choice with
an earlier random outcome.

Consider the program below on the left, which flips a fair coin into $x$, then
makes one adversarial choice for $y$ when $x = 0$ and then a second
 choice for $z$. Now, suppose that the schedule $s$ is such that $s(0) = 1$ and $s(1) = 0$ so that the program execution corresponds to the tree on the right.
\begin{mathpar}
\begin{array}{l}
  x \samp \bern{\tfrac12}\fse \\
  (\iftf{x=1}{\skp}{y\gets [0,1]})\fse\\
  z \gets [0,1]
\end{array}

\vcenter{\vbox{\xymatrix@R=.45em@C=.45em{
  &&&& \opnode{\frac12}
    \ar[dlll]_{x=1}
    \ar[drrr]^{x=0}
  \\
  & \ndnode \ar[dl]_{z=1} \ar@{.>}@[gray][dr]^{\color{gray}z=0}
  &&&&&& \ndnode \ar[dll]_{y=1} \ar@{.>}@[gray][drr]^{\color{gray}y=0}
  \\
  \bullet && {\color{gray}\bullet}
  &&&\ndnode\ar@{.>}@[gray][dl]_{\color{gray}z=1} \ar[dr]^{z=0}
  &&&& {\transparent{0.5}\ndnode}\ar@{.>}@[gray][dl]_{\color{gray}z=1} \ar@[gray][dr]^{\color{gray}z=0}
  \\
  &&&&{\color{gray}\bullet} && \bullet && {\color{gray}\bullet} &&{\color{gray}\bullet}
}}}
\end{mathpar}
Along any execution path, the first adversarial choice goes left and the second goes right, making $x$ and $z$ perfectly correlated: if $x = 1$, then $z = 1$ and if $x=0$, then $z=0$. We therefore clearly cannot conclude that $x$ and $z$ are probabilistically independent, so an application of \ruleref{Frame} around a specification for $z \gets [0,1]$ is unsound, undoing the purported progress from \Cref{sec:overview-sep}.

The crux of the issue is that adversarial choice is an observable
effect (it increments the tape index), and
therefore information can leak to the adversary through the
program's control flow in precisely the same way that implicit flows
leak confidential data in information-flow control \cite{denning1976lattice}.
To control for leaks, \LogicName expresses not only the ways that variables are distributed, but also whether each source of randomness remains \emph{private} from the adversary. The new assertion forms below both state that $a$ and $b$ are distributed like fair coins, but whereas $a$ is unknown to the adversary, $b$ may have been leaked, so future adversarial decisions may be correlated with $b$.
\begin{mathpar}
  a \overset{\Priv}{\sim} \bern{\tfrac12}
  
  b \overset{\Leak}{\sim} \bern{\tfrac12}
\end{mathpar}
In a nondeterministic program, the \ruleref{Frame} rule can only augment the specification with information about private sources of randomness, precluding an unsound application of \ruleref{Frame} in the example above. Instead, that program can be handled via case analysis on $x$, similar to \dol and \pcol. Since \dol and \pcol effectively consider every source of randomness to be leaked, they are consistent with the rules of \LogicName, but \LogicName also provides increased flexibility for dealing with private sources of randomness.
Despite similarities to \dol and \pcol, the metatheory of \LogicName was designed from scratch to incorporate tape consumption as a first-class construct. Although tape indices must be tracked precisely, they are only surfaced through the logic to track leaked entropy sources, thereby retaining usability and enabling verification of new kinds of programs.


\section{Syntax and Semantics}
\label{sec:semantics}

We now describe the syntax and semantics of the command language used in \LogicName.
The syntax, shown in \Cref{fig:syntax}, includes standard imperative constructs,
such as no-ops, assignments,
sequential composition, conditionals, and while-loops, together with
probabilistic and adversarial actions. 

\begin{figure}
\begin{align*}
\Cmd \ni \cmd \Coloneqq
&~ \skp
  \mid a
  \mid \cmd_1 \fatsemi\cmd_2
  \mid \cmd_1 \oplus_\expr \cmd_2
  \mid \cmd_1 \nd \cmd_2
  \mid \cmdcase{\expr}{\cmd_1}{\cmd_2}
  \mid \cmdwhile{\expr}{\cmd}
  \\
\Act \ni a \Coloneqq
&~ \actassign{\exprvar}{\expr}
  \mid \exprvar \samp d(\expr)
  \mid \exprvar \gets \expr
  \\
  \Expr \ni \expr \Coloneqq
  &~ \exprval
    \mid \exprvar
    \mid e_1 + e_2
    \mid e_1 - e_2
    \mid e_1 \le e_2 \mid \cdots
    \tag{\textit{$x\in\Var$, $v \in \Val$}}
  \\
  \textsf{Distr} \ni d \Coloneqq
  &~ \bern{-} \mid \unif{-} 
\end{align*}
\caption{Syntax of program commands $\cmd \in \Cmd$.}
\label{fig:syntax}
\end{figure}

The command $\cmd_1 \oplus_e\cmd_2$ executes $\cmd_1$ with probability
determined by the expression $e$, and $\cmd_2$ with probability $1-e$. By
contrast, in $\cmd_1 \nd \cmd_2$, the adversary chooses whether to execute
$\cmd_1$ or $\cmd_2$. Similarly, $x \samp d(e)$ assigns to $x$ a random value
sampled from the
distribution $d(e)$, whereas $x \gets e$ assigns to $x$ an
adversarially chosen value from the list $e$. A distribution $d$ is either Bernoulli
$\bern{e}$, which returns 1 with probability $e$ and 0 with
probability $1-e$, or uniform $\unif{e}$, which assigns
equal probability to each element of the set $e$. For integer-valued
expressions $n$ and $m$, we let $\unif{n,m} = \unif{\{n, n+1, \ldots,
m\}}$.  Values $\exprval \in \Val$ include integers, rationals, and lists.
Expressions
$\expr \in \Expr$ range over values, variables, and arithmetic,
list, and boolean operations.

\subsection{Preliminaries}

We begin by establishing some preliminary concepts and definitions.

\paragraph{Partial Memories.}

Let $\Mem \triangleq \Var \partialto_\fin \Val$ be the set of \emph{memories}:
finite partial maps from variables to values.
For $\sigma \in \Mem$, let $\dom(\sigma)$ be its domain.
For a finite set of variables $V \subseteq_\fin \Var$, let
$\Mem[V] \triangleq \{\sigma \in \Mem \mid \dom(\sigma) \subseteq V\}$
be the set of memories whose domain is contained in $V$.
The memory order $\mathord\sqsubseteq \subseteq \Mem\times\Mem$ indicates that one memory contains more information than another:
$\sigma \sqsubseteq \tau 
$ iff $
   \dom(\sigma) \subseteq \dom(\tau)
$ and $\sigma(x) = \tau(x)$ for all $x \in \dom(\sigma)$.

Two memories $\sigma, \tau \in \Mem$ are disjoint $\sigma
\Perp \tau$ iff $\dom(\sigma) \cap \dom(\tau) = \emptyset$. When $\sigma \Perp
\tau$, their disjoint union is defined as $(\sigma \uplus \tau)(x) = \sigma(x)$ if $x \in \dom(\sigma)$ and $\tau(x)$ otherwise.
Given $\sigma \in \Mem$ and a finite set of variables
$S \subseteq_\fin \Var$, memory projection is defined as $(\ForgetMem{\sigma}{S})(x) = \sigma(x)$ if $x \in \dom(\sigma)\cap S$.
%

\paragraph{Discrete Probability Distributions.}
Discrete probability distributions $\mu \in \calD(X)$ over a countable set $X$ are mappings from elements
of $X$ to the interval $[0, 1] \subseteq \mathbb{R}$ such that $\sum_{x \in X} \mu(x) = 1$.
The program semantics uses distributions over $\Mem \cup \{\bot\}$, that is, memories $\sigma\in\Mem$ and nontermination $\bot$.
As a shorthand, we write $\calD_\bot(X) \triangleq \calD (X \cup \{\bot\})$.

The Dirac or point-mass distribution $\delta_x$ assigns probability 1 to $x$ and 0 to everything else.
The support of a distribution is the set of elements with nonzero probability $\supp(\mu) \triangleq \{ x \mid \mu(x) > 0 \}$.
For a collection of distributions $(\mu_i)_{i\in I}$ such that $\mu_i\in\calD(X)$ for all $i\in I$ and a discrete distribution $\xi\in\calD(I)$, let
$(\bigoplus_{i\sim\xi}\mu_i)(x) \triangleq\sum_{i\in\supp(\xi)}\xi(i)\cdot\mu_i(x)$ be the ordinary convex mixture.

\paragraph{Monads.} Monads encapsulate sequential composition of effectful computations \cite{moggi91,monads}. We say that $T$
is a monad if there is a \emph{Kleisli Triple} $\tuple{T, \eta, (\cdot)^\dagger}$ such that the unit $\eta \colon X \to T(X)$ and Kleisli composition $(\cdot)^\dagger \colon (X \to T(Y)) \to T(X) \to T(Y)$ operators obey the laws $\eta^\dagger = \mathsf{id}$, $f^\dagger\circ \eta = f$, and $f^\dagger \circ g^\dagger = (f^\dagger \circ g)^\dagger$.
Discrete distributions, $\D_\bot$ are a monad with the operations
$\eta(x) \triangleq \delta_x$ and $f^\dagger(\mu) \triangleq \bigoplus_{x \sim \mu} f_\bot(x)$ where
$f_\bot(x) \triangleq f(x)$ if $x \in X$ and $f_\bot(\bot) = \delta_\bot$.
%
Consider a semantics for probabilistic programs with type $\de{C}_\D \colon \Mem \to \D_\bot(\Mem)$. The $(\cdot)^\dagger$ operator will allow us to define the semantics of $C_1\fatsemi C_2$ compositionally.
Furthermore, the monad laws ensure that sequential composition is well-behaved: $C\fatsemi \skp \equiv \skp\fatsemi C \equiv C$ and $(C_1 \fatsemi C_2)\fatsemi C_3 \equiv C_1 \fatsemi (C_2\fatsemi C_3)$. 
For a more comprehensive introduction to monads and category theory, refer to \citet{awodey2006category}.

\paragraph{Domain Theory.} Domain Theory gives a toolset for defining infinitely iterating computations \cite{scott1970outline,scott1971towards}. Given a partially ordered set (poset) $\tuple{X, \le}$,
a subset $D\subseteq X$ is called \emph{directed} if every pair of elements $x,y\in D$ has an upper bound $z \in D$.
The supremum $\sup S$ is the least upper bound of the set $S$.
A poset $\tuple{X, \le}$ is a \emph{directed complete partial order} (DCPO) if every directed set $D$ has a supremum $\sup D \in X$. If $X$ also has a least element $\bot_X$, then $X$ is a \emph{pointed} DCPO. For any two DCPOs $\tuple{X, \le_X}$ and $\tuple{Y, \le_Y}$, $f \colon X \to Y$ is \emph{Scott continuous} if it is monotone and preserves suprema of directed sets
$
  \sup_{x \in D} f(x) = f(\sup D)
$.
The Kleene Fixed Point Theorem states that if $f \colon X \to X$ is a Scott continuous function on a pointed DCPO $\tuple{X, \le}$, then the least fixed point $\mathsf{lfp}(f)$ exists and is given by
$\sup_{n\in\mathbb N} f^n(\bot)$ where $f^0 = \mathsf{id}$ and $f^{n+1} = f\circ f^n$.
Refer to \citet{abramsky1995domain} for a more complete introduction to Domain Theory.

%

\subsection{The Oblivious Monad}

Building on the preliminaries from the previous section, we now present
our new computational domain $\calO(X)$, the \emph{oblivious monad}, for interpreting probabilistic
programs in the oblivious adversary model. It is constructed as a discrete
probability monad $\D_\bot$ transformed with a \emph{reader} monad to propagate the
immutable scheduler tape and a \emph{state} monad to track the
number of scheduler decisions that have been consumed.
Concretely, an oblivious computation $o \in \calO(X)$ takes a scheduler tape $s \in
\schedule$ and an index $n \in \bbN$ and returns a distribution over
pairs $(x, m)$ where $x \in X$ is an output value and $m \in \bbN$ is
the new scheduler index.
\begin{mathpar}
  \calO(X) \triangleq \schedule \to \mathbb{N} \to \calD_\bot(X \times \mathbb{N})
  
  \text{where}
  
  \schedule \triangleq \bbN \to \bbN
\end{mathpar}
We complete the monad structure for $\calO$ by defining the
operations. The monad unit $\eta \colon X \to \calO(X)$ corresponds to a no-op
computation: given a schedule $s$ and an index $n$, 
$\eta(x)$ is the
Dirac distribution concentrated at $(x,n)$.
The Kleisli composition $(\cdot)^\dagger \colon (X \to \calO(Y)) \to
\calO(X) \to \calO(Y)$ sequences an oblivious computation with a continuation.
For fixed schedule $s$, define
\[
  f_s^\bot(y) \triangleq
  \begin{cases}
    f(x)(s)(k) & \text{if } y=(x,k)\in X\times\bbN,\\
    \delta_\bot & \text{if } y=\bot .
  \end{cases}
\]
Then the monad operations are below, where $(\cdot)^\dagger_{\D_\bot}$ is the Kleisli composition of the
$\D_\bot$ monad.
\begin{mathpar}
\big(\eta(x)\big)(s)(n) \triangleq \delta_{(x, n)}

\big(f^\dagger(o)\big)(s)(n)
\triangleq
(f_s^\bot)^\dagger_{\D_\bot}(o(s)(n))
=
\textstyle\bigoplus_{y\sim o(s)(n)} f_s^\bot(y).
\end{mathpar}
Elements of the domain $\calO(X)$ are ordered by the pointwise extension of the typical distribution order $\mu \led \nu$ iff $\mu(x) \le \nu(x)$ for all $x\in X$. Note that $\mu\led\nu$ implies that $\nu(\bot) \le \mu(\bot)$; as a distribution becomes larger, the probability of nontermination \emph{decreases}.
\[
  o_1 \leo o_2
  \qquad\text{iff}\qquad
  \forall s \in \schedule.\quad \forall n \in \bbN.\quad
    o_1(s)(n) \led o_2(s)(n)
\]
The poset $\tuple{\D_\bot(X), \led}$ is a discrete variant of the probabilistic powerdomain \cite{jones1989probabilistic}, and is thus a DCPO where suprema of directed sets are given by $(\sup D)(x) \triangleq \sup_{\mu\in D}\mu(x)$ if $x\in X$ and $(\sup D)(\bot) \triangleq \inf_{\mu\in D}\mu(\bot)$, which maximizes the probability of all $x\in X$ and minimizes the probability of nontermination.
The pointwise extension of a pointed DCPO is also a
pointed DCPO~\cite[Proposition 2.1.18]{abramsky1995domain};
therefore $\tuple{\calO(X), \leo}$ is a pointed DCPO with bottom $\bot_\calO(s)(n) = \delta_\bot$.

We conclude by defining additional nondeterministic and probabilistic choice operators in $\calO$. Let $I \subseteq\bbN$ be a consecutive index set, \ie $I = \{0, 1, 2, \ldots, n\}$ if $I$ is finite and $I = \bbN$ if it is infinite.
The nondeterministic choice operator $\bignd_{i\in I} o_i$ picks an
element using the $n$th entry in the scheduling tape. If the set of
choices $I$ is infinite, then $s(n)$ is used directly; otherwise, it
is taken modulo $|I|$.
\[
  \Big(\bignd_{i \in I} o_i\Big)(s)(n) \triangleq
    \begin{cases}
      o_{(s(n) \bmod |I|)}(s)(n+1) & \text{if}~ |I| < \infty
      \\
      o_{s(n)}(s)(n+1) & \text{if}~ |I| = \infty
    \end{cases}
\]
To continue the computation, index $n$ is incremented so
that the next adversarial choice in $o_i$ uses the next entry in the
tape.
We also let $\bignd_{i=0}^n o_i = \bignd_{i\in\{0,\ldots, n\}} o_i$.
The ability to make countable choices gives $\calO$ an edge over other monads
for probabilistic nondeterminism
\cite{mciver2005abstraction,zilberstein2025demonic,tix2009semantic,keimel2017mixed}. Indeed,
\citet{apt1986countable} established that powerdomain approaches, which
represent nondeterminism with sets, can only support bounded
nondeterminism. Using a tape of choices avoids this obstruction at the semantic
level by using $s(n)$ directly for countable choices, with finite choices as a
special case.

Given a distribution $\mu \in \D(I)$ over a countable index set $I$, probabilistic choices $\bigoplus_{i \sim\mu} o_i$ are defined below as a convex combination of the random branches according to $\mu$. Unlike nondeterministic choices, probabilistic ones do not increment the scheduling index $n$, since they do not consume any adversary decisions. We also define binary versions of the above operators as syntactic sugar.
\begin{mathpar}
  \Big(\bigoplus_{i \sim \mu} o_i\Big)(s)(n) \triangleq \bigoplus_{i\sim \mu} o_i(s)(n)

o_0 \nd o_1 \triangleq \bignd_{i \in \{0, 1\}} o_i

o_1 \oplus_p o_0 \triangleq \bigoplus_{i \sim \bern{p}} o_i
\end{mathpar}
\subsection{Denotational Semantics}

We are now ready to define the semantics of the command language from \Cref{fig:syntax} in terms of the operations from the previous section. The semantics $\de{\cdot} \colon \Cmd \to \Mem \to \calO(\Mem)$ for commands is shown in \Cref{fig:cmd-denotational}. Given a program $C$ and an input memory $\sigma\in\Mem$, the semantics $\de{C}(\sigma)$ gives us an oblivious monad computation over output memories. By further supplying a schedule $s \in \schedule$ and a start index $n \in \bbN$, we get $\de{C}(\sigma)(s)(n) \in \D_\bot(\Mem \times \bbN)$, a distribution over output memories and scheduling indices. We now explain the semantics of each command.

\begin{figure}
  \centering
\begin{align*}
  \de{\cmdskip}(\sigma)
  &\triangleq \eta(\sigma)\\
  \de{\cmdseq{C_1}{C_2}}(\sigma)
  &\triangleq
    \de{C_2}^\dagger(\de{C_1}(\sigma))\\
  \de{\cmdnondet{C_1}{C_2}}(\sigma)
  &\triangleq
    \de{C_1}(\sigma) \nd \de{C_2}(\sigma)\\
  \de{\cmdchoice{e}{C_1}{C_2}}(\sigma)
  &\triangleq \de{C_1}(\sigma) \oplus_{\de{e}_\Exp(\sigma)}
    \de{C_2}(\sigma)\\
  \de{\cmdcase{e}{C_1}{C_2}}(\sigma)
  &\triangleq 
    \begin{cases}
      \de{C_1}(\sigma)
      &\text{if}\; \de{e}_\Exp(\sigma) = \Ttrue\\
      \de{C_2}(\sigma)
      &\text{if}\; \de{e}_\Exp(\sigma) = \Tfalse\\
    \end{cases}\\
  \de{x \coloneqq e}(\sigma)
    &\triangleq \eta(\sigma[x \coloneqq \de{e}_\Exp(\sigma)]) \\
  \de{x \samp d(\vec e)}(\sigma)
    &\triangleq \textstyle\bigoplus_{v \sim d(\de{\vec e}_\Exp(\sigma))} \eta(\sigma[x \coloneqq v]) \\
  \de{x \gets e}(\sigma)
    &\triangleq \textstyle\bignd_{i = 0}^{|\de{e}_\Exp(\sigma)|-1} \eta(\sigma[x \coloneqq \de{e}_\Exp(\sigma)[i]]) \\
  \de{\cmdwhile{e}{C}}(\sigma)
  &\triangleq \textsf{lfp}(\Phi_{\langle C, e \rangle})(\sigma)\\
    \text{where}\;\Phi_{\langle C, e \rangle}(f)(\sigma) &\triangleq
        \begin{cases}
        f^\dagger (\de{C}(\sigma))&\text{if}\;\de{e}_\Exp(\sigma) = \Ttrue\\
        \eta(\sigma)&\text{if}\;\de{e}_\Exp(\sigma) = \Tfalse
        \end{cases}
\end{align*}
  \caption{Denotational Semantics for programs
    $\de{\cdot} \colon \Cmd \to \Mem \to \calO(\Mem)$.
  }
  \label{fig:cmd-denotational}
\end{figure}

As is typical in monadic semantics, $\skp$ and $C_1\fatsemi C_2$ are interpreted using the monad unit and Kleisli composition, respectively. Nondeterministic choice $C_1 \nd C_2$ is interpreted using the binary nondeterminism operator in $\calO(\Mem)$, defined in the prior section. Similarly, probabilistic choice uses the binary probabilistic operator $- \oplus_p -$, where the parameter $p$ is given by the expression $e$. Expressions are interpreted using $\de{\cdot}_\Exp \colon \Exp \to \Mem \to \Val$, which is defined in the obvious way.
If statements use the truth of the guard $e$ to decide which branch to execute.

Deterministic assignment $x \coloneqq e$ updates $x$ to $\de{e}_\Exp(\sigma)$ in the input memory $\sigma$. Probabilistic sampling $x \samp d(\vec e)$ is defined by first interpreting the input parameter list $\vec e$ to $d$, and then performing a probabilistic choice over the resulting distribution and assigning $x$ to the value in each branch. Adversarial choice $x\gets e$ is defined similarly, but uses $\bignd$ to determine which index $i$ to take from the list $e$.
Finally, the semantics of a loop is the least fixed point of the characteristic
function $\Phi_{\langle C, e \rangle}$ from \Cref{fig:cmd-denotational}.
Letting $f = \de{\whl eC}$, we get the standard loop-unrolling equation:
\[
  \Phi_{\langle C, e \rangle}(\de{\whl eC}) = \de{\iftf e{C \fatsemi (\whl eC)}\skp}
\]
A fixed point of $\Phi_{\langle C, e \rangle}$
thus represents a loop semantics, since $\Phi_{\langle C, e
  \rangle}$ amounts to one loop unrolling.
To prove that such a fixed point exists, it suffices to show that $\Phi_{\langle C,
  e \rangle}$ is Scott continuous. Since $\leo$ is a pointwise extension of
$\led$, Scott continuity follows from Lemma C.9 of \citet{li2025total}.



\subsection{Connections to Adaptive Semantics}

The semantics in \Cref{fig:cmd-denotational} clarifies the relationship between \LogicName and prior logics for adaptive nondeterminism. In \dol and \pcol, nondeterminism is
modeled using the convex powerdomain $\C$, which consists of sets of distributions with extra constraints to
ensure that $\C$ is a monad and a DCPO \cite{zilberstein2025demonic,jifeng1997probabilistic,mciver2005abstraction}.
Whereas in $\C$, nondeterminism can be resolved differently across
random branches, $\calO$ ranges over fixed
schedules.

Thus, after erasing scheduler indices from final distributions, every
oblivious behavior is included in the adaptive semantics.
To show that result formally, let $\de{\cdot}_{\C} \colon \Cmd \to \Mem \to \C(\Mem)$ be the (adaptive) convex
 powerdomain semantics from Figure 1 of \citet{zilberstein2025demonic} and $\mathsf{fst} \colon X\times Y \to X$ be the first projection of a pair.
\begin{theorem}[Refinement of Semantics]
  \label{thm:oblivious-refines-convex-powerset}
  For every $C \in \Cmd$ and $\sigma \in \Mem$,
  \[
    \left\{
      (\eta \circ \mathsf{fst})^\dagger\big(\de{C}(\sigma)(s)(n)\big)
      \mid
      s \in \schedule,\ n \in \bbN
    \right\}
    \quad\subseteq\quad
    \de{C}_{\C}(\sigma).
  \] 
\end{theorem}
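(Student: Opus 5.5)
We prove the inclusion by structural induction on $C$, strengthened so that the induction hypothesis is uniform in the schedule and starting index. Write $\pi_1(\mu) \triangleq (\eta\circ\mathsf{fst})^\dagger(\mu)$ for the marginal that erases indices. The claim for a fixed $C$ is then: for all $\sigma$, $s$ and $n$, we have $\pi_1(\de{C}(\sigma)(s)(n)) \in \de{C}_\C(\sigma)$. We use three standard closure properties of the convex powerdomain semantics of \citet{zilberstein2025demonic}.
\begin{itemize}
\item Binary nondeterminism $\de{C_1\nd C_2}_\C(\sigma)$ contains the union $\de{C_1}_\C(\sigma)\cup\de{C_2}_\C(\sigma)$; indeed it is the closed convex hull of that union.
\item Probabilistic choice and sampling contain every convex mixture of members of the branch sets, via the Minkowski-style mixture.
\item Kleisli extension in $\C$ contains $\bigoplus_{\tau\sim\mu} \nu_\tau$ whenever $\mu\in A$ and every $\nu_\tau \in f(\tau)$ for $\tau\in\supp(\mu)$, with $\bot$ sent to $\delta_\bot$.
\end{itemize}
The third property is exactly the adaptive freedom: different random branches may pick different elements.

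The base and non-loop cases follow directly.
\begin{itemize}
\item For $\skp$ and assignment, both sides are the Dirac at the updated memory.
\item For sampling, the oblivious output is the mixture of Diracs, which is exactly the adaptive singleton.
\item For $x\gets e$, the oblivious output is $\delta_{\sigma[x\coloneqq v_i]}$ for $i = s(n)\bmod |e|$, and this lies in the adaptive union.
\item For $C_1\nd C_2$, the output equals $\de{C_j}(\sigma)(s)(n+1)$ for some $j$. The induction hypothesis at index $n+1$ puts its marginal in $\de{C_j}_\C(\sigma)$, which is contained in the union.
\item For $\oplus_e$ and conditionals, we apply the induction hypothesis to both branches at the same $s,n$ and use closure under mixtures.
\end{itemize}

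For sequencing, $\de{C_1\fatsemi C_2}(\sigma)(s)(n) = \bigoplus_{(\tau,k)\sim\mu} \de{C_2}(\tau)(s)(k)$ with $\mu = \de{C_1}(\sigma)(s)(n)$. Taking marginals, this is $\bigoplus_{\tau\sim\pi_1\mu} \nu_\tau$, where $\nu_\tau$ is the mixture over $k$ of $\pi_1\de{C_2}(\tau)(s)(k)$, weighted by the conditional distribution of $k$ given $\tau$. Each summand lies in $\de{C_2}_\C(\tau)$ by the induction hypothesis, which is why the hypothesis must hold for every index $k$. Convexity of $\de{C_2}_\C(\tau)$ then gives $\nu_\tau\in\de{C_2}_\C(\tau)$. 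Since $\pi_1\mu\in\de{C_1}_\C(\sigma)$, the Kleisli closure property finishes the case. Because the distributions are discrete, the mixture over $k$ may be countable, so we need $\de{C_2}_\C(\tau)$ closed under countable convex combinations. This follows from the sets in $\C$ being convex and topologically closed in the $\led$-Scott/Lawson sense used there, since countable mixtures are limits of finite ones.

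The main obstacle is the while loop. We plan a nested induction. First, show by induction on $m$ that for all $\sigma,s,n$, $\pi_1(\Phi^m(\bot_\calO)(\sigma)(s)(n)) \in \Psi^m(\bot_\C)(\sigma)$, where $\Psi$ is the adaptive characteristic function. The step reuses the sequencing and conditional arguments, with the outer induction hypothesis for the body $C$. Next, $\de{\whl eC}(\sigma)(s)(n)$ is the pointwise supremum over $m$ of $\Phi^m(\bot_\calO)(\sigma)(s)(n)$, and $\pi_1$ is Scott continuous. Hence the marginal is the $\led$-supremum of a chain $\mu_m \in \Psi^m(\bot_\C)(\sigma)$. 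We must then argue that this supremum lies in $\de{\whl eC}_\C(\sigma) = \sup_m \Psi^m(\bot_\C)(\sigma)$. We expect to use the characterization in \citet{zilberstein2025demonic} of suprema in $\C$: a directed supremum of sets is the closure of the union of their down-/up-closures in the Egli--Milner sense. The increasing chain of witnesses then converges to an element of that closure. We would first try to extract such a lemma from the cited work. Failing that, we would prove directly that the adaptive least fixed point contains limits of chains of pointwise-chosen iterates, using that $\Psi^m(\bot_\C)\sqsubseteq\Psi^{m+1}(\bot_\C)$ in the Egli--Milner order together with compactness of the closed convex sets.
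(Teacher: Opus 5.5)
The paper states this theorem without proof, and it is not revisited in the appendix, so there is no argument of the authors' to compare with. Judged on its own merits, your overall strategy is sound.

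\emph{What works.} Quantifying the hypothesis over all $\sigma$, $s$, $n$ is exactly what the $\nd$ case (index $n+1$) and the sequencing case (arbitrary intermediate index $k$) require. Your key observation for sequencing is also the right one. The adaptive bind picks one continuation behaviour per intermediate memory $\tau$, but the oblivious run can reach the same $\tau$ at many tape positions $k$. So the witnesses $\pi_1\de{C_2}(\tau)(s)(k)$ must first be merged into a single countable convex combination. That requires the sets of $\C$ to be closed under countable mixtures. Metric (Cauchy) or Lawson closedness gives this, because the normalized finite partial mixtures converge in total variation; Scott closedness alone would not. A minor point: the conditional case involves no mixing, since for fixed $\sigma$ only one branch runs.

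\emph{The gap.} What is genuinely missing is the end of the loop case. You need $\sup_m\mu_m\in L\triangleq\sup_m\Psi^m(\bot_\C)(\sigma)$ for the increasing chain $\mu_m\triangleq\pi_1(\Phi^m(\bot_\calO)(\sigma)(s)(n))\in\Psi^m(\bot_\C)(\sigma)$.

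Your fallback does not establish this. The Egli--Milner inequalities $\Psi^m(\bot_\C)(\sigma)\sqsubseteq L$, together with compactness of $L$, only produce some $\lambda\in L$ with $\sup_m\mu_m\le\lambda$. That is membership in the down-closure of $L$, not in $L$ itself. Also, ``the closure of the union of the down-/up-closures'' is not the correct description of the supremum; it is far too large.

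You have to use the concrete form of directed suprema for the order DOL actually puts on $\C$.
\begin{itemize}
\item If $\C$ consists of up-closed sets ordered by reverse inclusion (the McIver--Morgan/Smyth convention), the supremum of the chain is the intersection. The claim is then immediate, since $\sup_m\mu_m\ge\mu_k\in\Psi^k(\bot_\C)(\sigma)$ for every $k$.
\item If $\C$ consists of lenses under the Egli--Milner order, the standard description of directed suprema is the intersection of the up-closures of the $\Psi^m(\bot_\C)(\sigma)$ with the Scott closure of their union. Then $\sup_m\mu_m$ lies in the first part by the same inequality, and in the second as a directed supremum of members $\mu_m$ of the union.
\end{itemize}
In either convention the base case $\delta_\bot\in\bot_\C$ and your inner induction go through, so the proof can be completed. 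As written, however, this lemma is assumed rather than proved.
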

The inclusion in \Cref{thm:oblivious-refines-convex-powerset} shows that
adaptive proof rules are sound in the oblivious setting since every oblivious
behavior is contained in $\de{\cdot}_\C$. But the inclusion can be strict;
$\de{\cdot}_\C$ may introduce correlations that are impossible
under any fixed scheduling tape, allowing \LogicName{} to exploit a higher degree of independence.
For example, in the program $x \samp \bern{\frac12}\fatsemi y\gets[0,1]$
from \Cref{sec:overview-semantics}, an adaptive scheduler can force $x=y$,
whereas every fixed tape yields $\PP[x=y]=\frac12$.



\section{Oblivious Probabilistic Outcome Assertions}
\label{sec:assertions}

We now present an assertion language used to describe pre- and postconditions of
programs, inspired by \pcol \cite{zilberstein2026probabilistic} but
 extended to express when randomness has been leaked to the adversary.
A source of randomness is leaked if it might not be
probabilistically independent from the number of consumed scheduling
bits.

\subsection{Pure Separation Logic Assertions}

The following \emph{pure} assertions for describing deterministic state are the same as the ones used in \pcol and are inspired by standard separation logic \cite{sl,localreasoning}.
\begin{align*}
  P
  &::= \true \;|\;
    \false \;|\;
    P \land Q \;|\;
    P \lor Q \;|\;
    \exists X.\ P \;|\;
    P * Q \;|\;
    e \mapsto E \;|\;
    E_1 \asymp E_2
    && \big( \mathord\asymp \in \{ =, \le, \ldots \} \big)
    \\
  E
  &::= X \;|\;
    v \;|\;
    E_1 + E_2 \;|\;
    E_1 \cdot E_2 \;|\;\dots
\end{align*}
Pure assertions use \emph{logical expressions} $E \in \LExpr$, which
mirror standard expressions from \Cref{fig:syntax} but range over logical variables $X \in \LVar$ rather than program variables
$x \in \Var$.
The semantics of pure assertions is given in \Cref{app:assertions} but we cover some key points here.
Pure assertions are satisfied by a context $\Gamma :
\LVar \rightharpoonup \Val$,
which assigns values to logical variables,
and a memory $\sigma \in \Mem$. We write $\Gamma,\sigma\vDash P$ to
mean that $P$ holds under $\Gamma$ and $\sigma$.  Assertions $\true$, $\false$, $P\land Q$,
and $P\lor Q$ are standard. Existential quantification
$\exists X.\ P$ is satisfied when there exists a value $v \in \Val$ such that 
$\Gamma[X\coloneqq
v],\sigma\vDash P$.
A comparison $E_1 \asymp E_2$ holds when
$\de{E_1}_\LExpr(\Gamma)
\asymp \de{E_2}_\LExpr(\Gamma)$.

The remaining two assertion forms are inspired by separation logic. The
\emph{separating conjunction} $\Gamma,\sigma\vDash P\sep Q$ holds iff $\sigma$
can be decomposed into $\sigma_1 \uplus\sigma_2 \sqsubseteq \sigma$ such that
$\Gamma,\sigma_1\vDash P$ and $\Gamma,\sigma_2\vDash Q$. That definition
guarantees that $P$ and $Q$ describe \emph{disjoint} variables. The
\emph{points-to} predicate $\Gamma,\sigma\vDash e \mapsto E$ provides a way to
connect logical and physical state; it holds when $\de{e}_\Expr(\sigma) =
\de{E}_\LExpr(\Gamma)$.
We also define the following notation for the extensional semantics of pure assertions.
\begin{mathpar}
  \sem{P}_\Gamma^S
  \triangleq \{\sigma \in \Mem[S] \;|\; \Gamma, \sigma \vDash P\}

  \sem{P}_\Gamma
  \triangleq \{\sigma \in \Mem \;|\; \Gamma, \sigma \vDash P\}
\end{mathpar}
The following syntactic sugar expresses ownership of variables without mentioning their values.
\begin{mathpar}
  {e \mapsto -} ~\triangleq~ \exists X.\ e \mapsto X

  \own(e_1, \dots, e_n)
  ~\triangleq~
    {e_1 \mapsto -} \sep \cdots \sep {e_n \mapsto -}
\end{mathpar}

\subsection{Modeling Outcome Assertions with Probability Spaces}
\label{sec:probability-spaces}

Lilac first showed that modeling probabilistic assertions with probability
spaces enables reasoning about independence at varying levels of granularity,
and is thus advantageous even in discrete settings \cite{li2023lilac}.
As we saw in \Cref{sec:overview-sep}, assertions of the form $x \sim
\bern{\tfrac12}\sep y\sim\bern{\tfrac12}$ imply that $x$ and $y$ are
probabilistically independent. In that case, the probability of a joint event
factors into the product of its marginals
$\PP[x = 1 \land y = 1] = \PP[x = 1]\cdot \PP[y=1] =
\frac12\cdot\frac12 = \frac14$.
However, separating conjunctions $\varphi \sep \psi$ only require independence of the
events that $\varphi$ and $\psi$ describe, not their entire distributions.
For example, consider the following program.
\begin{equation}\label{eq:prob-space-example}
\triple{\sure{\own(x, y)}}{\big(x \coloneqq 0 \oplus_\frac12 x \gets [1] \big) \fatsemi y \gets[0, 1]}
{x\overset{\Leak}{\sim}\bern{\tfrac12}\sep\sure{y\in \{0, 1\}}}
\end{equation}
The assignments in the two probabilistic branches consume different numbers of
scheduler entries, so the value of $x$ is leaked through the scheduling index.
As a result, the later adversarial choice of $y$ may be correlated with $x$.
However, because probability spaces track probabilities of \emph{measurable
  events} rather than individual samples, we can coarsen the assertion about $y$ to the
probability-one event $\sure{y \in \{0, 1\}}$, and describe the output
distribution via a separating conjunction. Independence still holds because
$\PP[x = 1 \land y\in \{0, 1\}] = \frac12 = \PP[x=1] \cdot \PP[y \in\{0, 1\}]$.

The assertion semantics is formalized based on measure theory \cite{royden1968real,fremlin2001measure}.
Let $\Omega$ be a countable \emph{sample space}. An \emph{event} $E \subseteq \Omega$ is a set of samples.
A $\sigma$-algebra over $\Omega$ is a set of \emph{events} $\calF \subseteq 2^\Omega$ such that $\Omega \in \calF$ and $\calF$ is closed under complementation and countable unions. A \emph{probability measure} $\mu \colon \calF \to [0,1]$ is a function assigning probabilities to events such that $\mu(\Omega) = 1$ and $\mu$ obeys countable additivity: $\mu(\bigcup_{i\in I} E_i) = \sum_{i\in I}\mu(E_i)$ for any countable collection of pairwise disjoint events $(E_i)_{i\in I}$.
A probability space $\calP = \langle \Omega_\calP, \calF_\calP,
\mu_\calP\rangle$ consists of a sample space $\Omega_\calP$, a
$\sigma$-algebra $\calF_\calP \subseteq 2^{\Omega_\calP}$ as an
event space, and a probability measure $\mu_\calP \colon \calF_\calP \to
[0, 1]$.

We consider only complete probability spaces:
$\calF$ contains all subsets of $\mu$-null sets.
Given a probability space $\calP$ and an event $G \in \calF_\calP$
with $\mu_\calP(G)>0$, the conditional probability space is defined as
$\calP \mid G
\triangleq
\langle G, \calF_{\calP \mid G}, \mu_{\calP \mid G} \rangle$
where
$\calF_{\calP \mid G}
\triangleq \{H \in \calF_\calP \mid H \subseteq G\}$
and 
$\mu_{\calP \mid G}(H)
\triangleq \frac{\mu_\calP(H)}{\mu_\calP(G)}$ for any $H \in \calF_{\calP \mid G}$. If $\calP$ is complete, $\calP
\mid G$ is complete.

The semantic object underlying an outcome assertion is a \emph{resource}, which
packages the random memory state together with the scheduler count information
that the adversary may observe.
\begin{definition}[Resources]\label{def:resource} A resource $\calR$ is a 4-tuple, as shown below on the left.
\begin{mathpar}
\calR \triangleq \tuple{
  \calP_\calR,
  V_\calR,
  \memfn_\calR,
  \cntfn_\calR
}

\obsfn_\calR(\omega) \triangleq (\memfn_\calR(\omega), \cntfn_\calR(\omega))
\end{mathpar}
Here,
$\calP_\calR=\langle\Omega_\calR,\calF_\calR,\mu_\calR\rangle$
is a complete probability space over $\Omega_\calR \subseteq \bbN$;
$V_\calR\subseteq_\fin\Var$ is a finite memory footprint; and
$\memfn_\calR\colon\Omega_\calR\to\Mem[V_\calR]$ and
$\cntfn_\calR\colon\Omega_\calR\to\bbN$ assign a memory and a
scheduler count to each index. The count map $\cntfn_\calR$ is
measurable in the discrete $\sigma$-algebra, \ie $\cntfn_\calR^{-1}(n) \in \calF_{\calP_\calR}$ for all $n\in \mathbb{N}$. In addition, $\obsfn_\calR \colon \Omega_\calR \to\Mem[V_\calR]\times\bbN$ is the joint observation.
\end{definition}

To see how the event space controls which facts about observations are exposed, we return
to the triple (\ref{eq:prob-space-example}). The assertion $\sure{y\in\{0,1\}}$ can be
witnessed by a coarse resource specifying $y$.
\begin{mathpar}
\Omega = \{0, 1\}

\calF=\{\emptyset,\Omega\}

\begin{array}{r}
\mu(\emptyset) = 0
\\
\mu(\Omega) = 1
\end{array}

\begin{array}{c|cc}
  \omega & \memfn(\omega)(y) & \cntfn(\omega) \\
  \hline
  0 & 0 & 0 \\
  1 & 1 & 0
\end{array}
\end{mathpar}
According to $\calF$, $y\mapsto v$ is not measurable for any $v \in \Val$, but
the event $y\in\{0,1\}$ is measurable and has probability one.
Since $\cntfn_\calR$ is measurable, we can define its \emph{pushforward}
$\Proj{\calR}{\bbN}$ to
obtain a probability space over scheduler count, which is also a proper discrete
probability distribution
\[
  \Proj{\calR}{\bbN} \triangleq \tuple{\bbN, 2^\bbN, \mu}
  \qquad\text{where}\qquad
  \mu(E) \triangleq \mu_\calR\left( \cntfn_\calR^{-1}(E) \right) \;
  \text{for all}\; E \subseteq \bbN
\]
For any given resource $\calR$, the specific identifiers used in $\Omega_\calR$ are not important, and can thus be permuted.
We therefore identify resources up to the following isomorphism.
\begin{definition}[Resource Isomorphism]
Given two resources $\calR$ and $\calR'$ with $V_\calR = V_{\calR'}$, a \emph{resource isomorphism} is a bijection $\rho\colon\Omega_\calR\to\Omega_{\calR'}$ such that:
\begin{mathpar}
  \calF_{\calR'} = \{\rho(A)\mid A\in\calF_\calR\}

  \mu_{\calR'}(\rho(A)) = \mu_\calR(A)
  \quad(A\in\calF_\calR)
%
%
%

  \obsfn_{\calR'}\circ\rho = \obsfn_\calR
\end{mathpar}
We write $\calR\cong\calR'$ when such a $\rho$ exists; $\cong$ is an
equivalence relation, and throughout this paper we identify resources up to
index renaming $\cong$.
\end{definition}

\paragraph{Ordering and Refinement}
Consider $\calR$ and $\calR'$ such that $V_\calR=S\subseteq V_{\calR'}$.
The order $\calR \preceq \calR'$ means that $\calR'$ may contain more memory
information and more measurable events, while preserving scheduler-count
observations; $\calR$ is a coarser view of $\calR'$, as in the coarse
$y$-resource above.
We define $\calR\preceq\calR'$ iff there is a measurable map
$h:\Omega_{\calR'}\to\Omega_\calR$ such that for all
$A \in \calF_\calR$ and $\omega \in \Omega_{\calR'}$:
\begin{mathpar}
  \mu_\calR(A)
  = \mu_{\calR'}(h^{-1}(A))

  \memfn_\calR(h(\omega))
  \sqsubseteq \ForgetMem{\memfn_{\calR'}(\omega)}{S}

  \cntfn_\calR(h(\omega))
  = \cntfn_{\calR'}(\omega)
\end{mathpar}
As in intuitionistic separation logics, \LogicName{} permits weakening of
memory ownership: $\calR$ may forget memory information present in
$\calR'$. Scheduler counts track tape consumption, so the count condition is
strict (\ie $\calR \preceq \calR' \Rightarrow \forall n\in\bbN.\;
  \mu_\calR(\cntfn_\calR^{-1}(\{n\}))
  =
  \mu_{\calR'}(\cntfn_{\calR'}^{-1}(\{n\}))$).
Consider a resource $\calR$ with a proper discrete distribution
$\mu\in\D(\Mem\times\bbN)$.  The resource is a refinement of the distribution
(i.e., $\calR\preceq\mu$)\footnote{When the right-hand side is a partial
  distribution $\mu \in \D_\bot(\Mem \times \bbN)$, the judgment is defined only
when $\mu(\bot) = 0$.} iff there is a
coupling 
$\lambda\in\calD(\Omega_\calR\times(\Mem \times \bbN))$
such that:
\begin{mathpar}
  \forall A\in\calF_\calR.\ \lambda(A\times(\Mem \times \bbN)) = \mu_\calR(A)
    
  \forall  (\sigma, n)\in\Mem \times \bbN.\ \lambda(\Omega_\calR\times \{(\sigma, n)\}) = \mu(\sigma, n)

  \lambda(\omega,(\sigma,n))>0 \implies
    \memfn_\calR(\omega)\sqsubseteq\ForgetMem{\sigma}{V_\calR}
    \land
    \cntfn_\calR(\omega)=n .
\end{mathpar}
Equivalently, the coupling $\lambda$ is supported on pairs whose resource
observation agrees with the ordinary memory-count observation after projecting
memory to $V_\calR$.

\subsection{Outcome Assertions}
Outcome assertions describe the pre- and postconditions of
\LogicName triples.
The syntax is shown below, where $E \in \LExpr$ and $o\in\{\Priv,\Leak\}$ is the probabilistic
choice mode.
\begin{align*}
  \textsf{Prop} \ni \varphi
  &::= \top\;|\;
    \bot\;|\;
    \varphi \land \psi\;|\;
    \varphi \lor \psi \;|\;
    \exists X.\ \varphi\;|\;
    \bigoplus^o_{X \sim d(E)} \varphi \;|\;
    \bignd_{X \in E}\; \varphi \;|\;
    \sure{P}\;|\;
    \varphi \sep \psi
\end{align*}
The validity of outcome assertions (\Cref{fig:assertion-semantics}) is defined relative to an environment
$\Gamma:\LVar\to\Val$ and a resource
$\calR=\langle \calP_\calR,V_\calR,\memfn_\calR,\cntfn_\calR\rangle$
satisfying the completeness and count-measurability assumptions 
from \Cref{def:resource}.
Assertions $\top$, $\bot$, $\wedge$, $\vee$, and $\exists$ have the usual
semantics.

\begin{figure}
\begin{align*}
  \Gamma, \calR \vDash \top \quad
  &\text{always}\\
  \Gamma, \calR \vDash \bot \quad
  &\text{never}\\
  \Gamma, \calR \vDash \varphi \land \psi \quad
  &\text{iff}\quad \Gamma, \calR  \vDash \varphi \quad
    \text{and}\quad \Gamma, \calR  \vDash \psi\\
  \Gamma, \calR  \vDash \varphi \lor \psi \quad
  &\text{iff}\quad \Gamma, \calR  \vDash \varphi \quad
    \text{or}\quad \Gamma, \calR  \vDash \psi\\
  \Gamma, \calR \vDash \exists X.\ \varphi \quad
  &\text{iff}\quad \Gamma[X \coloneqq v],\calR\vDash\varphi \quad \text{for some} \quad v \in \Val\\
  \Gamma, \calR  \vDash \bigoplus^o_{X \sim d(E)} \varphi \quad
  &\text{iff}\quad
    \exists (\calR_v)_{v \in \supp(\mu)}.\;
    \bigoplus_{v \sim \mu} \calR_v \preceq \calR
    \quad \text{and} \quad
    \forall v\in\supp(\mu).\;
      \Gamma[X \mapsto v], \calR_v \vDash \varphi\\
  &\phantom{\text{iff}\quad}\text{and} \quad
    \Obs_o^\mu(\calR,(\calR_v)_{v\in\supp(\mu)})
    \quad\text{where} \quad
    \mu = d(\de{E}_{\LExp}(\Gamma))\\
  \Gamma, \calR  \vDash \bignd_{X \in E}\varphi \quad
  &\text{iff} \quad
    \Gamma, \calR  \vDash \bigoplus^{\Leak}_{X \sim \mu}\varphi
    \quad \text{for some} \quad \mu \in \calD(\de{E}_{\LExp}(\Gamma))
  \\
  \Gamma, \calR  \vDash \sure{P} \quad
  &\text{iff} \quad
    \memfn_\calR^{-1}(\sem{P}_\Gamma^{V_\calR}) \in \calF_\calR
    \quad \text{and} \quad
    \mu_\calR(\memfn_\calR^{-1}(\sem{P}_\Gamma^{V_\calR})) = 1\\
  \Gamma, \calR  \vDash \varphi \sep \psi \quad
  &\text{iff} \quad \exists \calR_1, \calR_2.\;
    \calR_1 \otimes \calR_2 \preceq \calR
    \quad \text{and} \quad \Gamma, \calR_1 \vDash \varphi
    \quad \text{and} \quad \Gamma, \calR_2 \vDash \psi
\end{align*}
\caption{Semantics of outcome assertions expressed as satisfaction
  relation. $\Gamma: \LVar \rightarrow \Val$ is a logical context and $\calR =
  \langle\calP_\calR,V_\calR,\memfn_\calR,\cntfn_\calR \rangle$ is a resource
  satisfying completeness and count-measurability assumptions.}
\label{fig:assertion-semantics}
\end{figure}

In logics like \pcol with an adaptive scheduler, any random event can influence later scheduler choices.  Since
 randomness is already visible to the scheduler, there is no
scheduler-privacy distinction to track, and probability spaces over $\Mem$
suffice to characterize program states.
In \LogicName, the adversary is weaker since its schedule is fixed ahead of the program execution.
As we saw in \Cref{sec:overview-leak}, that does not make all randomness private; the program structure may still reveal
information through branch-dependent scheduler consumption.  However, when the
number of scheduler choices is \emph{independent} of the sampled branch, the
fixed schedule cannot correlate future choices with that sample, thereby allowing the
logic to establish stronger independence properties.

To express that distinction, resources in \LogicName record observations not
only about memory, but also about scheduler counts.  Thus a resource $\calR$ describes
the full program state relevant to the semantics and assertions dictate
whether each random choice is observable through scheduler consumption
in addition to the distribution over concrete memory values.

We embed the tracking of schedule consumption into probabilistic outcome conjunctions,
adapting the outcome conjunction of
\pcol~\cite{zilberstein2026probabilistic} to the resource direct sum.
Intuitively, $\bigoplus^o_{X\sim d(E)}\varphi$ 
says that a value $X$ is sampled according to
$\mu=d(\de{E}_{\LExp}(\Gamma))$, and that the current resource can be refined by
a \emph{direct sum} of branch resources $(\calR_v)_{v\in\supp(\mu)}$, formally defined below.

Let $\xi \in \D(I)$ be a discrete distribution over the index set $I$.
Also let $(\calR_i)_{i\in \supp(\xi)}$ be a countably indexed collection of resources such that there exists a $V_\oplus\subseteq\Var$ with $V_{\calR_i} = V_\oplus$ for all $i\in \supp(\xi)$ and the index spaces are pairwise disjoint \ie $\Omega_{\calR_i}\cap\Omega_{\calR_j}=\emptyset$ for all $i \neq j$.
The weighted direct sum is the resource
$
\bigoplus_{i\sim\xi}\calR_i
\triangleq
\langle \bigoplus_{i\sim\xi}\calP_{\calR_i}, V_\oplus, \memfn_{\calR_\oplus},
\cntfn_{\calR_\oplus}\rangle$, where the weighted disjoint-sum probability space
$
  \bigoplus_{i\sim\xi}\calP_{\calR_i}
\triangleq
\left\langle
  \bigcup_{i\in\supp(\xi)}\Omega_i,
  \calF_\oplus,
  \mu_\oplus
\right\rangle$
is defined as:
\begin{mathpar}
\calF_\oplus
\triangleq
\{A 
\mid \forall i\in\supp(\xi).\; A\cap\Omega_i\in\calF_i\}

\mu_\oplus(A)
\triangleq
\textstyle\sum_{i\in\supp(\xi)}\xi(i)\cdot\mu_{\calR_i}(A\cap\Omega_{\calR_i})
\end{mathpar}
and the observation functions are inherited branchwise: if
$\omega\in\Omega_{\calR_i}$, then
$\memfn_{\calR_\oplus}(\omega)=\memfn_{\calR_i}(\omega)$ and
$\cntfn_{\calR_\oplus}(\omega)=\cntfn_{\calR_i}(\omega)$.
If each $\calP_{\calR_i}$ is complete, then $\bigoplus_{i\sim\xi}\calP_{\calR_i}$ is complete.

In each branch $i$, the assertion $\varphi$ holds under the extended environment
$\Gamma[X\mapsto i]$.
The mode $o\in\{\Priv,\Leak\}$ specifies the side condition imposed on this
branch decomposition.  Formally, define
\begin{mathpar}
  \Obs_\Priv^\mu(\calR,(\calR_v)_{v\in\supp(\mu)})
  \triangleq
    \forall v\in\supp(\mu).\;
      \Proj{\calR_v}{\bbN}=\Proj{\calR}{\bbN}
      
  \Obs_\Leak^\mu(-,-) \triangleq\top
\end{mathpar}
Thus, in private mode, every branch has the same scheduler count distribution, so the counts are independent from the
sampled value $X$.  In leaky mode, no count-marginal condition is imposed, so $X$ may be correlated with the scheduler index.
We order modes by the strength of their side conditions, writing
$\Leak\sqsubseteq\Priv$, since $\Leak$ is a weaker mode than $\Priv$.
The nondeterministic outcome conjunction $\bignd_{X\in E}$ is the distribution-free analogue of
probabilistic outcome conjunction.  It does not prescribe a distribution for the
bound variable; instead, it records that the resource can be explained by some
distribution $\mu \in \D(\de{E}_\LExpr(\Gamma))$ over the possible outcomes $\bigoplus^{\Leak}_{X\sim\mu}\varphi$.

The separating conjunction $\varphi\sep\psi$ is interpreted by the resource product $\otimes$, as in
\Cref{fig:assertion-semantics}. 
The canonical product $\calR_1\otimes\calR_2$ is the resource induced by
the independent product.  Let $\Omega_\otimes\subseteq\bbN$ be a fresh
index set and
$g:\Omega_{\calR_1}\times\Omega_{\calR_2}\to\Omega_\otimes$
be a bijection.  Let
$\calP_\otimes
\triangleq
\langle
\Omega_\otimes,\;
\calF_\otimes,\;
\mu_\otimes
\rangle$,
where
$\calF_\otimes
\triangleq
\{C\subseteq\Omega_\otimes \mid
g^{-1}(C)\in\calF_{\calR_1}\otimes\calF_{\calR_2}\}$
and
$\mu_\otimes(C)
\triangleq
(\mu_{\calR_1}\otimes\mu_{\calR_2})(g^{-1}(C))$.
Here $\calF_{\calR_1}\otimes\calF_{\calR_2}$ is the ordinary product
$\sigma$-algebra on $\Omega_{\calR_1}\times\Omega_{\calR_2}$
and $\mu_{\calR_1}\otimes\mu_{\calR_2}$ is the unique product
measure on
$\calF_{\calR_1}\otimes\calF_{\calR_2}$ satisfying the following equation
for all $A\in\calF_{\calR_1}$ and $B\in\calF_{\calR_2}$.
\begin{mathpar}
  \calF_{\calR_1}\otimes\calF_{\calR_2}
  \triangleq
  \sigma(\{A\times B \mid A\in\calF_{\calR_1},
  B\in\calF_{\calR_2}\})

  (\mu_{\calR_1}\otimes\mu_{\calR_2})(A\times B)
  =
  \mu_{\calR_1}(A)\cdot\mu_{\calR_2}(B)
\end{mathpar}
We define
$\calR_1\otimes\calR_2
\triangleq
\langle \calP_\otimes, V_{\calR_1}\cup V_{\calR_2},
\memfn_{\calR_1\otimes\calR_2}, \cntfn_{\calR_1\otimes\calR_2}\rangle$
where, for $g^{-1}(\omega)=(\omega_1,\omega_2)$,
\begin{align*}
  &\memfn_{\calR_1\otimes\calR_2}(\omega)
  \triangleq
    \memfn_{\calR_1}(\omega_1)\uplus\memfn_{\calR_2}(\omega_2)
  &\cntfn_{\calR_1\otimes\calR_2}(\omega)
  \triangleq
    \cntfn_{\calR_1}(\omega_1)+\cntfn_{\calR_2}(\omega_2).
\end{align*}
The definition of $\calR_1\otimes\calR_2$ uses the standard product space construction up to renaming of indices.
Rather than using $\Omega_{\calR_1}\times\Omega_{\calR_2}$ directly, we push the
ordinary product probability space forward along the fresh index bijection $g$.
The construction is similar to that of Amaryllis, where standard product measures are constructed over a set of indices and resources are combined pointwise \cite{lohse2026first}. 
A key difference is that Amaryllis restricts the outcome spaces to be finitely
supported, whereas our ambient index space permits countably supported discrete
probability spaces.
Thus $\varphi\sep\psi$ asserts probabilistic independence of the two resource
components, disjointness of their memory footprints, and additive composition of
their scheduler counts.

Unlike \pcol
\cite{zilberstein2026probabilistic}, which distinguishes strong
separation as independent product from weak separation as only memory disjointness,
\LogicName has only one separating conjunction.  The weak
form is used in \pcol to account for the adaptive adversary. But now, the schedule is
fixed independently of memory, and the remaining leakage channel is tracked by the $\Priv$ and $\Leak$ modes of $\bigoplus^o$.  Thus $\sep$ always denotes the
independent resource product.

Finally, $\sure{P}$ embeds pure memory assertions into outcome assertions:
$\Gamma,\calR\vDash\sure{P}$ iff the memory observation of $\calR$ satisfies
$P$ with probability one.  The pure layer is count-oblivious because we do not need to worry about correlations between scheduler counts and a probability 1 event.
We
also use the following syntactic sugar for common outcome assertions:
\[\textstyle
  \varphi\oplus^o_E\psi
  \;\triangleq\;
  \bigoplus^o_{X\sim\bern{E}}
  (\sure{X = 1} \sep \varphi) \lor (\sure{X = 0} \sep \psi)
  \qquad
  \varphi\oplus^o_{\ge E}\psi
  \;\triangleq\;
  \exists X.\ \left(\varphi\oplus^o_X\psi\right)\sep \sure{E \le X \le 1}
\]
\[\textstyle
  e\overset{o}{\sim}d(E)
  \;\triangleq\;
  \bigoplus^o_{X\sim d(E)}\sure{e\mapsto X}
\]
As a default, when the mode $o = \Priv$, we omit it, so that $\bigoplus_{X \sim d(E)} \varphi$, $\varphi \oplus_E \psi$, and $e \sim d(E)$ all imply that the source of randomness is private.

\subsection{Predicates and Entailment Laws}

We now define predicates and entailment laws used to manipulate outcome assertions.
The main \LogicName-specific condition is the
combination of \emph{precision} and \emph{stability}. Precision identifies the
minimal memory component described by an assertion, while stability ensures
that the scheduler count is not correlated with memory.
Together, these properties let us
move assertions across separating conjunctions without accidentally retaining
leaked count information.

To state the properties, we use two projections that separate memory
and scheduler count observations. For a
resource $\calR$, define its memory component
$\calR^\memfn
\triangleq
\langle
\calP_\calR,\,
V_\calR,\,
\memfn_\calR,\,
\cntfn_{\calR^\memfn}
\rangle$ and count component $\calR^\cntfn
\triangleq
\langle
\calP_\calR,\,
\emptyset,\,
\memfn_{\calR^\cntfn},\,
\cntfn_\calR
\rangle$, where $\cntfn_{\calR^\memfn}(\omega)\triangleq 0$ and
$\memfn_{\calR^\cntfn}(\omega)\triangleq\emptyset$ for all
$\omega\in\Omega_\calR$.
Thus $\calR^\memfn$ keeps the memory observation and erases scheduler counts,
while $\calR^\cntfn$ keeps scheduler counts and erases memory observations.

\begin{definition}[Precision]\label{def:precise}
An assertion $\varphi$ is \emph{precise}, written $\precise{\varphi}$, iff for
any $\Gamma$ under which $\varphi$ is satisfiable there exists a
minimal resource $\calR_0$ such that $\Gamma,\calR_0\vDash \varphi$ and:
\begin{mathpar}
\forall \calR.\quad
\Gamma,\calR \vDash\varphi
\quad\implies\quad
\calR_0^\memfn \preceq \calR^\memfn
\end{mathpar}
\end{definition}
\Cref{def:precise} corresponds to the intuition that $\varphi$ is
precise iff each measurable event has an exact probability.
Accordingly, the following rules generate a syntactic fragment of precise assertions.
\begin{mathpar}
    \inferrule
    {
    }
    {\precise{\sure{P}}}

    \inferrule
    {\precise{\varphi} \and \precise{\psi}}
    {\precise{\varphi\sep\psi}}

    \inferrule
    {\precise{\varphi}}
    {\textstyle
      \precise{\bigoplus^o_{X\sim d(E)}\varphi}}
\end{mathpar}
Almost-sure assertions are precise because whenever $\sure{P}$ is satisfiable under $\Gamma$,
there is a canonical memory-only resource $\calR_P$ whose
probability-one event consists exactly of supersets of $\sem{P}_\Gamma$. Any other
satisfying resource must refine $\calR_P^\memfn$. 
Separating conjunctions are precise by taking the product of the precise
witnesses for the two sides.  If
$\Gamma,\calR_i^0\vDash\varphi_i$ are the minimal witnesses, then
the independent product of two resources $\calR_1^0\otimes\calR_2^0$ satisfies
$\varphi_1\sep\varphi_2$; for any other
satisfying split $\calR_1\otimes\calR_2\preceq\calR$, precision of the two
components and monotonicity of $\otimes$ give
$(\calR_1^0\otimes\calR_2^0)^\memfn\preceq\calR^\memfn$.

Probabilistic outcome conjunction preserves precision in both observation modes
by count-erasing the branch witnesses.  Each precise branch provides a least
memory witness; after erasing the count component, their direct sum is a least memory witness for
the whole assertion.  The private side condition holds because all erased
branches have count marginal $\delta_0$. 

The precision established by these rules is only memory precision.  Unlike in
\pcol, we cannot require a least satisfying resource overall. For example, the assertion
$\sure{x\mapsto 0}$ can be satisfied by a resource where every outcome has
scheduler count $0$, and also by one where every outcome has scheduler count
$1$.  Those resources are incomparable, since refinement preserves count
observations.  Therefore, precision compares only the memory components
$\calR_0^\memfn$ and $\calR^\memfn$. Moreover, precision does not imply
secrecy of the scheduler count, for instance, $x
\sim^\Leak\bern{\tfrac12}$ is precise but leaky.

Thus, memory precision alone is not enough to define entailment laws.
Stability additionally requires an assertion to remain satisfiable after separating away a count-only component,
and is useful for producing a canonical memory and count witness.

\begin{definition}[Assertion Stability]
  An assertion $\varphi$ is trivially
  \emph{stable} in the $\strong$-mode, \ie $\Stable{\strong}{\varphi}$ always holds.
  An assertion $\varphi$ is \emph{stable} in the $\weak$-mode iff 
  whenever $\Gamma,\calR\vDash\varphi$, then $\calR$ can be decomposed into an
  independent product of memory and count components. That is,
  $\Stable{\weak}{\varphi}$ iff
   for all $\Gamma$ and $\calR$ such that $\Gamma,\calR \vDash \varphi$, there
   exist memory and count witnesses $\calR_m$ and $\calR_c$ such that:
\begin{mathpar}
  \calR_m\otimes\calR_c\preceq\calR

  \forall \omega\in\Omega_{\calR_m}.\;
        \cntfn_{\calR_m}(\omega)=0

        \forall \omega\in\Omega_{\calR_c}.\;
        \memfn_{\calR_c}(\omega)=\emptyset
\end{mathpar}
\end{definition}

A sufficient syntactic fragment for weak assertion stability is generated by
the following rules:
\begin{mathpar}
  \inferrule
  {
  }
  {\Stable{\weak}{\sure{P}}}

  \inferrule
  {\Stable{\weak}{\varphi}\and \Stable{\weak}{\psi}}
  {\Stable{\weak}{\varphi \sep \psi}}

  \inferrule
  {\Stable{\weak}{\varphi}}
  {\textstyle
    \Stable{\weak}{\bigoplus\nolimits^\Priv_{X\sim d(E)}\varphi}}
\end{mathpar}
Almost-sure assertions are stable by the almost-sure coarsening argument.  If
$\Gamma,\calR\vDash\sure{P}$, the event
$\memfn_\calR^{-1}(\sem{P}_\Gamma^{V_\calR})$ has probability one.  Thus we may
apply almost-sure coarsening to extract a memory witness
$\calR_m$ and a count witness $\calR_c$ with
$\calR_m\otimes\calR_c\preceq\calR$.  The memory witness retains the
almost-sure information needed for $P$ and has only zero scheduler counts; hence
every retained memory outcome satisfies $P$, so
$\Gamma,\calR_m\vDash\sure{P}$. 
Separating conjunctions are stable because the two decompositions can be recombined.  For a
satisfying split $\calR_\varphi\otimes\calR_\psi\preceq\calR$, its stable witness has the
form
$(\calR_{\varphi, m}\otimes\calR_{\psi, m})
\otimes
(\calR_{\varphi, c}\otimes\calR_{\psi, c})
\preceq
\calR$.

Private probabilistic choice is stable because the privacy condition in
$\bigoplus^\Priv_{X\sim d(E)}\varphi$ forces every branch to have the same
count distribution.  Thus, for
$\mu=d(\de{E}_{\LExp}(\Gamma))$, its stable witness has the form
$(\bigoplus_{v\sim\mu}\calR_{v, m})
\otimes
\calR_{c}
\preceq
\calR$
with the same count component $\calR_c$ for all branches.

Note that stability does not imply precision.
For example,
$\sure{x\mapsto 0}\lor\sure{x\mapsto 1}$ is stable because 
any satisfying resource satisfies one disjunct. Stability gives
a memory witness without counts, which still satisfies the
disjunction.
However, it is not precise, since the memory witness where $x$ is always $0$
and the memory witness where $x$ is always $1$ are incomparable.
Thus stability
removes scheduler-count dependence, but it does not identify a canonical memory
component.
As in \pcol, we also use \emph{convexity} to identify
assertions that are \emph{closed under probabilistic mixtures}.

\begin{definition}[Assertion Convexity]
An assertion $\varphi$ is \emph{convex}, written $\convex{\varphi}$, iff for
every context $\Gamma$, countable index set $I$, distribution $\mu \in \D(I)$, finite footprint
$V$, and family of resources $(\calR_i)_{i \in I}$, such that 1) $\forall i \in
\supp(\mu). V_{\calR_i} = V$, and
2) $(\Omega_{\calR_i})_{i \in \supp(\mu)}$ are pairwise disjoint, then
\[\textstyle
  \forall i \in \supp(\mu).\; \Gamma, \calR_i \vDash \varphi
  \quad\Rightarrow\quad
  \Gamma,
  \bigoplus_{i \sim \mu} \calR_i \vDash \varphi
\]
\end{definition}
Informally, convexity justifies eliminating a probabilistic mixture whose
positive-probability branches all satisfy the same assertion $\bigoplus_{X \sim
  d(E)} \varphi \Rightarrow \varphi$.
The following is a selected syntactic fragment for both properties; additional
rules appear in \Cref{app:assertions}:
\begin{mathpar}
  \inferrule
  {\;}
  {\convex{\sure{P}}}

  \inferrule
  {\convex{\varphi} \and \substack{\precise{\psi} \\ \Stable{\weak}{\psi}}}
  {\convex{\varphi\sep\psi}}

  \inferrule
  {\convex{\varphi}}
  {\textstyle
    \convex{\bigoplus^o_{X\sim d(E)}\varphi}}
%

  \inferrule
  {\convex{\varphi}}
  {\textstyle
    \convex{\bignd_{X\in E}\varphi}}
\end{mathpar}
The case for $\sep$ is delicate, as it is tempting to only require $\varphi$ and $\psi$ to be convex.
As a counterexample both conjuncts of $\varphi \triangleq
x\sim^\Leak\bern{\frac13} \sep \bignd_{Y\in\{0,1\}}\sure{y\mapsto Y}$ are
convex, but the assertion as a whole is not. To see why, consider the resources
$\calR_1$ and $\calR_2$ below, both of which satisfy $\varphi$ since in each
resource $y$ is almost surely constant.
\begin{mathpar}\arraycolsep=1pt
\calR_1 \triangleq \left[\scriptsize
  \begin{array}{lll}
    (x=1,y=1,\cntfn=1) & \mapsto\frac13
    \\
    (x=0,y=1,\cntfn=0) & \mapsto\frac23
  \end{array}
\right]
~~
\calR_2 \triangleq \left[\scriptsize
  \begin{array}{lll}
    (x=1,y=0,\cntfn=0) & \mapsto\frac13
    \\
    (x=0,y=0,\cntfn=1) & \mapsto\frac23
  \end{array}
\right]
~~
\calR_1\oplus_{\frac12}\calR_2 =\left[\scriptsize
  \begin{array}{lll}
    (x=1,y=1,\cntfn=1) & \mapsto\frac16
    \\
    (x=0,y=1,\cntfn=0) & \mapsto\frac13
    \\
    (x=1,y=0,\cntfn=0) & \mapsto\frac16
    \\
    (x=0,y=0,\cntfn=1) & \mapsto\frac13
  \end{array}
\right]
\end{mathpar}
By contrast, $\calR_1\oplus_\frac12\calR_2$ does not satisfy $\varphi$ since both $x$ and $y$ are correlated with the count, and therefore they cannot be joined by $\sep$.
The precision and stability hypotheses rule out that problem.

\Cref{fig:oplus-entailment-laws} gives some key entailment laws for manipulating outcome assertions. Notably, any assertion can distribute into an outcome conjunction, but it must be precise and stable in order to factor out. Assertions must be convex in order to collapse an outcome conjunction to a single assertion. Additional rules, including to apply consequences under other modalities are given in the appendix in \Cref{fig:separation-entailment-laws,fig:basic-entailment-laws}.

\begin{figure}
\begin{mathpar}
  \inferrule
  {}
  {\textstyle
    \bigoplus_{X \sim d(E)}^o \varphi
    \vdash
    \bignd_{X \in \supp(d(E))}\varphi}

  \inferrule
  {}
  {\textstyle
    \bigoplus_{X \sim d(E)}^\Priv \varphi
    \vdash
    \bigoplus_{X \sim d(E)}^\Leak \varphi}

  \inferrule
  {\varphi \vdash \psi}
  {\textstyle
    \bigoplus_{X \sim d(E)}^o \varphi
    \vdash
    \bigoplus_{X \sim d(E)}^o \psi}

  \inferrule
  {X \not\in \fv(\psi)}
  {\textstyle
    (\bigoplus_{X \sim d(E)}^o \varphi) \sep \psi
    \vdash
    \bigoplus_{X \sim d(E)}^o (\varphi \sep \psi)
  }

  \inferrule
  {
    X \not\in \fv(\psi)\\
    \precise{\psi}\\
    \Stable{\weak}{\psi}
  }
  {\textstyle
    \bigoplus_{X \sim d(E)}^o (\varphi \sep \psi)
    \vdash
    (\bigoplus_{X \sim d(E)}^o \varphi) \sep \psi
  }
  
  \inferrule
  {o_1\sqsubseteq o_2 \and X\neq Y \and
    X\notin\fv(E_Y) \and Y\notin\fv(E_X)}
  {\textstyle
    \bigoplus_{X\sim d_X(E_X)}^{o_1}
      \left(\bigoplus_{Y\sim d_Y(E_Y)}^{o_2}\varphi\right)
    \vdash
    \bigoplus_{Y\sim d_Y(E_Y)}^{o_2}
      \left(\bigoplus_{X\sim d_X(E_X)}^{o_1}\varphi\right)}
      
\inferrule
  {X\notin\fv(\varphi) \and \convex{\varphi}}
  {\textstyle
    \bigoplus_{X\sim d(E)}^o\varphi \vdash \varphi}
\end{mathpar}
\caption{Entailment laws for outcome conjunctions and their interaction with
separation.}
\label{fig:oplus-entailment-laws}
\end{figure}

\section{Oblivious Probabilistic Outcome Logic}
\label{sec:logic}
Just like other Hoare-style logics, Oblivious Probabilistic Outcome
Logic (\LogicName) specifications are triples $\triple{\varphi}{C}{\psi}$ where
$\varphi$ is a precondition, $C$ is a command, and $\psi$ is a
postcondition.  The pre- and postconditions $\varphi$ and $\psi$ are outcome assertions from
\Cref{sec:assertions}, which describe a distribution over program
memories.  Similar to \dol and \pcol, the postcondition applies to \emph{all} of
the nondeterministic outcomes, which is achieved by quantifying over
all schedules under which $C$ could run.

Triples are by
definition \emph{frame preserving}, meaning that the initial distribution $\mu$
is a refinement of the independent product of $\calR$ (which satisfies the
precondition), and an additional frame resource $\calR_F$, which must be
preserved by the command $C$. That frame preservation directly justifies the
soundness of the \ruleref{Frame} rule, which we saw in
\Cref{sec:overview-sep}. Similar to \pcol, triples have two modes $m\in\{\strong,
\weak\}$ with implications to the type of \ruleref{Frame} rule that is
allowed. But whereas in \pcol, \weak-mode completely disables the probabilistic
independence guarantees of the \ruleref{Frame} rule, both modes in \LogicName
indicate probabilistic independence, and the corresponding \weak-mode of
\LogicName only precludes the framing of randomized assertions that leak
scheduling information.
We write $\Stable{m}{\calR}$ for the
corresponding resource-level side condition:
\begin{mathpar}
  \Stable{\strong}{\calR}
  \triangleq \top

  \Stable{\weak}{\calR}
  \triangleq
  \forall \omega\in\Omega_\calR.\;
    \cntfn_\calR(\omega)=0 .
\end{mathpar}
Thus, strong validity imposes no restriction on the frame resource, while weak
validity requires the frame to carry no scheduler-count information.
At the level of triple validity, a probabilistic program transforms
distributions of observations, so we quantify over a possibly nonterminating
input distribution 
$\mu \in \D_\bot(\Mem \times \bbN)$. Fixing the scheduler tape $s$, we uncurry
$\de{C}$ as follows:
\begin{mathpar}
  \de{C}_s : \Mem \times \bbN \to \D_\bot(\Mem \times \bbN)

  \de{C}_s(\sigma,n) \triangleq \de{C}(\sigma)(s)(n)
\end{mathpar}
Then the $\D_\bot$ Kleisli extension $\de{C}_s^\dagger(\mu)$ denotes running
$C$ from $\mu$. Using the new notation, triple validity is defined as follows.



\begin{definition}[Validity of \LogicName Triples]\label{def:triple}
A triple $\vDash_m \triple{\varphi}{C}{\psi}$ is valid iff for
all $s \in \schedule$, $\Gamma : \LVar \rightarrow \Val$,
$\mu \in \calD_\bot(\Mem \times \bbN)$, and resources $\calR,\calR_F$:
\[
\calR \otimes \calR_F \preceq \mu
\;\land\;
\Stable{m}{\calR_F}
\;\land\;
\Gamma, \calR \vDash \varphi
\;\;\implies\;\;
\exists \calQ.\;\;
\calQ \otimes \calR_F \preceq \de{C}_s^\dagger(\mu)
\;\land\;
\Gamma, \calQ \vDash \psi
\;\land\;
V_\calQ \subseteq V_\calR
\]
\end{definition}

Note that  $\calR$ and $\calQ$ range over samples of type
$\Mem \times \bbN$, without nontermination $\bot$.  That means valid triples imply that the probability of nontermination is 0, \ie the program
\emph{almost surely terminates}, making \LogicName{} a total correctness
logic~\cite{manna1974axiomatic}. The last conjunct $V_\calQ \subseteq V_\calR$
records the ownership preservation invariant. We now present the inference rules for
reasoning about programs in \LogicName.  We write
$\vdash_m\triple{\varphi}{C}{\psi}$ to mean that a triple is derivable
according to those rules, and all of the rules are sound with respect to
\Cref{def:triple}, as shown in \Cref{app:sequential-rules,app:bounded-rank}.
\begin{theorem}[Soundness of \LogicName]
$
  \vdash_m\triple{\varphi}{C}{\psi}
  \quad\implies\quad
  \vDash_m\triple{\varphi}{C}{\psi}
$
\end{theorem}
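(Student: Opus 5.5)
The proof will go by induction on the derivation of $\vdash_m\triple{\varphi}{C}{\psi}$. For each inference rule we show that validity of the premises (in the sense of \Cref{def:triple}) implies validity of the conclusion. Since the rules themselves appear only after this point, the plan is organised by rule families rather than by a fixed list.

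\emph{Structural and atomic rules.} Fix $s$, $\Gamma$, $\mu$, $\calR$, $\calR_F$ with $\calR\otimes\calR_F\preceq\mu$, $\Stable{m}{\calR_F}$ and $\Gamma,\calR\vDash\varphi$.

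For \ruleref{Frame}, the frame preservation built into \Cref{def:triple} does the work. Given $\calR\preceq\calR_1\otimes\calR_2$ with $\calR_1\vDash\varphi$ and $\calR_2\vDash\vartheta$, we apply the premise with the enlarged frame $\calR_2\otimes\calR_F$. This requires associativity and monotonicity of $\otimes$ with respect to $\preceq$, and, in $\weak$ mode, that the frame is count-free. Weak stability of $\vartheta$ supplies exactly this, via its decomposition $\calR_m\otimes\calR_c$. We absorb the count witness $\calR_c$ into the resource passed to the premise, which is legitimate because counts of independent components add under $\otimes$.

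Consequence follows from the fact that entailments are preserved semantically.

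For atomic actions, we compute $\de{a}_s^\dagger(\mu)$ directly and construct $\calQ$ by pushing $\calR$ along the update. For $x\gets e$ this shifts every count by one. Because the branch selected by $s(n)$ depends only on the count, the resulting resource satisfies a $\bignd$ postcondition with $\Leak$ mode.

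For sequencing, we chain the two instances of validity, using the monad law $(\de{C_2}_s^\dagger\circ\de{C_1}_s)^\dagger=\de{C_2}_s^\dagger\circ\de{C_1}_s^\dagger$ from the monad structure of $\calO$.

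\emph{Choice and case-analysis rules.} For probabilistic choice and rules that split on an outcome conjunction $\bigoplus^o_{X\sim d(E)}$, we decompose $\mu$ according to the branch resources $\calR_v$. We run each branch under the same schedule $s$ and recombine with the direct sum, using linearity of $\de{C}_s^\dagger$ over convex mixtures.

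The delicate point is the $\Priv$ mode. The branch postconditions must retain equal count marginals, and this is where the tape matters: under a fixed $s$, the distribution of the output index is a function of the input index distribution alone, provided each branch's program consumes the tape identically. We would isolate this as a lemma and invoke \Cref{thm:oblivious-refines-convex-powerset} for rules inherited from the adaptive logic.

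For nondeterministic choice, the branch is chosen by $s(n)$, so we case on the count values. This gives a mixture over counts, which is why only $\Leak$ mode is concluded.

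\emph{Loops (main obstacle).} For partial loop rules, we prove that an invariant holds of every Kleene iterate $\Phi^k(\bot)$ applied to $\mu$ and pass to the supremum. For total correctness we must also show that the mass on $\bot$ vanishes. This needs a ranking/variant argument (the bounded-rank rule): with probability bounded below, the rank decreases each iteration, so $\Phi^k(\bot)$ assigns mass at most $(1-p)^k$ to $\bot$.

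The hardest step is passing $\calQ\otimes\calR_F\preceq\cdot$ through the directed supremum. We must construct limiting couplings, which exist by tightness since the mass is converging to $1$. We must also check that the postcondition is closed under these limits. I would try to restrict loop rules to postconditions that are convex and closed under limits, or alternatively prove the property per finite unrolling. I would then show that almost-sure termination makes the limit distribution a countable convex mixture of the terminating parts of the iterates, and apply convexity.
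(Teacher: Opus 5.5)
Your overall structure (induction on the derivation, proving each rule sound directly against \Cref{def:triple}) matches the paper's, but the loop step is wrong as written. From the premise of \ruleref{Bounded-Rank} you conclude that $\Phi^k(\bot)$ puts mass at most $(1-p)^k$ on $\bot$. That does not follow. The premise only says the rank \emph{decreases} with probability at least $p$, and a decrease from $N$ to a rank still above $\ell$ does not exit; the remaining mass may also jump back up to $h$.

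For example, take $\ell=0$, $h=2$, $p=\frac12$ and a body sending rank $2$ to $1$ or $2$, and rank $1$ to $0$ or $2$, each with probability $\frac12$. This satisfies the premise, yet from rank $2$ no mass exits in the first iteration. The correct argument is block decay: in any $H=h-\ell$ consecutive live iterations there is probability at least $p^H$ of decreasing at every step, so $w^{\live}_{i+H}\le(1-p^H)\,w^{\live}_i$.

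Making this precise requires carrying an invariant across iterations. The normalized live distribution must decompose into branches, each tagged with its own rank and with a resource satisfying $\varphi$ at that rank whose product with $\calR_F$ refines the branch. This is re-established by applying the premise branchwise and re-tagging by the new rank; properness of each body step also comes from this. ``An invariant on each Kleene iterate'' does not capture it, because $R$ is a logical variable bound by $\bignd$, not program state. Once you have that invariant, limiting couplings are unnecessary. The output is exactly the countable mixture of the normalized exit distributions, and the direct sum of the exit witnesses plus convexity of $\varphi[\ell/R]$ closes the case, which is the fallback you mention at the end.

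The nondeterminism cases also miss the mechanism that makes them sound. For \ruleref{ND} you propose to case on the count classes $I_i=\{n\mid s(n)\bmod 2=i\}$. Conditioning the framed input on a count event generally destroys both the precondition and the frame. If $\calR$ satisfies $x\overset{\Leak}{\sim}\bern{\frac12}$ with $x$ correlated to the count, conditioning on $I_0$ can make $x$ deterministic, so the premise for $C_0$ no longer applies. If $\calR_F$ carries counts, the conditioned state is no longer of the form $\calQ\otimes\calR_F$ for the original $\calR_F$.

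This is why \ruleref{ND} demands $\Stable{\weak}{\varphi}$ and weak mode, and \ruleref{NAssign} demands $\Stable{\weak}{\psi}$. Stability refines $\calR$ by an independent product of a count-free memory part and a memory-free count part. Weak mode makes $\calR_F$ count-free, so conditioning on $I_i$ touches only the count part, which is then re-attached and ticked.

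Re-attaching needs its own lemma: tensoring with a count-only resource preserves satisfaction. This is not monotonicity, since $\preceq$ preserves counts, and the $\Priv$ side condition must be rechecked. The same lemma is what justifies absorbing $\calR_c$ in your weak \ruleref{Frame} case.

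Finally, \Cref{thm:oblivious-refines-convex-powerset} cannot be used to import rules. \LogicName validity asks for a count-exact refinement $\calQ\otimes\calR_F\preceq\de{C}_s^\dagger(\mu)$ over $\Mem\times\bbN$, about which an adaptive triple says nothing. For \ruleref{Prob-Safe} you must additionally cancel the frame's count marginal from the convolution to obtain equal branch marginals.
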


\subsection{Rules for Sequential Commands}
\begin{figure}
\footnotesize
\centering
\begin{minipage}{\linewidth}
\begin{mathpar}
  \inferrule*[right=\rulename{Skip}]
  { }
  {\vdash_m \triple{\varphi}{\cmdskip}{\varphi}}

  \inferrule*[right=\rulename{Seq}]
  {\vdash_m\triple{\varphi}{C_1}{\vartheta} \;\; \vdash_m\triple{\vartheta}{C_2}{\psi}}
  {\vdash_m\triple{\varphi}{C_1 \cmdseqS C_2}{\psi}}

  \inferrule*[right=\rulename{Assign}]
  {
    \varphi \Rightarrow \sure{e \mapsto E} \land (\psi \sep \sure{\own(x)})
  }
  {\vdash_m \triple{\varphi}{\actassign{x}{e}}{\psi \sep \sure{x \mapsto E}}}

  \inferrule*[right=\rulename{IfT}]
  {
    \varphi \Rightarrow \sure{b \mapsto \text{true}}\;\;
    \vdash_m \triple{\varphi}{C_1}{\psi}
  }
  {
    \vdash_m \triple{\varphi}{\cmdcase{b}{C_1}{C_2}}{\psi}
  }
%

  \inferrule*[right=\rulename{Samp}]
  {
    \varphi \Rightarrow
    \sure{e \mapsto E} \land (\psi \sep \sure{\own(x)})
  }
  {\vdash_m \triple{\varphi}{x\samp d(e)}{\psi \sep (x \sim d(E))}}

  \inferrule*[right=\rulename{Prob-Safe}]
  {
    \varphi \Rightarrow \sure{e \mapsto E}\;\;
    \textsf{bits}(C_1) = \textsf{bits}(C_2) = \{k\}\\\\
    \vdash_m\triple{\varphi}{C_1}{\psi_1}\;\;
    \vdash_m\triple{\varphi}{C_2}{\psi_2}
  }
  {
    \vdash_m\triple{\varphi}{C_1 \cmdchoiceS{e} C_2}{\psi_1 \oplus^{\textsf{priv}}_E \psi_2}
  }

  \inferrule*[right=\rulename{Prob-Leak}]
  {
    \varphi \Rightarrow \sure{e \mapsto E}\\\\
    \vdash_m\triple{\varphi}{C_1}{\psi_1}\;\;
    \vdash_m\triple{\varphi}{C_2}{\psi_2}
  }
  {
    \vdash_m\triple{\varphi}{C_1 \cmdchoiceS{e} C_2}{\psi_1 \oplus_{E} \psi_2}
  }

  \inferrule*[right=\rulename{NAssign}]
  {
    X\notin\fv(\psi)\;\;
    \Stable{\weak}{\psi}\\\\
    \varphi \Rightarrow \sure{e \mapsto E} \land (\psi \sep \sure{\own(x)})
  }
  {
    \vdash_\weak
    \triple{\varphi}{x\gets e}
    {\textstyle\bignd_{X\in E}(\psi \sep \sure{x \mapsto X})}
  }
  
  \inferrule*[right=\rulename{ND}]
  {
    \Stable{\weak}{\varphi}
    \\
    \vdash_\weak\triple{\varphi}{C_1}{\psi_1}
    \\
    \vdash_\weak\triple{\varphi}{C_2}{\psi_2}
  }
  {
    \vdash_\weak\triple{\varphi}{C_1 \cmdnondetS C_2}{\psi_1 \nd \psi_2}
  }
\end{mathpar}
\end{minipage}
\caption{Rules for Sequential Commands.}
\label{fig:sequential-command-rules}
\end{figure}

\Cref{fig:sequential-command-rules} presents inference rules for sequential commands.  Many of the rules are standard for
probabilistic separation logics, including \pcol.
\ruleref{Skip} leaves the precondition unchanged; \ruleref{Seq} composes
specifications along sequential composition; \ruleref{Assign} requires the
precondition to determine the value of $e$ and to provide ownership of $x$;
\ruleref{Samp} is analogous, except that $x$ is sampled from $d(E)$ in a way
that is private from the scheduler because sampling is atomic. 

The
conditional rule
\ruleref{IfT} enters the `then' branch of an if statement once the guard
is known. A symmetric \ruleref{IfF} rule is given in \Cref{app:inference-rules}.
If the guard fact is not entailed by the precondition as a whole,
then case analysis can be done using a split rule from
\Cref{fig:structural-derived-rules}.  For example, the precondition
$\bigoplus^o_{X\sim\bern{\frac12}}\sure{x\mapsto X}$
does not entail either $\sure{x=0\mapsto\tru}$ or
$\sure{x=0\mapsto\fls}$, but after conditioning on $X$, the branch
$X=0$ implies that $x=0$ is true and $X=1$ entails that it is false.
Similarly,
$\bignd_{X\in\{0,1\}}\sure{x\mapsto X}$
requires nondeterministic branch analysis before applying \ruleref{IfT} or
\ruleref{IfF}.

The remaining rules showcase how count resources
let \LogicName prove stronger privacy properties against an oblivious
scheduler.
\ruleref{Prob-Safe} and \ruleref{Prob-Leak} are the two rules for
probabilistic choice.  In both rules, the precondition must determine the
weight expression $e$ of the probabilistic choice.  Both branch premises use
the same precondition because the command forms a
mixture of the outputs of $C_1$ and $C_2$ from the same input distribution.
For example, from a precondition such as
$x\sim\bern{\frac12}$, a probabilistic choice
$C_1\cmdchoiceS{\frac12}C_2$ does not route the $x=0$ states to $C_1$ and the
$x=1$ states to $C_2$; either branch may be selected from either input state.
Because \LogicName tracks scheduler time as a resource and assumes an
oblivious scheduler, it can derive stronger conclusions for programs whose
control flow does not leak entropy.  For instance, consider
the following commands:
\begin{mathpar}
  C_\Priv
  \triangleq
  x\actassignS 0\cmdchoiceS{\frac12}x\actassignS 1

  C_\Leak
  \triangleq
  x\actassignS 0\cmdchoiceS{\frac12}
  x \gets [1]
  
  C' \triangleq y \gets [0,1]
\end{mathpar}
As explained in \Cref{sec:overview-leak}, schedule consumption leaks information to the adversary.
In $C_\Priv$, no adversarial decisions are consumed, therefore $x$ and $y$ are probabilistically independent after executing $C_\Priv\fatsemi C'$.
But in $C_\Leak\cmdseqS C'$, the right branch
consumes one more scheduler entry than the left branch, so $C'$ reads different
scheduler bits for different values of $x$, allowing for a correlation.

Therefore, $C_1 \oplus_e C_2$ only results in a $\oplus^\Priv$ postcondition if both branches consume the same number of scheduler entries.
The function
$\mathsf{bits}(C) \in 2^\bbN$, provided in \Cref{app:bits},
records how much of the scheduling tape may be
consumed along any execution path of $C$.
Sequencing adds consumptions,
branching takes their union, and nondeterministic commands contribute one
additional scheduler choice before executing the selected branch.
For a probabilistic choice $C_1\cmdchoiceS{e}C_2$, \ruleref{Prob-Safe} admits
the $\oplus^\Priv$ postcondition only when
$\textsf{bits}(C_1)$ and $\textsf{bits}(C_2)$ are singleton sets containing the same constant.
Indeed, $\mathsf{bits}(C_\Priv) = \{0\}$, whereas $\mathsf{bits}(C_\Leak) = \{0,1\}$ so the
private postcondition $\oplus^\Priv$ is valid for $C_\Priv$, but not $C_\Leak$.

The singleton requirement avoids
branch-specific scheduler-count reasoning in assertions.  A tempting
relaxation is to require only
$\textsf{bits}(C_1) = \textsf{bits}(C_2)$, but that condition is insufficient.  For
example, let $C_1 \triangleq x \coloneqq 0 \oplus_{\frac13} x \gets [0]$ and $C_2 \triangleq x \gets [1] \oplus_{\frac13} x \coloneqq 1$, so clearly $\mathsf{bits}(C_1) = \mathsf{bits}(C_2) = \{0,1\}$.
However, in $(C_1\cmdchoiceS{\frac12}C_2)\cmdseqS z \gets [0,1]$, the choice of $z$
queries the scheduler at a position whose distribution is correlated with the
outer probabilistic choice. If $C_1$ runs, and thus $x=0$, then a tape entry is
consumed with probability $\frac23$ whereas if $C_2$ runs, and $x=1$, an entry
is consumed with only probability $\frac13$. The scheduler can use that
information to make $x=z$ with probability up to $\frac23$.

Moreover, a more permissive rule would require the two branches to have the
same scheduler-count marginal distribution on all valid inputs, which is exactly the
privacy condition for $\oplus^\Priv$. Enforcing this in the logic would
require tracking distributions of schedule consumption for each probabilistic
branch. We avoid exposing this bookkeeping: \ruleref{Prob-Safe} gives a simple
syntactic test for privacy of the outer probabilistic choice.

The nondeterminism rules are where the oblivious scheduler assumption most
clearly arises.  In \pcol, nondeterministic choice is
interpreted against an adaptive adversary: after the current memory is sampled,
the scheduler may choose a branch as a function of that memory.  Therefore,
probabilistic information in the precondition can be used to correlate the
scheduler's branch choice with the sampled state.  As a result, the
\ruleref{ND} rule in \pcol is limited to pure preconditions $\sure{P}$.

In \LogicName that restriction is relaxed by requiring the precondition to be
$\Stable{\weak}{\varphi}$ rather than pure, which is strictly more
permissive since weakly stable assertions may describe private probabilistic
resources.
As we will see in \Cref{sec:examples}, the ability to retain probabilistic information in the precondition of a nondeterministic choice is essential.
The \LogicName \ruleref{ND} rule lets us push
$\oplus^\Priv$ inside the nondeterministic postcondition, so each branch of
$\nd$ can still assert independence between a prior private random source and the
scheduler's choice.

Although \ruleref{ND} is a weak-mode rule, weak mode still allows for use of the
\ruleref{Frame} rule with
probabilistic independence guarantees, as long as the augmented information only pertains to private
random sources. That makes \LogicName's \ruleref{ND} rule more powerful than that of \dol and \pcol.
In the soundness proof, we begin by partitioning the precondition resource into $\calR = \calR_1 \oplus_p\calR_2$, where $p$ is the probability that the next scheduling bit is even. Since the frame $\calR_F$ is stable, it contributes no schedule counts, and therefore we have $\calR\otimes\calR_F = (\calR_1 \otimes \calR_F) \oplus_p (\calR_2\otimes \calR_F)$, and we still know that the left and right operands will execute $C_1$ and $C_2$, respectively. The proof is completed with a straightforward use of the induction hypothesis.




\subsection{Separation and Split}

\begin{figure}
\footnotesize
\centering
\begin{minipage}{\linewidth}
\begin{mathpar}
  \inferrule*[right=\hypertarget{rule:Frame}{\textsc{Frame}}]
  {
    \vdash_m \triple{\varphi}{C}{\psi}\;\;
    \Stable{m}{\vartheta}
  }
  {
    \vdash_m \triple{\varphi \sep \vartheta}{C}{\psi \sep \vartheta}
  }

  \inferrule*[right=\rulename{Weakening}]
  {
    \vdash_\strong \triple{\varphi}{C}{\psi}
  }
  {
    \vdash_\weak \triple{\varphi}{C}{\psi}
  }
  
  \inferrule*[right=\rulename{Disj}]{
    \vdash_m\triple{\varphi_1}{C}{\psi_1}
    \;\;
    \vdash_m\triple{\varphi_2}{C}{\psi_2}
  }{
     \vdash_m\triple{\varphi_1\vee\varphi_2}{C}{\psi_1\vee\psi_2}
  }

  \inferrule*[right=\rulename{Consequence}]
  {
    \varphi' \Rightarrow \varphi\;\;
    \vdash_m \triple{\varphi}{C}{\psi}\;\;
    \psi \Rightarrow \psi'
  }
  {
    \vdash_m \triple{\varphi'}{C}{\psi'}
  }
  
  \inferrule*[right=\rulename{Priv-Split1}]
  {\textsf{bits}(C)=\{n\}\;\;
   \vdash_m\triple{\varphi}{C}{\psi}}
 {\textstyle
   \vdash_m
     \triple
       {\bigoplus^\Priv_{X\sim d(E)}\varphi}
       {C}
       {\bigoplus^\Priv_{X\sim d(E)}\psi}}

  \inferrule*[right=\rulename{Split1}]
  {
   \vdash_m\triple{\varphi}{C}{\psi}}
 {\textstyle
   \vdash_m
     \triple
       {\bigoplus^o_{X\sim d(E)}\varphi}
       {C}
       {\bigoplus^\Leak_{X\sim d(E)}\psi}}
%

  \inferrule*[right=\rulename{ND-Split1}]
  {
   \vdash_m\triple{\varphi}{C}{\psi}}
 {\textstyle
   \vdash_m
     \triple
       {\bignd_{X\in E}\varphi}
       {C}
       {\bignd_{X\in E}\psi}}
%
%
\end{mathpar}
\end{minipage}
\caption{Structural and Derived Rules.}
\label{fig:structural-derived-rules}
\end{figure}

\Cref{fig:structural-derived-rules} presents the main structural rules of
\LogicName.  Unlike \cites{fan2025program} prior logic for oblivious adversaries,
\LogicName is based on separation logic, which enables compositional reasoning by reusing local proofs in
larger probabilistic contexts without reanalyzing the program branch
by branch \cite{psl,li2023lilac,bao2025bluebell,zilberstein2026probabilistic}.

More precisely, the \ruleref{Frame} rule extends a proof of
$\triple{\varphi}{C}{\psi}$ to a proof of
$\triple{\varphi\sep\vartheta}{C}{\psi\sep\vartheta}$ whenever the frame
$\vartheta$ is \emph{stable} in the current mode. Semantically, an input
resource satisfying $\varphi \sep \vartheta$ can be split into an active
component satisfying $\varphi$ and a framed component satisfying
$\vartheta$. Since the framed component is hidden from the local proof of $C$,
stability is needed to ensure that this hidden resource does not carry
scheduler-count information in a way that can affect the command.
To see why the restriction is essential, consider the following unsound application of the \ruleref{Frame} rule.
\[
  \inferrule{
        \vdash_\weak\triple{\sure{y \mapsto -}}{y\gets[0,1]}{(\sure{y\mapsto0}\nd\sure{y\mapsto1})}
  }{
    \vdash_\weak\triple{\sure{y \mapsto -}\sep x \overset{\Leak}{\sim} \bern{\tfrac12}}{y\gets[0,1]}{(\sure{y\mapsto0}\nd\sure{y\mapsto1})\sep x \overset{\Leak}{\sim} \bern{\tfrac12}}
  }{\ruleref{Frame}}
\]
Above, the leaky frame may correlate $x$ with the schedule count, in which case $x$ and $y$ are not independent at the end of the execution, rendering the postcondition false.
The stability premise prevents
that failure mode. In the weak case of the proof, the frame is decomposed into an independent
product of a memory fragment $\calR_\vartheta^\memfn$ and a count fragment
$\calR_\vartheta^\cntfn$; the count fragment is committed to the active
resource, while the memory fragment is retained as the count-stable
postcondition frame.  For strong triples, no analogous restriction is needed:
strong-mode reasoning
is ordinary probabilistic separation reasoning within a fixed scheduler-count
range, so the frame is never exposed to an adversarially chosen count shift.

Most of the remaining structural rules are standard.  \ruleref{Weakening} is
immediate: a strong triple satisfies the stronger frame-preservation
obligation, and therefore also satisfies the weak one.  \ruleref{Consequence}
is the usual rule for strengthening preconditions and weakening
postconditions.
The split rules are logical case-analysis principles analogous to
those in \pcol \cite{zilberstein2026probabilistic}.  \ruleref{Split1} and
\ruleref{ND-Split1} reduce a judgment over an outer probabilistic or
nondeterministic assertion to the corresponding branch judgments. In
\ruleref{Split1}, the resulting probabilistic postcondition uses the public
modality $\Leak$.
\ruleref{Priv-Split1} is the private analogue of \ruleref{Split1}. Similar to the \ruleref{Prob-Safe} rule, the
additional premise $\textsf{bits}(C)=\{n\}$ ensures deterministic scheduler
consumption by $C$, allowing the probabilistic modality in the postcondition to
remain $\Priv$.

Unlike the splitting rules in \pcol, the \LogicName variants do not require that the postcondition witnesses a partition of the sample space to satisfy disjointness in the direct sum definition, since we now use probability spaces over an index set rather than directly over memories. That means that \ruleref{Split1} in \LogicName always applies, whereas \pcol needed an additional \ruleref{Split2} rule with a collapsed convex postcondition for cases where the program state does not remain disjoint across random branches. For convenience, we provide \ruleref{Split2} as a derived rule in the appendix.

Together with the assertion language, the structural rules for separation and
split let us choose the granularity of a leakage claim. Proofs can expose exactly the
distinctions made observable by execution while preserving the private entropy
that remains.
For example, let $\code{tick} = \cmdskip \cmdnondetS \cmdskip$ be a command that consumes a schedule entry, then consider:
\begin{equation}\label{eq:partial-leak}
  C
  \triangleq
  x\samp\unif{1,3}
  \cmdseqS
  (\cmdcase{x\le 2}{\code{tick}}{\cmdskip})
  \cmdseqS
  y\gets[1,2,3]
\end{equation}
The conditional reveals whether $x\le 2$ to the scheduler, but it does not reveal the
residual choice between $1$ and $2$, which is expressed by the following weak
triple:
\[
  \vdash_\weak
  \triple{\sure{\own(x,y)}}{C}
  {
    \textstyle
    \big(
      x \overset{\Priv}{\sim} \unif{1,2}
      \sep
      \bignd_{Y=1}^{3}\sure{y\mapsto Y}
    \big)
    \oplus^\Leak_{\frac23}
    \big(
      \sure{x\mapsto3}
      \sep
      \bignd_{Y=1}^{3}\sure{y\mapsto Y}
    \big)
  }
\]
To derive that triple, we first apply the \ruleref{Consequence} $x
\sim \unif{1, 3} \Rightarrow (x \sim \unif{1,2}) \oplus_\frac23
\sure{x\mapsto 3}$ after the sampling event to separate the top-level
information about $x$ (leaked by the if statement) from
the inner information (which remains private). The remainder of the
proof proceeds mechanically using \ruleref{Split1}, \ruleref{IfT},
and \ruleref{IfF}. Ultimately, the postcondition tells us that the
scheduler could guess $x$ if its value is 3, but can only guess $x$
with probability $\frac12$ otherwise. That gives a total probability of
$\frac23\cdot\frac12+\frac13\cdot1=\frac23$ that the scheduler could
guess $x$, which is between the uninformed success probability $\frac13$
and the success probability $1$ of an adaptive scheduler.

\subsection{Loops and Almost Sure Termination}

\begin{figure}
\footnotesize
\centering
\begin{minipage}{\linewidth}
\begin{mathpar}
  \inferrule*[right=\rulename{Bounded-Rank}]
  {
    \inferrule*{}{
      \substack{
      \varphi \Rightarrow \sure{\ell \le R \le h} \\
      \varphi\sep[R = \ell] \Rightarrow \sure{e \mapsto \fls} \\
      \varphi\sep[R > \ell] \Rightarrow \sure{e \mapsto \tru}
      }
    }\;\;
    \vdash_m\triple{\varphi * \sure{R = N > \ell}}{C}{\textstyle\big(\bignd_{R=\ell}^{N-1} \varphi\big) \oplus_{\ge p}^\Leak \big(\bignd_{R=N}^h \varphi\big)}\;\;
    \inferrule*{}{
      \substack{
      0 < p \le 1 \\
      N \notin \vars(\varphi) \\
      \convex{\varphi[\ell/R]}
    }}
  }
  {
    \vdash_m\triple{\textstyle\bignd_{R=\ell}^h \varphi}{\whl eC}{\varphi[\ell/R]}
  }
 \end{mathpar}
\end{minipage}
\caption{Bounded-Rank Rule.}
\label{fig:bounded-rank-rule}
\end{figure}

\Cref{fig:bounded-rank-rule} presents the final rule of \LogicName for
analyzing while loops and proving that they almost surely terminate:
they terminate with probability $1$.
The rule is inspired by bounded-rank
principles due to \citet{mciver2005abstraction}, as well as previous
probabilistic outcome
logics \cite{zilberstein2025demonic,zilberstein2026probabilistic}, but
reimplemented for oblivious schedulers.

As in \pcol, the rule involves a loop invariant $\varphi$ and an
integer-valued rank $R$ whose bounds are entailed by the invariant, \ie
$\varphi \Rightarrow \sure{\ell \le R \le h}$.  The rank controls the guard.
The loop continues when $R>\ell$, namely
$\varphi \sep \sure{R>\ell}\Rightarrow\sure{e\mapsto\tru}$, and has a single
point of exit at $R=\ell$, namely
$\varphi \sep \sure{R=\ell}\Rightarrow\sure{e\mapsto\fls}$.
The premise of
\ruleref{Bounded-Rank} supplies the probabilistic progress argument. Given the
current rank $N>\ell$ prior to one iteration of $C$, the postcondition assigns
probability at least $p>0$ to branches satisfying $\varphi$ at some strictly
smaller rank $R\in\{\ell,\ldots,N-1\}$.  The
remaining mass may still be live at some rank $R\in\{N,\ldots,h\}$, allowing a
step to make no progress or even jump back upward.  Since the rank range is
finite, there is a uniformly positive
probability of reaching the absorbing exit rank $\ell$ in every sufficiently
long live block; equivalently, the live probability mass decays geometrically, so the loop
terminates almost surely.

The soundness proof maintains an exit/live
invariant over the loop approximants.  Given the initial active resource $\calR$, after the $i$th guard test, the proof
tracks the exit and live weights $w_i^{\exit},w_i^{\live}$ and the
corresponding normalized distributions $\mu_i^{\exit},\mu_i^{\live}$:
\begin{align*}
  \Exit(i):
  &
    w_i^{\exit}>0
    \Rightarrow
    \exists \calQ_i.\;
    \Gamma,\calQ_i\vDash\varphi[\ell/R]
    \land
    \calQ_i\otimes\calR_F\preceq\mu_i^{\exit} \land V_\calQ \subseteq V_\calR
  \\[0.4em]
  \Live(i):
  &
    w_i^{\live}>0
    \Rightarrow
    \exists I_i,
    \theta_i\in\calD(I_i),
    r_i:\supp(\theta_i)\to\{\ell+1,..,h\},
    (\calR_{i,a})_{a\in\supp(\theta_i)},
    (\mu_{i,a})_{a\in\supp(\theta_i)}.\;
  \\
  &
    \qquad{}
    \textstyle\mu_i^{\live}=\bigoplus_{a\sim\theta_i}\mu_{i,a}\\
  &
    \qquad{}
    \land
    \forall a\in\supp(\theta_i).\;
    \Gamma[R\mapsto r_i(a)],\calR_{i,a}\vDash\varphi
    \land
    \calR_{i,a}\otimes\calR_F\preceq\mu_{i,a}
    \land V_{\calR_{i, a}} \subseteq V_{\calR}
\end{align*}
Here $\Exit(i)$ describes the runs that stop at the $i$th guard test. If the
exit event has positive mass, then some resource satisfying
$\varphi[\ell/R]$, together with the preserved frame, refines the conditional
distribution $\mu_i^{\exit}$.
Meanwhile, $\Live(i)$ says that the executions inside the loop can be
put back into the same shape for the next iteration.  The live
distribution may be decomposed into a mixture of branches $a\sim\theta_i$;
each branch carries rank $r_i(a)\in\{\ell+1,\ldots,h\}$, satisfies $\varphi$
at that rank, and its active resource, together with the frame, refines the
branch distribution $\mu_{i,a}$.
The footprint bound keeps these
witnesses within a common footprint, so exit resources from different
iterations can be aggregated via direct sum.
These two clauses mirror the
premise of \ruleref{Bounded-Rank}, providing the resource witnesses needed to
continue the construction at each iteration.

The invariant drives the block-decay argument.  Let $H=h-\ell$ be the rank
height.  In every block of $H$ live iterations, there is probability at least
$p^H$ of reaching the exit rank by strictly decreasing the rank at each step, so
$w_{i+H}^{\live}\le(1-p^H)w_i^{\live}$.  Thus the live mass vanishes in the
limit, and the loop output is the countable mixture of its exit distributions.

It is worth noting that the final-postcondition premise
$\convex{\varphi[\ell/R]}$ is weaker than the corresponding precision premise
$\precise{\varphi[\ell/R]}$ in \pcol.  In \pcol, precision of the final
assertion is required after the almost-sure termination argument to identify a
unique minimal probability space for the limiting postcondition.  Our proof
does not need to identify such a canonical witness.  Instead, for each exit
time $i$, $\Exit(i)$ provides a resource $\calQ_i$ satisfying
$\varphi[\ell/R]$ and a framed refinement
$\calQ_i\otimes\calR_F\preceq\mu_i^{\exit}$.  Once the live mass vanishes, the
final resource is the countable mixture of the $\calQ_i$'s.  The frame is
preserved because $\otimes$ distributes over this mixture, and convexity of
$\varphi[\ell/R]$ is enough to show that the mixed resource still satisfies the
final postcondition.

In addition, if the loop body uses any nondeterminism, then information is leaked to the scheduler during partial executions based on the likelihood of termination after each number of steps. However, since \ruleref{Bounded-Rank} implies almost sure termination, that leaked information ultimately vanishes. Just as we saw in program (\ref{eq:partial-leak}), the entropy leaked to the adversary via the variables in the loop guard does not give it any advantage toward influencing the execution of the live branches.

\section{Case Studies}
\label{sec:examples}

We now demonstrate the use of \LogicName to verify a variety of programs, which are correct in the oblivious adversary model, but not the adaptive model. The full derivations are given in \Cref{app:examples}.

\subsection{The Monty Hall Problem}
\label{sec:monte}

The Monty Hall Problem is a probability puzzle modeled after the game show \emph{Let's Make a Deal} \cite{selvin1975letters}.
The host, Monty Hall, first randomly places a car $c$ behind door 1, 2, or 3, and places goats behind the other two
doors. Next, the player selects a door $p$. Monty then opens a door
$o$, which is neither the car nor the player's choice. Finally, the
player either accepts the prize behind the door $p$, or
switches to the other closed door ($6-p-o$).
The classic result is
that switching doors leads to a higher probability of winning the car,
which we prove in \LogicName using the program below.
\[
  c \samp \unif{1, 3} \fatsemi
  p \gets [1, 2, 3] \fatsemi
  o \gets [1, 2, 3] \setminus [c, p] \fatsemi
  p \coloneqq 6 - p - o
\]
A key property of the Monty Hall Problem is that the player's choice
is \emph{oblivious} to the random location of the car. An adaptive
player would be able to guess the location of the car
with certainty, which does not accurately model a game show.
Indeed, a similar example was proven in \dol, but there the player's
choice was fixed \cite{zilberstein2025demonic}. By contrast, the
\LogicName proof leverages the probabilistic independence guarantees
of the \ruleref{NAssign} rule. After processing the first two
commands, we get the following triple, which tells us that for any
fixed player choice $X$, the value of $c$ is uniformly random; thus,
the player wins the car ($p = c$) with probability $\frac13$.
\[\small
\inferrule*[right={Seq\phantom{XX}}]{
  \inferrule*[right=Samp,vdots=2.5em,rightskip=24em]{\;}{
    \vdash_\weak\triple{\sure{\own(c, p)}}{c\samp\unif{1,3}}{c\sim \unif{1,3}\sep\sure{\own(p)}}
  }
  \inferrule*[Right=NAssign]{\;}{
    \vdash_\weak\triple{c\sim \unif{1,3}\sep\sure{\own(p)}}{p\gets[1,2,3]}{\textstyle\bignd_{X \in \{1,2,3\}} c\sim \unif{1,3}\sep\sure{p \mapsto X}}
  }
}{
  \vdash_\weak\triple{\sure{\own(c, p)}}{c\samp\unif{1,3} \fatsemi p \gets [1,2,3]}{\textstyle\bignd_{X \in \{1,2,3\}} c\sim \unif{1,3}\sep\sure{p \mapsto X}}
}
\]
Let $S_X = \{1, 2, 3\}\setminus \{X\}$ and $d_X = \unif{S_X}$, then we can rewrite the above postcondition as follows:
\[\textstyle
  \bignd_{X \in \{1,2,3\}} \left( \sure{c \mapsto X \sep p \mapsto X} \oplus_\frac13 \bigoplus_{Y \sim d_X} \sure{c \mapsto X \sep p\mapsto Y} \right)
\]
Moving further through the program, we continue the proof by case
analysis on the outcomes above. On the left side of the
$\oplus_\frac13$, we get that $o$ is nondeterministically chosen from
$S_X$ and that $c= X$ and $p = 6-X-Z$. It is easy to see $p\neq
c$ by enumerating the possible values for $X$ and from the fact $X
\neq Z$, by definition. On the right of $\oplus_\frac13$, Monty
has only one choice for $o$ since $c \neq p$, \ie $o = 6 - X -
Y$. That means that $p = 6 - Y - o = 6 - Y - (6 - X - Y) = X$;
therefore, $c = p$. The proof is sketched below and shown fully in
\Cref{app:monte}.
\begin{mathpar}\def\arraystretch{1.25}
  \begin{array}{l}
  \ob{\sure{c\mapsto X \sep p\mapsto X \sep \own(o)}}
  \\
  \quad o\gets [1,2,3]\setminus [c,p]\fatsemi \phantom{x} \\
  \quad p \coloneqq 6-p-o
  \\
  \ob{
    \textstyle\bignd_{Z \in S_X} \sure{c \mapsto X \sep o\mapsto Z \sep p\mapsto 6-X - Z}
  }
  \\
  \ob{\sure{(c = p) \mapsto \false}}
  \end{array}
  \qquad
  \begin{array}{l}
  \ob{\bigoplus_{Y\sim d_X} \sure{c\mapsto X \sep p\mapsto Y \sep \own(o)}}
  \\
  \quad o\gets [1,2,3]\setminus [c,p]\fatsemi \phantom{x} \\
  \quad p \coloneqq 6-p-o
  \\
  \ob{
    \bigoplus_{Y\sim d_X} \sure{c \mapsto X \sep o\mapsto (6 - X - Y) \sep p\mapsto X}
  }
  \\
  \ob{\sure{(c=p)\mapsto \true}}
  \end{array}
\end{mathpar}
Putting the two cases together, we get the following postcondition, which states that the player has a $\frac23$ chance of winning after switching doors, replicating the classic result.
\[
  \textstyle\bignd_{X \in \{1,2,3\}} \left( \sure{(c = p)\mapsto \false} \oplus_\frac13 \sure{(c=p)\mapsto \true} \right)
  \qquad\Rightarrow\qquad
  (c=p) \sim \bern{\frac23}
\]

%
%

\subsection{Leader Election}
\label{sec:leader}

In distributed systems, consensus is the problem of ensuring that parallel agents agree on a data value, which is critical to database and blockchain applications. \citet{fischer1985impossibility} showed that consensus cannot be achieved deterministically in an asynchronous system with faults, but randomization provides an elegant workaround for that impossibility result \cite{ben-or1983another}. More recently,  \citet{chor1994wait-free} and \citet{aspnes2012faster,aspnes2010modular} showed that randomized consensus can be achieved more efficiently by algorithms that take advantage of an oblivious scheduler.

The trick is to use a round-based protocol, where each process probabilistically moves to the next round and all processes must adopt the leader's value if they get sufficiently far ahead. Whereas an adaptive scheduler could anticipate which process is about to advance and delay it until another process catches up, an oblivious scheduler has no power to prevent a leader from emerging \cite{chor1994wait-free}.
We distill that idea into the two-party leader election protocol below. 
\[
\vdash_\weak \ob{\sure{\own(x)}}~~
\begin{array}{l}
  x \coloneqq 0\fatsemi\phantom{x} \\
  \whl{|x| \le 1}{} \\
  \quad (x \samp \unif{x, x+1}) \nd (x \samp \unif{x-1, x})
\end{array}
~~\ob{\sure{x\mapsto 2} \nd \sure{x \mapsto -2}}
\]
In the protocol, Alice and Bob compete to become the leader. Alice is elected when $x = 2$ and Bob is elected when $x = -2$.
Each round, the adversarial scheduler picks who gets to run. When Alice runs, she increments $x$ with probability $\frac12$ and when Bob runs, he decrements $x$ with probability $\frac12$.

In an adaptive model, the program does not necessarily terminate. Since the scheduler can see the coin flips, it can always schedule the candidate who is behind until $x = 0$, so that there is no leader. However, in an oblivious model, the scheduler cannot base its choice on the value of $x$, therefore the program does terminate almost surely, which we now show. Note that the protocol does not give any guarantee of \emph{fairness}; the scheduler can determine who will be elected the leader, but cannot prevent a leader from being elected altogether.

The key intuition is that the algorithm always terminates within
three iterations with probability at least $\frac18$.  Starting from
the \emph{worst case} scenario, where $\sure{x\mapsto 0}$, one of
the candidates runs at least twice. Suppose that Alice runs at least
twice. In the worst case, Bob runs once as well. Then, Alice is
 elected if she advances twice (with probability $\frac14$), and
if Bob does not advance (with probability $\frac12$), giving a total
probability of $\frac18$ that Alice wins after three rounds.

To formalize the argument, we first need a loop invariant for the \ruleref{Bounded-Rank} rule. The progress toward termination depends not only on the value of $x$, but also the relative probabilities of $x$ taking on different values.
The invariant $\varphi$ has four parts, indexed from 0 to 3.
\begin{align*}
  \varphi_3 &\triangleq \big(\sure{x \mapsto 1} \nd \sure{x \mapsto 0} \nd \sure{x \mapsto -1}\big) \sep \sure{R=3}
  &
  \psi_3 &\triangleq \varphi_2 \nd (\varphi_0 \oplus_\frac12 \varphi_3)
  \\
  \varphi_2 &\triangleq \big(x \sim \unif{0,1} \nd x\sim\unif{-1, 0} \big) \sep \sure{R=2}
  &
  \psi_2 &\triangleq (\varphi_0 \oplus_\frac14 \varphi_3) \nd (\varphi_1 \oplus_\frac12 \varphi_3)
  \\
  \varphi_1 &\triangleq \big(\sure{x \mapsto 1} \oplus_\frac12 \sure{x \mapsto -1}\big) \sep \sure{R=1}
  &
  \psi_1 &\triangleq \varphi_0 \oplus_\frac14 \varphi_3
  \\
  \varphi_0 &\triangleq \big(\sure{x \mapsto 2} \nd \sure{x \mapsto -2}\big) \sep \sure{R=0}
  &
  \varphi &\triangleq \varphi_0 \vee \varphi_1 \vee \varphi_2 \vee \varphi_3
\end{align*}
In $\varphi_3$, which constitutes the furthest state from termination, the value of $x$ is chosen by the adversary. Next $\varphi_2$ gives slightly less leeway to the adversary by mandating that $x$ is 0 with probability exactly $\frac12$. Progressing further, $\varphi_1$ states that $x$ has equal probability of being 1 or -1. Finally, $\varphi_0$ dictates that the protocol has terminated, since $x$ is either $2$ or $-2$. We now prove that the ranks outlined in the invariant must decrease with probability at least $\frac14$ on each iteration of the loop.
We do so by case analysis on the rank. That is, for each $i \in \{1, 2, 3\}$, we prove the following triple, where $\psi_i$ is also defined above and full derivations are shown in \Cref{app:leader}.
\[
  \vdash_\weak\triple{\varphi_i}{(x \samp \unif{x, x+1}) \nd (x \samp \unif{x-1, x})}{\psi_i}
\]
The proofs themselves follow by case analysis, which we sketch
starting with the $i=3$ case. If $x = 0$, then we are guaranteed
to end in state $\varphi_2$. The cases where $x=1$ and $x=-1$ are
symmetric, so we discuss only the $x=1$ case. There are two
nondeterministic possibilities. If the adversary picks Alice, then she
has a $\frac12$ chance of winning immediately. If the adversary
instead picks Bob, then we again end up in state $\varphi_2$, since
$x=1$ and is decremented to 0 with probability $\frac12$.

When $i=2$, there are again two symmetric cases. Suppose that $x
\sim\unif{0,1}$. If Alice runs, then she has a $\frac14$ chance of winning
immediately (\ie $x=1$ with probability $\frac12$ and $x$ is incremented
with probability $\frac12$). If Bob runs, then $x$ will be 1 or $-1$
with equal probability, meaning that the scheduler has no way to prevent
a win in the next round. That is, when $i=1$, one of the players must
win with probability $\frac14$ regardless of the adversary's choice.
The $i=1$ and $i=2$ cases both take advantage of the scheduler's
obliviousness. If the scheduler were adaptive and therefore knew which
candidate were ahead, it could always run the other player. Finally,
since each $\psi_i$ implies that the rank strictly decreases with
probability at least $\frac14$, we can complete the proof.

\subsection{Online Algorithms: Paging and Caching}
\label{sec:paging}

Paging is the problem of determining what data to cache in order to reduce the number of expensive fetches from main memory. It is a classic example of an \emph{online algorithm}, since decisions must be made as adversarial requests are received.
Randomized paging algorithms perform better than any deterministic ones against an oblivious adversary \cite{fiat1991competitive,mcgeoch1991strongly}. To demonstrate that, we prove properties about the randomized paging algorithm below in \LogicName.
\[\arraycolsep=0pt
\vdash_\weak\ob{\sure{\own(c, i, m, r) \sep n \mapsto N}}
\begin{array}{l}
\quad  c \samp \bern{\tfrac12}\fatsemi i\coloneqq 0\fatsemi m \coloneqq 0\fatsemi\phantom{x} \\
\quad  \whl{i < n}{} \\
  \qquad i \coloneqq i+1\fatsemi 
   r \gets [0,1]\fatsemi\phantom{x} \\
  \qquad \ift{r \neq c}{} \\
  \quad\qquad c \samp \bern{\tfrac12} \fatsemi 
  m \coloneqq m+1
\end{array}
\ob{\textstyle\sure{m\mapsto N} \oplus_{\frac1{2^N}} \bignd_{X = 0}^{N-1} \sure{m \mapsto X}}
\]
There are two memory pages, indexed 0 and 1, and only one of them can be held in the randomly initialized cache $c$. Each round, the algorithm queries the adversary for a request $r$, which is the index of the next page to fetch. If $r=c$, then $r$ can be served from the cache, so there is nothing to do. But if $r\neq c$, then the $r$ must be fetched, and one of the pages must be evicted from the cache. The page to evict is chosen randomly. Finally, the miss count $m$ is incremented to record that a miss has occurred.
Our goal is to bound the number of cache misses that occur.

An adaptive adversary could always choose to request the page that is not cached, and therefore force the outcome $\sure{m\mapsto N}$, \ie every request is a miss. However, an oblivious adversary cannot do that, and so whether or not a miss occurs is uniformly random, assuming that the previous request was a miss. If the previous request was a hit, then choosing the same page again will also be a hit, so the chance of getting a miss again in the next round is not uniformly random. So, if the adversary alternates its request each round, it minimizes the probability of consecutive hits.

We aim to prove the postcondition above, which states that getting all misses $\sure{m\mapsto N}$ occurs with probability only $\frac1{2^N}$, and with probability $1-\frac1{2^N}$ there are strictly fewer than $N$ misses. The full proof is given in \Cref{app:paging}, but we briefly sketch the proof here. First, we need a loop invariant $\varphi$. Similar to the final postcondition, the loop invariant has two parts, which correspond to the case that all requests were cache misses ($\varphi_\mathsf{miss}$) and the case that at least one cache hit has occurred ($\varphi_\mathsf{hit}$). In addition $\varphi_\mathsf{pure}$ tracks deterministic information that ensures that the loop terminates after $N$ iterations. The rank $R$ records how many iterations there are to go until termination, and it decreases with probability 1 on each iteration.
\begin{mathpar}
  \varphi_{\textsf{miss}} \triangleq \sure{m \mapsto N-R} \sep c\sim\bern{\tfrac12}

  \varphi_{\textsf{hit}} \triangleq \textstyle\bignd_{X=0}^{N-R-1} \bignd_{Y\in\{0,1\}} \sure{m\mapsto X\sep c \mapsto Y}

  \varphi_{\textsf{pure}} \triangleq \sure{i \mapsto N-R \sep n \mapsto N \sep \own(r) \sep 0 \le R \le N}

  \varphi \triangleq \varphi_{\textsf{pure}} \sep \big( \varphi_{\textsf{miss}} \oplus_{\frac1{2^{N-R}}} \varphi_{\textsf{hit}} \big)
\end{mathpar}
The proof follows by case analysis on the random outcomes
$\varphi_{\textsf{miss}} \oplus_{\frac1{2^{N-R}}}
\varphi_{\textsf{hit}}$. In the $\varphi_\mathsf{miss}$ case,
the cache value cannot be leaked, so the adversarial
request $r \gets[0,1]$ is independent from $c$.
Hence, after each iteration, the probability of
no hits decreases by a factor of 2. In the $\varphi_\mathsf{hit}$
case, we cannot be sure that $r$ is independent from $c$. If the
prior request was a cache hit, the adversary knows precisely what is
in the cache. So, whether a miss occurs on the current round is
essentially nondeterministic, meaning that we now know that fewer than
$(N-R-1) + 1$ misses have occurred.

\section{Related Work}
\label{sec:related}

\citet{fan2025program} developed a Hoare-style logic for
\emph{concurrent} randomized programs under an oblivious scheduler.
Unlike \LogicName, their logic is not based on probabilistic independence, leading to diminished compositionality.
Their mechanism for exploiting obliviousness is the \emph{lazy coin}, which treats
a probabilistic choice $a_1 \oplus_p a_2$ as a single atomic action.
Although lazy coin precludes the scheduler from picking a strategy between the resolution of $\oplus_p$
and the execution of the associated $a_i$, it cannot
guarantee that later nondeterminism remains oblivious to
the coin flip, \eg $x$ and $y$ are independent throughout
\(
  (x \coloneqq 1 \oplus_\frac12 x \coloneqq 0) \fatsemi y\gets[0,1]
\).
Although \LogicName does not support concurrency, it provides more
expressive abstractions for reasoning about obliviousness, which are
essential to the case studies in \Cref{sec:examples}. In addition,
\citeauthor{fan2025program}'s logic cannot handle almost sure
termination and has weaker case analysis on random outcomes.
Their logic has a split rule with a collapsed convex postcondition, which
requires instrumenting the program with explicit splitting instructions.
By contrast, \LogicName supports purely logical splitting at any point
in a derivation, and its \textsc{Pub/Priv-Split1} rules retain
the shape of the outer $\bigoplus$ modality.

More broadly, \LogicName builds on work that uses
probabilistic independence as a compositional reasoning principle. Probabilistic
Separation Logic (\psl) adapted separation to express probabilistic
independence \cite{psl}.
Later logics, including
Lilac \cite{li2023lilac} and Bluebell \cite{bao2025bluebell},
refined this idea using probability spaces to reason about independence at
varying granularities.
Amaryllis \cite{lohse2026first} combines \psl-style independence with
Iris resource algebras \cite{iris,iris1} to support dynamic heap allocation
and custom state.
Most relevant for \LogicName, Amaryllis uses probability spaces over finite indices rather than program states, supporting
resources without canonical projection-based splits.
Scheduler counts are one such resource: a count of $5$ contains no information
about its provenance, since
the count could arise from $1+4$, $3+2$, and so on.
Accordingly, \Cref{sec:assertions} uses a similar indexed model. Unlike
Amaryllis, our model supports countable distributions,
while Amaryllis provides more general frame-preserving updates for reasoning
about concurrency.

Separately, Outcome Logic provides a Hoare-style framework for reasoning about
probabilistic and nondeterministic outcomes
\cite{outcome,zilberstein2024outcome,zilberstein2025demonic,zilberstein2025outcome}.
Its probabilistic-concurrent extension \pcol{} \cite{zilberstein2026probabilistic,zilberstein2025denotational} combines \psl-style independence reasoning
with Concurrent Separation Logic \cite{csl,brookes2004semantics}.
Nondeterminism rules in \pcol completely preclude the use of the \ruleref{Frame} rule
to provide probabilistic independence guarantees.
Although \LogicName{} adopts the same
outcome-assertion approach, the oblivious setting allowed us to retain independence guarantees for
private randomness across nondeterministic choices, yielding more permissive compositional rules.

\section{Conclusion}
\label{sec:conclusion}

We developed a denotational model for probabilistic programs with an oblivious
adversary and introduced Oblivious Probabilistic Outcome Logic (\LogicName) for reasoning
about programs in that model. The key idea is to treat schedule consumption as a
resource, allowing the logic
to distinguish private sources of randomness from sources that may have leaked
to the adversary. The distinction yields
expressive and compositional proof rules, including a sound
\ruleref{Frame} rule, while still recovering standard adaptive reasoning
principles when obliviousness is not needed.

Unlike prior logics, \LogicName maintains obliviousness globally rather than
locally so that facts about independence of variables can be preserved and reused
across later nondeterministic choices.
That large degree of compositionality, together with almost-sure
termination rules for reasoning about probabilistic liveness, yields a highly effective logic
that surpasses prior work.
We showed that \LogicName{} can be used to verify various realistic algorithms, whose correctness relies on an
oblivious adversary, putting them out of reach of prior logics.

A natural next direction is concurrency, where scheduler choices occur at every
step rather than only at explicit nondeterministic commands.
Concurrency will require tracking new ways that control flow leaks entropy to
the scheduler while preserving the independence reasoning that makes
\LogicName{} compositional.
The resource-based treatment of schedule consumption developed here provides a
promising foundation for such a concurrency logic.

\finalpage 
\clearpage
\bibliography{refs}


\begin{thebibliography}{50}


\ifx \showCODEN    \undefined \def \showCODEN     #1{\unskip}     \fi
\ifx \showISBNx    \undefined \def \showISBNx     #1{\unskip}     \fi
\ifx \showISBNxiii \undefined \def \showISBNxiii  #1{\unskip}     \fi
\ifx \showISSN     \undefined \def \showISSN      #1{\unskip}     \fi
\ifx \showLCCN     \undefined \def \showLCCN      #1{\unskip}     \fi
\ifx \shownote     \undefined \def \shownote      #1{#1}          \fi
\ifx \showarticletitle \undefined \def \showarticletitle #1{#1}   \fi
\ifx \showURL      \undefined \def \showURL       {\relax}        \fi
\providecommand\bibfield[2]{#2}
\providecommand\bibinfo[2]{#2}
\providecommand\natexlab[1]{#1}
\providecommand\showeprint[2][]{arXiv:#2}

\bibitem[Abramsky and Jung(1995)]%
        {abramsky1995domain}
\bibfield{author}{\bibinfo{person}{Samson Abramsky} {and}
  \bibinfo{person}{Achim Jung}.} \bibinfo{year}{1995}\natexlab{}.
\newblock \bibinfo{booktitle}{\emph{Domain Theory}}.
\newblock \bibinfo{publisher}{Oxford University Press, Inc.},
  \bibinfo{address}{USA}, \bibinfo{pages}{1–168}.
\newblock
\showISBNx{019853762X}


\bibitem[Apt and Plotkin(1986)]%
        {apt1986countable}
\bibfield{author}{\bibinfo{person}{Krzysztof Apt} {and} \bibinfo{person}{Gordon
  Plotkin}.} \bibinfo{year}{1986}\natexlab{}.
\newblock \showarticletitle{Countable nondeterminism and random assignment}.
\newblock \bibinfo{journal}{\emph{J. ACM}} \bibinfo{volume}{33},
  \bibinfo{number}{4} (\bibinfo{date}{aug} \bibinfo{year}{1986}),
  \bibinfo{pages}{724–767}.
\newblock
\showISSN{0004-5411}
\href{https://doi.org/10.1145/6490.6494}{doi:\nolinkurl{10.1145/6490.6494}}


\bibitem[Aspnes(2010)]%
        {aspnes2010modular}
\bibfield{author}{\bibinfo{person}{James Aspnes}.}
  \bibinfo{year}{2010}\natexlab{}.
\newblock \showarticletitle{A modular approach to shared-memory consensus, with
  applications to the probabilistic-write model}. In
  \bibinfo{booktitle}{\emph{Proceedings of the 29th {ACM} {SIGACT}-{SIGOPS}
  symposium on {Principles} of distributed computing}}.
  \bibinfo{publisher}{ACM}, \bibinfo{address}{Zurich Switzerland},
  \bibinfo{pages}{460--467}.
\newblock
\showISBNx{978-1-60558-888-9}
\href{https://doi.org/10.1145/1835698.1835802}{doi:\nolinkurl{10.1145/1835698.1835802}}


\bibitem[Aspnes(2012)]%
        {aspnes2012faster}
\bibfield{author}{\bibinfo{person}{James Aspnes}.}
  \bibinfo{year}{2012}\natexlab{}.
\newblock \showarticletitle{Faster randomized consensus with an oblivious
  adversary}. In \bibinfo{booktitle}{\emph{Proceedings of the 2012 ACM
  Symposium on Principles of Distributed Computing}} (Madeira, Portugal)
  \emph{(\bibinfo{series}{PODC '12})}. \bibinfo{publisher}{Association for
  Computing Machinery}, \bibinfo{address}{New York, NY, USA},
  \bibinfo{pages}{1–8}.
\newblock
\showISBNx{9781450314503}
\href{https://doi.org/10.1145/2332432.2332434}{doi:\nolinkurl{10.1145/2332432.2332434}}


\bibitem[Awodey(2006)]%
        {awodey2006category}
\bibfield{author}{\bibinfo{person}{Steve Awodey}.}
  \bibinfo{year}{2006}\natexlab{}.
\newblock \bibinfo{booktitle}{\emph{Category Theory}}.
\newblock \bibinfo{publisher}{Oxford University Press}.
\newblock
\showISBNx{9780198568612}
\href{https://doi.org/10.1093/acprof:oso/9780198568612.001.0001}{doi:\nolinkurl{10.1093/acprof:oso/9780198568612.001.0001}}


\bibitem[Bao et~al\mbox{.}(2025)]%
        {bao2025bluebell}
\bibfield{author}{\bibinfo{person}{Jialu Bao}, \bibinfo{person}{Emanuele
  D'Osualdo}, {and} \bibinfo{person}{Azadeh Farzan}.}
  \bibinfo{year}{2025}\natexlab{}.
\newblock \showarticletitle{Bluebell: An Alliance of Relational Lifting and
  Independence for Probabilistic Reasoning}.
\newblock \bibinfo{journal}{\emph{Proc. ACM Program. Lang.}}
  \bibinfo{volume}{9}, \bibinfo{number}{POPL}, Article \bibinfo{articleno}{58}
  (\bibinfo{date}{Jan.} \bibinfo{year}{2025}), \bibinfo{numpages}{31}~pages.
\newblock
\href{https://doi.org/10.1145/3704894}{doi:\nolinkurl{10.1145/3704894}}


\bibitem[Barthe et~al\mbox{.}(2020)]%
        {psl}
\bibfield{author}{\bibinfo{person}{Gilles Barthe}, \bibinfo{person}{Justin
  Hsu}, {and} \bibinfo{person}{Kevin Liao}.} \bibinfo{year}{2020}\natexlab{}.
\newblock \showarticletitle{{A Probabilistic Separation Logic}}.
\newblock \bibinfo{journal}{\emph{Proc. ACM Program. Lang.}}
  \bibinfo{volume}{4}, \bibinfo{number}{POPL}, Article \bibinfo{articleno}{55}
  (\bibinfo{date}{Jan.} \bibinfo{year}{2020}), \bibinfo{numpages}{30}~pages.
\newblock
\href{https://doi.org/10.1145/3371123}{doi:\nolinkurl{10.1145/3371123}}


\bibitem[Ben-David et~al\mbox{.}(1990)]%
        {ben-david1990power}
\bibfield{author}{\bibinfo{person}{S. Ben-David}, \bibinfo{person}{A. Borodin},
  \bibinfo{person}{R. Karp}, \bibinfo{person}{G. Tardos}, {and}
  \bibinfo{person}{A. Wigderson}.} \bibinfo{year}{1990}\natexlab{}.
\newblock \showarticletitle{On the power of randomization in online
  algorithms}. In \bibinfo{booktitle}{\emph{Proceedings of the Twenty-Second
  Annual ACM Symposium on Theory of Computing}} (Baltimore, Maryland, USA)
  \emph{(\bibinfo{series}{STOC '90})}. \bibinfo{publisher}{Association for
  Computing Machinery}, \bibinfo{address}{New York, NY, USA},
  \bibinfo{pages}{379–386}.
\newblock
\showISBNx{0897913612}
\href{https://doi.org/10.1145/100216.100268}{doi:\nolinkurl{10.1145/100216.100268}}


\bibitem[Ben-Or(1983)]%
        {ben-or1983another}
\bibfield{author}{\bibinfo{person}{Michael Ben-Or}.}
  \bibinfo{year}{1983}\natexlab{}.
\newblock \showarticletitle{Another advantage of free choice: Completely
  asynchronous agreement protocols}. In \bibinfo{booktitle}{\emph{Proceedings
  of the 2nd ACM Symposium on Principles of Distributed Computing}} (Montreal,
  Quebec, Canada) \emph{(\bibinfo{series}{PODC '83})}.
  \bibinfo{publisher}{Association for Computing Machinery},
  \bibinfo{address}{New York, NY, USA}, \bibinfo{pages}{27–30}.
\newblock
\showISBNx{0897911105}
\href{https://doi.org/10.1145/800221.806707}{doi:\nolinkurl{10.1145/800221.806707}}


\bibitem[Brookes(2004)]%
        {brookes2004semantics}
\bibfield{author}{\bibinfo{person}{Stephen Brookes}.}
  \bibinfo{year}{2004}\natexlab{}.
\newblock \showarticletitle{A Semantics for Concurrent Separation Logic}. In
  \bibinfo{booktitle}{\emph{CONCUR 2004 - Concurrency Theory}},
  \bibfield{editor}{\bibinfo{person}{Philippa Gardner} {and}
  \bibinfo{person}{Nobuko Yoshida}} (Eds.). \bibinfo{publisher}{Springer Berlin
  Heidelberg}, \bibinfo{address}{Berlin, Heidelberg}, \bibinfo{pages}{16--34}.
\newblock
\showISBNx{978-3-540-28644-8}
\href{https://doi.org/10.1007/978-3-540-28644-8_2}{doi:\nolinkurl{10.1007/978-3-540-28644-8_2}}


\bibitem[Chor et~al\mbox{.}(1994)]%
        {chor1994wait-free}
\bibfield{author}{\bibinfo{person}{Benny Chor}, \bibinfo{person}{Amos Israeli},
  {and} \bibinfo{person}{Ming Li}.} \bibinfo{year}{1994}\natexlab{}.
\newblock \showarticletitle{Wait-Free Consensus Using Asynchronous Hardware}.
\newblock \bibinfo{journal}{\emph{SIAM J. Comput.}} \bibinfo{volume}{23},
  \bibinfo{number}{4} (\bibinfo{year}{1994}), \bibinfo{pages}{701--712}.
\newblock
\href{https://doi.org/10.1137/S0097539790192635}{doi:\nolinkurl{10.1137/S0097539790192635}}


\bibitem[de~Moura et~al\mbox{.}(2015)]%
        {moura2015lean}
\bibfield{author}{\bibinfo{person}{Leonardo de Moura}, \bibinfo{person}{Soonho
  Kong}, \bibinfo{person}{Jeremy Avigad}, \bibinfo{person}{Floris van Doorn},
  {and} \bibinfo{person}{Jakob von Raumer}.} \bibinfo{year}{2015}\natexlab{}.
\newblock \showarticletitle{The Lean Theorem Prover (System Description)}. In
  \bibinfo{booktitle}{\emph{Automated Deduction - CADE-25}},
  \bibfield{editor}{\bibinfo{person}{Amy~P. Felty} {and} \bibinfo{person}{Aart
  Middeldorp}} (Eds.). \bibinfo{publisher}{Springer International Publishing},
  \bibinfo{address}{Cham}, \bibinfo{pages}{378--388}.
\newblock
\showISBNx{978-3-319-21401-6}
\href{https://doi.org/10.1007/978-3-319-21401-6_26}{doi:\nolinkurl{10.1007/978-3-319-21401-6_26}}


\bibitem[Denning(1976)]%
        {denning1976lattice}
\bibfield{author}{\bibinfo{person}{Dorothy~E. Denning}.}
  \bibinfo{year}{1976}\natexlab{}.
\newblock \showarticletitle{A lattice model of secure information flow}.
\newblock \bibinfo{journal}{\emph{Commun. ACM}} \bibinfo{volume}{19},
  \bibinfo{number}{5} (\bibinfo{date}{May} \bibinfo{year}{1976}),
  \bibinfo{pages}{236–243}.
\newblock
\showISSN{0001-0782}
\href{https://doi.org/10.1145/360051.360056}{doi:\nolinkurl{10.1145/360051.360056}}


\bibitem[Fan et~al\mbox{.}(2025)]%
        {fan2025program}
\bibfield{author}{\bibinfo{person}{Weijie Fan}, \bibinfo{person}{Hongjin
  Liang}, \bibinfo{person}{Xinyu Feng}, {and} \bibinfo{person}{Hanru Jiang}.}
  \bibinfo{year}{2025}\natexlab{}.
\newblock \showarticletitle{A Program Logic for Concurrent Randomized Programs
  in the Oblivious Adversary Model}. In \bibinfo{booktitle}{\emph{Programming
  Languages and Systems}}, \bibfield{editor}{\bibinfo{person}{Viktor
  Vafeiadis}} (Ed.). \bibinfo{publisher}{Springer Nature Switzerland},
  \bibinfo{address}{Cham}, \bibinfo{pages}{322--348}.
\newblock
\showISBNx{978-3-031-91118-7}
\href{https://doi.org/10.1007/978-3-031-91118-7_13}{doi:\nolinkurl{10.1007/978-3-031-91118-7_13}}


\bibitem[Fatourou et~al\mbox{.}(1997)]%
        {fatourou1997efficiency}
\bibfield{author}{\bibinfo{person}{Panagiota Fatourou}, \bibinfo{person}{Marios
  Mavronicolas}, {and} \bibinfo{person}{Paul Spirakis}.}
  \bibinfo{year}{1997}\natexlab{}.
\newblock \showarticletitle{Efficiency of oblivious versus non-oblivious
  schedulers for optimistic, rate-based flow control (extended abstract)}. In
  \bibinfo{booktitle}{\emph{Proceedings of the Sixteenth Annual ACM Symposium
  on Principles of Distributed Computing}} (Santa Barbara, California, USA)
  \emph{(\bibinfo{series}{PODC '97})}. \bibinfo{publisher}{Association for
  Computing Machinery}, \bibinfo{address}{New York, NY, USA},
  \bibinfo{pages}{139–148}.
\newblock
\showISBNx{0897919521}
\href{https://doi.org/10.1145/259380.259434}{doi:\nolinkurl{10.1145/259380.259434}}


\bibitem[Fiat et~al\mbox{.}(1991)]%
        {fiat1991competitive}
\bibfield{author}{\bibinfo{person}{Amos Fiat}, \bibinfo{person}{Richard~M
  Karp}, \bibinfo{person}{Michael Luby}, \bibinfo{person}{Lyle~A McGeoch},
  \bibinfo{person}{Daniel~D Sleator}, {and} \bibinfo{person}{Neal~E Young}.}
  \bibinfo{year}{1991}\natexlab{}.
\newblock \showarticletitle{Competitive paging algorithms}.
\newblock \bibinfo{journal}{\emph{Journal of Algorithms}} \bibinfo{volume}{12},
  \bibinfo{number}{4} (\bibinfo{date}{Dec.} \bibinfo{year}{1991}),
  \bibinfo{pages}{685--699}.
\newblock
\showISSN{01966774}
\href{https://doi.org/10.1016/0196-6774(91)90041-V}{doi:\nolinkurl{10.1016/0196-6774(91)90041-V}}


\bibitem[Fischer et~al\mbox{.}(1985)]%
        {fischer1985impossibility}
\bibfield{author}{\bibinfo{person}{Michael~J. Fischer},
  \bibinfo{person}{Nancy~A. Lynch}, {and} \bibinfo{person}{Michael~S.
  Paterson}.} \bibinfo{year}{1985}\natexlab{}.
\newblock \showarticletitle{Impossibility of distributed consensus with one
  faulty process}.
\newblock \bibinfo{journal}{\emph{J. ACM}} \bibinfo{volume}{32},
  \bibinfo{number}{2} (\bibinfo{date}{April} \bibinfo{year}{1985}),
  \bibinfo{pages}{374–382}.
\newblock
\showISSN{0004-5411}
\href{https://doi.org/10.1145/3149.214121}{doi:\nolinkurl{10.1145/3149.214121}}


\bibitem[Fremlin(2001)]%
        {fremlin2001measure}
\bibfield{author}{\bibinfo{person}{David Fremlin}.}
  \bibinfo{year}{2001}\natexlab{}.
\newblock \bibinfo{booktitle}{\emph{Measure Theory, Volume 2}}.
\newblock
\showISBNx{0953812928}


\bibitem[He et~al\mbox{.}(1997)]%
        {jifeng1997probabilistic}
\bibfield{author}{\bibinfo{person}{Jifeng He}, \bibinfo{person}{Karen Seidel},
  {and} \bibinfo{person}{Annabelle McIver}.} \bibinfo{year}{1997}\natexlab{}.
\newblock \showarticletitle{Probabilistic models for the guarded command
  language}.
\newblock \bibinfo{journal}{\emph{Science of Computer Programming}}
  \bibinfo{volume}{28}, \bibinfo{number}{2} (\bibinfo{year}{1997}),
  \bibinfo{pages}{171--192}.
\newblock
\shownote{Formal Specifications: Foundations, Methods, Tools and Applications}.
\newblock
\showISSN{0167-6423}
\href{https://doi.org/10.1016/S0167-6423(96)00019-6}{doi:\nolinkurl{10.1016/S0167-6423(96)00019-6}}


\bibitem[Jones and Plotkin(1989)]%
        {jones1989probabilistic}
\bibfield{author}{\bibinfo{person}{Claire Jones} {and} \bibinfo{person}{Gordon
  Plotkin}.} \bibinfo{year}{1989}\natexlab{}.
\newblock \showarticletitle{A Probabilistic Powerdomain of Evaluations}. In
  \bibinfo{booktitle}{\emph{Fourth Annual Symposium on Logic in Computer
  Science}}. \bibinfo{pages}{186--195}.
\newblock
\href{https://doi.org/10.1109/lics.1989.39173}{doi:\nolinkurl{10.1109/lics.1989.39173}}


\bibitem[Jung et~al\mbox{.}(2018)]%
        {iris}
\bibfield{author}{\bibinfo{person}{Ralf Jung}, \bibinfo{person}{Robbert
  Krebbers}, \bibinfo{person}{Jacques-Henri Jourdan}, \bibinfo{person}{Ale\v{s}
  Bizjak}, \bibinfo{person}{Lars Birkedal}, {and} \bibinfo{person}{Derek
  Dreyer}.} \bibinfo{year}{2018}\natexlab{}.
\newblock \showarticletitle{{Iris from the ground up: A modular foundation for
  higher-order concurrent separation logic}}.
\newblock \bibinfo{journal}{\emph{Journal of Functional Programming}}
  \bibinfo{volume}{28} (\bibinfo{year}{2018}).
\newblock
\href{https://doi.org/10.1017/S0956796818000151}{doi:\nolinkurl{10.1017/S0956796818000151}}


\bibitem[Jung et~al\mbox{.}(2015)]%
        {iris1}
\bibfield{author}{\bibinfo{person}{Ralf Jung}, \bibinfo{person}{David Swasey},
  \bibinfo{person}{Filip Sieczkowski}, \bibinfo{person}{Kasper Svendsen},
  \bibinfo{person}{Aaron Turon}, \bibinfo{person}{Lars Birkedal}, {and}
  \bibinfo{person}{Derek Dreyer}.} \bibinfo{year}{2015}\natexlab{}.
\newblock \showarticletitle{{Iris: Monoids and Invariants as an Orthogonal
  Basis for Concurrent Reasoning}}. In \bibinfo{booktitle}{\emph{Proceedings of
  the 42nd Annual ACM SIGPLAN-SIGACT Symposium on Principles of Programming
  Languages}} (Mumbai, India) \emph{(\bibinfo{series}{POPL '15})}.
  \bibinfo{publisher}{Association for Computing Machinery},
  \bibinfo{address}{New York, NY, USA}, \bibinfo{pages}{637–650}.
\newblock
\showISBNx{9781450333009}
\href{https://doi.org/10.1145/2676726.2676980}{doi:\nolinkurl{10.1145/2676726.2676980}}


\bibitem[Keimel and Plotkin(2017)]%
        {keimel2017mixed}
\bibfield{author}{\bibinfo{person}{Klaus Keimel} {and} \bibinfo{person}{Gordon
  Plotkin}.} \bibinfo{year}{2017}\natexlab{}.
\newblock \showarticletitle{{Mixed powerdomains for probability and
  nondeterminism}}.
\newblock \bibinfo{journal}{\emph{{Logical Methods in Computer Science}}}
  \bibinfo{volume}{{Volume 13, Issue 1}} (\bibinfo{date}{Jan.}
  \bibinfo{year}{2017}).
\newblock
\href{https://doi.org/10.23638/LMCS-13(1:2)2017}{doi:\nolinkurl{10.23638/LMCS-13(1:2)2017}}


\bibitem[Li et~al\mbox{.}(2025)]%
        {li2025total}
\bibfield{author}{\bibinfo{person}{James Li}, \bibinfo{person}{Noam
  Zilberstein}, {and} \bibinfo{person}{Alexandra Silva}.}
  \bibinfo{year}{2025}\natexlab{}.
\newblock \bibinfo{title}{Total Outcome Logic: Unified Reasoning for a Taxonomy
  of Program Logics}.
\newblock
\showeprint[arxiv]{2411.00197}~[cs.LO]
\urldef\tempurl%
\url{https://arxiv.org/abs/2411.00197}
\showURL{%
\tempurl}


\bibitem[Li et~al\mbox{.}(2023)]%
        {li2023lilac}
\bibfield{author}{\bibinfo{person}{John~M. Li}, \bibinfo{person}{Amal Ahmed},
  {and} \bibinfo{person}{Steven Holtzen}.} \bibinfo{year}{2023}\natexlab{}.
\newblock \showarticletitle{Lilac: A Modal Separation Logic for Conditional
  Probability}.
\newblock \bibinfo{journal}{\emph{Proc. ACM Program. Lang.}}
  \bibinfo{volume}{7}, \bibinfo{number}{PLDI}, Article \bibinfo{articleno}{112}
  (\bibinfo{date}{jun} \bibinfo{year}{2023}), \bibinfo{numpages}{24}~pages.
\newblock
\href{https://doi.org/10.1145/3591226}{doi:\nolinkurl{10.1145/3591226}}


\bibitem[Lohse et~al\mbox{.}(2026)]%
        {lohse2026first}
\bibfield{author}{\bibinfo{person}{Janine Lohse}, \bibinfo{person}{Tim Rohde},
  \bibinfo{person}{Jimmy Xin}, \bibinfo{person}{Niklas Mück},
  \bibinfo{person}{Iona Kuhn}, \bibinfo{person}{Derek Dreyer},
  \bibinfo{person}{Deepak Garg}, {and} \bibinfo{person}{Emanuele D’Osualdo}.}
  \bibinfo{year}{2026}\natexlab{}.
\newblock \bibinfo{booktitle}{\emph{First Steps Towards Probabilistic Iris: A
  Separation Logic with Independence, Conditioning, and Dynamic Heap
  Allocation}}.
\newblock \bibinfo{type}{{T}echnical {R}eport}. \bibinfo{institution}{Max
  Planck Institute for Software Systems (MPI-SWS)}.
\newblock
\shownote{Accessed: 2026-03-23}.
\newblock
\urldef\tempurl%
\url{https://people.mpi-sws.org/~jlohse/downloads/amaryllis.pdf}
\showURL{%
\tempurl}


\bibitem[Manna and Pnueli(1974)]%
        {manna1974axiomatic}
\bibfield{author}{\bibinfo{person}{Zohar Manna} {and} \bibinfo{person}{Amir
  Pnueli}.} \bibinfo{year}{1974}\natexlab{}.
\newblock \showarticletitle{{Axiomatic Approach to Total Correctness of
  Programs}}.
\newblock \bibinfo{journal}{\emph{Acta Inf.}} \bibinfo{volume}{3},
  \bibinfo{number}{3} (\bibinfo{date}{sep} \bibinfo{year}{1974}),
  \bibinfo{pages}{243–263}.
\newblock
\showISSN{0001-5903}
\href{https://doi.org/10.1007/BF00288637}{doi:\nolinkurl{10.1007/BF00288637}}


\bibitem[McGeoch and Sleator(1991)]%
        {mcgeoch1991strongly}
\bibfield{author}{\bibinfo{person}{Lyle~A. McGeoch} {and}
  \bibinfo{person}{Daniel~D. Sleator}.} \bibinfo{year}{1991}\natexlab{}.
\newblock \showarticletitle{A strongly competitive randomized paging
  algorithm}.
\newblock \bibinfo{journal}{\emph{Algorithmica}} \bibinfo{volume}{6},
  \bibinfo{number}{1–6} (\bibinfo{date}{June} \bibinfo{year}{1991}),
  \bibinfo{pages}{816–825}.
\newblock
\showISSN{0178-4617}
\href{https://doi.org/10.1007/BF01759073}{doi:\nolinkurl{10.1007/BF01759073}}


\bibitem[McIver and Morgan(2005)]%
        {mciver2005abstraction}
\bibfield{author}{\bibinfo{person}{Annabelle McIver} {and}
  \bibinfo{person}{Carroll Morgan}.} \bibinfo{year}{2005}\natexlab{}.
\newblock \bibinfo{booktitle}{\emph{{Abstraction, Refinement and Proof for
  Probabilistic Systems}}}.
\newblock \bibinfo{publisher}{Springer}.
\newblock
\showISBNx{9780387401157}
\showLCCN{2004057839}
\href{https://doi.org/10.1007/b138392}{doi:\nolinkurl{10.1007/b138392}}


\bibitem[Moggi(1989)]%
        {monads}
\bibfield{author}{\bibinfo{person}{E. Moggi}.} \bibinfo{year}{1989}\natexlab{}.
\newblock \showarticletitle{{Computational lambda-calculus and monads}}. In
  \bibinfo{booktitle}{\emph{[1989] Proceedings. Fourth Annual Symposium on
  Logic in Computer Science}}. \bibinfo{pages}{14--23}.
\newblock
\href{https://doi.org/10.1109/LICS.1989.39155}{doi:\nolinkurl{10.1109/LICS.1989.39155}}


\bibitem[Moggi(1991)]%
        {moggi91}
\bibfield{author}{\bibinfo{person}{Eugenio Moggi}.}
  \bibinfo{year}{1991}\natexlab{}.
\newblock \showarticletitle{{Notions of computation and monads}}.
\newblock \bibinfo{journal}{\emph{Information and Computation}}
  \bibinfo{volume}{93}, \bibinfo{number}{1} (\bibinfo{year}{1991}),
  \bibinfo{pages}{55--92}.
\newblock
\showISSN{0890-5401}
\href{https://doi.org/10.1016/0890-5401(91)90052-4}{doi:\nolinkurl{10.1016/0890-5401(91)90052-4}}


\bibitem[Motwani and Raghavan(1995)]%
        {motwani1995randomized}
\bibfield{author}{\bibinfo{person}{Rajeev Motwani} {and}
  \bibinfo{person}{Prabhakar Raghavan}.} \bibinfo{year}{1995}\natexlab{}.
\newblock \bibinfo{booktitle}{\emph{Randomized Algorithms}}.
\newblock \bibinfo{publisher}{Cambridge University Press}.
\newblock


\bibitem[Moura and Ullrich(2021)]%
        {moura2021lean}
\bibfield{author}{\bibinfo{person}{Leonardo~de Moura} {and}
  \bibinfo{person}{Sebastian Ullrich}.} \bibinfo{year}{2021}\natexlab{}.
\newblock \showarticletitle{The Lean 4 Theorem Prover and Programming
  Language}. In \bibinfo{booktitle}{\emph{Automated Deduction – CADE 28: 28th
  International Conference on Automated Deduction, Virtual Event, July 12–15,
  2021, Proceedings}}. \bibinfo{publisher}{Springer-Verlag},
  \bibinfo{address}{Berlin, Heidelberg}, \bibinfo{pages}{625–635}.
\newblock
\showISBNx{978-3-030-79875-8}
\href{https://doi.org/10.1007/978-3-030-79876-5_37}{doi:\nolinkurl{10.1007/978-3-030-79876-5_37}}


\bibitem[O'Hearn(2004)]%
        {csl}
\bibfield{author}{\bibinfo{person}{Peter~W. O'Hearn}.}
  \bibinfo{year}{2004}\natexlab{}.
\newblock \showarticletitle{{Resources, Concurrency and Local Reasoning}}. In
  \bibinfo{booktitle}{\emph{CONCUR 2004 - Concurrency Theory}}.
  \bibinfo{publisher}{Springer Berlin Heidelberg}, \bibinfo{address}{Berlin,
  Heidelberg}, \bibinfo{pages}{49--67}.
\newblock
\showISBNx{978-3-540-28644-8}
\href{https://doi.org/10.1016/j.tcs.2006.12.035}{doi:\nolinkurl{10.1016/j.tcs.2006.12.035}}


\bibitem[O'Hearn et~al\mbox{.}(2001)]%
        {localreasoning}
\bibfield{author}{\bibinfo{person}{Peter~W. O'Hearn}, \bibinfo{person}{John~C.
  Reynolds}, {and} \bibinfo{person}{Hongseok Yang}.}
  \bibinfo{year}{2001}\natexlab{}.
\newblock \showarticletitle{{Local Reasoning about Programs That Alter Data
  Structures}}. In \bibinfo{booktitle}{\emph{Proceedings of the 15th
  International Workshop on Computer Science Logic}}
  \emph{(\bibinfo{series}{CSL '01})}. \bibinfo{publisher}{Springer-Verlag},
  \bibinfo{address}{Berlin, Heidelberg}, \bibinfo{pages}{1–19}.
\newblock
\showISBNx{3540425543}
\href{https://doi.org/10.1007/3-540-44802-0_1}{doi:\nolinkurl{10.1007/3-540-44802-0_1}}


\bibitem[Raghavan and Snir(1994)]%
        {raghavan1994memory}
\bibfield{author}{\bibinfo{person}{P. Raghavan} {and} \bibinfo{person}{M.
  Snir}.} \bibinfo{year}{1994}\natexlab{}.
\newblock \showarticletitle{Memory versus randomization in on-line algorithms}.
\newblock \bibinfo{journal}{\emph{IBM J. Res. Dev.}} \bibinfo{volume}{38},
  \bibinfo{number}{6} (\bibinfo{date}{Nov.} \bibinfo{year}{1994}),
  \bibinfo{pages}{683–707}.
\newblock
\showISSN{0018-8646}
\href{https://doi.org/10.1147/rd.386.0683}{doi:\nolinkurl{10.1147/rd.386.0683}}


\bibitem[Reynolds(2002)]%
        {sl}
\bibfield{author}{\bibinfo{person}{J.C. Reynolds}.}
  \bibinfo{year}{2002}\natexlab{}.
\newblock \showarticletitle{{Separation logic: a logic for shared mutable data
  structures}}. In \bibinfo{booktitle}{\emph{Proceedings 17th Annual IEEE
  Symposium on Logic in Computer Science}}. \bibinfo{pages}{55--74}.
\newblock
\href{https://doi.org/10.1109/LICS.2002.1029817}{doi:\nolinkurl{10.1109/LICS.2002.1029817}}


\bibitem[Royden(1968)]%
        {royden1968real}
\bibfield{author}{\bibinfo{person}{H.~L. Royden}.}
  \bibinfo{year}{1968}\natexlab{}.
\newblock \bibinfo{booktitle}{\emph{Real Analysis} (\bibinfo{edition}{2d ed.}
  ed.)}.
\newblock \bibinfo{publisher}{Macmillan}, \bibinfo{address}{New York}.
\newblock
\showLCCN{68010518}


\bibitem[Scott(1970)]%
        {scott1970outline}
\bibfield{author}{\bibinfo{person}{Dana Scott}.}
  \bibinfo{year}{1970}\natexlab{}.
\newblock \bibinfo{booktitle}{\emph{{Outline of a Mathematical Theory of
  Computation}}}.
\newblock \bibinfo{type}{{T}echnical {R}eport} PRG02.
  \bibinfo{institution}{OUCL}. \bibinfo{pages}{30} pages.
\newblock


\bibitem[Scott and Strachey(1971)]%
        {scott1971towards}
\bibfield{author}{\bibinfo{person}{Dana Scott} {and} \bibinfo{person}{C.
  Strachey}.} \bibinfo{year}{1971}\natexlab{}.
\newblock \showarticletitle{{Towards a Mathematical Semantics for Computer
  Languages}}.
\newblock \bibinfo{journal}{\emph{Proceedings of the Symposium on Computers and
  Automata}}  \bibinfo{volume}{21} (\bibinfo{date}{01} \bibinfo{year}{1971}).
\newblock


\bibitem[Selvin(1975)]%
        {selvin1975letters}
\bibfield{author}{\bibinfo{person}{Steve Selvin}.}
  \bibinfo{year}{1975}\natexlab{}.
\newblock \showarticletitle{Letters to the Editor}.
\newblock \bibinfo{journal}{\emph{The American Statistician}}
  \bibinfo{volume}{29}, \bibinfo{number}{1} (\bibinfo{year}{1975}),
  \bibinfo{pages}{67--71}.
\newblock
\href{https://doi.org/10.1080/00031305.1975.10479121}{doi:\nolinkurl{10.1080/00031305.1975.10479121}}


\bibitem[Tassarotti and Harper(2019)]%
        {polaris}
\bibfield{author}{\bibinfo{person}{Joseph Tassarotti} {and}
  \bibinfo{person}{Robert Harper}.} \bibinfo{year}{2019}\natexlab{}.
\newblock \showarticletitle{{A Separation Logic for Concurrent Randomized
  Programs}}.
\newblock \bibinfo{journal}{\emph{Proc. ACM Program. Lang.}}
  \bibinfo{volume}{3}, \bibinfo{number}{POPL}, Article \bibinfo{articleno}{64}
  (\bibinfo{date}{Jan} \bibinfo{year}{2019}), \bibinfo{numpages}{30}~pages.
\newblock
\href{https://doi.org/10.1145/3290377}{doi:\nolinkurl{10.1145/3290377}}


\bibitem[Tix et~al\mbox{.}(2009)]%
        {tix2009semantic}
\bibfield{author}{\bibinfo{person}{Regina Tix}, \bibinfo{person}{Klaus Keimel},
  {and} \bibinfo{person}{Gordon Plotkin}.} \bibinfo{year}{2009}\natexlab{}.
\newblock \showarticletitle{Semantic Domains for Combining Probability and
  Non-Determinism}.
\newblock \bibinfo{journal}{\emph{Electronic Notes in Theoretical Computer
  Science}}  \bibinfo{volume}{222} (\bibinfo{year}{2009}),
  \bibinfo{pages}{3--99}.
\newblock
\showISSN{1571-0661}
\href{https://doi.org/10.1016/j.entcs.2009.01.002}{doi:\nolinkurl{10.1016/j.entcs.2009.01.002}}


\bibitem[Varacca(2002)]%
        {varacca2002powerdomain}
\bibfield{author}{\bibinfo{person}{Daniele Varacca}.}
  \bibinfo{year}{2002}\natexlab{}.
\newblock \showarticletitle{The powerdomain of indexed valuations}. In
  \bibinfo{booktitle}{\emph{Proceedings 17th Annual IEEE Symposium on Logic in
  Computer Science}}. \bibinfo{pages}{299--308}.
\newblock
\href{https://doi.org/10.1109/LICS.2002.1029838}{doi:\nolinkurl{10.1109/LICS.2002.1029838}}


\bibitem[Zilberstein(2025)]%
        {zilberstein2025outcome}
\bibfield{author}{\bibinfo{person}{Noam Zilberstein}.}
  \bibinfo{year}{2025}\natexlab{}.
\newblock \showarticletitle{Outcome Logic: A Unified Approach to the Metatheory
  of Program Logics with Branching Effects}.
\newblock \bibinfo{journal}{\emph{ACM Trans. Program. Lang. Syst.}}
  \bibinfo{volume}{47}, \bibinfo{number}{3}, Article \bibinfo{articleno}{14}
  (\bibinfo{date}{Sept.} \bibinfo{year}{2025}), \bibinfo{numpages}{71}~pages.
\newblock
\showISSN{0164-0925}
\href{https://doi.org/10.1145/3743131}{doi:\nolinkurl{10.1145/3743131}}


\bibitem[Zilberstein et~al\mbox{.}(2023)]%
        {outcome}
\bibfield{author}{\bibinfo{person}{Noam Zilberstein}, \bibinfo{person}{Derek
  Dreyer}, {and} \bibinfo{person}{Alexandra Silva}.}
  \bibinfo{year}{2023}\natexlab{}.
\newblock \showarticletitle{{Outcome Logic: A Unifying Foundation for
  Correctness and Incorrectness Reasoning}}.
\newblock \bibinfo{journal}{\emph{Proc. ACM Program. Lang.}}
  \bibinfo{volume}{7}, \bibinfo{number}{OOPSLA1}, Article
  \bibinfo{articleno}{93} (\bibinfo{date}{Apr} \bibinfo{year}{2023}),
  \bibinfo{numpages}{29}~pages.
\newblock
\href{https://doi.org/10.1145/3586045}{doi:\nolinkurl{10.1145/3586045}}


\bibitem[Zilberstein et~al\mbox{.}(2025a)]%
        {zilberstein2025denotational}
\bibfield{author}{\bibinfo{person}{Noam Zilberstein}, \bibinfo{person}{Daniele
  Gorla}, {and} \bibinfo{person}{Alexandra Silva}.}
  \bibinfo{year}{2025}\natexlab{a}.
\newblock \showarticletitle{{Denotational Semantics for Probabilistic and
  Concurrent Programs}}. In \bibinfo{booktitle}{\emph{36th International
  Conference on Concurrency Theory (CONCUR 2025)}}
  \emph{(\bibinfo{series}{Leibniz International Proceedings in Informatics
  (LIPIcs)}, Vol.~\bibinfo{volume}{348})},
  \bibfield{editor}{\bibinfo{person}{Patricia Bouyer} {and}
  \bibinfo{person}{Jaco van~de Pol}} (Eds.). \bibinfo{publisher}{Schloss
  Dagstuhl -- Leibniz-Zentrum f{"u}r Informatik}, \bibinfo{address}{Dagstuhl,
  Germany}, \bibinfo{pages}{39:1--39:24}.
\newblock
\showISBNx{978-3-95977-389-8}
\showISSN{1868-8969}
\href{https://doi.org/10.4230/LIPIcs.CONCUR.2025.39}{doi:\nolinkurl{10.4230/LIPIcs.CONCUR.2025.39}}


\bibitem[Zilberstein et~al\mbox{.}(2025b)]%
        {zilberstein2025demonic}
\bibfield{author}{\bibinfo{person}{Noam Zilberstein}, \bibinfo{person}{Dexter
  Kozen}, \bibinfo{person}{Alexandra Silva}, {and} \bibinfo{person}{Joseph
  Tassarotti}.} \bibinfo{year}{2025}\natexlab{b}.
\newblock \showarticletitle{A Demonic Outcome Logic for Randomized
  Nondeterminism}.
\newblock \bibinfo{journal}{\emph{Proc. ACM Program. Lang.}}
  \bibinfo{volume}{9}, \bibinfo{number}{POPL}, Article \bibinfo{articleno}{19}
  (\bibinfo{date}{Jan} \bibinfo{year}{2025}), \bibinfo{numpages}{30}~pages.
\newblock
\href{https://doi.org/10.1145/3704855}{doi:\nolinkurl{10.1145/3704855}}


\bibitem[Zilberstein et~al\mbox{.}(2024)]%
        {zilberstein2024outcome}
\bibfield{author}{\bibinfo{person}{Noam Zilberstein}, \bibinfo{person}{Angelina
  Saliling}, {and} \bibinfo{person}{Alexandra Silva}.}
  \bibinfo{year}{2024}\natexlab{}.
\newblock \showarticletitle{Outcome Separation Logic: Local Reasoning for
  Correctness and Incorrectness with Computational Effects}.
\newblock \bibinfo{journal}{\emph{Proc. ACM Program. Lang.}}
  \bibinfo{volume}{8}, \bibinfo{number}{OOPSLA1}, Article
  \bibinfo{articleno}{104} (\bibinfo{date}{apr} \bibinfo{year}{2024}),
  \bibinfo{numpages}{29}~pages.
\newblock
\href{https://doi.org/10.1145/3649821}{doi:\nolinkurl{10.1145/3649821}}


\bibitem[Zilberstein et~al\mbox{.}(2026)]%
        {zilberstein2026probabilistic}
\bibfield{author}{\bibinfo{person}{Noam Zilberstein},
  \bibinfo{person}{Alexandra Silva}, {and} \bibinfo{person}{Joseph
  Tassarotti}.} \bibinfo{year}{2026}\natexlab{}.
\newblock \showarticletitle{{Probabilistic Concurrent Reasoning in Outcome
  Logic: Independence, Conditioning, and Invariants}}.
\newblock \bibinfo{journal}{\emph{Proc. ACM Program. Lang.}}
  \bibinfo{volume}{10}, \bibinfo{number}{POPL} (\bibinfo{date}{Jan}
  \bibinfo{year}{2026}), \bibinfo{numpages}{30}~pages.
\newblock
\href{https://doi.org/10.1145/3776651}{doi:\nolinkurl{10.1145/3776651}}


\end{thebibliography}

\appendix
\clearpage

\crefalias{section}{appendix}
\crefalias{subsection}{appendix}   
\crefname{appendix}{appendix}{appendices}   \Crefname{appendix}{Appendix}{Appendices}

\ifx\apponly\undefined\else
\setcounter{page}{1}
\fi

{\noindent \huge\bfseries\sffamily Appendix}

\startcontents[appendix]
\printcontents[appendix]{}{1}{\setcounter{tocdepth}{2}}
\clearpage


\section{Omitted Definitions}
\noam{I just dumped a bunch of stuff here for now. We need to sort it out and organize it a bit}
This section collects auxiliary material used by the main development but
omitted from the body.

\subsection{Resource}
In this section we describe more definitions of resources useful in the
development of \LogicName.

\paragraph{Projection}
Fix a resource $\calR$ and an event $G\in\calF_\calR$ with
$\mu_\calR(G)>0$,
the conditioned
resource is defined as:\noam{I'm pretty sure this was defined earlier too, can
 we remove one of the definitions?}\gary{Well before it was conditional on
 probability space}
\[\calR\mid G
\triangleq
\langle
\calP_\calR\mid G,\,
V_\calR,\,
\memfn_{\calR\mid G},\,
\cntfn_{\calR\mid G}
\rangle
\]
where
$\memfn_{\calR\mid G}\triangleq\RestrictFn{\memfn_\calR}{G}\colon
G\to\Mem[V_\calR]$ and
$\cntfn_{\calR\mid G}\triangleq\RestrictFn{\cntfn_\calR}{G}\colon G\to\bbN$
are the restrictions of the observable maps to domain $G$.
Thus conditioning restricts the observable functions to the conditioned index
space.
Since $\cntfn_\calR$ is discretely measurable (i.e., 
$\cntfn_\calR^{-1}(B)\in\calF_\calR$ for every $B\subseteq\bbN$), for every
$B\subseteq\bbN$ with $\mu_\calR(\cntfn_\calR^{-1}(B))>0$, the scheduling
conditioning $\RestrictN{\calR}{B}$ is defined as the conditioned resource
$\RestrictN{\calR}{B}
\triangleq
\calR\mid\cntfn_\calR^{-1}(B)$.

To mirror the semantic increment of scheduler indices, we define
$\Tick(\calR)$ as the resource with the same probability space and memory
observation as $\calR$, but with count observation incremented by one; \ie
$\Tick(\calR) \triangleq \langle \calP_\calR, V_\calR, \memfn_\calR,
\cntfn_{\Tick(\calR)} \rangle$ where
$\cntfn_{\Tick(\calR)}(\omega)\triangleq \cntfn_\calR(\omega)+1$.

Fix a resource $\calR$ and a finite set $S\subseteq_\fin\Var$.  The memory projection is defined as:
\[
  \ForgetMem{\calR}{S} \triangleq
  \langle \calP_\calR, V_\calR\cap S,
  \memfn_{\ForgetMem{\calR}{S}}, \cntfn_\calR\rangle
  \quad\text{where}\quad
\memfn_{\ForgetMem{\calR}{S}}
  \triangleq
\ForgetMem{\memfn_\calR(\omega)}{S}
\]
Thus memory projection forgets only memory information.

\paragraph{Unital Resource} The unital resource is
$\UnitP
\triangleq
\langle \calP_{\UnitP}, \emptyset, \memfn_{\UnitP}, \cntfn_{\UnitP}
\rangle$
where
$\calP_{\UnitP} \triangleq
\langle \{0\}, \{\emptyset, \{0\}\}, \delta_{0}\rangle$,
and $\memfn_{\UnitP}(0)=\emptyset$, $\cntfn_{\UnitP}(0)=0$. It is the identity of the product resource, \ie $\calR \otimes \UnitP \cong \UnitP\otimes \calR \cong \calR$.

\subsection{Assertions}
\label{app:assertions}

\begin{figure}
\[
  \begin{array}{ll}
    \Gamma, \sigma \vDash \true
    & \text{always}\\
    \Gamma, \sigma \vDash \false
    & \text{never}\\
    \Gamma, \sigma \vDash P \land Q
    & \text{iff}\quad \Gamma, \sigma \vDash P \quad \text{and} \quad
      \Gamma, \sigma \vDash Q\\
    \Gamma, \sigma \vDash P \lor Q
    & \text{iff}\quad \Gamma, \sigma \vDash P \quad \text{or} \quad
      \Gamma, \sigma \vDash Q\\
    \Gamma, \sigma \vDash P * Q
    & \text{iff}\quad \exists \sigma_1, \sigma_2.\;
      \sigma_1 \Perp \sigma_2 \quad \text{and} \quad
      \sigma_1 \uplus \sigma_2 \sqsubseteq \sigma \quad \text{and} \quad
      \Gamma, \sigma_1 \vDash P \quad \text{and} \quad \Gamma, \sigma_2 \vDash Q\\
    \Gamma, \sigma \vDash \exists X.\, P
    & \text{iff}\quad \Gamma[X := v], \sigma \vDash P \; \text{for some} \; v \in \Val\\
    \Gamma, \sigma \vDash e \mapsto E
    & \text{iff}\quad \de{e}_{\Exp}(\sigma) = \de{E}_{\mathsf{LExp}}(\Gamma)\\
    \Gamma, \sigma \vDash E_1 \asymp E_2
    & \text{iff}\quad \de{E_1}(\Gamma) \asymp \de{E_2}(\Gamma)
  \end{array}
\]
\caption{Semantics of pure separation logic assertions.}
\label{fig:pure-assertion-semantics}
\end{figure}

\Cref{fig:pure-assertion-semantics} gives the interpretation of pure assertions. Most clauses
are standard; the separating conjunction $P \sep Q$ asserts a decomposition of
memory into disjoint components, the points-to assertion $e \mapsto E$
relates program expressions to logical expressions under the current memory and
logical environment, and $E_1 \asymp E_2$ compares the evaluation of two logical
expressions using the comparison operator $\asymp$.



\begin{figure}
\begin{minipage}{\linewidth}
\textbf{Precision}
\begin{mathpar}
  \inferrule
  {}
  {\precise{\sure{P}}}

  \inferrule
  {\precise{\varphi} \and \precise{\psi} \and \textsf{MutEx}(\varphi, \psi)}
  {\precise{\varphi \lor \psi}}
  
  \inferrule
  {\precise{\varphi} \and \precise{\psi}}
  {\precise{\varphi\sep\psi}}

  \inferrule
  {\precise{\varphi}}
  {\textstyle
    \precise{\bigoplus^o_{X\sim d(E)}\varphi}}
\end{mathpar}

\textbf{Weak stability}
\begin{mathpar}
  \inferrule
  {}
  {\Stable{\weak}{\sure{P}}}

  \inferrule
  {\Stable{\weak}{\varphi_1} \and \Stable{\weak}{\varphi_2}}
  {\Stable{\weak}{\varphi_1 \lor \varphi_2}}

  \inferrule
  {\Stable{\weak}{\varphi_1} \and \Stable{\weak}{\varphi_2}}
  {\Stable{\weak}{\varphi_1 \sep \varphi_2}}

  \inferrule
  {\Stable{\weak}{\varphi}}
  {\textstyle
    \Stable{\weak}{\bigoplus^\Priv_{X\sim d(E)}\varphi}}
\end{mathpar}

\textbf{Convexity}
\begin{mathpar}
  \inferrule
  {}
  {\convex{\top}}

  \inferrule
  {}
  {\convex{\bot}}

  \inferrule
  {}
  {\convex{\sure{P}}}

  \inferrule
  {\convex{\varphi} \and \convex{\psi}}
  {\convex{\varphi\land\psi}}

  \inferrule
  {\precise{\varphi} \and \Stable{\weak}{\varphi} \and \convex{\psi}}
  {\convex{\varphi\sep\psi}}

  \inferrule
  {\convex{\varphi} \and \precise{\psi} \and \Stable{\weak}{\psi}}
  {\convex{\varphi\sep\psi}}

  \inferrule
  {\convex{\varphi}}
  {\textstyle
    \convex{\bigoplus^o_{X\sim d(E)}\varphi}}

  \inferrule
  {\convex{\varphi}}
  {\textstyle
    \convex{\bignd_{X\in E}\varphi}}
\end{mathpar}
\end{minipage}
\caption{Constructor rules for precision, weak stability, and convexity.}
\label{fig:assertion-property-rules}
\end{figure}

\Cref{fig:assertion-property-rules} presents a more complete list of rules for
the three assertion properties used throughout \LogicName, i.e. precision, weak
stability, and convexity.  The side condition
$\textsf{MutEx}(\varphi,\psi)$ means that $\varphi$ and $\psi$ cannot both be
satisfiable under the same logical environment:
\[
  \textsf{MutEx}(\varphi,\psi)
  \quad\triangleq\quad
  \forall \Gamma.\;
  \neg\bigl(
    (\exists\calR.\;\Gamma,\calR\vDash\varphi)
    \land
    (\exists\calQ.\;\Gamma,\calQ\vDash\psi)
  \bigr)
\]
It is used for the precision rule for disjunction.
We include the disjunction rule because binary probabilistic choice uses
disjunction in its syntactic expansion.
In particular, the Boolean-tagged
branches in $\varphi\oplus_p\psi
\triangleq
\bigoplus^\Priv_{X\sim\bern{p}}
\bigl((\sure{X=1}\sep\varphi)\lor(\sure{X=0}\sep\psi)\bigr)$ are mutually exclusive, so the disjunction
precision rule, together with the $\bigoplus$ rule, gives precision of
$\varphi\oplus_p\psi$ from precision of $\varphi$ and $\psi$.  The corresponding
weak-stability fact follows from the weak-stability rules for disjunction and
private probabilistic choice.

The entailment laws shown in \Cref{fig:oplus-entailment-laws} compare assertions by their semantic interpretation.
We write $\varphi\vdash\psi$ when every logical context $\Gamma$
and resource $\calR$ that satisfies $\varphi$ also
satisfies $\psi$.
\begin{definition}[Semantic Entailment]
  \label{def:assertion-entailment}
  For outcome assertions $\varphi$ and $\psi$, we write
  $\varphi \vdash \psi$ when every resource satisfying $\varphi$ also satisfies
  $\psi$:
  \[
    \varphi \vdash \psi
    \qquad\text{iff}\qquad
    \forall \Gamma,\calR.\;
      \Gamma,\calR \vDash \varphi
      \Rightarrow
      \Gamma,\calR \vDash \psi
  \]
\end{definition}
We stated a more comprehensive list of entailment laws proven in \LogicName in
\Cref{fig:basic-entailment-laws,fig:separation-entailment-laws,fig:oplus-entailment-laws}
with respect to \Cref{def:assertion-entailment}.

\begin{figure}
\begin{mathpar}
  \inferrule
  {}
  {\textstyle
    \varphi \oplus^o_p \psi \dashv\vdash \psi \oplus^o_{(1 - p)} \varphi}

  \inferrule
  {pq < 1}
  {\textstyle
    (\varphi \oplus^o_p \psi) \oplus^o_q \vartheta
    \dashv\vdash
    \varphi \oplus^o_{pq}
    (\psi \oplus^o_{(1 - p)q/(1 - pq)} \vartheta)}

  \inferrule
  {X\notin\fv(\varphi) \and \convex{\varphi}}
  {\textstyle
    \bignd_{X\in E}\varphi \vdash \varphi}

  \inferrule
  {}
  {\textstyle
    \varphi[E/X] \vdash \bignd_{X\in\{E\}}\varphi}

  \inferrule
  {}
  {\textstyle
    \bignd_{X\in E}\sure{x\mapsto X}
    \vdash
    \sure{x\mapsto E}}

  \inferrule
  {E\subseteq E'}
  {\textstyle
    \bignd_{X\in E}\varphi \vdash \bignd_{X\in E'}\varphi}

  \inferrule
  {Y\notin\fv(\varphi)}
  {\textstyle
    \bigoplus_{X\sim d(E)}^o\varphi
    \dashv\vdash
    \bigoplus_{Y\sim d(E)}^o\varphi[Y/X]}

  \inferrule
  {}
  {\textstyle
    \andop{\varphi}{\psi} \dashv\vdash \andop{\psi}{\varphi}}

  \inferrule
  {}
  {\textstyle
    \andop{\varphi}{(\andop{\psi}{\vartheta})}
    \dashv\vdash
    \andop{(\andop{\varphi}{\psi})}{\vartheta}}

  \inferrule
  {}
  {\textstyle
    \andop{(\varphi \oplus^o_p \psi)}{\vartheta}
    \dashv\vdash
    (\andop{\varphi}{\vartheta}) \oplus^o_p (\andop{\psi}{\vartheta})}

  \inferrule
  {}
  {\textstyle
    \varphi\oplus_0\psi \dashv\vdash \psi}

  \inferrule
  {}
  {\textstyle
    \varphi\oplus_1\psi \dashv\vdash \varphi}
\end{mathpar}
\caption{Basic entailment laws for outcome assertions.}
\label{fig:basic-entailment-laws}
\end{figure}

\begin{figure}
\begin{mathpar}
  \inferrule
  {\varphi \vdash \varphi' \and \psi \vdash \psi'}
  {\textstyle
    \varphi \sep \psi \vdash \varphi' \sep \psi'}

  \inferrule
  {}
  {\textstyle
    \varphi \sep \psi \dashv\vdash \psi \sep \varphi}

  \inferrule
  {}
  {\textstyle
    (\varphi \sep \psi) \sep \theta
  \dashv\vdash
    \varphi \sep (\psi \sep \theta)}

  \inferrule
  {}
  {\textstyle
    \sure{P * Q} \dashv\vdash \sure{P} \sep \sure{Q}}

  \inferrule
  {}
  {\textstyle
    \sure{P} \vdash \sure{P}\sep\top}

  \inferrule
  {}
  {\textstyle
    \varphi \sep \psi \vdash \varphi}
\end{mathpar}
\caption{Entailment laws for separating conjunction.}
\label{fig:separation-entailment-laws}
\end{figure}

\subsection{Semantics}
In \Cref{sec:logic} we omitted the definition of the function
$\textsf{bits}(C)$, which is a syntactic approximation of the possible
numbers of scheduler entries consumed by $C$. The singleton condition $\textsf{bits}(C)=\{n\}$,
used by the rule \ruleref{Priv-Split1}, means
that every execution consumes exactly $n$ adversarial choices. The \ruleref{Prob-Safe} rule uses the analogous
branch-balance condition $\textsf{bits}(C_1)=\textsf{bits}(C_2)=\{n\}$.
\begin{definition}[Scheduler-consumption approximation]
  \label{app:bits}
\begin{align*}
  \textsf{bits}(\cmdskip) &= \{0\}\\
  \textsf{bits}(\actassign{x}{e}) &= \{0\}\\
  \textsf{bits}(x\samp d(\vec e)) &= \{0\}\\
  \textsf{bits}(x\gets e) &= \{1\}\\
  \textsf{bits}(\cmdseq{C_1}{C_2})
    &= \{k_1+k_2\mid k_1\in\textsf{bits}(C_1),\;
      k_2\in\textsf{bits}(C_2)\}\\
  \textsf{bits}(\cmdcase{b}{C_1}{C_2}) &= \textsf{bits}(C_1) \cup \textsf{bits}(C_2)\\
  \textsf{bits}(\cmdchoice{e}{C_1}{C_2}) &= \textsf{bits}(C_1) \cup \textsf{bits}(C_2)\\
  \textsf{bits}(\cmdnondet{C_1}{C_2})
    &= \{1+k\mid k\in\textsf{bits}(C_1)\cup\textsf{bits}(C_2)\}\\
  \textsf{bits}(\cmdwhile{e}{C})
    &= \{n\cdot k\mid n\in\bbN,\; k\in\textsf{bits}(C)\}
\end{align*}
\end{definition}
The base cases distinguish commands that do not query the scheduler from
adversarial assignment, which consumes one entry.  Sequential composition adds
consumption, branching forms take unions over possible branches, and
nondeterministic choice adds one entry for the branch selection.  The loop case
collects the consumption of any finite number of iterations.

\section{Semantics}
\subsection{Distribution}
\begin{lemma}[$\calD_\bot$ Kleisli extension is linear]
  \label{lem:index-dbot-kleisli-linearity}
  Let $X,Y$ be countable sets, let $I$ be countable, let
  $\alpha\in\calD(I)$, let $\mu_i\in\calD_\bot(X)$ for every
  $i\in\supp(\alpha)$, and let $f:X\to\calD_\bot(Y)$.  Then
  \[
    f^\dagger
    \left(\bigoplus_{i\sim\alpha}\mu_i\right)
    =
    \bigoplus_{i\sim\alpha}
    f^\dagger(\mu_i)
  \]
\end{lemma}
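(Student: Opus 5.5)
The plan is to compare both sides pointwise: we evaluate each at an arbitrary $y\in Y\cup\{\bot\}$ and reduce the claim to an exchange of two countable sums of nonnegative reals. First, we unfold the definitions. Here $f^\dagger(\nu)=\bigoplus_{x\sim\nu}f_\bot(x)$, with $f_\bot(\bot)=\delta_\bot$, so
\[
  f^\dagger(\nu)(y)=\sum_{x\in\supp(\nu)}\nu(x)\cdot f_\bot(x)(y),
\]
where $x$ ranges over $X\cup\{\bot\}$. Likewise $(\bigoplus_{i\sim\alpha}\mu_i)(x)=\sum_{i\in\supp(\alpha)}\alpha(i)\mu_i(x)$. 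Two small preliminary checks are needed. The mixture $\bigoplus_{i\sim\alpha}\mu_i$ is a genuine element of $\calD_\bot(X)$, since its total mass is $\sum_i\alpha(i)\cdot 1=1$ by Tonelli. Each $f^\dagger(\mu_i)$ lies in $\calD_\bot(Y)$, so both sides are well-typed.

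Next, we remove the dependence on supports. Because terms with zero weight contribute nothing, we may sum over all of $X\cup\{\bot\}$ and all of $\supp(\alpha)$ instead of over supports. The support of the mixture is exactly the set of $x$ with $\mu_i(x)>0$ for some $i\in\supp(\alpha)$, and restricting the index set of a sum of nonnegative terms to where they are nonzero does not change its value.

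The left-hand side at $y$ then becomes
\[
  \sum_{x}\Big(\sum_{i}\alpha(i)\mu_i(x)\Big)f_\bot(x)(y)=\sum_x\sum_i\alpha(i)\mu_i(x)f_\bot(x)(y),
\]
by distributing the finite scalar $f_\bot(x)(y)$ over the inner sum. The right-hand side at $y$ is
\[
  \sum_i\alpha(i)\sum_x\mu_i(x)f_\bot(x)(y)=\sum_i\sum_x\alpha(i)\mu_i(x)f_\bot(x)(y).
\]
All terms are nonnegative and the index sets are countable, so Tonelli's theorem for counting measure on $I\times(X\cup\{\bot\})$ (equivalently, rearrangement of nonnegative double series) lets us swap the two sums, and the two sides agree. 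The $\bot$ case requires no special treatment, since $f_\bot(\bot)(y)=\delta_\bot(y)$ is just another nonnegative coefficient.

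The only real obstacle is bookkeeping around supports and the extra point $\bot$. Concretely, we must make sure the mixture's definition, which sums over $\supp(\alpha)$, and the Kleisli definition, which sums over $\supp(\nu)$, both extend to unrestricted sums without changing value. Once this is settled, the interchange of sums is justified directly by nonnegativity, with no convergence argument beyond the fact that every sum is bounded by $1$.
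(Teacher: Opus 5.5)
Your proposal is correct and follows the paper's proof essentially step for step: evaluate both sides pointwise at $y\in Y\cup\{\bot\}$, unfold the Kleisli extension and the convex mixture, and exchange the two nonnegative countable sums. The extra bookkeeping about supports and well-typedness is harmless; it only spells out what the paper leaves implicit.
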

\begin{proof}
  Let
  $\mu_\oplus\triangleq\bigoplus_{i\sim\alpha}\mu_i$.  Write
  $f_\bot:X\cup\{\bot\}\to\calD_\bot(Y)$ for the extension of $f$ with
  $f_\bot(\bot)=\delta_\bot$.  We prove equality pointwise.  For every
  $y\in Y\cup\{\bot\}$,
  \begin{align*}
    \left(
      f^{\dagger}(\mu_\oplus)
    \right)(y)
    &=
      \sum_{x\in X\cup\{\bot\}}
      \mu_\oplus(x)\cdot f_\bot(x)(y)
      \tag{definition of Kleisli extension}\\
    &=
      \sum_{x\in X\cup\{\bot\}}
      \left(
        \sum_{i\in\supp(\alpha)}
        \alpha(i)\cdot\mu_i(x)
      \right)
      f_\bot(x)(y)
      \tag{definition of $\mu_\oplus$}\\
    &=
      \sum_{i\in\supp(\alpha)}
      \alpha(i)
      \sum_{x\in X\cup\{\bot\}}
      \mu_i(x)\cdot f_\bot(x)(y)
      \tag{exchange of nonnegative countable sums}\\
    &=
      \sum_{i\in\supp(\alpha)}
      \alpha(i)\cdot
      \left(
        f^{\dagger}(\mu_i)
      \right)(y)
      \tag{definition of Kleisli extension}\\
    &=
      \left(
        \bigoplus_{i\sim\alpha}
        f^{\dagger}(\mu_i)
      \right)(y).
      \tag{definition of convex mixture}
  \end{align*}
  Since $y$ was arbitrary, the two distributions are equal.
\end{proof}

\subsection{Denotational Semantics}

\begin{lemma}[$\calD_\bot$ Kleisli extension is Scott-continuous in the continuation]
  \label{lem:index-dbot-kleisli-continuation-scott-continuous}
  \noam{We can just cite Lemma C.10 of \citet{li2025total}}
  Let $X,Y$ be countable sets and let $\mu\in\calD_\bot(X)$.  The map
  \[
    K_\mu:(X\to\calD_\bot(Y))\to\calD_\bot(Y),
    \qquad
    K_\mu(f)\triangleq f^\dagger(\mu),
  \]
  is Scott-continuous, where $X\to\calD_\bot(Y)$ is ordered pointwise by
  $\led$.
\end{lemma}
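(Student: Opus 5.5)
Since Scott continuity means monotonicity plus preservation of directed suprema, the plan is to establish both facts for $K_\mu$ pointwise on $Y\cup\{\bot\}$. For each $y$, the explicit formula
\[
  K_\mu(f)(y)=\sum_{x\in X\cup\{\bot\}}\mu(x)\cdot f_\bot(x)(y)
\]
from the proof of \Cref{lem:index-dbot-kleisli-linearity} does the work. Recall that $\led$ compares only the non-$\bot$ outcomes, and the order reverses on $\bot$.

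\textbf{Monotonicity.} Suppose $f\le g$ pointwise. For $y\in Y$, every summand satisfies $f_\bot(x)(y)\le g_\bot(x)(y)$. When $x=\bot$, both sides equal $\delta_\bot(y)=0$. Summing with the nonnegative weights $\mu(x)$ gives $K_\mu(f)(y)\le K_\mu(g)(y)$, so $K_\mu(f)\led K_\mu(g)$.

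\textbf{Directed suprema.} Let $D\subseteq X\to\calD_\bot(Y)$ be directed, with pointwise supremum $f^*=\sup D$, so that $f^*(x)(y)=\sup_{f\in D}f(x)(y)$ for $y\in Y$. Monotonicity already gives $\sup_{f\in D}K_\mu(f)\led K_\mu(f^*)$. For the converse, fix $y\in Y$. We need
\[
  \sum_x\mu(x)\sup_{f\in D}f_\bot(x)(y)=\sup_{f\in D}\sum_x\mu(x)f_\bot(x)(y).
\]
This is the main step: interchanging a directed supremum with a countable sum of nonnegative terms.

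I would handle this interchange by a monotone convergence argument. Enumerate $X$ and fix $\varepsilon>0$. Pick a finite set $F\subseteq X$ such that the truncated left-hand sum is within $\varepsilon$ of the full sum. For each $x\in F$, pick $f_x\in D$ that is $\varepsilon/|F|$-close to the supremum at $(x,y)$. Directedness supplies a single $f\in D$ above all the $f_x$, and the finite partial sum for this $f$ is within $2\varepsilon$ of the left-hand side. This bounds the right-hand side from below, which proves equality on every $y\in Y$.

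Finally, the $\bot$ component needs no separate limit argument. Since $K_\mu(f)(\bot)=1-\sum_{y\in Y}K_\mu(f)(y)$, the infimum clause of the $\calD_\bot$ supremum follows from the $Y$-components. Those components are again a countable sum of pointwise-increasing nets, so the same interchange argument applies once more. Alternatively, the whole result could be obtained by citing Lemma C.10 of \citet{li2025total}.
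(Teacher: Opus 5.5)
Your proposal is correct and follows essentially the same route as the paper. Both establish monotonicity from the coordinate formula, then exchange the directed supremum with the nonnegative countable sum on each ordinary $y\in Y$, and let the remaining mass determine the $\bot$ coordinate. The one difference is that you prove the interchange explicitly, via finite truncation plus directedness, whereas the paper cites it as standard. Your extra interchange for the $\bot$ coordinate is harmless but not needed, since two distributions that agree on $Y$ already agree on $\bot$.
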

\begin{proof}
  Monotonicity is immediate from the ordinary-coordinate formula.  If
  $f(x)\led g(x)$ for every $x\in X$, then for every ordinary output
  $y\in Y$,
  \[
    K_\mu(f)(y)
    =
    \sum_{x\in X}\mu(x)\cdot f(x)(y)
    \le
    \sum_{x\in X}\mu(x)\cdot g(x)(y)
    =
    K_\mu(g)(y),
  \]
  hence $K_\mu(f)\led K_\mu(g)$.

  Let $(f_i)_{i\in I}$ be a nonempty directed family and put
  $f^\ast\triangleq\bigsqcup_{i\in I}f_i$ pointwise.  For every ordinary
  $y\in Y$, using the standard interchange of directed suprema with
  nonnegative countable weighted sums,
  \[
    K_\mu(f^\ast)(y)
    =
    \sum_{x\in X}\mu(x)\cdot f^\ast(x)(y)\\
    =
    \sum_{x\in X}\mu(x)\cdot\sup_{i\in I}f_i(x)(y)\\
    =
    \sup_{i\in I}
    \sum_{x\in X}\mu(x)\cdot f_i(x)(y)\\
    =
    \sup_{i\in I}K_\mu(f_i)(y)
  \]
  The $\bot$ coordinate is determined by the remaining mass, so
  $K_\mu(f^\ast)=\bigsqcup_{i\in I}K_\mu(f_i)$.
\end{proof}

\begin{corollary}[Oblivious Kleisli extension is Scott-continuous in the continuation]
  \label{lem:index-oblivious-kleisli-continuation-scott-continuous}
  Let $X,Y$ be countable sets and let $o\in\calO(X)$.  Define
  \[
    B_o:(X\to\calO(Y))\to\calO(Y),
    \qquad
    B_o(f)\triangleq f^\dagger(o)
  \]
  Then $B_o$ is Scott-continuous, where $X\to\calO(Y)$ is ordered pointwise by
  $\leo$ and $\calO(Y)$ is ordered by $\leo$.
\end{corollary}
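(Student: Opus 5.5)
The plan is to reduce the statement to \Cref{lem:index-dbot-kleisli-continuation-scott-continuous} by working pointwise in the schedule $s$ and the index $n$. Since $\leo$ on $\calO(Y)$ is the pointwise extension of $\led$, and the order on $X\to\calO(Y)$ is pointwise in $x$, we have $f\leo g$ in $X\to\calO(Y)$ iff $f(x)(s)(k)\led g(x)(s)(k)$ for all $x,s,k$. Directed suprema are likewise computed pointwise: for a directed family $(f_i)_{i\in I}$, $(\bigsqcup_i f_i)(x)(s)(k)=\bigsqcup_i f_i(x)(s)(k)$ in $\calD_\bot(Y\times\bbN)$. Directedness transfers to each coordinate because the projections are monotone.

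Next I unfold $B_o(f)(s)(n)=(f^\bot_s)^\dagger_{\calD_\bot}(o(s)(n))$. Here $f_s:X\times\bbN\to\calD_\bot(Y\times\bbN)$ is given by $f_s(x,k)\triangleq f(x)(s)(k)$, so $f^\bot_s$ is exactly its $\bot$-extension. Fix $s$ and $n$ and put $\mu\triangleq o(s)(n)\in\calD_\bot(X\times\bbN)$. Then $B_o(f)(s)(n)=K_\mu(f_s)$, where $K_\mu$ is the map from \Cref{lem:index-dbot-kleisli-continuation-scott-continuous} instantiated at the countable sets $X\times\bbN$ and $Y\times\bbN$.

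The map $f\mapsto f_s$ is monotone. It also preserves directed suprema, because both orders are pointwise: $(\bigsqcup_i f_i)_s=\bigsqcup_i (f_i)_s$. Composing with the Scott-continuous $K_\mu$ therefore gives that $f\mapsto B_o(f)(s)(n)$ is monotone and satisfies $B_o(\bigsqcup_i f_i)(s)(n)=\bigsqcup_i B_o(f_i)(s)(n)$. Since $s$ and $n$ were arbitrary and suprema in $\calO(Y)$ are computed pointwise, this yields monotonicity of $B_o$ and $B_o(\bigsqcup_i f_i)=\bigsqcup_i B_o(f_i)$.

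The only step needing care is the claim that suprema in $\calO(Y)$ and in $X\to\calO(Y)$ are computed pointwise. This includes the $\bot$-coordinate, where the supremum is an infimum of probabilities. It follows from the cited fact that pointwise extensions of DCPOs are DCPOs with pointwise suprema~\cite[Proposition 2.1.18]{abramsky1995domain}, applied twice: once for the $s,n$ arguments and once for $x$. The rest is bookkeeping, and no new analytic interchange of sums and suprema is needed beyond what the lemma already provides.
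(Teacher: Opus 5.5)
Your proposal is correct and follows essentially the same route as the paper. The paper likewise fixes $(s,n)$ and forms the schedule slice $f_s(x,k)=f(x)(s)(k)$. It shows the coordinate map $f\mapsto f_s^\dagger(o(s)(n))$ is Scott-continuous via \Cref{lem:index-dbot-kleisli-continuation-scott-continuous}, using $(\bigsqcup_i f_i)_s=\bigsqcup_i (f_i)_s$. It then concludes because directed suprema in function spaces are computed pointwise.
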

\begin{proof}
  For $h:X\to\calO(Y)$ and $s\in\schedule$, define the schedule slice
  $h_s:X\times\bbN\to\calD_\bot(Y\times\bbN)$ as 
  $h_s(x,k)\triangleq h(x)(s)(k)$.
  For every $(s,n)\in\schedule\times\bbN$, define the coordinate map
  \[
    F_{s,n}:(X\to\calO(Y))\to\calD_\bot(Y\times\bbN),
    \qquad
    F_{s,n}(f)\triangleq
    f_s^{\dagger}(o(s)(n))
  \]
  We claim that $F_{s,n}$ is Scott-continuous.  Let
  $(f_i)_{i\in I}$ be a nonempty directed family in $X\to\calO(Y)$, and put
  $f^\ast\triangleq\bigsqcup_{i\in I}f_i$.  For every $(x,k)\in X\times\bbN$,
  \[
    (f^\ast)_s(x,k)
    =
    f^\ast(x)(s)(k)
    =
    \left(\bigsqcup_{i\in I}f_i(x)\right)(s)(k)
    =
    \bigsqcup_{i\in I}(f_i)_s(x,k)
  \]
  Hence $(f^\ast)_s=\bigsqcup_{i\in I}(f_i)_s$.  Therefore, by \Cref{lem:index-dbot-kleisli-continuation-scott-continuous},
  \[
    F_{s,n}(f^\ast)
    =
    ((f^\ast)_s)^\dagger(o(s)(n))
    =
    \bigsqcup_{i\in I} ((f_i)_s)^\dagger(o(s)(n))
    =
    \bigsqcup_{i\in I}F_{s,n}(f_i)
  \]
  If $f\sqsubseteq g$ in $X\to\calO(Y)$, then $f_s\sqsubseteq g_s$ in
  $(X\times\bbN)\to\calD_\bot(Y\times\bbN)$, since for every $(x,k)$,
  \[
    f_s(x,k)
    =
    f(x)(s)(k)
    \led
    g(x)(s)(k)
    =
    g_s(x,k)
  \]
  Hence
  $F_{s,n}(f)
  =
  f_s^\dagger(o(s)(n))
  \led
  g_s^\dagger(o(s)(n))
  =
  F_{s,n}(g)$
  by \Cref{lem:index-dbot-kleisli-continuation-scott-continuous}.  Thus
  $F_{s,n}$ is Scott-continuous.

  For every $f$, $s$, and $n$,
  $F_{s,n}(f)
  =
  f_s^{\dagger}(o(s)(n))
  =
  B_o(f)(s)(n)$ by definition of Kleisli extension in $\calO$.
  Therefore the coordinate maps of $B_o$ are exactly the maps
  $(F_{s,n})_{(s,n)\in\schedule\times\bbN}$.  Since directed suprema in function
  spaces are computed pointwise \cite{abramsky1995domain}, it follows that $B_o$ is Scott-continuous.
\end{proof}

\begin{lemma}[Scott continuity of the while functional]
  \label{lem:index-while-functional-scott-continuous}
  For every command $C$ and guard $e$, the functional
  \[
    \Phi_{\langle C,e\rangle}
    :
    (\Mem\to\calO(\Mem))
    \to
    (\Mem\to\calO(\Mem))
  \]
  defined by
  \[
    \Phi_{\langle C,e\rangle}(f)(\sigma)
    \triangleq
    \begin{cases}
      f^\dagger(\de{C}(\sigma)), & \de{e}_\Exp(\sigma)=\Ttrue,\\
      \eta(\sigma), & \de{e}_\Exp(\sigma)=\Tfalse
    \end{cases}
  \]
  is Scott-continuous on $\Mem\to\calO(\Mem)$, ordered pointwise by
  $\leo$.
\end{lemma}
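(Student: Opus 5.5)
The plan is to reduce the claim to \Cref{lem:index-oblivious-kleisli-continuation-scott-continuous}. We use the fact that directed suprema in the function space $\Mem\to\calO(\Mem)$ are computed pointwise in $\sigma$, and hence also pointwise in $(s,n)$. It therefore suffices to show that, for each fixed $\sigma\in\Mem$, the coordinate map
\[
G_\sigma:(\Mem\to\calO(\Mem))\to\calO(\Mem),\qquad G_\sigma(f)\triangleq\Phi_{\langle C,e\rangle}(f)(\sigma)
\]
is Scott-continuous. A map into a pointwise-ordered function space is Scott-continuous exactly when all of its coordinates are, again by the pointwise computation of suprema \cite{abramsky1995domain}.

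For a fixed $\sigma$, split on the value of the guard $\de{e}_\Exp(\sigma)$.

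\emph{Guard false.} Then $G_\sigma(f)=\eta(\sigma)$ for every $f$. This is a constant map, so it is trivially monotone. It also preserves suprema of nonempty directed families, since the supremum of a constant family is that constant.

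\emph{Guard true.} Then $G_\sigma(f)=f^\dagger(\de{C}(\sigma))$, which is exactly $B_o(f)$ with $o\triangleq\de{C}(\sigma)\in\calO(\Mem)$ and $X=Y=\Mem$ (countable, as $\Mem$ consists of finite partial maps into a countable $\Val$). This map is Scott-continuous by \Cref{lem:index-oblivious-kleisli-continuation-scott-continuous}.

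The guard is evaluated on $\sigma$ alone and never on $f$, so the case split is uniform across any directed family $(f_i)_{i\in I}$. Hence $\Phi_{\langle C,e\rangle}(\bigsqcup_i f_i)(\sigma)=\bigsqcup_i\Phi_{\langle C,e\rangle}(f_i)(\sigma)$ holds in both cases.

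Monotonicity of $\Phi_{\langle C,e\rangle}$ follows by the same case split, using the monotonicity half of the corollary. Combining the coordinates then gives Scott continuity of $\Phi_{\langle C,e\rangle}$.

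The only point needing care is the pointwise-supremum bookkeeping. Suprema in $\calO$ are taken coordinatewise in $(s,n)$, and in $\D_\bot$ the $\bot$-coordinate is an infimum rather than a supremum. That detail is already absorbed into \Cref{lem:index-dbot-kleisli-continuation-scott-continuous}, so nothing further is needed here. There is also an implicit well-definedness assumption: if $\de{e}_\Exp(\sigma)$ is neither $\Ttrue$ nor $\Tfalse$, we take the semantics to be total or fix some default branch, and that branch is again constant or a Kleisli extension, so continuity still holds.
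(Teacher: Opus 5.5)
Your proof is correct and follows essentially the same route as the paper: you split on $\de{e}_\Exp(\sigma)$, treat the false branch as a constant map and the true branch via \Cref{lem:index-oblivious-kleisli-continuation-scott-continuous}, and assemble the result pointwise in $\sigma$. The paper also checks explicitly, via monotonicity, that the image family $(\Phi_{\langle C,e\rangle}(f_i))_i$ is directed; your coordinatewise framing covers this implicitly.
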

\begin{proof}
  First, $\Phi_{\langle C,e\rangle}$ is monotone.  Suppose
  $f,g:\Mem\to\calO(\Mem)$ and $f(\tau)\leo g(\tau)$ for every
  $\tau\in\Mem$.  Fix $\sigma\in\Mem$.  If
  $\de{e}_\Exp(\sigma)=\Tfalse$, then
  \[
    \Phi_{\langle C,e\rangle}(f)(\sigma)
    =
    \eta(\sigma)
    =
    \Phi_{\langle C,e\rangle}(g)(\sigma)
  \]
  If $\de{e}_\Exp(\sigma)=\Ttrue$, then
  \[
    \Phi_{\langle C,e\rangle}(f)(\sigma)
    =
    f^\dagger(\de{C}(\sigma))
    \leo
    g^\dagger(\de{C}(\sigma))
    =
    \Phi_{\langle C,e\rangle}(g)(\sigma)
  \]
  by \Cref{lem:index-oblivious-kleisli-continuation-scott-continuous}.  Hence
  $\Phi_{\langle C,e\rangle}(f)$ is below
  $\Phi_{\langle C,e\rangle}(g)$ in the pointwise function-space order induced
  by $\leo$.

  Now let $(f_i)_{i\in I}$ be a nonempty directed family in
  $\Mem\to\calO(\Mem)$, and put
  $f^\ast\triangleq\bigsqcup_{i\in I}f_i$, with the supremum taken pointwise.
  By the monotonicity just proved, the image family
  $(\Phi_{\langle C,e\rangle}(f_i))_{i\in I}$ is directed:
  if $i,j\in I$ and $k\in I$ satisfies $f_i\sqsubseteq f_k$ and
  $f_j\sqsubseteq f_k$, then
  $\Phi_{\langle C,e\rangle}(f_i)\sqsubseteq
  \Phi_{\langle C,e\rangle}(f_k)$ and
  $\Phi_{\langle C,e\rangle}(f_j)\sqsubseteq
  \Phi_{\langle C,e\rangle}(f_k)$.

  We prove
  $\Phi_{\langle C,e\rangle}(f^\ast)
  =
  \bigsqcup_{i\in I}\Phi_{\langle C,e\rangle}(f_i)$
  pointwise in $\sigma\in\Mem$.

  If $\de{e}_\Exp(\sigma)=\Tfalse$, then
  \[
    \Phi_{\langle C,e\rangle}(f^\ast)(\sigma)
    =
    \eta(\sigma)
    =
    \bigsqcup_{i\in I}
    \Phi_{\langle C,e\rangle}(f_i)(\sigma)
  \]
  If $\de{e}_\Exp(\sigma)=\Ttrue$, then by \Cref{lem:index-oblivious-kleisli-continuation-scott-continuous}
  \[
    \Phi_{\langle C,e\rangle}(f^\ast)(\sigma)
    =
    (f^\ast)^\dagger(\de{C}(\sigma))
    =
    \bigsqcup_{i\in I}f_i^\dagger(\de{C}(\sigma))
    =
    \bigsqcup_{i\in I}
    \Phi_{\langle C,e\rangle}(f_i)(\sigma)
  \]
  Therefore $\Phi_{\langle C,e\rangle}$ preserves directed suprema.  Together
  with monotonicity, this proves Scott continuity.
\end{proof}

\begin{lemma}[Schedule-indexed while semantic equations]
  \label{lem:index-schedule-indexed-while-semantic-equations}
  Fix a schedule $s$.  For every guard $e$ and command $C$, let
  \[
    \de{\cmdwhile{e}{C}}
    =
    \textsf{lfp}(\Phi_{\langle C,e\rangle}),
    \qquad
    \Phi_{\langle C,e\rangle}(f)(\tau)
    =
    \begin{cases}
      f^\dagger(\de{C}(\tau)),
      & \de{e}_{\Exp}(\tau)=\Ttrue,\\
      \eta(\tau),
      & \de{e}_{\Exp}(\tau)=\Tfalse .
    \end{cases}
  \]
  Let $(F_m)_{m\in\bbN}$ be its Kleene chain,
  \[
    F_0(\tau)\triangleq\bot_\calO,
    \qquad
    F_{m+1}\triangleq\Phi_{\langle C,e\rangle}(F_m)
  \]
  Write
  $F_{m,s}(\tau,j)\triangleq F_m(\tau)(s)(j)$
  for the schedule-indexed projection.  Then, for every
  $\sigma\in\Mem$ and $n\in\bbN$,
  \[
    \de{\cmdwhile{e}{C}}_s(\sigma,n)
    =
    \bigsqcup_{m\in\bbN}F_{m,s}(\sigma,n)
  \]
  The approximants satisfy
  \begin{align*}
    F_{0,s}(\sigma,n)
    &=
      \delta_\bot,\\
    F_{m+1,s}(\sigma,n)
    &=
      \begin{cases}
        \bigl((F_m)_s^\bot\bigr)^\dagger(\de{C}_s(\sigma,n)),
        & \de{e}_{\Exp}(\sigma)=\Ttrue,\\
        \delta_{(\sigma,n)},
        & \de{e}_{\Exp}(\sigma)=\Tfalse .
      \end{cases}
  \end{align*}
\end{lemma}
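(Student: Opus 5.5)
The plan is to obtain both parts of the statement directly from Kleene's Fixed Point Theorem. We then transport the resulting equalities through the schedule-indexed projection $(\cdot)_s$, using that suprema and the Kleisli operations of $\calO$ are computed coordinatewise in $(s,n)$.

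\emph{The supremum characterization.} By \Cref{lem:index-while-functional-scott-continuous}, $\Phi_{\langle C,e\rangle}$ is Scott continuous on the pointed DCPO $\Mem\to\calO(\Mem)$. That space is pointed because $\calO(\Mem)$ is a pointed DCPO and pointwise extensions of pointed DCPOs are pointed DCPOs. Kleene's theorem then gives $\textsf{lfp}(\Phi_{\langle C,e\rangle})=\bigsqcup_m F_m$. The chain $(F_m)$ is increasing, since $F_0$ is bottom and $\Phi_{\langle C,e\rangle}$ is monotone, so it is directed. Directed suprema in $\Mem\to\calO(\Mem)$ are computed pointwise in $\tau$. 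Those in $\calO(\Mem)$ are computed pointwise in $(s,j)$, being a pointwise extension of $\led$. Evaluating at $(\sigma,s,n)$ therefore yields
\[\de{\cmdwhile{e}{C}}_s(\sigma,n)=\bigsqcup_m F_m(\sigma)(s)(n)=\bigsqcup_m F_{m,s}(\sigma,n).\]
The supremum on the right is the one in $\D_\bot(\Mem\times\bbN)$: it takes suprema on ordinary outcomes and the infimum on $\bot$.

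\emph{The approximant equations.} The base case is immediate: $F_{0,s}(\sigma,n)=\bot_\calO(s)(n)=\delta_\bot$. For $m+1$, split on the value of the guard. If $\de{e}_\Exp(\sigma)=\Tfalse$, then $F_{m+1}(\sigma)=\eta(\sigma)$, and $\eta(\sigma)(s)(n)=\delta_{(\sigma,n)}$ by definition of the unit. If $\de{e}_\Exp(\sigma)=\Ttrue$, then $F_{m+1}(\sigma)=F_m^\dagger(\de{C}(\sigma))$. Unfolding the $\calO$ Kleisli extension at $(s,n)$ gives $((F_m)^\bot_s)^\dagger_{\D_\bot}(\de{C}(\sigma)(s)(n))$. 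Here $(F_m)^\bot_s(\tau,k)=F_m(\tau)(s)(k)=F_{m,s}(\tau,k)$ and $(F_m)^\bot_s(\bot)=\delta_\bot$. Finally, $\de{C}(\sigma)(s)(n)=\de{C}_s(\sigma,n)$ by the uncurrying convention.

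The only real obstacle is bookkeeping. One must make sure that the DCPO structure on $\Mem\to\calO(\Mem)$ is exactly the iterated pointwise one, so that evaluation at $(\sigma,s,n)$ is Scott continuous and commutes with the Kleene supremum. One must also make sure the schedule slice $(F_m)_s^\bot$ matches the $f_s^\bot$ used in the definition of $(\cdot)^\dagger$ for $\calO$. Both follow from the definitions together with \cite[Proposition 2.1.18]{abramsky1995domain}, so no analytic argument beyond Kleene's theorem is required.
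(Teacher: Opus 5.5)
Your proposal is correct and follows the paper's proof: you get $\de{\cmdwhile{e}{C}}=\bigsqcup_m F_m$ from Kleene's theorem via \Cref{lem:index-while-functional-scott-continuous}, evaluate at $(\sigma,s,n)$, and obtain the approximant equations by unfolding $F_{m+1}=\Phi_{\langle C,e\rangle}(F_m)$. In the true branch that means unfolding the $\calO$ Kleisli extension, and in the false branch it means $\eta(\sigma)(s)(n)=\delta_{(\sigma,n)}$. Your explicit remarks that directed suprema are computed pointwise in $\tau$ and then in $(s,j)$, and that the schedule slice matches the $f_s^\bot$ used in the definition of $(\cdot)^\dagger$, spell out steps the paper leaves implicit.
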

\begin{proof}
  By the Kleene fixed-point theorem and
  \Cref{lem:index-while-functional-scott-continuous},
  \[
    \de{\cmdwhile{e}{C}}
    =
    \bigsqcup_{m\in\bbN}F_m
  \]
  Evaluating this pointwise at $\sigma,s,n$ gives
  \[
    \de{\cmdwhile{e}{C}}_s(\sigma,n)
    =
    \bigsqcup_{m\in\bbN}F_{m,s}(\sigma,n)
  \]
  The equation for $F_{0,s}$ is immediate from
  $F_0=\bot_\calO$.  The successor equation follows by unfolding
  $F_{m+1}=\Phi_{\langle C,e\rangle}(F_m)$; in the true branch one unfolds
  Kleisli composition in $\calO$, and in the false branch
  $\eta(\sigma)(s)(n)=\delta_{(\sigma,n)}$.
\end{proof}

\begin{lemma}[Schedule-indexed while approximant unfolding on distributions]
  \label{lem:index-while-approximant-distribution-unfolding}
  Fix a schedule $s$, guard $e$, and command $C$.  Let
  $(F_m)_{m\in\bbN}$ and $F_{m,s}$ be as in
  \Cref{lem:index-schedule-indexed-while-semantic-equations}.
  For $c\in\{\tru,\fls\}$, put
  \[
    G_c\triangleq
    \{(\sigma,k)\in\Mem\times\bbN
    \mid \de{e}_{\Exp}(\sigma)=c\}
  \]
  Let $\rho\in\calD(\Mem\times\bbN)$.  For each
  $c\in\{\tru,\fls\}$, if $\rho(G_c)>0$, write $\rho\mid G_c$ for the
  conditional distribution on $G_c$; if $\rho(G_c)=0$, let $\rho\mid G_c$ be
  any fixed ordinary distribution.  Then, for every $m\in\bbN$,
  \[
    \bigl((F_{m+1})_s^\bot\bigr)^\dagger(\rho)
    =
    \rho(G_{\fls})\cdot(\rho\mid G_{\fls})
    +
    \rho(G_{\tru})\cdot
    \bigl((F_m)_s^\bot\bigr)^\dagger
    \bigl(\de{C}_s^\dagger(\rho\mid G_{\tru})\bigr)
  \]
\end{lemma}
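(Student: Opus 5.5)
The plan is to decompose $\rho$ along the guard and then apply the successor equation of \Cref{lem:index-schedule-indexed-while-semantic-equations} on each piece. Since $\rho$ is an ordinary distribution, $G_\tru$ and $G_\fls$ partition its support: every memory in $\supp(\rho)$ evaluates the guard to a boolean. We can therefore write
\[
  \rho = \rho(G_\fls)\cdot(\rho\mid G_\fls) + \rho(G_\tru)\cdot(\rho\mid G_\tru)
\]
as a convex mixture indexed by $\bern{\rho(G_\tru)}$. If one of the weights is $0$, the corresponding summand vanishes and the arbitrary choice of $\rho\mid G_c$ is irrelevant. By \Cref{lem:index-dbot-kleisli-linearity} applied to $f=(F_{m+1})_s^\bot$, the left-hand side splits as
\[
  \rho(G_\fls)\cdot\bigl((F_{m+1})_s^\bot\bigr)^\dagger(\rho\mid G_\fls) + \rho(G_\tru)\cdot\bigl((F_{m+1})_s^\bot\bigr)^\dagger(\rho\mid G_\tru).
\]
We then handle each summand separately.

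For the false part, every $(\sigma,k)$ in the support of $\rho\mid G_\fls$ has $\de{e}_\Exp(\sigma)=\Tfalse$. The successor equation gives $F_{m+1,s}(\sigma,k)=\delta_{(\sigma,k)}$ there, so the Kleisli extension restricted to the support is the unit. The first monad law ($\eta^\dagger=\mathsf{id}$, or a direct pointwise sum) then yields $\rho\mid G_\fls$.

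For the true part, every $(\sigma,k)$ in the support has $F_{m+1,s}(\sigma,k)=((F_m)_s^\bot)^\dagger(\de{C}_s(\sigma,k))$. Writing $\nu=\rho\mid G_\tru$, we compute
\[
  \bigl((F_{m+1})_s^\bot\bigr)^\dagger(\nu)=\textstyle\sum_{(\sigma,k)}\nu(\sigma,k)\,\bigl((F_m)_s^\bot\bigr)^\dagger(\de{C}_s(\sigma,k)).
\]
This sum is $\bigl((F_m)_s^\bot\bigr)^\dagger$ applied to the mixture $\bigoplus_{(\sigma,k)\sim\nu}\de{C}_s(\sigma,k)$, again by \Cref{lem:index-dbot-kleisli-linearity}. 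That mixture is by definition $\de{C}_s^\dagger(\nu)$, with $\bot$ mapped to $\delta_\bot$; this is the associativity monad law for $\D_\bot$ instantiated at this input.

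The main obstacle is bookkeeping rather than an idea. We must check that the Kleisli extension only depends on the continuation's values on the support; that follows immediately from the defining sum, since points outside $\supp$ carry weight $0$. We must also check that $\bot$ is treated consistently: $\rho$ has no $\bot$ mass, and $(F_m)_s^\bot(\bot)=\delta_\bot$ matches the convention in $\de{C}_s^\dagger$. Finally, the exchange of countable nonnegative sums is justified exactly as in \Cref{lem:index-dbot-kleisli-linearity}. Recombining the two summands gives the stated equation.
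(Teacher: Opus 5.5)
Your proof is correct and follows essentially the paper's route: split $\rho$ along the partition $G_{\fls}, G_{\tru}$, apply the successor equation of \Cref{lem:index-schedule-indexed-while-semantic-equations} on each piece, and pull the true-branch mixture inside $((F_m)_s^\bot)^\dagger$ via \Cref{lem:index-dbot-kleisli-linearity}. The only difference is cosmetic: you first write $\rho$ as a binary mixture and invoke linearity, while the paper partitions the defining sum directly. Your explicit treatment of the zero-weight case, where $\rho \mid G_c$ is arbitrary, is a bit of care the paper leaves implicit.
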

\begin{proof}
  The sets $G_{\fls}$ and $G_{\tru}$ partition $\Mem\times\bbN$.  Hence
  \begin{align*}
    \bigl((F_{m+1})_s^\bot\bigr)^\dagger(\rho)
    &=
    \bigoplus_{(\sigma,k)\sim\rho}F_{m+1,s}(\sigma,k)
    \tag{definition of $\calD_\bot$ Kleisli extension and $\rho(\bot)=0$}\\
    &=
    \rho(G_{\fls})\cdot
    \bigoplus_{(\sigma,k)\sim\rho\mid G_{\fls}}
      F_{m+1,s}(\sigma,k)
    +
    \rho(G_{\tru})\cdot
    \bigoplus_{(\sigma,k)\sim\rho\mid G_{\tru}}
      F_{m+1,s}(\sigma,k)
    \tag{partition by $G_{\fls},G_{\tru}$}\\
    &=
    \rho(G_{\fls})\cdot
    \bigoplus_{(\sigma,k)\sim\rho\mid G_{\fls}}
      \delta_{(\sigma,k)}
    +
    \rho(G_{\tru})\cdot
    \bigoplus_{(\sigma,k)\sim\rho\mid G_{\tru}}
      \bigl((F_m)_s^\bot\bigr)^\dagger(\de{C}_s(\sigma,k))
    \tag{\Cref{lem:index-schedule-indexed-while-semantic-equations}}\\
    &=
    \rho(G_{\fls})\cdot(\rho\mid G_{\fls})
    +
    \rho(G_{\tru})\cdot
    \bigl((F_m)_s^\bot\bigr)^\dagger
    \left(
      \bigoplus_{(\sigma,k)\sim\rho\mid G_{\tru}}
        \de{C}_s(\sigma,k)
    \right)
    \tag{definition of conditioning and
    \Cref{lem:index-dbot-kleisli-linearity}}\\
    &=
    \rho(G_{\fls})\cdot(\rho\mid G_{\fls})
    +
    \rho(G_{\tru})\cdot
    \bigl((F_m)_s^\bot\bigr)^\dagger
    \bigl(\de{C}_s^\dagger(\rho\mid G_{\tru})\bigr).
    \tag{definition of $\calD_\bot$ Kleisli extension}
  \end{align*}
\end{proof}

\begin{lemma}[Schedule-indexed while approximants converge on distributions]
  \label{lem:index-schedule-indexed-while-distribution-supremum}
  Fix a schedule $s$, guard $e$, and command $C$.  Let
  $(F_m)_{m\in\bbN}$ and $F_{m,s}$ be as in
  \Cref{lem:index-schedule-indexed-while-semantic-equations}.
  For every $\mu\in\calD(\Mem\times\bbN)$, regarded as a proper element of
  $\calD_\bot(\Mem\times\bbN)$ by the standing coercion,
  \[
    \de{\cmdwhile{e}{C}}_s^\dagger(\mu)
    =
    \bigsqcup_{m\in\bbN}
    \bigl((F_m)_s^\bot\bigr)^\dagger(\mu)
  \]
  Consequently, for every ordinary output $x\in\Mem\times\bbN$,
  \[
    \left(\de{\cmdwhile{e}{C}}_s^\dagger(\mu)\right)(x)
    =
    \sup_{n\in\bbN}
    \left(\bigl((F_{n+1})_s^\bot\bigr)^\dagger(\mu)\right)(x)
  \]
\end{lemma}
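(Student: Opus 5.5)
The plan is to reduce the statement to Scott continuity of the $\calD_\bot$ Kleisli extension in the continuation, \Cref{lem:index-dbot-kleisli-continuation-scott-continuous}, applied with the fixed input $\mu$. First, by \Cref{lem:index-schedule-indexed-while-semantic-equations}, for every $(\sigma,n)$ we have $\de{\cmdwhile{e}{C}}_s(\sigma,n)=\bigsqcup_{m}F_{m,s}(\sigma,n)$. The family $(F_{m,s})_{m\in\bbN}$ is a chain, hence directed, in the pointwise order on $(\Mem\times\bbN)\to\calD_\bot(\Mem\times\bbN)$. This holds because the Kleene chain $(F_m)$ is increasing in $\leo$ (since $F_0=\bot$ and $\Phi_{\langle C,e\rangle}$ is monotone by \Cref{lem:index-while-functional-scott-continuous}), and $\leo$ is defined pointwise in $s$ and $n$. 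Consequently $\de{\cmdwhile{e}{C}}_s$ is the pointwise supremum $\bigsqcup_m F_{m,s}$ in that function space.

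Next, I would observe that $(F_m)_s^\bot$ is just the extension of $F_{m,s}$ sending $\bot\mapsto\delta_\bot$. Its Kleisli extension applied to $\mu$ is therefore exactly $K_\mu(F_{m,s})$ in the notation of \Cref{lem:index-dbot-kleisli-continuation-scott-continuous}, taking $X=Y=\Mem\times\bbN$ (both countable). Since $\mu$ is a proper distribution with $\mu(\bot)=0$, no issue arises with the $\bot$ input. Scott continuity of $K_\mu$ then yields
\[
\de{\cmdwhile{e}{C}}_s^\dagger(\mu)=K_\mu\Bigl(\bigsqcup_m F_{m,s}\Bigr)=\bigsqcup_m K_\mu(F_{m,s})=\bigsqcup_m \bigl((F_m)_s^\bot\bigr)^\dagger(\mu),
\]
which is the first claim.

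For the consequence, I would use the description of directed suprema in $\langle\calD_\bot(X),\led\rangle$ recalled in \Cref{sec:semantics}: on ordinary points $x$, the supremum is computed as $\sup_m$ of the values at $x$. The family $\bigl(((F_m)_s^\bot)^\dagger(\mu)\bigr)_m$ is a monotone chain, by monotonicity of $K_\mu$. Dropping its first element ($m=0$) does not change the supremum of an increasing sequence, and reindexing by $n+1$ gives the stated formula.

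The main obstacle is small. It lies in justifying carefully that the pointwise supremum of the $F_{m,s}$ coincides with the slice of the supremum in $\calO$, that is, that suprema in $\calO(\Mem)$ commute with evaluation at $(s,n)$. I would handle this by citing that directed suprema in function spaces are pointwise, as already used in \Cref{lem:index-oblivious-kleisli-continuation-scott-continuous}. The same point was also asserted in \Cref{lem:index-schedule-indexed-while-semantic-equations}, so it can largely be invoked directly.
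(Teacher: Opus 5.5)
Your proposal is correct and follows the paper's proof essentially step for step: monotonicity of $\Phi_{\langle C,e\rangle}$ makes $(F_{m,s})_m$ a directed family whose pointwise supremum is $\de{\cmdwhile{e}{C}}_s$, and Scott continuity of $K_\mu$ from \Cref{lem:index-dbot-kleisli-continuation-scott-continuous} then moves the supremum past the Kleisli extension. The only difference is cosmetic and lies in discarding the $m=0$ term: you use monotonicity of the chain, whereas the paper observes directly that $((F_0)_s^\bot)^\dagger(\mu)=\delta_\bot$ is $0$ at every ordinary output, and either justification works.
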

\begin{proof}
  Regard $F_{m,s}$ and $\de{\cmdwhile{e}{C}}_s$ as kernels from
  $\Mem\times\bbN$ to $\calD_\bot(\Mem\times\bbN)$.
  By \Cref{lem:index-schedule-indexed-while-semantic-equations},
  \[
    \de{\cmdwhile{e}{C}}_s(\sigma,n)
    =
    \bigsqcup_{m\in\bbN}F_{m,s}(\sigma,n)
    \qquad(\sigma\in\Mem,\;n\in\bbN)
  \]
  The functional $\Phi_{\langle C,e\rangle}$ is Scott-continuous by
  \Cref{lem:index-while-functional-scott-continuous}, hence monotone.  Since
  $F_0$ is the least element and
  $F_{m+1}=\Phi_{\langle C,e\rangle}(F_m)$, induction gives
  $F_m\sqsubseteq F_{m+1}$ for every $m\in\bbN$.  Therefore
  $(F_{m,s})_{m\in\bbN}$ is directed, and
  $\de{\cmdwhile{e}{C}}_s=\bigsqcup_{m\in\bbN}F_{m,s}$ in the pointwise
  function-space order.
  Applying
  \Cref{lem:index-dbot-kleisli-continuation-scott-continuous} with input
  distribution $\mu$ gives
  \begin{align*}
    \de{\cmdwhile{e}{C}}_s^\dagger(\mu)
    &=
    \left(\bigsqcup_{m\in\bbN}F_{m,s}\right)^\dagger(\mu)
    \tag{\Cref{lem:index-schedule-indexed-while-semantic-equations}}\\
    &=
    \bigsqcup_{m\in\bbN}F_{m,s}^\dagger(\mu)
    \tag{\Cref{lem:index-dbot-kleisli-continuation-scott-continuous}}\\
    &=
    \bigsqcup_{m\in\bbN}
    \bigl((F_m)_s^\bot\bigr)^\dagger(\mu).
    \tag{$\mu(\bot)=0$}
  \end{align*}
  Thus
  \begin{equation}
    \de{\cmdwhile{e}{C}}_s^\dagger(\mu)
    =
    \bigsqcup_{m\in\bbN}
    \bigl((F_m)_s^\bot\bigr)^\dagger(\mu).
    \label{eq:index-while-distribution-approximant-supremum}
  \end{equation}
  For ordinary $x\in\Mem\times\bbN$,
  \begin{align*}
    \left(\de{\cmdwhile{e}{C}}_s^\dagger(\mu)\right)(x)
    &=
    \left(
      \bigsqcup_{m\in\bbN}
      \bigl((F_m)_s^\bot\bigr)^\dagger(\mu)
    \right)(x)
    \tag{by \eqref{eq:index-while-distribution-approximant-supremum}}\\
    &=
    \sup_{m\in\bbN}
    \left(\bigl((F_m)_s^\bot\bigr)^\dagger(\mu)\right)(x).
    \tag{ordinary-coordinate supremum in $\calD_\bot$}
  \end{align*}
  Since $F_{0,s}(\sigma,n)=\delta_\bot$ for every $(\sigma,n)$,
  $\bigl((F_0)_s^\bot\bigr)^\dagger(\mu)=\delta_\bot$, whose value at the
  ordinary output $x$ is $0$.  Hence the $m=0$ term does not change the
  supremum on the ordinary coordinate $x$, giving the stated supremum over
  $(F_{n+1})_{n\in\bbN}$.
\end{proof}

\begin{lemma}[Schedule-indexed semantic equations on distributions]
  \label{lem:index-schedule-indexed-distribution-semantic-equations}
  \noam{Again, this lemma is quite obvious}
  Fix a schedule $s$ and let
  $\mu\in\calD(\Mem\times\bbN)$, regarded as a proper element of
  $\calD_\bot(\Mem\times\bbN)$ by the standing coercion.  Let
  $\alpha_\sigma\triangleq d(\de{\vec e}_{\Exp}(\sigma))$,
  $A_\sigma\triangleq\de{e}_{\Exp}(\sigma)$,
  $p_\sigma\triangleq\de{e}_{\Exp}(\sigma)$, and let $F_{m,s}$ be as in
  \Cref{lem:index-schedule-indexed-while-semantic-equations}.  Then
  for every command $C$,
  \[
    \de{C}_s^\dagger(\mu)
    =
    \bigoplus_{(\sigma,n)\sim\mu}
    \de{C}_s(\sigma,n)
  \]
  Moreover,
  \begin{align*}
    \de{\cmdskip}_s^\dagger(\mu)
    &=
      \bigoplus_{(\sigma,n)\sim\mu}
      \delta_{(\sigma,n)}
      =
      \mu\\
    \de{\actassign{x}{e}}_s^\dagger(\mu)
    &=
      \bigoplus_{(\sigma,n)\sim\mu}
        \delta_{(\sigma[x\coloneqq\de{e}_{\Exp}(\sigma)],n)}
      \\
    \de{x\samp d(\vec e)}_s^\dagger(\mu)
    &=
      \bigoplus_{(\sigma,n)\sim\mu}
        \bigoplus_{v\sim\alpha_\sigma}
        \delta_{(\sigma[x\coloneqq v],n)}
      \\
    \de{x\gets e}_s^\dagger(\mu)
    &=
      \bigoplus_{(\sigma,n)\sim\mu}
        \delta_{
          (\sigma[x\coloneqq
          A_\sigma[(s(n)\bmod |A_\sigma|)]],n+1)
        }
      \\
    \de{\cmdseq{C_1}{C_2}}_s^\dagger(\mu)
    &=
      \bigoplus_{(\sigma,n)\sim\mu}
        \de{C_2}_s^\dagger\bigl(\de{C_1}_s(\sigma,n)\bigr)
      =
      \de{C_2}_s^\dagger\bigl(\de{C_1}_s^\dagger(\mu)\bigr)\\
    \de{\cmdcase{b}{C_1}{C_2}}_s^\dagger(\mu)
    &=
      \bigoplus_{(\sigma,n)\sim\mu}
        \begin{cases}
          \de{C_1}_s(\sigma,n), & \de{b}_{\Exp}(\sigma)=\Ttrue,\\
          \de{C_2}_s(\sigma,n), & \de{b}_{\Exp}(\sigma)=\Tfalse,
        \end{cases}\\
    \de{\cmdchoice{e}{C_1}{C_2}}_s^\dagger(\mu)
    &=
      \bigoplus_{(\sigma,n)\sim\mu}
        \de{C_1}_s(\sigma,n)
        \oplus_{p_\sigma}
        \de{C_2}_s(\sigma,n)
      \\
    \de{\cmdnondet{C_0}{C_1}}_s^\dagger(\mu)
    &=
      \bigoplus_{(\sigma,n)\sim\mu}
        \de{C_{(s(n)\bmod 2)}}_s(\sigma,n+1)
      \\
    \de{\cmdwhile{e}{C}}_s^\dagger(\mu)
    &=
      \bigoplus_{(\sigma,n)\sim\mu}
        \bigsqcup_{m\in\bbN}F_{m,s}(\sigma,n)
  \end{align*}
\end{lemma}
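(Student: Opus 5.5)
The first equation is simply the definition of the $\calD_\bot$ Kleisli extension, specialised to a proper input. By definition, $\de{C}_s^\dagger(\mu)=\bigoplus_{y\sim\mu}(\de{C}_s)_\bot(y)$, where $y$ ranges over $(\Mem\times\bbN)\cup\{\bot\}$. Since $\mu(\bot)=0$, the $\bot$ summand has weight zero, so the mixture reduces to $\bigoplus_{(\sigma,n)\sim\mu}\de{C}_s(\sigma,n)$. Once this general identity is in hand, I would derive each remaining equation by substituting the pointwise value $\de{C}_s(\sigma,n)=\de{C}(\sigma)(s)(n)$ read off \Cref{fig:cmd-denotational}, together with the definitions of the $\calO$ operations $\eta$, $(\cdot)^\dagger$, $\bignd$ and $\bigoplus$.

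I would handle the atomic cases in order.

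\begin{itemize}
\item For $\skp$: $\eta(\sigma)(s)(n)=\delta_{(\sigma,n)}$, and a mixture of Dirac masses weighted by $\mu$ is $\mu$ itself.
\item For assignment: the same computation with $\sigma[x\coloneqq\de{e}_\Exp(\sigma)]$ in place of $\sigma$.
\item For sampling: the probabilistic choice operator in $\calO$ is evaluated pointwise, giving $\bigoplus_{v\sim\alpha_\sigma}\delta_{(\sigma[x\coloneqq v],n)}$.
\item For $x\gets e$: unfold the finite $\bignd$, which selects the branch $s(n)\bmod|A_\sigma|$ and continues at index $n+1$; applying $\eta$ then gives the stated Dirac mass at $n+1$.
\item For binary nondeterminism: the same unfolding with $|I|=2$.
\item For conditionals and probabilistic choice: the definitions are pointwise in $\sigma$, and the $\calO$-level $\oplus_p$ is evaluated pointwise in $(s,n)$.
\end{itemize}

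The only cases with content are sequencing and loops. For sequencing, the $\calO$ Kleisli composition at $(s,n)$ is by definition $((\de{C_2})_s^\bot)^\dagger_{\D_\bot}(\de{C_1}_s(\sigma,n))$, which gives the first equality. For the second equality, $\de{C_2}_s^\dagger(\de{C_1}_s^\dagger(\mu))$ is the Kleisli extension applied to the mixture $\bigoplus_{(\sigma,n)\sim\mu}\de{C_1}_s(\sigma,n)$. \Cref{lem:index-dbot-kleisli-linearity} pushes the extension inside the mixture. The resulting input distributions may be improper, but this is harmless: the linearity lemma is stated for $\mu_i\in\calD_\bot(X)$, and the extension sends $\bot$ to $\delta_\bot$ consistently on both sides. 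For while loops, I would invoke \Cref{lem:index-schedule-indexed-while-semantic-equations}, which gives $\de{\whl eC}_s(\sigma,n)=\bigsqcup_m F_{m,s}(\sigma,n)$, and substitute this into the general identity.

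I expect no genuine obstacle. The step needing the most care is the bookkeeping in the sequencing case. There I must check two things: that the $\calO$-level extension $f_s^\bot$ agrees with the $\D_\bot$-level $(\de{C_2}_s)_\bot$ on pairs $(x,k)$ and on $\bot$, and that the exchange of countable nonnegative sums inside \Cref{lem:index-dbot-kleisli-linearity} is applied to a countable index set, namely $\supp(\mu)$, which is countable because $\Mem\times\bbN$ is countable.
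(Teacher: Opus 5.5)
Your proposal is correct and follows essentially the same route as the paper. The general identity comes from unfolding the $\calD_\bot$ Kleisli extension with $\mu(\bot)=0$. The constructor cases, including the first sequencing equality and the while case via \Cref{lem:index-schedule-indexed-while-semantic-equations}, are direct unfoldings of the pointwise semantic clauses, the final skip equality is the unit law, and the second sequencing equality follows from \Cref{lem:index-dbot-kleisli-linearity}.
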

\begin{proof}
  For the first displayed equation, it suffices to prove equality pointwise.
  By $\mu(\bot) = 0$ and the definitions of Kleisli extension and convex mixture, for every
  $z\in(\Mem\times\bbN)\cup\{\bot\}$,
  \[
    \left(\de{C}_s^\dagger(\mu)\right)(z)
    =
    \sum_{(\sigma,n)\in\Mem\times\bbN}
    \mu(\sigma,n)\cdot
    \bigl(\de{C}_s(\sigma,n)\bigr)(z)
    =
    \left(
      \bigoplus_{(\sigma,n)\sim\mu}
      \de{C}_s(\sigma,n)
    \right)(z).
  \]

  Each constructor equation, except the final equality in the sequencing line
  and the final equality in the skip line, is obtained by unfolding the
  corresponding pointwise semantic clause, using
  \Cref{lem:index-schedule-indexed-while-semantic-equations} for while.  The
  final equality in the skip line is the right identity law.

  By \Cref{lem:index-dbot-kleisli-linearity}, for sequencing, 
  \[
    \bigoplus_{(\sigma,n)\sim\mu}
    \de{C_2}_s^\dagger\bigl(\de{C_1}_s(\sigma,n)\bigr)
    =
    \de{C_2}_s^\dagger\left(
      \bigoplus_{(\sigma,n)\sim\mu}
      \de{C_1}_s(\sigma,n)
    \right)
    =
    \de{C_2}_s^\dagger\bigl(\de{C_1}_s^\dagger(\mu)\bigr).
  \]
\end{proof}

\begin{lemma}[Schedule-indexed rule-shaped output equations]
  \label{lem:index-schedule-indexed-rule-output-equations}
  Fix a schedule $s$.
  \begin{enumerate}
    \item Let $\mu\in\calD(\Mem\times\bbN)$ and let
    $p\in[0,1]$.  Put
    \[
      G_p^e
      \triangleq
      \{(\sigma,n)\in\Mem\times\bbN
      \mid \de{e}_{\Exp}(\sigma)=p\}
    \]
    If $\mu(G_p^e)=1$, then
    \[
      \de{C_1\cmdchoiceS{e}C_2}_s^\dagger(\mu)
      =
      \de{C_1}_s^\dagger(\mu)
      \oplus_p
      \de{C_2}_s^\dagger(\mu)
    \]

    \item Let $\mu\in\calD(\Mem\times\bbN)$, put
    \[
      I_i\triangleq\{n\in\bbN\mid s(n)\bmod 2=i\},
      \qquad
      \alpha(i)\triangleq\mu(\Mem\times I_i)
      \quad(i\in\{0,1\})
    \]
    For $i\in\supp(\alpha)$, let
    $\mu_i\triangleq\RestrictN{\mu}{I_i}$.  Then \noam{Was $\Tick$ defined? (I may have deleted it from the paper)}
    \[
      \bigoplus_{i\sim\alpha}
      \de{C_i}_s^\dagger(\Tick_\calD(\mu_i))
      =
      \de{C_0\cmdnondetS C_1}_s^\dagger(\mu)
    \]
  \end{enumerate}
\end{lemma}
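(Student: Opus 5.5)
Both parts reduce to the pointwise formula of \Cref{lem:index-schedule-indexed-distribution-semantic-equations}, $\de{C}_s^\dagger(\mu)=\bigoplus_{(\sigma,n)\sim\mu}\de{C}_s(\sigma,n)$. We then regroup the resulting countable sums, which is legitimate because all terms are nonnegative. Here $\RestrictN{\mu}{I_i}$ denotes the conditional distribution $\mu\mid(\Mem\times I_i)$. We read $\Tick_\calD(\nu)$ as the pushforward of $\nu$ along $(\sigma,n)\mapsto(\sigma,n+1)$, i.e.\ $\Tick_\calD(\nu)(\sigma,n+1)=\nu(\sigma,n)$, mirroring $\Tick$ on resources.

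\textbf{Part 1.} By the probabilistic-choice line of \Cref{lem:index-schedule-indexed-distribution-semantic-equations},
\[
\de{C_1\cmdchoiceS{e}C_2}_s^\dagger(\mu)=\bigoplus_{(\sigma,n)\sim\mu}\bigl(\de{C_1}_s(\sigma,n)\oplus_{p_\sigma}\de{C_2}_s(\sigma,n)\bigr).
\]
Since $\mu(G^e_p)=1$, every $(\sigma,n)\in\supp(\mu)$ has $p_\sigma=p$. So we may replace $p_\sigma$ by the constant $p$ throughout the mixture. Expanding $\oplus_p$ as $p\cdot(-)+(1-p)\cdot(-)$ pointwise and exchanging the two nonnegative sums gives
\[
p\cdot\bigoplus_{(\sigma,n)\sim\mu}\de{C_1}_s(\sigma,n)+(1-p)\cdot\bigoplus_{(\sigma,n)\sim\mu}\de{C_2}_s(\sigma,n).
\]
Applying the first equation of \Cref{lem:index-schedule-indexed-distribution-semantic-equations} in reverse identifies these mixtures as $\de{C_1}_s^\dagger(\mu)$ and $\de{C_2}_s^\dagger(\mu)$. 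The degenerate cases $p\in\{0,1\}$ need no separate argument, since one summand is simply weighted by zero.

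\textbf{Part 2.} Start from the nondeterministic-choice line:
\[
\de{C_0\cmdnondetS C_1}_s^\dagger(\mu)=\bigoplus_{(\sigma,n)\sim\mu}\de{C_{(s(n)\bmod 2)}}_s(\sigma,n+1).
\]
The sets $\Mem\times I_0$ and $\Mem\times I_1$ partition $\Mem\times\bbN$, so we split the sum into these two blocks, exactly as in the proof of \Cref{lem:index-while-approximant-distribution-unfolding}. On block $i$ the selected command is fixed to $C_i$. Writing $\mu=\sum_{i\in\supp(\alpha)}\alpha(i)\cdot\mu_i$, and dropping blocks with $\alpha(i)=0$ because they contribute nothing, block $i$ equals
\[
\alpha(i)\cdot\sum_{(\sigma,n)}\mu_i(\sigma,n)\,\de{C_i}_s(\sigma,n+1).
\]
Reindexing $m=n+1$ rewrites the inner sum as $\sum_{(\sigma,m)}\Tick_\calD(\mu_i)(\sigma,m)\,\de{C_i}_s(\sigma,m)$. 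By \Cref{lem:index-schedule-indexed-distribution-semantic-equations} this is $\de{C_i}_s^\dagger(\Tick_\calD(\mu_i))$, and summing the blocks with weights $\alpha(i)$ gives the left-hand side.

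\textbf{Main obstacle.} The only delicate point is bookkeeping. We must handle $\supp(\alpha)$ correctly when one parity class has zero mass, so that the conditionals $\mu_i$ are only formed where they are defined. We must also check that the $\bot$ coordinate agrees. For the latter, we argue that both sides are the same convex combination of $\calD_\bot$-distributions computed coordinatewise over $(\Mem\times\bbN)\cup\{\bot\}$, so no separate argument about nontermination mass is needed.
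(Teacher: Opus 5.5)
Your proposal is correct and follows the paper's proof essentially step for step. Both parts are unfolded through the pointwise formula of \Cref{lem:index-schedule-indexed-distribution-semantic-equations}. In Part 1, $p_\sigma$ is replaced by the constant $p$ on $\supp(\mu)$ and the mixture is then split. In Part 2, the parity partition $(I_i)_i$, the fact that $\alpha(i)\cdot\mu_i$ agrees with $\mu$ on $\Mem\times I_i$, and the shift $n\mapsto n+1$ that identifies $\Tick_\calD(\mu_i)$ combine the pieces. The only difference is cosmetic: the paper runs the Part 2 chain of equalities from the mixture side to $\de{C_0\cmdnondetS C_1}_s^\dagger(\mu)$, whereas you start from the nondeterministic side.
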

\begin{proof}
  For the probabilistic-choice equation, put
  $p_\sigma\triangleq\de{e}_{\Exp}(\sigma)$.  Then
  \begin{align*}
    \de{C_1\cmdchoiceS{e}C_2}_s^\dagger(\mu)
    &=
      \bigoplus_{(\sigma,n)\sim\mu}
      \bigl(
        \de{C_1}_s(\sigma,n)
        \oplus_{p_\sigma}
        \de{C_2}_s(\sigma,n)
      \bigr)
      \tag{\Cref{lem:index-schedule-indexed-distribution-semantic-equations}}\\
    &=
      \bigoplus_{(\sigma,n)\sim\mu}
      \bigl(
        \de{C_1}_s(\sigma,n)
        \oplus_p
        \de{C_2}_s(\sigma,n)
      \bigr)
      \tag{$\mu(G_p^e)=1$}\\
    &=
      \left(
        \bigoplus_{(\sigma,n)\sim\mu}
        \de{C_1}_s(\sigma,n)
      \right)
      \oplus_p
      \left(
        \bigoplus_{(\sigma,n)\sim\mu}
        \de{C_2}_s(\sigma,n)
      \right)
      \tag{definition of convex mixture}\\
    &=
      \de{C_1}_s^\dagger(\mu)
      \oplus_p
      \de{C_2}_s^\dagger(\mu).
      \tag{\Cref{lem:index-schedule-indexed-distribution-semantic-equations}}
  \end{align*}

  For the nondeterministic-choice equation, first note that for
  $i\in\supp(\alpha)$ and $(\sigma,n)\in\Mem\times\bbN$,
  \[
    \alpha(i)\cdot\mu_i(\sigma,n)
    =
    \begin{cases}
      \mu(\sigma,n), & n\in I_i,\\
      0, & n\notin I_i .
    \end{cases}
  \]
  Since $(I_i)_{i\in\{0,1\}}$ partitions $\bbN$,
  \begin{align*}
    \bigoplus_{i\sim\alpha}
    \de{C_i}_s^\dagger(\Tick_\calD(\mu_i))
    &=
      \bigoplus_{i\sim\alpha}
      \left(
        \bigoplus_{(\sigma,m)\sim\Tick_\calD(\mu_i)}
        \de{C_i}_s(\sigma,m)
      \right)
      \tag{\Cref{lem:index-schedule-indexed-distribution-semantic-equations}}\\
    &=
      \bigoplus_{i\sim\alpha}
      \left(
        \bigoplus_{(\sigma,n)\sim\mu_i}
        \de{C_i}_s(\sigma,n+1)
      \right)
      \tag{definition of $\Tick_\calD$}\\
    &=
      \bigoplus_{i\sim\alpha}
      \left(
        \bigoplus_{(\sigma,n)\sim\mu_i}
        \de{C_{(s(n)\bmod 2)}}_s(\sigma,n+1)
      \right)
      \tag{$\supp(\mu_i)\subseteq\Mem\times I_i$}\\
    &=
      \bigoplus_{(\sigma,n)\sim\mu}
      \de{C_{(s(n)\bmod 2)}}_s(\sigma,n+1)
      \tag{definition of $\mu_i$ and the partition $(I_i)_i$}\\
    &=
      \de{C_0\cmdnondetS C_1}_s^\dagger(\mu).
      \tag{\Cref{lem:index-schedule-indexed-distribution-semantic-equations}}
  \end{align*}
\end{proof}

\section{Assertion Infrastructure}
\begin{figure*}[t]
\begingroup
\footnotesize
\setlength{\abovedisplayskip}{3pt}
\setlength{\belowdisplayskip}{3pt}
\setlength{\abovedisplayshortskip}{2pt}
\setlength{\belowdisplayshortskip}{2pt}
\newcommand{\rfgroup}[3]{%
  \par\smallskip\noindent\textbf{(#1)\ \ #2.}~\emph{#3}\par\nobreak}
\begin{mdframed}[linewidth=0.4pt,roundcorner=3pt,
  innertopmargin=5pt,innerbottommargin=7pt,
  innerleftmargin=10pt,innerrightmargin=10pt,
  skipabove=3pt,skipbelow=3pt]
{\centering\normalsize\bfseries Resource Algebra and Refinement Facts\par}
\vspace{2pt}\hrule\vspace{5pt}
\noindent
\begin{minipage}[t]{0.475\textwidth}
  \rfgroup{a}{Pushforwards}{functorial; commute with products; reindexing preserves resources}
  \[\begin{gathered}
    (g\circ f)_*\calP=g_*(f_*\calP)\\
    (f,\id)_*(\calP\otimes\calQ)=f_*\calP\otimes\calQ\\
    (\id,f)_*(\calQ\otimes\calP)=\calQ\otimes f_*\calP\\
    f_*\calR\cong\calR
  \end{gathered}\]
  \rfgroup{b}{Isomorphism and order}{$\cong$ is an equivalence, $\preceq$ a preorder}
  \[\begin{gathered}
    \calR_0\preceq\calR_1\Rightarrow\Proj{\calR_0}{\bbN}=\Proj{\calR_1}{\bbN}\\
    \calR\cong\calR'\Rightarrow\calR\preceq\calR'\land\calR'\preceq\calR
  \end{gathered}\]
  \rfgroup{c}{Product laws}{commutative monoid with unit $\UnitP$; monotone; distributes over $\oplus$}
  \[\begin{gathered}
    \calR_1\otimes\calR_2\cong\calR_2\otimes\calR_1\\
    (\calR_1\otimes\calR_2)\otimes\calR_3\cong\calR_1\otimes(\calR_2\otimes\calR_3)\\
    \calR\otimes\UnitP\cong\UnitP\otimes\calR\cong\calR\\
    \otimes\text{ preserves }\cong\text{ and }\preceq\\
    \left(\bigoplus_{i\sim\alpha}\calR_i\right)\otimes\calR_F
      \cong\bigoplus_{i\sim\alpha}(\calR_i\otimes\calR_F)\\
    \Proj{\calR_1\otimes\calR_2}{\bbN}
      =\mathsf{add}_*\!\left(\Proj{\calR_1}{\bbN}\otimes\Proj{\calR_2}{\bbN}\right)
  \end{gathered}\]
  \rfgroup{d}{Direct-sum laws}{count marginals, order, branch projection, regrouping}
  \[\begin{gathered}
    \Proj{\bigoplus_{i\sim\alpha}\calR_i}{\bbN}=\bigoplus_{i\sim\alpha}\Proj{\calR_i}{\bbN}\\
    (\forall i\in\supp(\alpha).\;\calQ_i\preceq\calR_i)
      \Rightarrow\bigoplus_{i\sim\alpha}\calQ_i\preceq\bigoplus_{i\sim\alpha}\calR_i\\
    k\in\supp(\alpha)\Rightarrow\calR_k\preceq\bigoplus_{i\sim\alpha}\calR_i
      \qquad\bigoplus_{i\sim\delta_k}\calR_i\cong\calR_k\\
    \bigoplus_{i\sim\alpha}\left(\bigoplus_{j\sim\beta_i}\calR_{i,j}\right)
      \cong\bigoplus_{(i,j)\sim\gamma}\calR_{i,j} \;\text{where}\;
    \gamma(i,j)=\alpha(i)\beta_i(j)
  \end{gathered}\]
\end{minipage}\hfill
\begin{minipage}[t]{0.475\textwidth}
  \rfgroup{e}{Tick laws}{commutes with $\oplus$ and $\otimes$; preserves and reflects $\preceq$}
  \[\begin{gathered}
    \Tick\left(\bigoplus_{i\sim\alpha}\calR_i\right)=\bigoplus_{i\sim\alpha}\Tick(\calR_i)\\
    \calR_0\preceq\calR_1\Longleftrightarrow\Tick(\calR_0)\preceq\Tick(\calR_1)\\
    \Tick(\calR_1\otimes\calR_2)\cong\Tick(\calR_1)\otimes\calR_2\cong\calR_1\otimes\Tick(\calR_2)
  \end{gathered}\]
  \rfgroup{f}{Memory components}{commute with $\otimes,\oplus$; monotone; project factors}
  \[\begin{gathered}
    (\calR_1\otimes\calR_2)^\memfn\cong\calR_1^\memfn\otimes\calR_2^\memfn\\
    \left(\bigoplus_{i\sim\alpha}\calR_i\right)^\memfn\cong\bigoplus_{i\sim\alpha}\calR_i^\memfn\\
    \calR\preceq\calQ\Rightarrow\calR^\memfn\preceq\calQ^\memfn
      \qquad\calR_1^\memfn\preceq(\calR_1\otimes\calR_2)^\memfn\\
    \cntfn_\calR\equiv 0\Rightarrow\calR^\memfn\cong\calR
  \end{gathered}\]
  \rfgroup{g}{Distribution refinement}{proper target; count marginal; composes with $\preceq$}
  \[\begin{gathered}
    \calR\preceq\mu\in\calD_\bot(\Mem\times\bbN)\Rightarrow\mu(\bot)=0\\
    \calR\preceq\mu\in\calD(\Mem\times\bbN)\Rightarrow\Proj{\calR}{\bbN}=(\pi_2)_*\mu\\
    \calR_0\preceq\calR\land\calR\preceq\mu\Rightarrow\calR_0\preceq\mu
  \end{gathered}\]
  \rfgroup{h}{Refinement constructors}{stable under $\cong$, restriction, tick, $\oplus$}
  \[\begin{gathered}
    \calR\cong\calR'\land\calR\preceq\mu\Rightarrow\calR'\preceq\mu\\
    \calR\preceq\mu\Longleftrightarrow\Tick(\calR)\preceq\Tick_\calD(\mu)\\
    \calR\preceq\mu\Rightarrow\RestrictN{\calR}{E}\preceq\RestrictN{\mu}{E}\\
    (\forall i\in\supp(\alpha).\;\calQ_i\preceq\nu_i)
      \Rightarrow\bigoplus_{i\sim\alpha}\calQ_i\preceq\bigoplus_{i\sim\alpha}\nu_i\\
    \bigoplus_{i\sim\alpha}\calQ_i\preceq\nu
      \Rightarrow\exists(\nu_i).\;\nu=\bigoplus_{i\sim\alpha}\nu_i\land\calQ_i\preceq\nu_i
  \end{gathered}\]
\end{minipage}
\end{mdframed}
\endgroup
\caption{Resource algebra and distribution-refinement facts used throughout the
soundness proof, grouped by operator.}
\Description{A framed, two-column summary of resource-algebra and
distribution-refinement facts, organized under eight labeled headings (a)--(h).}
\label{fig:resource-algebra-refinement-facts}
\end{figure*}

For the assertion language and the logic \LogicName, the relevant details about the algebra for
resources $\calR=\langle \calP, V, \memfn, \cntfn \rangle$ are summarized in
\Cref{fig:resource-algebra-refinement-facts}.

\subsection{Weak Stability}
\begin{lemma}[Almost-Sure Coarsening]
  \label{lem:index-sure-coarsening}
  If $\Gamma,\calR\vDash\sure{P}$, then there exist
  $\calR^\memfn$ and $\calR^\cntfn$ such that
  \[
    \calR^\memfn\otimes\calR^\cntfn\preceq\calR,
    \qquad
    \Stable{\weak}{\calR^\memfn},
    \qquad
    \Gamma,\calR^\memfn\vDash\sure{P},
    \qquad
    \forall \omega\in\Omega_{\calR^\cntfn}.\;
    \memfn_{\calR^\cntfn}(\omega)=\emptyset
  \]
\end{lemma}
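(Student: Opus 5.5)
Let $E\triangleq\memfn_\calR^{-1}(\sem{P}_\Gamma^{V_\calR})$. By hypothesis $E\in\calF_\calR$ and $\mu_\calR(E)=1$. The plan is to build two explicit witnesses and a product refinement.

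\emph{The memory witness.} Take $\calR^\memfn$ to be the coarse resource with the same index space $\Omega_\calR$, footprint $V_\calR$, memory map $\memfn_\calR$, and count map identically $0$. Its event space is the trivial almost-sure $\sigma$-algebra: all sets $A\subseteq\Omega_\calR$ with $A\supseteq E'$ or $A\cap E'=\emptyset$ for some $\mu_\calR$-conull $E'\in\calF_\calR$, measured as $1$ or $0$ accordingly. This is exactly the completion of $\{\emptyset,\Omega_\calR\}$ together with the $\mu_\calR$-null and conull sets. It is complete and count-measurable (trivially, since counts are constant). Moreover $E$ is in it with measure $1$. Hence $\Gamma,\calR^\memfn\vDash\sure{P}$, and $\Stable{\weak}{\calR^\memfn}$ holds by construction.

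\emph{The count witness.} Take $\calR^\cntfn$ to be the pushforward-style count component of $\calR$: the same probability space $\calP_\calR$, empty footprint, empty memory map, and count map $\cntfn_\calR$. This is a resource because $\cntfn_\calR$ is discretely measurable. The clause $\memfn_{\calR^\cntfn}(\omega)=\emptyset$ holds by definition.

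\emph{The refinement $\calR^\memfn\otimes\calR^\cntfn\preceq\calR$.} Use the diagonal map $h(\omega)=g(\omega,\omega)$ into the product index set. Three conditions must be checked:
\begin{itemize}
\item The memory condition holds because the product's memory at $h(\omega)$ is $\memfn_\calR(\omega)\uplus\emptyset$.
\item The count condition holds because the product's count is $0+\cntfn_\calR(\omega)$.
\item The measure condition is the real content: for each generator $A\times B$ with $A$ in the almost-sure $\sigma$-algebra and $B\in\calF_\calR$, we need $h^{-1}(g(A\times B))=A\cap B\in\calF_\calR$ and $\mu_\calR(A\cap B)=\mu^\memfn(A)\mu_\calR(B)$.
\end{itemize}
Since $A$ differs from $\emptyset$ or $\Omega_\calR$ only by a null set, completeness gives $A\cap B\in\calF_\calR$. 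Its measure is $\mu_\calR(B)$ or $0$, matching the product. A $\pi$-$\lambda$ (Dynkin) argument then extends measurability of $h$ and agreement of measures from rectangles to the whole product $\sigma$-algebra. Finally, pass through the completion/pushforward along $g$.

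\emph{Main obstacle.} The hardest step is this measure-theoretic verification that the diagonal is measure-preserving into an independent product. It hinges on the memory witness being \emph{almost-surely trivial}, which is what makes it independent of everything. A more informative memory witness would break the argument. Attention is needed to completeness, so that null-set modifications of events remain in $\calF_\calR$, and to the uniqueness-of-extension theorem applied on the $\pi$-system of rectangles.
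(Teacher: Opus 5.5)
Your proof is correct and is essentially the paper's argument: the memory witness uses an almost-surely trivial coarsening of $\calP_\calR$ with zero counts, and the count witness keeps $\calP_\calR$ with $\cntfn_\calR$ and empty memory. The refinement is then established by checking the reindexed diagonal $h(\omega)=g(\omega,\omega)$ on rectangles and extending to the product $\sigma$-algebra. The differences are cosmetic: your event space of all $\mu_\calR$-almost-surely trivial sets contains (and, under the natural reading of ``completion,'' coincides with) the paper's completion of $\sigma(E_P)$, and you spell out two steps the paper leaves implicit, namely that completeness of $\calF_\calR$ gives $A\cap B\in\calF_\calR$ and the $\pi$-$\lambda$ extension argument.
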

\begin{proof}
  Let $E_P\triangleq
  \memfn_\calR^{-1}(\sem{P}_\Gamma^{V_\calR})$.  Since
  $\Gamma,\calR\vDash\sure{P}$, we have $E_P\in\calF_\calR$ and
  $\mu_\calR(E_P)=1$.

Let $\calP_E$ be obtained from $\calP_\calR$ by coarsening its $\sigma$-algebra
  to the completion of the $\sigma$-algebra generated by $E_P$.  Thus $\calP_E$
  has the same underlying support and measure as $\calP_\calR$, but every
  $\calP_E$-measurable event has measure $0$ or $1$.  In particular, every
  $\calP_E$-event is independent of every $\calP_\calR$-event: for
  $A\in\calF_{\calP_E}$ and $B\in\calF_\calR$,
  $\mu_\calR(A\cap B)
  =
  \mu_\calR(A)\cdot\mu_\calR(B)$.

  Define
  \[
    \calR^\memfn
    \triangleq
    \langle \calP_E, V_\calR, \memfn_\calR, \omega\mapsto 0\rangle
    \qquad
    \calR^\cntfn
    \triangleq
    \langle \calP_\calR, \emptyset, \omega\mapsto\emptyset,
      \cntfn_\calR\rangle
  \]
  These are the memory witness and count component for weak stability.

  Now form the resource product
  $\calR_0\triangleq\calR^\memfn\otimes\calR^\cntfn$.  By the definition of
  product resources, $\calR_0$ is obtained by first taking the structural product
  on $\Omega_\calR\times\Omega_\calR$, and then reindexing along a fresh-index
  bijection
  $g:\Omega_\calR\times\Omega_\calR\to\Omega_{\calR_0}$.

  We prove $\calR_0\preceq\calR$ using the reindexed diagonal map
  $h:\Omega_\calR\to\Omega_{\calR_0}$ defined as
  $h(\omega)\triangleq g(\omega,\omega)$.

  First consider the structural product before reindexing.  Let
  $\Delta(\omega)\triangleq(\omega,\omega)$.  For measurable rectangles
  $A\times B$, with $A\in\calF_{\calP_E}$ and $B\in\calF_\calR$, we have
  $\Delta^{-1}(A\times B)=A\cap B$, and therefore
  \[
    (\mu_{\calP_E}\otimes\mu_\calR)(A\times B)
    =
    \mu_\calR(A)\cdot\mu_\calR(B)
    =
    \mu_\calR(A\cap B)
    =
    \mu_\calR(\Delta^{-1}(A\times B))
  \]
  Since rectangles generate the product $\sigma$-algebra, and the product space
  is completed, this equality extends to every event
  $S\in\calF_{\calP_E}\otimes\calF_\calR$, i.e.
  $(\mu_{\calP_E}\otimes\mu_\calR)(S)
  =
  \mu_\calR(\Delta^{-1}(S))$

  Reindexing transfers this equality to $\calR_0$.  For any
  $C\in\calF_{\calR_0}$, the definition of the product resource gives
  $g^{-1}(C)\in\calF_{\calP_E}\otimes\calF_\calR$ and
  \[
    \mu_{\calR_0}(C)
    =
    (\mu_{\calP_E}\otimes\mu_\calR)(g^{-1}(C))
    =
    \mu_\calR(\Delta^{-1}(g^{-1}(C)))
    =
    \mu_\calR(h^{-1}(C))
  \]
  Thus $h$ is measurable and measure-preserving.

  The observation equations also commute through the same reindexed diagonal.
  For every $\omega\in\Omega_\calR$,
  \[
    \memfn_{\calR_0}(h(\omega))
    =
    \memfn_{\calR^\memfn}(\omega)\uplus
    \memfn_{\calR^\cntfn}(\omega)
    =
    \memfn_\calR(\omega)\uplus\emptyset
    =
    \memfn_\calR(\omega)
  \]
  and
  \[
    \cntfn_{\calR_0}(h(\omega))
    =
    0+\cntfn_\calR(\omega)
    =
    \cntfn_\calR(\omega)
  \]
  Hence $h$ witnesses
  $\calR^\memfn\otimes\calR^\cntfn\preceq\calR$.

  Finally, $\calR^\memfn$ has constant scheduler count $0$, and
  $\calR^\cntfn$ has empty memory by construction.  Moreover, $E_P$ remains
  measurable in $\calP_E$ and still has measure $1$, so
  $\Gamma,\calR^\memfn\vDash\sure{P}$.
\end{proof}

\begin{lemma}[Weak stability constructor rules]
  \label{lem:index-weak-stability-fragment}
  The weak-stability rules displayed in
  \Cref{fig:assertion-property-rules} are valid.
\end{lemma}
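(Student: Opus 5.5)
We verify the four weak-stability rules of \Cref{fig:assertion-property-rules} separately, each time producing a memory witness $\calR_m$ with constant count $0$ and a count witness $\calR_c$ with empty memory, together with a refinement $\calR_m\otimes\calR_c\preceq\calR$.

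\textbf{Base and disjunction.} For $\sure{P}$ we apply \Cref{lem:index-sure-coarsening} directly. It gives $\calR^\memfn\otimes\calR^\cntfn\preceq\calR$, where $\calR^\memfn$ has zero counts and $\calR^\cntfn$ has empty memory, which is exactly the required decomposition. For $\varphi_1\lor\varphi_2$, any satisfying $\calR$ satisfies one of the disjuncts, so the witness supplied by the stability of that disjunct works unchanged.

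\textbf{Separating conjunction.} Suppose $\calR_1\otimes\calR_2\preceq\calR$ with $\Gamma,\calR_i\vDash\varphi_i$. Stability of each $\varphi_i$ gives $\calR_{i,m}\otimes\calR_{i,c}\preceq\calR_i$. We set $\calR_m\triangleq\calR_{1,m}\otimes\calR_{2,m}$ and $\calR_c\triangleq\calR_{1,c}\otimes\calR_{2,c}$. Counts add under $\otimes$, so $\calR_m$ still has count $0$; memories are joined by $\uplus$, so $\calR_c$ still has empty memory. Using associativity and commutativity of $\otimes$ up to $\cong$ together with monotonicity of $\otimes$ under $\preceq$ (\Cref{fig:resource-algebra-refinement-facts}(c)), we rearrange the factors to get $\calR_m\otimes\calR_c\cong(\calR_{1,m}\otimes\calR_{1,c})\otimes(\calR_{2,m}\otimes\calR_{2,c})\preceq\calR_1\otimes\calR_2\preceq\calR$, and transitivity of $\preceq$ finishes this case.

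\textbf{Private outcome conjunction (main obstacle).} Let $\mu=d(\de{E}_{\LExp}(\Gamma))$, and take branch resources $\calR_v$ with $\bigoplus_{v\sim\mu}\calR_v\preceq\calR$ and $\Gamma[X\mapsto v],\calR_v\vDash\varphi$. Privacy gives $\Proj{\calR_v}{\bbN}=\Proj{\calR}{\bbN}\triangleq\nu$ for every $v$. Stability of $\varphi$ gives $\calR_{v,m}\otimes\calR_{v,c}\preceq\calR_v$. The difficulty is that the $\calR_{v,c}$ differ from branch to branch, so we replace them all by one canonical count resource $\calR_c\triangleq\langle\tuple{\bbN,2^\bbN,\nu},\emptyset,\emptyset,\mathsf{id}\rangle$, and we must show $\calR_{v,m}\otimes\calR_c\preceq\calR_v$. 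To do this we first refine $\calR_{v,c}$ down to $\calR_c$: the map $\cntfn_{\calR_{v,c}}$ is measurable and pushes $\mu_{\calR_{v,c}}$ forward to $\nu$, because counts are preserved by $\preceq$ and are $0$ in $\calR_{v,m}$, so $\Proj{\calR_{v,c}}{\bbN}=\Proj{\calR_v}{\bbN}=\nu$. Hence $\calR_c\preceq\calR_{v,c}$, and monotonicity of $\otimes$ gives $\calR_{v,m}\otimes\calR_c\preceq\calR_{v,m}\otimes\calR_{v,c}\preceq\calR_v$.

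We then take $\calR_m\triangleq\bigoplus_{v\sim\mu}\calR_{v,m}$, which has zero counts and a common footprint (reindexing the sample spaces to be disjoint if necessary). Using distributivity of $\otimes$ over $\oplus$ and monotonicity of $\oplus$ (\Cref{fig:resource-algebra-refinement-facts}(c,d)), we obtain
\[
\calR_m\otimes\calR_c\cong\bigoplus_{v\sim\mu}(\calR_{v,m}\otimes\calR_c)\preceq\bigoplus_{v\sim\mu}\calR_v\preceq\calR .
\]
The one remaining technical check is measurability of the map $\cntfn_{\calR_{v,c}}$ as a refinement map; it holds because $\cntfn$ is required to be measurable into the discrete $\sigma$-algebra (\Cref{def:resource}).
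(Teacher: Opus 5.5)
Your constructions match the paper's: coarsening for $\sure{P}$, componentwise products for $\sep$, and, for $\bigoplus^\Priv$, one canonical count resource $\calR_c$ refined into each $\calR_{v,c}$ by its count map, followed by distributivity of $\otimes$ over $\oplus$. The gap is that you verify only the three displayed conditions. You never show that the memory witness itself satisfies the assertion, $\Gamma,\calR_m\vDash\varphi$, and that is the real content of weak stability.

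The formal display of the definition omits this condition, but read literally the definition is vacuous. For any $\calR$, take $\calR_m\triangleq\UnitP$ and $\calR_c$ the canonical count resource on $\Proj{\calR}{\bbN}$; then $\UnitP\otimes\calR_c\cong\calR_c\preceq\calR$ via $\cntfn_\calR$, and no case analysis is needed. The intended notion includes satisfaction. The prose says stability requires the assertion to ``remain satisfiable after separating away a count-only component'', \Cref{lem:index-sure-coarsening} states it, and every use of stability relies on it: the weak \textsc{Frame} case, \textsc{ND}, \textsc{NAssign}, the convexity rule for $\sep$, and \Cref{lem:index-precise-stable-core}. Your induction needs it as well, because the hypotheses for $\varphi_i$ or $\varphi$ are only useful if they deliver memory witnesses that satisfy those subassertions.

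With the strengthened invariant carried through the induction, these are the checks you omitted. For $\sure{P}$, \Cref{lem:index-sure-coarsening} also gives $\Gamma,\calR^\memfn\vDash\sure{P}$. For $\sep$, $\calR_{1,m}$ and $\calR_{2,m}$ witness $\varphi_1\sep\varphi_2$ for their product. You should also note that this product is defined, since $V_{\calR_{i,m}}\subseteq V_{\calR_i}$ and $V_{\calR_1}\cap V_{\calR_2}=\emptyset$.

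The private case is where real work is missing. The cores $\calR_{v,m}$ need not share a footprint; each is only contained in the common footprint $V$ of the $\calR_v$. So your claim that $\bigoplus_{v\sim\mu}\calR_{v,m}$ has a common footprint is unjustified. You must:
\begin{enumerate}
\item extend each core to $V$ and relocate the cores to disjoint supports;
\item check that this preserves both $\Gamma[X\mapsto v],\calR_{v,m}\vDash\varphi$ and $\calR_{v,m}\otimes\calR_c\preceq\calR_v$;
\item verify that these normalized branches witness $\Gamma,\calR_m\vDash\bigoplus^\Priv_{X\sim d(E)}\varphi$, where the $\Obs_\Priv$ side condition holds because every branch and their sum have count marginal $\delta_0$.
\end{enumerate}
So measurability of $\cntfn_{\calR_{v,c}}$ is not the only outstanding obligation.
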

\begin{proof}
  We prove the weak-stability rules in
  \Cref{fig:assertion-property-rules} by exhibiting the current memory/count
  decomposition: for every satisfying resource, we construct
  $\calR^\memfn$ and $\calR^\cntfn$ such that
  $\calR^\memfn\otimes\calR^\cntfn\preceq\calR$,
  $\Stable{\weak}{\calR^\memfn}$, $\calR^\cntfn$ has empty memory, and the
  assertion is already satisfied by $\calR^\memfn$. The case for $\sure{P}$
  follows from~\Cref{lem:index-sure-coarsening}, and the case for
  $\varphi_1\lor\varphi_2$ is immediate.

  \emph{Case $\bigoplus^\Priv_{X\sim d(E)}\varphi$.}
  Let $\mu\triangleq d(\de{E}_{\textsf{LExp}}(\Gamma))$
  and suppose
  $\Gamma,\calR\vDash\bigoplus^\Priv_{X\sim d(E)}\varphi$.  Unfolding
  satisfaction, choose branch resources
  $(\calR_v)_{v\in\supp(\mu)}$ such that
  \[
    \bigoplus_{v\sim\mu}\calR_v\preceq\calR,
    \qquad
    \Gamma[X\mapsto v],\calR_v\vDash\varphi
    \quad(v\in\supp(\mu))
  \]
  and the private side condition says that every branch has count marginal
  $(\cntfn_\calR)_*\mu_\calR$.

  By $\Stable{\weak}{\varphi}$, for each $v\in\supp(\mu)$ choose
  $\calR_v^\memfn$ and $\calR_v^\cntfn$ such that
  \begin{align*}
    &\calR_v^\memfn\otimes\calR_v^\cntfn\preceq\calR_v
      \qquad
      \Stable{\weak}{\calR_v^\memfn}
      \qquad
      \Gamma[X\mapsto v],\calR_v^\memfn\vDash\varphi\\
    &\forall\omega\in\Omega_{\calR_v^\cntfn}.\;
      \memfn_{\calR_v^\cntfn}(\omega)=\emptyset
  \end{align*}
  Since $\calR_v^\memfn$ has zero count and
  $\calR_v^\memfn\otimes\calR_v^\cntfn\preceq\calR_v$, the count marginal
  of $\calR_v^\cntfn$ is also $(\cntfn_\calR)_*\mu_\calR$.  Let
  $\calR^\cntfn$ be the canonical count-only resource on a fresh copy of
  $\bbN$, with count marginal $(\cntfn_\calR)_*\mu_\calR$, empty memory, and
  count observation the identity.  The count map of $\calR_v^\cntfn$
  witnesses
  $\calR^\cntfn\preceq \calR_v^\cntfn$
  for all $v \in \supp(\mu)$.
  Hence, by monotonicity of $\otimes$ and transitivity,
  \[
    \calR_v^\memfn\otimes\calR^\cntfn
    \preceq
    \calR_v
    \qquad(v\in\supp(\mu))
  \]

  Note that the branch memory cores cannot be summed directly: their footprints may be
  proper subsets of the mixture footprint $V$, and their supports need not be
  disjoint.
  Since $\calR_v^\memfn\otimes\calR^\cntfn\preceq\calR_v$ and
  $V_{\calR_v}=V$, we first extend each $\calR_v^\memfn$ to the common
  footprint $V$, and then reindex the resulting resources onto
  pairwise-disjoint supports.  Write $\calR_{v,\mathsf m}$ for the resulting
  resource.  Then
  \[
    \Gamma[X\mapsto v],\calR_{v,\mathsf m}\vDash\varphi
    \qquad
    \calR_{v,\mathsf m}\otimes\calR^\cntfn
    \preceq
    \calR_v
    \qquad(v\in\supp(\mu))
  \]
  Put
  $\calR^\memfn\triangleq\bigoplus_{v\sim\mu}\calR_{v,\mathsf m}$.  Then
  \[
    \calR^\memfn\otimes\calR^\cntfn
    \cong
    \bigoplus_{v\sim\mu}(\calR_{v,\mathsf m}\otimes\calR^\cntfn)
    \preceq
    \bigoplus_{v\sim\mu}\calR_v
    \preceq
    \calR
  \]
  The resource $\calR^\memfn$ is weakly stable because every branch
  $\calR_{v,\mathsf m}$ has zero count, and $\calR^\cntfn$ has empty memory by
  construction.  Finally, the branch witnesses
  $(\calR_{v,\mathsf m})_{v\in\supp(\mu)}$ prove
  $\Gamma,\calR^\memfn
  \vDash
  \bigoplus^\Priv_{X\sim d(E)}\varphi$.

  \emph{Case $\varphi_1 \sep \varphi_2$.}
  Suppose $\Gamma,\calR\vDash\varphi_1 \sep \varphi_2$.  By the satisfaction
  clause for separating conjunction, choose
  $\calR_1,\calR_2$ such that
  \[
    \calR_1\otimes\calR_2\preceq\calR
    \qquad
    \Gamma,\calR_i\vDash\varphi_i
    \quad(i\in\{1,2\})
  \]
  Applying the stability hypotheses to the two component satisfactions
  gives resources $\calR_i^\memfn$ and $\calR_i^\cntfn$, together with the
  required product witnesses, such that
  \[
    \calR_i^\memfn\otimes\calR_i^\cntfn\preceq\calR_i
    \qquad
    \Stable{\weak}{\calR_i^\memfn}
    \qquad
    \Gamma,\calR_i^\memfn\vDash\varphi_i
    \qquad
    V_{\calR_i^\cntfn}=\emptyset
    \qquad(i\in\{1,2\})
  \]
  The refinement
  $\calR_i^\memfn\otimes\calR_i^\cntfn\preceq\calR_i$ implies
  $V_{\calR_i^\memfn}\subseteq V_{\calR_i}$.  Since the original separating
  witness gives $V_{\calR_1}\cap V_{\calR_2}=\emptyset$, we have
  $V_{\calR_1^\memfn}\cap V_{\calR_2^\memfn}=\emptyset$.  The count witnesses
  have empty footprints, so the remaining products below are well formed.

  Define
  $\calR^\memfn
  \triangleq
  \calR_1^\memfn\otimes\calR_2^\memfn$ and 
  $\calR^\cntfn
  \triangleq
  \calR_1^\cntfn\otimes\calR_2^\cntfn$, we get that
  \[
    \calR^\memfn\otimes\calR^\cntfn
    \cong
    (\calR_1^\memfn\otimes\calR_1^\cntfn) 
    \otimes
    (\calR_2^\memfn\otimes\calR_2^\cntfn) 
    \preceq
    \calR_1\otimes\calR_2
    \preceq
    \calR
  \]
  The resource $\calR^\memfn$ is weakly stable, since both factors have zero
  scheduler count, and $\calR^\cntfn$ has empty footprint.  Finally, the
  satisfaction clause for separating conjunction gives
  $\Gamma,\calR^\memfn\vDash\varphi_1\sep\varphi_2$.
\end{proof}

\subsection{Memory Precision}
\begin{lemma}[Tick preservation of satisfaction]
  \label{lem:index-tick-satisfaction}
  \noam{Doesn't this follow from the next lemma? Since obviously $\calR^\memfn= (\Tick(\calR))^\memfn$, we can just apply the next lemma to conclude the goal without any induction.}
  For every assertion $\varphi$,
  \[
    \Gamma,\calR\vDash\varphi
    \quad\Longrightarrow\quad
    \Gamma,\Tick(\calR)\vDash\varphi
  \]
\end{lemma}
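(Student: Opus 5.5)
The proof is by structural induction on $\varphi$, for all $\Gamma$ simultaneously, since the outcome-conjunction and existential cases extend the logical context. It rests on the tick laws of \Cref{fig:resource-algebra-refinement-facts}(e):
\begin{itemize}
\item $\Tick$ commutes with direct sums;
\item $\Tick$ preserves $\preceq$;
\item $\Tick(\calR_1\otimes\calR_2)\cong\Tick(\calR_1)\otimes\calR_2$.
\end{itemize}
It also uses the fact that resources are identified up to $\cong$. The guiding observation is that $\Tick$ changes only the count observation, so it never affects memory events. It shifts every count marginal by the same amount, and it can always be pushed onto a single component of a decomposition.

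The base and propositional cases are immediate. $\top$ and $\bot$ are trivial. $\land$, $\lor$, and $\exists X$ follow directly from the induction hypothesis, using the extended context $\Gamma[X\coloneqq v]$ for $\exists$.

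For $\sure{P}$, note that $\Tick(\calR)$ has the same probability space, footprint and memory map as $\calR$. Hence $\memfn_{\Tick(\calR)}^{-1}(\sem{P}_\Gamma^{V_\calR})$ is the same measurable event with measure $1$.

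For $\varphi\sep\psi$, take witnesses $\calR_1\otimes\calR_2\preceq\calR$ with $\Gamma,\calR_1\vDash\varphi$ and $\Gamma,\calR_2\vDash\psi$. Preservation of $\preceq$ under $\Tick$ gives $\Tick(\calR_1\otimes\calR_2)\preceq\Tick(\calR)$. The product tick law rewrites the left side as $\Tick(\calR_1)\otimes\calR_2$. The induction hypothesis for $\varphi$ gives $\Gamma,\Tick(\calR_1)\vDash\varphi$, so the pair $(\Tick(\calR_1),\calR_2)$ witnesses $\varphi\sep\psi$ for $\Tick(\calR)$.

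For $\bigoplus^o_{X\sim d(E)}\varphi$, let $\mu=d(\de{E}_{\LExp}(\Gamma))$ and take branch witnesses $(\calR_v)_{v\in\supp(\mu)}$ with $\bigoplus_{v\sim\mu}\calR_v\preceq\calR$. I use $(\Tick(\calR_v))_v$ as the new witnesses. These still share a footprint and have pairwise disjoint index sets, because $\Tick$ does not touch $\Omega$ or $V$. The chain
\[
\textstyle\bigoplus_{v\sim\mu}\Tick(\calR_v)=\Tick\big(\bigoplus_{v\sim\mu}\calR_v\big)\preceq\Tick(\calR)
\]
holds by commutation with $\oplus$ and then monotonicity. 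Each branch satisfies $\varphi$ under $\Gamma[X\mapsto v]$ by the induction hypothesis.

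For mode $\Priv$ we must also check $\Proj{\Tick(\calR_v)}{\bbN}=\Proj{\Tick(\calR)}{\bbN}$. Both sides are the pushforward of the corresponding untick'd marginal along $n\mapsto n+1$. The untick'd marginals agree by the private side condition for $\calR$, so the ticked ones agree as well.

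The case $\bignd_{X\in E}\varphi$ unfolds to $\bigoplus^{\Leak}_{X\sim\mu}\varphi$ for some $\mu\in\calD(\de{E}_{\LExp}(\Gamma))$. It is handled by the same argument with the same $\mu$, and no side condition needs checking.

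The step I expect to need the most care is the separating-conjunction case. The difficulty is making precise that the product tick law yields a genuine $\preceq$-witness into $\Tick(\calR)$ after fresh reindexing. I would compose the isomorphism with the refinement map $h$ from $\calR_1\otimes\calR_2\preceq\calR$. The same $h$ works after ticking, because its count condition $\cntfn(h(\omega))=\cntfn(\omega)$ is preserved when $1$ is added to both sides.

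An alternative route, as the marginal comment suggests, would be to derive the lemma from a later result stating that satisfaction depends only on the memory component. If such a result is stated with count-independence made explicit, this induction becomes unnecessary.
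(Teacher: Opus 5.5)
Your induction is correct and follows the paper's proof case for case. It uses the same observation that $\Tick$ leaves the probability space, footprint and memory map untouched for $\sure{P}$, the same witnesses $(\Tick(\calR_v))_v$ with shifted count marginals for $\bigoplus^o$ and $\bignd$, and the same move of pushing the tick onto one factor, $\Tick(\calR_1)\otimes\calR_2\cong\Tick(\calR_1\otimes\calR_2)\preceq\Tick(\calR)$, in the $\sep$ case.

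Your closing aside would not go through. Satisfaction is not count-independent, since the $\Priv$ side condition reads counts, and count erasure only yields $\Gamma,\calR^\memfn\vDash\varphi$. You also cannot lift back from there, because $\calR^\memfn\not\preceq\Tick(\calR)$ when $\preceq$ must preserve counts. So an induction really is needed, or, alternatively, the paper's later frame-preservation lemma applied to a one-point, empty-memory, count-$1$ frame, since that product is isomorphic to $\Tick(\calR)$.
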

\begin{proof}
  By structural induction on $\varphi$. The cases for $\top$, $\bot$, $\varphi_1
  \land \varphi_2$, and $\varphi_1 \lor \varphi_2$ are immediate.
  \begin{itemize}
    \item \emph{Case $\varphi=\sure{P}$.}
    Since tick preserves the underlying probability space, footprint, and
    memory observation, the almost-sure event is the same, i.e.
    $\memfn_{\Tick(\calR)}^{-1}
    (\sem{P}_\Gamma^{V_{\Tick(\calR)}})
    =
    \memfn_{\calR}^{-1}
    (\sem{P}_\Gamma^{V_\calR})$.
    Hence $\Gamma,\Tick(\calR)\vDash\sure{P}$.

    \item \emph{Case $\varphi=\bigoplus^o_{X\sim d(E)}\psi$ and $\varphi =\bignd_{X\in E}\psi$}
    Let $\mu\triangleq d(\de{E}_{\LExp}(\Gamma))$.  Unfold the definition we get witness
    $(\calR_v)_{v\in\supp(\mu)}$ such that
    $\bigoplus_{v\sim\mu}\calR_v\preceq\calR$ and
    $\left(\Gamma[X\mapsto v],\calR_v\vDash\psi\right)_{v \in \supp(\mu)}$
    By the induction hypothesis we get that
    $\left(\Gamma[X\mapsto v],\Tick(\calR_v)\vDash\psi\right)_{v \in
      \supp(\mu)}$.
    
    The refinement obligation for the ticked branch witnesses is
    \[
      \bigoplus_{v\sim\mu}\Tick(\calR_v)
      =
      \Tick\left(\bigoplus_{v\sim\mu}\calR_v\right)
      \preceq
      \Tick(\calR)
    \]
    If $o=\Priv$, then for every $A\subseteq\bbN$ and
    $A^{-}\triangleq\{n\in\bbN\mid n+1\in A\}$,
    \begin{align*}
      \Proj{\Tick(\calR_v)}{\bbN}(A)
      &=
        \Proj{\calR_v}{\bbN}(A^{-})
        \tag{definition of $\Tick$}\\
      &=
        \Proj{\calR}{\bbN}(A^{-})
        \tag{private side condition for $\calR$}\\
      &=
        \Proj{\Tick(\calR)}{\bbN}(A).
        \tag{definition of $\Tick$}
    \end{align*}
    Thus the ticked witnesses satisfy the satisfaction clause for
    $\Gamma,\Tick(\calR)\vDash\bigoplus^o_{X\sim d(E)}\psi$.

    The case of $\varphi = \bignd_{X \in E}\psi$ follows.

    \item \emph{Case $\varphi=\varphi_1\sep\varphi_2$.}
    Choose witnesses $\calR_1,\calR_2$ such that
      $\calR_1\otimes\calR_2\preceq\calR$ and
      $\left(\Gamma,\calR_i\vDash\varphi_i\right)_{i \in \{1,2\}}$
    By the induction hypothesis,
    $\Gamma,\Tick(\calR_1)\vDash\varphi_1$.  Moreover,
    \[
      \Tick(\calR_1)\otimes\calR_2
      \cong
      \Tick(\calR_1\otimes\calR_2)
      \preceq
      \Tick(\calR)
    \]
    Thus $\Tick(\calR_1)$ and $\calR_2$ witness
    $\Gamma,\Tick(\calR)\vDash\varphi_1\sep\varphi_2$.
  \end{itemize}
\end{proof}

\begin{lemma}[Count erasure preserves satisfaction]
  \label{lem:index-count-erasure-satisfaction}
  For every assertion $\varphi$,
  \[
    \Gamma, \calR \vDash \varphi \implies \Gamma,\calR^\memfn\vDash\varphi
  \]
\end{lemma}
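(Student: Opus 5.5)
We proceed by structural induction on $\varphi$, following the shape of the proof of \Cref{lem:index-tick-satisfaction}. Count erasure keeps the probability space, the footprint and the memory observation of $\calR$, and only replaces $\cntfn_\calR$ by the constant $0$. So the induction only has to re-establish the refinement witnesses and the count side conditions after erasure. The tool for this is the memory-component laws in group (f) of \Cref{fig:resource-algebra-refinement-facts}: $(\cdot)^\memfn$ commutes with $\otimes$ and $\oplus$ up to $\cong$, and it is monotone, i.e.\ $\calR\preceq\calQ\Rightarrow\calR^\memfn\preceq\calQ^\memfn$.

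The cases $\top$, $\bot$, $\land$, $\lor$ and $\exists X$ are immediate from the induction hypothesis, since none of them inspects $\calR$ beyond passing it to subformulas. For $\sure{P}$, the event $\memfn_\calR^{-1}(\sem{P}_\Gamma^{V_\calR})$ is literally the same set in the same $\sigma$-algebra with the same measure in $\calR^\memfn$, so satisfaction transfers directly.

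For $\varphi_1\sep\varphi_2$, take witnesses with $\calR_1\otimes\calR_2\preceq\calR$ and $\Gamma,\calR_i\vDash\varphi_i$. The induction hypothesis gives $\Gamma,\calR_i^\memfn\vDash\varphi_i$. The laws in (f) give $\calR_1^\memfn\otimes\calR_2^\memfn\cong(\calR_1\otimes\calR_2)^\memfn\preceq\calR^\memfn$, and these erased factors witness the separating conjunction.

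For $\bigoplus^o_{X\sim d(E)}\psi$, take branch witnesses $(\calR_v)$ with $\bigoplus_{v\sim\mu}\calR_v\preceq\calR$. Use the erased witnesses $(\calR_v^\memfn)$. They satisfy $\psi$ under $\Gamma[X\mapsto v]$ by the induction hypothesis, they keep pairwise disjoint index spaces and the common footprint, and they satisfy
\[
  \textstyle\bigoplus_{v\sim\mu}\calR_v^\memfn\cong\bigl(\bigoplus_{v\sim\mu}\calR_v\bigr)^\memfn\preceq\calR^\memfn
\]
by (f). The case $\bignd_{X\in E}\psi$ reduces to the $\Leak$ instance of this case by its defining clause.

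The step needing the most care is the $\Priv$ side condition, because refinement is strict on counts. Here it holds trivially after erasure: every $\Proj{\calR_v^\memfn}{\bbN}$ and $\Proj{\calR^\memfn}{\bbN}$ equals $\delta_0$. The only other point to check is the monotonicity fact in (f) itself. It follows because the map $h$ that witnesses $\calR\preceq\calQ$ still satisfies the count equation $0=0$ and is unchanged on memory and measure, so it also witnesses $\calR^\memfn\preceq\calQ^\memfn$.
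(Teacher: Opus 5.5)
Your proof is correct and follows the paper's argument essentially step for step. It uses structural induction. The $\sure{P}$ case transfers the identical event. The $\sep$ and $\bigoplus^o$ cases push erasure through $\otimes$ and $\oplus$ and use monotonicity of $(\cdot)^\memfn$ from group (f). The $\Priv$ condition holds because every erased count marginal is $\delta_0$, and $\bignd$ reduces to the $\Leak$ instance. Your extra remarks are sound additions that the paper leaves implicit: the trivial $\exists X$ case, and the observation that the same witness map $h$ also proves $\calR^\memfn\preceq\calQ^\memfn$.
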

\begin{proof}
  By structural induction on $\varphi$. Cases $\top$, $\bot$, $\varphi_1 \land
  \varphi_2$, $\varphi_1 \lor \varphi_2$ are immediate.
  \begin{itemize}
    \item \emph{Case $\varphi=\sure{P}$.}
    Let
    $E_P\triangleq
    \memfn_\calR^{-1}(\sem{P}_\Gamma^{V_\calR})$
    be the event in which the memory observation of $\calR$ satisfies $P$.
    Since $\Gamma,\calR\vDash\sure{P}$, we have $E_P\in\calF_\calR$ and
    $\mu_\calR(E_P)=1$.
    Since $\calR^\memfn$ has the same probability space,
    footprint, and memory map as $\calR$, the corresponding $P$-event is again
    $E_P$:
    $\memfn_{\calR^\memfn}^{-1}
    (\sem{P}_\Gamma^{V_{\calR^\memfn}})
    =
    E_P$.
    Thus
    $\mu_{\calR^\memfn}(E_P)=\mu_\calR(E_P)=1$, which proves the goal.

    \item \emph{Case $\varphi=\bigoplus^o_{X\sim d(E)}\psi$ and $\varphi=\bignd_{X\in E}\psi$}
    Let $\mu\triangleq d(\de{E}_{\LExp}(\Gamma))$.  Unfolding satisfaction
    gives branch resources $(\calR_v)_{v\in\supp(\mu)}$ such that
    $\bigoplus_{v\sim\mu}\calR_v\preceq\calR$,
    $\left(\Gamma[X\mapsto v],\calR_v\vDash\psi\right)_{v \in \supp(\mu)}$.
    and $\Obs_o^\mu(\calR,(\calR_v)_{v\in\supp(\mu)})$.
    By the induction hypothesis,
    $\left(\Gamma[X\mapsto v],\calR_v^\memfn\vDash\psi\right)_{v\in\supp(\mu)}$.

    The refinement obligation for the erased branch witnesses is
    \[
      \bigoplus_{v\sim\mu}\calR_v^\memfn
      \cong
      \left(\bigoplus_{v\sim\mu}\calR_v\right)^\memfn
      \preceq
      \calR^\memfn
    \]
    If $o=\Priv$, then every memory component has zero count marginal, so
    \[
      \Proj{\calR_v^\memfn}{\bbN}
      =
      \delta_0
      =
      \Proj{\calR^\memfn}{\bbN}
      \qquad(v\in\supp(\mu))
    \]
    Thus the erased witnesses satisfy the satisfaction clause for
    $\Gamma,\calR^\memfn\vDash\bigoplus^o_{X\sim d(E)}\psi$.

    The case for $\bignd_{X \in E} \psi$ follows from the previous case for $\bigoplus^\Leak$.

    \item \emph{Case $\varphi=\varphi_1\sep\varphi_2$.}
    Choose witnesses $\calR_1,\calR_2$ such that
    $\calR_1\otimes\calR_2\preceq\calR$ and
    $\left(\Gamma,\calR_i\vDash\varphi_i\right)_{i \in \{1, 2\}}$.
    By the induction hypotheses,
    $\left(\Gamma,\calR_i^\memfn\vDash\varphi_i\right)_{i\in\{1,2\}}$.

    The refinement obligation for the erased separating witnesses is
    \[
      \calR_1^\memfn\otimes\calR_2^\memfn
      \cong
      (\calR_1\otimes\calR_2)^\memfn
      \preceq
      \calR^\memfn
    \]
    Thus $\calR_1^\memfn$ and $\calR_2^\memfn$ witness
    $\Gamma,\calR^\memfn\vDash\varphi_1\sep\varphi_2$.
  \end{itemize}
\end{proof}

\begin{lemma}[Pure assertion footprint change]
  \label{lem:index-pure-assertion-footprint}
  For every pure assertion $P$, environment $\Gamma$, finite footprints
  $\Var(P)\subseteq T\subseteq S\subseteq_\fin\Var$, and memory
  $\sigma\in\Mem[S]$,
  \[
    \sigma\in\sem{P}_\Gamma^S
    \quad\Longleftrightarrow\quad
    \ForgetMem{\sigma}{T}
    \in
    \sem{P}_\Gamma^T
  \]
\end{lemma}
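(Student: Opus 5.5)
The plan is to prove the equivalence by structural induction on the pure assertion $P$, for all environments $\Gamma$ at once. Fix the footprints $\Var(P)\subseteq T\subseteq S$ and the memory $\sigma\in\Mem[S]$. Since $\ForgetMem{\sigma}{T}\in\Mem[T]$ always holds, membership in $\sem{P}_\Gamma^T$ reduces to the satisfaction claim $\Gamma,\sigma\vDash P \iff \Gamma,\ForgetMem{\sigma}{T}\vDash P$. The same reduction applies on the left, using $\sigma\in\Mem[S]$.

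The base cases go as follows.
\begin{itemize}
\item For $\true$, $\false$ and $E_1\asymp E_2$, satisfaction does not mention the memory at all, so both sides agree trivially.
\item For $e\mapsto E$, the free program variables of $e$ lie in $\Var(P)\subseteq T$. Hence $\sigma$ and $\ForgetMem{\sigma}{T}$ agree on them, including on which of them are defined, so $\de{e}_\Exp(\sigma)=\de{e}_\Exp(\ForgetMem{\sigma}{T})$. This uses the evident fact that expression evaluation depends only on the variables occurring in the expression.
\item For $\land$, $\lor$ and $\exists X$, the claim follows directly from the induction hypothesis. In the $\exists$ case we apply it under $\Gamma[X\coloneqq v]$, which is why the induction is stated for all $\Gamma$; note that $\Var(\exists X.P')=\Var(P')$.
\end{itemize}

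The main case is $P_1\sep P_2$. For the direction $(\Leftarrow)$, suppose $\sigma_1\uplus\sigma_2\sqsubseteq\ForgetMem{\sigma}{T}$ with $\Gamma,\sigma_i\vDash P_i$. Since $\ForgetMem{\sigma}{T}\sqsubseteq\sigma$, transitivity of $\sqsubseteq$ gives $\sigma_1\uplus\sigma_2\sqsubseteq\sigma$, and we are done; no induction hypothesis is needed.

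For the direction $(\Rightarrow)$, suppose $\sigma_1\uplus\sigma_2\sqsubseteq\sigma$ with $\Gamma,\sigma_i\vDash P_i$. The split pieces may mention variables outside $T$, so we restrict them. Set $T_i\triangleq\Var(P_i)\subseteq T$. Each $\sigma_i$ lies in $\Mem[\dom(\sigma_i)]$, but $\dom(\sigma_i)$ need not contain $T_i$, so we apply the induction hypothesis on the pair $(T_i\cap\dom(\sigma_i))\cup T_i$, which is awkward. The cleaner route is to first prove an auxiliary monotonicity-style fact by the same induction: for any memory $\tau$ and any $U\supseteq\Var(P)$, $\Gamma,\tau\vDash P\iff\Gamma,\ForgetMem{\tau}{U}\vDash P$. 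This drops the requirement that $\tau$ have domain $S$. With this fact in hand, $\Gamma,\ForgetMem{\sigma_i}{T}\vDash P_i$ holds. The restricted pieces $\ForgetMem{\sigma_1}{T}$ and $\ForgetMem{\sigma_2}{T}$ remain disjoint, and their union is $\ForgetMem{(\sigma_1\uplus\sigma_2)}{T}\sqsubseteq\ForgetMem{\sigma}{T}$, because projection is monotone for $\sqsubseteq$ and commutes with $\uplus$.

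I therefore plan to state and prove the generalized lemma, with $\tau$ arbitrary and $U\supseteq\Var(P)$, by induction. The displayed statement then follows at $\tau=\sigma$ and $U=T$ via the reduction described in the first paragraph. The $\sep$ case is the main obstacle, and it is handled by the restriction argument above together with the elementary facts about $\ForgetMem{\cdot}{T}$: monotonicity, preservation of disjointness, commuting with $\uplus$, and $\ForgetMem{\tau}{T}\sqsubseteq\tau$.
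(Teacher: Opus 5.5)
Your proof is correct and follows the paper's argument. Both use structural induction on $P$: the $e\mapsto E$ case goes through because $\Var(e)\subseteq T$, the forward $\sep$ direction projects the separating submemories to $T$, and the backward direction reuses the witnesses via $\ForgetMem{\sigma}{T}\sqsubseteq\sigma$. Your generalization to arbitrary $\tau$ is harmless but unnecessary: $\Mem[S]$ only bounds domains from above, and $\sigma_1\uplus\sigma_2\sqsubseteq\sigma$ already gives $\sigma_i\in\Mem[S]$, so the induction hypothesis as stated applies to each $\sigma_i$ with the same $T\subseteq S$.
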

\begin{proof}
  By structural induction on $P$. The cases $\true$, $\false$, conjunction,
  disjunction, and existential quantification follow directly from the
  induction hypotheses. The comparison case is independent of memory.

  For $e\mapsto E$, all program variables read by $e$ lie in
  $\Var(P)\subseteq T$, so evaluating $e$ in $\sigma$ agrees with evaluating it
  in $\ForgetMem{\sigma}{T}$; the logical expression $E$ is interpreted only by
  $\Gamma$.

  For $P_1 * P_2$, use the induction hypotheses on the two separating
  submemories. Projecting the two separating submemories to $T$ preserves
  disjointness and separating submemory inclusion, giving the forward
  direction. The reverse direction uses the same projected witnesses, since
  $\ForgetMem{\sigma}{T}\sqsubseteq\sigma$.
\end{proof}

\begin{lemma}[Assertion satisfaction is preserved by framing]
  \label{lem:index-assertion-frame-preservation} 
  \noam{This should follow from monotonicity, since $\calR \preceq \calR\otimes \calR_F$}
  \gary{Actually this is not true, $R_F$ may increment the counts}
  If $\Gamma,\calR\vDash\varphi$ and $\calR\otimes\calR_F$ is defined, then
  \[
    \Gamma,\calR\otimes\calR_F\vDash\varphi
  \]
  The case when $V_{\calR_F} = \emptyset$ is a special case since $\otimes$ is
  automatically well-defined.
\end{lemma}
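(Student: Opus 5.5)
I will prove this by structural induction on $\varphi$, following the same template as \Cref{lem:index-tick-satisfaction} and \Cref{lem:index-count-erasure-satisfaction}. The natural shortcut via monotonicity fails: $\calR\preceq\calR\otimes\calR_F$ need not hold, because $\calR_F$ may carry nonzero scheduler counts and $\preceq$ preserves counts strictly. The frame therefore has to be pushed into the witnesses at each constructor. The cases $\top$, $\bot$, $\land$, $\lor$, and $\exists$ are immediate from the induction hypotheses, since $\calR\otimes\calR_F$ does not depend on $\Gamma$.

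For $\sure{P}$, let $E_P=\memfn_\calR^{-1}(\sem{P}_\Gamma^{V_\calR})$, which has measure one. In $\calR\otimes\calR_F$ the memory at $g(\omega_1,\omega_2)$ is $\memfn_\calR(\omega_1)\uplus\memfn_{\calR_F}(\omega_2)$. Since $\Var(P)$ is covered by the part of the footprint that $P$ actually constrains, \Cref{lem:index-pure-assertion-footprint} lets me conclude that the preimage of $\sem{P}_\Gamma^{V_\calR\cup V_{\calR_F}}$ contains $g(E_P\times\Omega_{\calR_F})$. This set is measurable with product measure $1\cdot 1=1$, and completeness handles any superset.

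I need to be careful here. If $P$ is intuitionistic in the sense that satisfaction is upward closed along $\sqsubseteq$, then the superset version holds directly. Otherwise I would use that the pure semantics with $\sep$ and $\mapsto$ are closed under memory extension, and prove this as a small side lemma by induction on $P$. In that induction, $e\mapsto E$ is preserved because all variables read by $e$ are already defined, and $P\sep Q$ only asks for $\sigma_1\uplus\sigma_2\sqsubseteq\sigma$.

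For $\varphi_1\sep\varphi_2$, take witnesses with $\calR_1\otimes\calR_2\preceq\calR$. By the induction hypothesis, $\Gamma,\calR_1\otimes\calR_F\vDash\varphi_1$; this product is defined because $V_{\calR_1}\subseteq V_\calR$ is disjoint from $V_{\calR_F}$. Then associativity, commutativity, and monotonicity of $\otimes$ from \Cref{fig:resource-algebra-refinement-facts}(c) give $(\calR_1\otimes\calR_F)\otimes\calR_2\cong(\calR_1\otimes\calR_2)\otimes\calR_F\preceq\calR\otimes\calR_F$.

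For $\bigoplus^o_{X\sim d(E)}\psi$, take branch witnesses $\calR_v$ and apply the induction hypothesis to get $\Gamma[X\mapsto v],\calR_v\otimes\calR_F\vDash\psi$, after reindexing the branches to disjoint supports. Distributivity of $\otimes$ over $\oplus$ and monotonicity then give $\bigoplus_{v\sim\mu}(\calR_v\otimes\calR_F)\cong(\bigoplus_{v\sim\mu}\calR_v)\otimes\calR_F\preceq\calR\otimes\calR_F$.

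For the private side condition I use the law $\Proj{\calR_1\otimes\calR_2}{\bbN}=\mathsf{add}_*(\Proj{\calR_1}{\bbN}\otimes\Proj{\calR_2}{\bbN})$. Since $\Proj{\calR_v}{\bbN}=\Proj{\calR}{\bbN}$, each branch has count marginal $\mathsf{add}_*(\Proj{\calR}{\bbN}\otimes\Proj{\calR_F}{\bbN})=\Proj{\calR\otimes\calR_F}{\bbN}$. The case $\bignd_{X\in E}\psi$ reduces to the $\Leak$ case with the same $\mu$.

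I expect the main obstacle to be the $\sure{P}$ case, namely the upward closure of pure satisfaction under memory extension. A secondary obstacle is checking that definedness of $\otimes$ (disjoint footprints) is inherited by all the sub-witnesses. This follows because every refinement $\calR'\preceq\calR$ has $V_{\calR'}\subseteq V_\calR$. When $V_{\calR_F}=\emptyset$, definedness is automatic throughout, which gives the special case.
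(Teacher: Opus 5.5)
Your proposal is correct and matches the paper's proof. Both use structural induction with the frame pushed into the witnesses at each constructor. For $\sure{P}$ both take the measure-one product event $g(E_P\times\Omega_{\calR_F})$; for $\bigoplus^o$ both use distributivity of $\otimes$ over the direct sum together with the $\mathsf{add}_*$ count-marginal law for the private side condition; and for $\sep$ both use the re-association $(\calR_1\otimes\calR_F)\otimes\calR_2\cong(\calR_1\otimes\calR_2)\otimes\calR_F\preceq\calR\otimes\calR_F$. The only difference is cosmetic: in the $\sure{P}$ case the paper uses \Cref{lem:index-pure-assertion-footprint} to identify the $P$-event exactly, whereas your upward-closure argument shows containment and then appeals to completeness.
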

\begin{proof}
  By structural induction on $\varphi$, the cases of $\top$, $\bot$, $\varphi_1
  \land \varphi_2$, and $\varphi_1 \lor \varphi_2$ are immediate.
  \begin{itemize}

  \item \emph{Case $\varphi=\sure{P}$.}
  Let $g$ be the product pairing for $\calR\otimes\calR_F$, and let
  \[
    E\triangleq
    \memfn_\calR^{-1}(\sem{P}_\Gamma^{V_\calR})
  \]
  Since $\Gamma,\calR\vDash\sure{P}$, $E\in\calF_\calR$ and
  $\mu_\calR(E)=1$.  By
  \Cref{lem:index-pure-assertion-footprint}, the almost-sure event for
  $P$ in $\calR\otimes\calR_F$ is $g(E\times\Omega_{\calR_F})$.  Hence it
  is measurable and has measure
  $\mu_\calR(E)\cdot\mu_{\calR_F}(\Omega_{\calR_F})=1$.

  \item \emph{Case $\varphi=\bigoplus^o_{X\sim d(E)}\psi$ and $\varphi =
    \bignd_{X \in E} \psi$. }
  Put $\mu\triangleq d(\de{E}_{\LExp}(\Gamma))$.  Unfold satisfaction and
  choose witnesses $(\calR_v)_{v\in\supp(\mu)}$ such that
  \[
    \bigoplus_{v\sim\mu}\calR_v\preceq\calR,
    \qquad
    \Gamma[X\mapsto v],\calR_v\vDash\psi
    \quad(v\in\supp(\mu)),
  \]
  and $\Obs_o^\mu(\calR,(\calR_v)_{v\in\supp(\mu)})$.  By the induction
  hypothesis,
  $\Gamma[X\mapsto v],\calR_v\otimes\calR_F\vDash\psi$ for every
  $v\in\supp(\mu)$.  Moreover,
  \[
    \bigoplus_{v\sim\mu}(\calR_v\otimes\calR_F)
    \cong
    \left(\bigoplus_{v\sim\mu}\calR_v\right)\otimes\calR_F
    \preceq
    \calR\otimes\calR_F
  \]
  If $o=\Leak$, the mode side condition is immediate.  If $o=\Priv$, then
  for every $v\in\supp(\mu)$,
  \begin{align*}
    \Proj{\calR_v\otimes\calR_F}{\bbN}
    &=
      (\mathsf{add}_{\bbN})_*
      (\Proj{\calR_v}{\bbN}\otimes\Proj{\calR_F}{\bbN})\\
    &=
      (\mathsf{add}_{\bbN})_*
      (\Proj{\calR}{\bbN}\otimes\Proj{\calR_F}{\bbN})\\
    &=
      \Proj{\calR\otimes\calR_F}{\bbN}
  \end{align*}
  using the private side condition for $\calR$.  Thus
  $\Gamma,\calR\otimes\calR_F\vDash\bigoplus^o_{X\sim d(E)}\psi$.

  The case $\varphi=\bignd_{X\in E}\psi$ uses the same distribution witnessing the nondeterministic choice and the
  previous case for $\bigoplus^\Leak$.

  \item \emph{Case $\varphi=\varphi_1\sep\varphi_2$.}
  Choose witnesses $\calR_1,\calR_2$ such that
  $\calR_1\otimes\calR_2\preceq\calR$ and
  $\Gamma,\calR_i\vDash\varphi_i$ for $i \in \{1, 2\}$.
  By the induction hypothesis,
  $\Gamma,\calR_1\otimes\calR_F\vDash\varphi_1$.  Also
  \[
    (\calR_1\otimes\calR_F)\otimes\calR_2
    \cong
    (\calR_1\otimes\calR_2)\otimes\calR_F
    \preceq
    \calR\otimes\calR_F
  \]
  Therefore $\calR_1\otimes\calR_F$ and $\calR_2$ witness
  $\Gamma,\calR\otimes\calR_F\vDash\varphi_1\sep\varphi_2$.
  \end{itemize}
\end{proof}

\begin{lemma}[Count-erased precise core]
  \label{lem:index-precise-core}
  If $\precise{\varphi}$ and $\Gamma,\calR\vDash\varphi$, then there exists
  a resource $\calR_0$ such that
  \[
    \Gamma,\calR_0\vDash\varphi,
    \qquad
    V_{\calR_0}\subseteq V_\calR,
    \qquad
    \cntfn_{\calR_0}(\omega)=0
    \quad\text{for all }\omega\in\Omega_{\calR_0},
  \]
  and for every $\calQ$ satisfying $\Gamma,\calQ\vDash\varphi$,
  \[
    \calR_0^\memfn\preceq\calQ^\memfn
  \]
\end{lemma}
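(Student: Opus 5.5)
The plan is to start from the minimal witness supplied by \Cref{def:precise} and repair its two defects: it may carry nonzero scheduler counts, and its footprint may not sit inside $V_\calR$.

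Since $\Gamma,\calR\vDash\varphi$, the assertion is satisfiable under $\Gamma$. Precision therefore gives a resource $\calR_1$ with $\Gamma,\calR_1\vDash\varphi$ and $\calR_1^\memfn\preceq\calQ^\memfn$ for every $\calQ$ satisfying $\varphi$ under $\Gamma$. I will take the count erasure $\calR_1^\memfn$ as the first candidate. By \Cref{lem:index-count-erasure-satisfaction} it still satisfies $\varphi$, and by construction its count map is identically $0$. Minimality is preserved because $(\calR_1^\memfn)^\memfn\cong\calR_1^\memfn$, using the memory-component law $\cntfn\equiv 0\Rightarrow\calR^\memfn\cong\calR$ of \Cref{fig:resource-algebra-refinement-facts}(f).

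The footprint condition $V_{\calR_0}\subseteq V_\calR$ is the only genuinely new obligation. Applying minimality to $\calQ=\calR$ gives $\calR_1^\memfn\preceq\calR^\memfn$. By the definition of $\preceq$, a refinement $\calR\preceq\calR'$ is only defined when $V_\calR\subseteq V_{\calR'}$, so this already yields $V_{\calR_1^\memfn}=V_{\calR_1}\subseteq V_\calR$. I can then set $\calR_0\triangleq\calR_1^\memfn$. If the refinement order were read as allowing the coarser footprint to be larger, I would instead take the memory projection $\ForgetMem{\calR_1^\memfn}{V_\calR}$.

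The main obstacle is checking that this projection still satisfies $\varphi$. I would prove it by structural induction on $\varphi$. The $\sure{P}$ case uses \Cref{lem:index-pure-assertion-footprint}: the variables of $P$ that are relevant must lie in $V_\calR$, since $\calR$ itself satisfies $P$ almost surely. The $\sep$ and $\bigoplus$ cases push the projection through $\otimes$ and the direct sum, analogously to \Cref{lem:index-count-erasure-satisfaction}. Minimality survives the projection as well, because $\ForgetMem{\cdot}{V_\calR}$ only forgets memory and $\calR_1^\memfn\preceq\calQ^\memfn$ composes with the forgetful refinement. In the expected reading of $\preceq$, however, this step is unnecessary and the proof is just the three observations above.
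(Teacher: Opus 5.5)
Your proof is correct and follows the same route as the paper's: you take the precision witness, erase its counts (satisfaction is preserved by \Cref{lem:index-count-erasure-satisfaction}), get $V_{\calR_0}\subseteq V_\calR$ by applying minimality to $\calR$ itself, and carry minimality over via $(\calR_1^\memfn)^\memfn\cong\calR_1^\memfn$. The memory-projection fallback is unnecessary, since the paper's definition of $\preceq$ already builds in the footprint inclusion $V_\calR\subseteq V_{\calR'}$.
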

\begin{proof}
  By precision, choose a least witness $\calQ_0$ for $\varphi$ under
  $\Gamma$.  Let $\calR_0\triangleq\calQ_0^\memfn$.  Count-erasure preserves
  satisfaction, so $\Gamma,\calR_0\vDash\varphi$.  Since
  $\calQ_0^\memfn\preceq\calR^\memfn$, the footprint of $\calR_0$ is contained
  in $V_\calR$, and its count map is constantly $0$ by construction.  The same
  precision witness gives
  $\calR_0^\memfn\cong\calQ_0^\memfn\preceq\calQ^\memfn$ for every satisfying
  $\calQ$.
\end{proof}

\begin{lemma}[Assertion precision is preserved by constructors]
  \label{lem:index-assertion-precision-constructors}
  The precision rules displayed in
  \Cref{fig:assertion-property-rules} are valid.
\end{lemma}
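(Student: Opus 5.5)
We verify each of the four precision rules of \Cref{fig:assertion-property-rules} separately. In every case we exhibit a least witness $\calR_0$ for the conclusion, and then show $\calR_0^\memfn\preceq\calR^\memfn$ for an arbitrary satisfying $\calR$. Throughout, we first replace every premise witness by its count-erased precise core from \Cref{lem:index-precise-core}, so all witnesses have constant count $0$. This makes private side conditions trivial, since all count marginals are $\delta_0$. It also gives $\calR_0^\memfn\cong\calR_0$.

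\emph{Case $\sure{P}$.} Assume $\sure{P}$ is satisfiable under $\Gamma$, say by $\calR'$, and let $V\triangleq V_{\calR'}$. We define a canonical memory-only resource $\calR_P$ on footprint $V$. It has a single outcome and trivial $\sigma$-algebra, with memory equal to some element of $\sem{P}_\Gamma^{V}$, or alternatively the coarse resource obtained by \Cref{lem:index-sure-coarsening} applied to $\calR'$.

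For minimality, given any $\calR\vDash\sure{P}$, the constant map $h:\Omega_\calR\to\Omega_{\calR_P}$ is measure preserving, because $\calF_{\calR_P}$ is trivial. The count condition holds after erasure. The memory condition requires choosing $\calR_P$ with footprint $\emptyset$ and empty memory, so that $\emptyset\sqsubseteq\ForgetMem{\memfn_\calR(\omega)}{\emptyset}$ always holds.

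The delicate point is that $\sure{P}$ must then still be satisfied by this minimal witness. When the almost-sure content depends on memory, we instead take the memory of $\calR_P$ to be the common information guaranteed by $P$: the meet, under $\sqsubseteq$, of the restrictions to $\Var(P)$. We rely on \Cref{lem:index-pure-assertion-footprint} to move between footprints, and check measurability of $E_P$ directly.

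\emph{Case $\varphi\sep\psi$.} Let $\calR_\varphi,\calR_\psi$ be the cores, and set $\calR_0\triangleq\calR_\varphi\otimes\calR_\psi$. This satisfies $\varphi\sep\psi$ by definition. Given $\calR$ with a split $\calR_1\otimes\calR_2\preceq\calR$, precision gives $\calR_\varphi^\memfn\preceq\calR_1^\memfn$ and $\calR_\psi^\memfn\preceq\calR_2^\memfn$. Using (f) and monotonicity of $\otimes$ from (c), we obtain
\[
\calR_0^\memfn\cong\calR_\varphi^\memfn\otimes\calR_\psi^\memfn\preceq(\calR_1\otimes\calR_2)^\memfn\preceq\calR^\memfn .
\]
Footprint disjointness of the cores follows from $V_{\calR_\varphi}\subseteq V_{\calR_1}$ and $V_{\calR_\psi}\subseteq V_{\calR_2}$.

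\emph{Case $\varphi\lor\psi$ with $\textsf{MutEx}$.} Under a given $\Gamma$, at most one disjunct is satisfiable, so we take the least witness of that disjunct.

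\emph{Case $\bigoplus^o_{X\sim d(E)}\varphi$.} For each $v\in\supp(\mu)$, take the core $\calR_v^0$ of $\varphi$ under $\Gamma[X\mapsto v]$. We extend all cores to a common footprint and reindex them onto disjoint supports, as in the weak-stability proof. Then $\calR_0\triangleq\bigoplus_{v\sim\mu}\calR_v^0$ satisfies the assertion in both modes, since all branch count marginals are $\delta_0$.

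Given any $\calR$ with branch witnesses $\bigoplus_v\calR_v\preceq\calR$, precision yields $(\calR_v^0)^\memfn\preceq\calR_v^\memfn$. Then (d) and (f) give
\[
\calR_0^\memfn\preceq\Bigl(\bigoplus_v\calR_v\Bigr)^\memfn\preceq\calR^\memfn .
\]

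\emph{Main obstacle.} The main difficulty is the footprint bookkeeping in this last case. The common-footprint extension must preserve satisfaction of $\varphi$ while staying minimal, and branch footprints may differ from one another. We handle this by padding each core with the coarse $\sure{\true}$ information on the missing variables, and then checking that padding does not break $\preceq$ against $\calR_v^\memfn$, using the refinement map that witnesses $\calR_v^0\preceq\calR_v$.
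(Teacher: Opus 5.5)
Your cases for $\lor$, $\sep$ and $\bigoplus$ follow the paper's proof: count-erased cores from \Cref{lem:index-precise-core}, then the product or direct sum of the cores, with minimality by monotonicity of $\otimes$ and $\bigoplus$ and of $(\cdot)^\memfn$. The footprint issue you flag in the $\bigoplus$ case is settled in the paper by extending the cores to $W\triangleq\bigcup_v V_{\calR_v^0}$. Branchwise minimality puts every $V_{\calR_v^0}$ inside the common branch footprint $V'$ of any competing witness, so $W\subseteq V'$. No padding is needed, since $\Mem[V]\subseteq\Mem[W]$.

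The genuine gap is the base case $\sure{P}$: none of your candidate witnesses is correct. Take $P\triangleq\own(x)$.
\begin{itemize}
\item The $\sqsubseteq$-meet of $\{[x\mapsto v]\mid v\in\Val\}$ is $\emptyset$, and a one-point resource with memory $\emptyset$ does not satisfy $\sure{\own(x)}$.
\item No one-point resource can be a least witness at all. Satisfying $\sure{\own(x)}$ forces its memory to contain $x\mapsto v$ for a fixed $v$, and that is not below the memory of a satisfier in which $x$ is always $v'\neq v$.
\item The coarsening of a fixed satisfier $\calR'$ via \Cref{lem:index-sure-coarsening} fails for the same reason. Its memory map only takes the values that $\calR'$ takes, so there need be no refinement map into it from another satisfier. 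Its footprint $V_{\calR'}$ may also exceed that of other satisfiers.
\end{itemize}

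The missing idea is that a least witness needs a \emph{fine} memory map but a \emph{coarse} $\sigma$-algebra. The reason is that $\preceq$ constrains memories pointwise along $h$, but constrains probabilities only on the coarser resource's events. The paper's construction is:
\begin{itemize}
\item take $\Omega_P$ with a bijection $\rho_P\colon\Omega_P\to\Mem[\Var(P)]$ onto \emph{all} memories on $\Var(P)$;
\item use $\rho_P$ as the memory map, with footprint $\Var(P)$ and count $0$;
\item equip $\Omega_P$ with the complete two-atom $\sigma$-algebra $\{A\mid B_P\subseteq A\text{ or }A\cap B_P=\emptyset\}$, where $B_P\triangleq\rho_P^{-1}(\sem{P}_\Gamma^{\Var(P)})$, assigning measure $1$ in the first case and $0$ in the second.
\end{itemize}
For any satisfier $\calR$, set $h(\omega)\triangleq\rho_P^{-1}(\ForgetMem{\memfn_\calR(\omega)}{\Var(P)})$. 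This meets the memory condition with equality at every $\omega$, including null points whose memory violates $P$. By \Cref{lem:index-pure-assertion-footprint}, $h^{-1}(B_P)$ is the full-measure $P$-event of $\calR$, so completeness of $\calP_\calR$ makes $h$ measurable and measure preserving. Because the $\sigma$-algebra only records whether $P$ holds, this works however $\calR$ distributes its memories, which is exactly what a point mass cannot do.
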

\begin{proof}
  We prove the precision rules in \Cref{fig:assertion-property-rules} by
  constructing a least memory witness.

  \emph{Case $\sure{P}$.}
  Fix $\Gamma$ and assume $\sure{P}$ is satisfiable. Since $\Var(P)$ is
  finite, $\Mem[\Var(P)]$ is countable, so choose a countable index set
  $\Omega_P\subseteq\bbN$ and a bijection
  $\rho_P:\Omega_P\to\Mem[\Var(P)]$. Let
  \[
    B_P\triangleq\rho_P^{-1}(\sem{P}_\Gamma^{\Var(P)})
  \]
  Since $\sure{P}$ is satisfiable, $B_P$ is nonempty.  Equip $\Omega_P$ with
  the complete two-atom probability space whose measurable sets are exactly
  those $A$ such that $B_P\subseteq A$ or $A\cap B_P=\emptyset$, assigning
  probability $1$ in the first case and $0$ in the second.  Define
  \[
    \calR_P
    \triangleq
    \langle
      \langle\Omega_P,\calF_P,\mu_P\rangle,\,
      \Var(P),\,\rho_P,\,\omega\mapsto 0
    \rangle
  \]
  The $P$-event in $\calR_P$ is exactly $B_P$, so
  $\Gamma,\calR_P\vDash\sure{P}$.

  To prove minimality, let $\Gamma,\calR\vDash\sure{P}$ and set
  $E_\calR\triangleq
  \memfn_\calR^{-1}(\sem{P}_\Gamma^{V_\calR})$.
  Define
  \[
    h_\calR(\omega)
    \triangleq
    \rho_P^{-1}\!\left(\ForgetMem{\memfn_\calR(\omega)}{\Var(P)}\right)
  \]
  By \Cref{lem:index-pure-assertion-footprint},
  $h_\calR^{-1}(B_P)=E_\calR$.  Since $E_\calR$ has probability $1$, the
  preimage of every $\calF_P$-event is either a superset of $E_\calR$ up to a
  null set or a subset of $E_\calR^c$; completeness gives measurability and
  preservation of measure.  The observation equations are
  \[
    \memfn_{\calR_P}(h_\calR(\omega))
    =
    \ForgetMem{\memfn_\calR(\omega)}{\Var(P)},
    \qquad
    \cntfn_{\calR_P}(h_\calR(\omega))=0=\cntfn_{\calR^\memfn}(\omega)
  \]
  Hence $h_\calR$ witnesses
  $\calR_P^\memfn\preceq\calR^\memfn$, proving $\precise{\sure{P}}$.

  \emph{Case $\varphi\lor\psi$.}
  Fix $\Gamma$ and a satisfying resource for $\varphi\lor\psi$.  If it
  satisfies $\varphi$, take the least $\varphi$-witness.  Every other
  $\varphi\lor\psi$ witness under the same $\Gamma$ must also satisfy
  $\varphi$, since $\textsf{MutEx}(\varphi,\psi)$ rules out satisfiable
  $\psi$-witnesses under that $\Gamma$; hence $\varphi$-minimality gives the
  required lower bound.  The $\psi$ case is symmetric.

  \emph{Case $\varphi_1\sep\varphi_2$.}
  Fix $\Gamma$ and suppose $\Gamma,\calR\vDash\varphi_1\sep\varphi_2$.
  Choose witnesses $\calR_1,\calR_2$ such that
  $\calR_1\otimes\calR_2\preceq\calR$ and
  $\Gamma,\calR_i\vDash\varphi_i$ for $i\in\{1,2\}$.  By
  \Cref{lem:index-precise-core}, choose count-erased cores $\calR_i^0$ for
  $\calR_i$.  Since $V_{\calR_i^0}\subseteq V_{\calR_i}$ and the original split
  has disjoint footprints,
  $\calR_\sep^0\triangleq\calR_1^0\otimes\calR_2^0$ is well formed and
  satisfies $\varphi_1\sep\varphi_2$.

  For minimality, let $\Gamma,\calR'\vDash\varphi_1\sep\varphi_2$ be witnessed
  by $\calQ_1\otimes\calQ_2\preceq\calR'$ with
  $\Gamma,\calQ_i\vDash\varphi_i$.  Core minimality gives
  $(\calR_i^0)^\memfn\preceq\calQ_i^\memfn$, and hence
  \[
    (\calR_\sep^0)^\memfn
    \cong
    (\calR_1^0)^\memfn\otimes(\calR_2^0)^\memfn
    \preceq
    \calQ_1^\memfn\otimes\calQ_2^\memfn
    \cong
    (\calQ_1\otimes\calQ_2)^\memfn
    \preceq
    (\calR')^\memfn
  \]
  Hence $\precise{\varphi_1\sep\varphi_2}$.

  \emph{Case $\bigoplus^o_{X\sim d(E)}\varphi$.}
  Fix $\Gamma$ and $o\in\{\Priv,\Leak\}$, and put
  $\mu\triangleq d(\de{E}_{\LExp}(\Gamma))$.  Suppose
  $\Gamma,\calR\vDash\bigoplus^o_{X\sim d(E)}\varphi$ is witnessed by branches
  $(\calR_v)_{v\in\supp(\mu)}$ with common footprint $V$.  By
  \Cref{lem:index-precise-core}, choose count-erased cores $\calQ_v^0$ for the
  branch resources $\calR_v$, and put
  \[
    W\triangleq
    \{x\in V \mid \exists v\in\supp(\mu).\;x\in V_{\calQ_v^0}\}
  \]
  Extend each $\calQ_v^0$ to the common footprint $W$ and reindex the resulting
  resources onto pairwise-disjoint supports; write the normalized branch as
  $\calQ_{v,\mathsf m}$.  Define
  \[
    \calR_\oplus^0
    \triangleq
    \bigoplus_{v\sim\mu}\calQ_{v,\mathsf m}
  \]
  Footprint extension and isomorphism transport preserve the branch assertions.
  If $o=\Leak$, there is no observation side condition;
  if $o=\Priv$, all normalized branches and their direct sum have count
  marginal $\delta_0$.  Thus
  $\Gamma,\calR_\oplus^0\vDash\bigoplus^o_{X\sim d(E)}\varphi$.

  For minimality, let $\Gamma,\calR'\vDash\bigoplus^o_{X\sim d(E)}\varphi$ be
  witnessed by branches $(\calR'_v)_{v\in\supp(\mu)}$ with common footprint
  $V'$.  Branchwise minimality gives
  $(\calQ_v^0)^\memfn\preceq(\calR'_v)^\memfn$, hence
  $V_{\calQ_v^0}\subseteq V'$ for every $v$ and therefore $W\subseteq V'$.
  After extension to $W$ and reindexing, still
  $\calQ_{v,\mathsf m}^\memfn\preceq(\calR'_v)^\memfn$ for every $v$.
  Direct-sum monotonicity and compatibility with memory components give
  \[
    (\calR_\oplus^0)^\memfn
    \preceq
    \left(\bigoplus_{v\sim\mu}\calR'_v\right)^\memfn
    \preceq
    (\calR')^\memfn
  \]
  Hence $\precise{\bigoplus^o_{X\sim d(E)}\varphi}$.
\end{proof}

\begin{lemma}[Stable core of a precise assertion]
  \label{lem:index-precise-stable-core}
  If $\precise{\varphi}$ and $\Stable{\weak}{\varphi}$, then for every
  environment $\Gamma$ under which $\varphi$ is satisfiable, there exists
  $\calQ_\varphi^0$ such that
  \[
    \Gamma,\calQ_\varphi^0\vDash\varphi,
    \qquad
    \Stable{\weak}{\calQ_\varphi^0},
  \]
  and
  \[
    \forall\calQ.\;
    \Gamma,\calQ\vDash\varphi
    \land
    \Stable{\weak}{\calQ}
    \implies
    \calQ_\varphi^0\preceq\calQ
  \]
\end{lemma}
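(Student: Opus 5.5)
The natural candidate for $\calQ_\varphi^0$ is the count-erased precise core of \Cref{lem:index-precise-core}. Fix $\Gamma$ under which $\varphi$ is satisfiable and pick any satisfying resource $\calR$. \Cref{lem:index-precise-core} then gives $\calR_0$ with three properties: $\Gamma,\calR_0\vDash\varphi$, $\cntfn_{\calR_0}\equiv 0$, and $\calR_0^\memfn\preceq\calQ^\memfn$ for every $\calQ$ satisfying $\varphi$. We set $\calQ_\varphi^0\triangleq\calR_0$. Satisfaction is immediate, and $\Stable{\weak}{\calQ_\varphi^0}$ is exactly the statement that its count map is constantly $0$.

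For minimality, let $\Gamma,\calQ\vDash\varphi$ with $\Stable{\weak}{\calQ}$, so $\cntfn_\calQ\equiv 0$.

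\begin{itemize}
\item The law ``$\cntfn_\calR\equiv 0\Rightarrow\calR^\memfn\cong\calR$'' from \Cref{fig:resource-algebra-refinement-facts}(f) gives $\calQ^\memfn\cong\calQ$.
\item The same law applied to $\calR_0$ gives $\calR_0^\memfn\cong\calR_0$.
\item Since $\cong$ implies $\preceq$ in both directions (fact (b)) and $\preceq$ is a preorder, we chain
\[
  \calQ_\varphi^0=\calR_0\preceq\calR_0^\memfn\preceq\calQ^\memfn\preceq\calQ .
\]
\end{itemize}

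The middle step is the precision bound from \Cref{lem:index-precise-core}. Note that the count condition in the definition of $\preceq$ holds trivially here, because both sides have zero counts.

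The main point to check is that the refinement witnesses behave correctly under the identification $\calR^\memfn\cong\calR$ when counts vanish. In that case $\calR^\memfn$ has literally the same probability space, footprint and memory map as $\calR$, with count map $0=\cntfn_\calR$. So the identity bijection is a resource isomorphism, and fact (f) needs no further argument.

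The weak stability hypothesis on $\varphi$ is not actually needed for this chain; precision alone suffices. It only ensures that stable satisfying resources exist, so that the quantified statement is not vacuous. If one prefers to use it, the stable memory witness $\calR_m$ it provides can serve as the starting resource $\calR$ fed to \Cref{lem:index-precise-core}.
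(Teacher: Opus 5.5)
Your proof is correct. It reaches the zero-count core by a different mechanism than the paper does.

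The paper takes the precision witness $\calQ_\varphi$ and applies $\Stable{\weak}{\varphi}$ to it. This yields $\calQ_\varphi^0\otimes\calQ_\varphi^\cntfn\preceq\calQ_\varphi$ with $\Gamma,\calQ_\varphi^0\vDash\varphi$ and zero counts. The paper then chains $\calQ_\varphi^0\cong(\calQ_\varphi^0)^\memfn\preceq(\calQ_\varphi^0\otimes\calQ_\varphi^\cntfn)^\memfn\preceq\calQ_\varphi^\memfn\preceq\calQ^\memfn\cong\calQ$, using that memory components are monotone and project out of products.

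You instead take the count-erased core of \Cref{lem:index-precise-core}. Its satisfaction rests on \Cref{lem:index-count-erasure-satisfaction}, a structural induction showing that $\calR^\memfn$ satisfies every assertion $\calR$ does. The chain then collapses to $\calR_0\cong\calR_0^\memfn\preceq\calQ^\memfn\cong\calQ$, and, as you observe, the hypothesis $\Stable{\weak}{\varphi}$ becomes redundant.

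What your route buys:
\begin{itemize}
\item The lemma holds for every precise $\varphi$, with no stability hypothesis.
\item You do not need the reading of $\Stable{\weak}{\varphi}$ in which the memory witness itself satisfies $\varphi$. The paper's proof uses that clause, but the main-text definition does not state it explicitly.
\end{itemize}

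What the paper's route buys: it works purely from the two semantic hypotheses. It would therefore survive extensions of the assertion language with count-sensitive assertions, for which count erasure would fail.

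One small correction to your closing remark. Stability is not needed even to make the universal quantifier non-vacuous, since your $\calR_0$ is itself a stable resource satisfying $\varphi$. In any case, vacuity would not threaten the statement.
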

\begin{proof}
  Fix $\Gamma$ such that $\exists\calR.\;\Gamma,\calR\vDash\varphi$.  By
  $\precise{\varphi}$ choose $\calQ_\varphi$ such that
  \[
    \Gamma,\calQ_\varphi\vDash\varphi,
    \qquad
    \forall\calQ.\;
    \Gamma,\calQ\vDash\varphi
    \implies
    \calQ_\varphi^\memfn\preceq\calQ^\memfn
  \]
  By $\Stable{\weak}{\varphi}$ applied to $\calQ_\varphi$, choose
  $\calQ_\varphi^0,\calQ_\varphi^\cntfn$ such that
  \[
    \calQ_\varphi^0\otimes\calQ_\varphi^\cntfn
    \preceq\calQ_\varphi,
    \qquad
    \Gamma,\calQ_\varphi^0\vDash\varphi,
    \qquad
    \Stable{\weak}{\calQ_\varphi^0},
    \qquad
    V_{\calQ_\varphi^\cntfn}=\emptyset
  \]
  It remains to prove the minimality condition among stable witnesses.  Fix
  $\calQ$ with $\Gamma,\calQ\vDash\varphi$ and $\Stable{\weak}{\calQ}$.  Then
  \[
    \calQ_\varphi^0
    \cong
    (\calQ_\varphi^0)^\memfn
    \preceq
    (\calQ_\varphi^0\otimes\calQ_\varphi^\cntfn)^\memfn
    \preceq
    \calQ_\varphi^\memfn
    \preceq
    \calQ^\memfn
    \cong
    \calQ
  \]
\end{proof}

\begin{lemma}[Count restrictions preserve almost-sure pure assertions]
  \label{lem:index-count-restriction-pure-assertion}
  If $\Gamma,\calR\vDash\sure{P}$, $E\subseteq\bbN$, and
  $\Proj{\calR}{\bbN}(E)>0$, then
  \[
    \Gamma,\RestrictN{\calR}{E}\vDash\sure{P}
  \]
\end{lemma}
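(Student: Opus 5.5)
The plan is to unfold the definition of the conditioned resource and observe that conditioning on a probability-one event leaves it almost sure. Put
\[
  E_P\triangleq\memfn_\calR^{-1}(\sem{P}_\Gamma^{V_\calR})
  \qquad\text{and}\qquad
  G\triangleq\cntfn_\calR^{-1}(E).
\]
By the hypothesis $\Gamma,\calR\vDash\sure{P}$, we have $E_P\in\calF_\calR$ and $\mu_\calR(E_P)=1$. By count-measurability (\Cref{def:resource}), $G\in\calF_\calR$. The assumption $\Proj{\calR}{\bbN}(E)>0$ says precisely that $\mu_\calR(G)>0$, so $\RestrictN{\calR}{E}=\calR\mid G$ is well defined.

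Next I identify the $P$-event inside the conditioned resource. The footprint of $\calR\mid G$ is still $V_\calR$, and its memory map is the restriction $\RestrictFn{\memfn_\calR}{G}$. Hence
\[
  \memfn_{\calR\mid G}^{-1}(\sem{P}_\Gamma^{V_\calR}) = E_P\cap G .
\]
This set is a subset of $G$ and lies in $\calF_\calR$, being an intersection of two measurable events. So it belongs to $\calF_{\calP_\calR\mid G}$, which is exactly the set of $\calF_\calR$-events contained in $G$.

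It remains to compute the conditioned measure:
\[
  \mu_{\calR\mid G}(E_P\cap G)=\frac{\mu_\calR(E_P\cap G)}{\mu_\calR(G)} .
\]
Since $\mu_\calR(E_P^c)=0$, we get $\mu_\calR(E_P\cap G)=\mu_\calR(G)-\mu_\calR(G\cap E_P^c)=\mu_\calR(G)$, so the ratio is $1$. This verifies both clauses of the satisfaction relation for $\sure{P}$ in \Cref{fig:assertion-semantics}.

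No step is a serious obstacle. The only point needing care is the measurability bookkeeping: one must note that the conditioned $\sigma$-algebra consists of $\calF_\calR$-sets contained in $G$, which is what makes $E_P\cap G$ measurable there. Completeness of the conditioned space, which is inherited from $\calR$, is not even needed for this argument.
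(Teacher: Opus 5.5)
Your proof is correct and matches the paper's argument step for step. You identify the $P$-event of the restricted resource as $E_P\cap G$, note that it lies in the conditioned $\sigma$-algebra, and compute its conditional measure as $\mu_\calR(G)/\mu_\calR(G)=1$; your complement calculation for $\mu_\calR(E_P\cap G)=\mu_\calR(G)$ simply spells out a step the paper leaves implicit.
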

\begin{proof}
  Consider
  $G\triangleq\cntfn_\calR^{-1}(E)$ and 
  $A\triangleq
  \memfn_\calR^{-1}(\sem{P}_\Gamma^{V_\calR})$.
  Since $\Gamma,\calR\vDash\sure{P}$, we have
  $A\in\calF_\calR$ and $\mu_\calR(A)=1$.  Also
  $\mu_\calR(G)=\Proj{\calR}{\bbN}(E)>0$, so
  $\RestrictN{\calR}{E}$ is defined.  By the definition of count restriction,
  $V_{\RestrictN{\calR}{E}}=V_\calR$ and
  $\memfn_{\RestrictN{\calR}{E}}$ is the restriction of $\memfn_\calR$ to
  $G$.  Hence
  \begin{equation}
    \memfn_{\RestrictN{\calR}{E}}^{-1}
    \left(\sem{P}_\Gamma^{V_{\RestrictN{\calR}{E}}}\right)
    =
    A\cap G .
    \label{eq:index-count-restriction-pure-event}
  \end{equation}
  Since $A,G\in\calF_\calR$, the event $A\cap G$ is measurable in the
  conditioned probability space.  Moreover,
  by the definition of count restriction and $\mu_\calR(A) = 1$, 
  \[
    \mu_{\RestrictN{\calR}{E}}(A\cap G)
    =
    \frac{\mu_\calR(A\cap G)}{\mu_\calR(G)}
    =
    \frac{\mu_\calR(G)}{\mu_\calR(G)}
    =
    1
  \]
  By \eqref{eq:index-count-restriction-pure-event} and the satisfaction
  clause for $\sure{P}$,
  $\Gamma,\RestrictN{\calR}{E}\vDash\sure{P}$.
\end{proof}

\subsection{Convexity}
\begin{lemma}[Assertion convexity is preserved by constructors]
  \label{lem:index-assertion-convexity-constructors}
  The convexity rules displayed in
  \Cref{fig:assertion-property-rules} are valid.
\end{lemma}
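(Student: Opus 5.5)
We verify each convexity rule of \Cref{fig:assertion-property-rules} separately. Throughout, fix $\Gamma$, a countable $I$, $\mu\in\calD(I)$, a footprint $V$, and branch resources $(\calR_i)_{i\in\supp(\mu)}$ with common footprint $V$ and pairwise-disjoint index spaces, each satisfying the assertion. The goal is to show $\Gamma,\bigoplus_{i\sim\mu}\calR_i$ satisfies it too.

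\emph{Base and Boolean cases.} The rules for $\top$ and $\bot$ are trivial. For $\sure{P}$, let $E_i\triangleq\memfn_{\calR_i}^{-1}(\sem{P}_\Gamma^V)$. Each $E_i$ is $\calF_{\calR_i}$-measurable with measure $1$. The $P$-event of the sum is $\bigcup_i E_i$, and its trace on each $\Omega_{\calR_i}$ is $E_i$, so it lies in $\calF_\oplus$. Its measure is $\sum_i\mu(i)\cdot 1=1$. The rule for $\land$ follows immediately from the two hypotheses.

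\emph{Outcome conjunctions.} For $\bigoplus^o_{X\sim d(E)}\varphi$ with $\nu\triangleq d(\de{E}_\LExp(\Gamma))$, each $\calR_i$ has branch witnesses $(\calR_{i,v})_v$ with $\bigoplus_{v\sim\nu}\calR_{i,v}\preceq\calR_i$. For each $v$, form $\calR'_v\triangleq\bigoplus_{i\sim\mu}\calR_{i,v}$, after reindexing to disjoint supports and extending the footprints to a common one. This gives $\Gamma[X\mapsto v],\calR'_v\vDash\varphi$ by convexity of $\varphi$. Regrouping and direct-sum monotonicity (\Cref{fig:resource-algebra-refinement-facts}(d)) give
\[\textstyle\bigoplus_{v\sim\nu}\calR'_v\cong\bigoplus_{i\sim\mu}\bigoplus_{v\sim\nu}\calR_{i,v}\preceq\bigoplus_{i\sim\mu}\calR_i.\]
For $o=\Priv$, the count marginal of $\calR'_v$ is $\bigoplus_{i\sim\mu}\Proj{\calR_{i,v}}{\bbN}=\bigoplus_{i\sim\mu}\Proj{\calR_i}{\bbN}$, which equals the marginal of the sum by (d). The $\bignd$ case is the same argument with $\Leak$ mode. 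The only difference is that each branch $i$ may use its own distribution $\nu_i$, so we mix them into $\nu\triangleq\bigoplus_{i\sim\mu}\nu_i$. Each combined branch $\calR'_v$ then mixes only over those $i$ with $v\in\supp(\nu_i)$, reweighted by $\mu(i)\nu_i(v)/\nu(v)$. This is where convexity of $\varphi$ is needed.

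\emph{Separating conjunction.} This is the main obstacle. Take the rule with $\convex{\varphi}$, $\precise{\psi}$, $\Stable{\weak}{\psi}$; the mirrored rule follows by commutativity of $\sep$. Each $\calR_i$ has a split $\calR_{i,1}\otimes\calR_{i,2}\preceq\calR_i$ with $\calR_{i,1}\vDash\varphi$ and $\calR_{i,2}\vDash\psi$. The plan is to replace every $\calR_{i,2}$ by one shared resource and absorb all count information into the left factor.
\begin{enumerate}
\item Apply weak stability to $\calR_{i,2}$ to get $\calR^\memfn_{i,2}\otimes\calR^\cntfn_{i,2}\preceq\calR_{i,2}$, where the memory part has zero count and satisfies $\psi$.
\item Let $\calQ^0_\psi$ be the stable core from \Cref{lem:index-precise-stable-core}; it is independent of $i$. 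Minimality gives $\calQ^0_\psi\preceq\calR^\memfn_{i,2}$.
\item Set $\calL_i\triangleq\calR_{i,1}\otimes\calR^\cntfn_{i,2}$. This has the same memory footprint as $\calR_{i,1}$, and \Cref{lem:index-assertion-frame-preservation} gives $\calL_i\vDash\varphi$.
\item By commutativity, associativity and monotonicity of $\otimes$, obtain $\calL_i\otimes\calQ^0_\psi\preceq\calR_i$.
\item Take $\calL\triangleq\bigoplus_{i\sim\mu}\calL_i$ after normalizing footprints and supports. Convexity of $\varphi$ gives $\calL\vDash\varphi$.
\item Distributivity of $\otimes$ over $\oplus$ (\Cref{fig:resource-algebra-refinement-facts}(c)) gives $\calL\otimes\calQ^0_\psi\cong\bigoplus_{i\sim\mu}(\calL_i\otimes\calQ^0_\psi)\preceq\bigoplus_{i\sim\mu}\calR_i$, which witnesses $\varphi\sep\psi$.
\end{enumerate}
The delicate points are the footprint normalization of the $\calL_i$: the $V_{\calR_{i,1}}$ may differ across $i$, so we pad to their union inside $V$, disjoint from $V_{\calQ^0_\psi}$. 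We also need that this padding preserves satisfaction and refinement, as was done in \Cref{lem:index-weak-stability-fragment}. This is exactly the step the counterexample in the main text shows must fail without precision and stability: a shared $\psi$-witness is required.
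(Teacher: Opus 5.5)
Your proposal is correct and follows essentially the same route as the paper: the same direct measure argument for $\sure{P}$, regrouping branch witnesses by outcome value (with the mixed distribution $\bigoplus_{i\sim\mu}\nu_i$ and conditional reweighting for $\bignd$) for the outcome conjunctions, and for $\sep$ the same replacement of every precise--stable witness by the single stable core of \Cref{lem:index-precise-stable-core}, with each count-only component absorbed into the convex side via \Cref{lem:index-assertion-frame-preservation}, followed by convexity and distributivity of $\otimes$ over $\oplus$. The only cosmetic difference is that you pad the convex-side footprints to the union of the $V_{\calR_{i,1}}$ rather than to $V\setminus V_{\calQ^0_\psi}$. This works equally well, since that union lies inside $V$ and is disjoint from $V_{\calQ^0_\psi}$.
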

\begin{proof}
  We prove the convexity rules in \Cref{fig:assertion-property-rules}
  directly: fix a countable index
  set $I$, a distribution $\mu\in\D(I)$, a finite footprint $V$, and a
  pairwise-disjoint family $(\calR_i)_{i\in I}$ such that
  $V_{\calR_i}=V$ for every $i\in\supp(\mu)$.  Write
  $\calR^\oplus\triangleq\bigoplus_{i\sim\mu}\calR_i$. The cases for $\top$,
  $\bot$, and $\varphi \land \psi$ are immediate.
  \begin{itemize}
    \item \emph{Case $\sure{P}$.}
    Suppose $\Gamma,\calR_i\vDash\sure{P}$ for every $i\in\supp(\mu)$.  Let
    $E_i\triangleq
    \memfn_{\calR_i}^{-1}(\sem{P}_\Gamma^{V_{\calR_i}})$.
    Then $E_i\in\calF_{\calR_i}$ and $\mu_{\calR_i}(E_i)=1$ on every
    positive branch.  In $\calR^\oplus$, the corresponding event is the
    disjoint union of the $E_i$, and its measure is
    \[
      \sum_{i\in\supp(\mu)}
      \mu(i)\cdot\mu_{\calR_i}(E_i)
      =
      \sum_{i\in\supp(\mu)}\mu(i)
      =1
    \]
    Hence $\Gamma,\calR^\oplus\vDash\sure{P}$.

    \item \emph{Case $\varphi\sep\psi$, where $\precise{\varphi}$,
      $\Stable{\weak}{\varphi}$, and $\convex{\psi}$.}
    Suppose $\Gamma,\calR_i\vDash\varphi\sep\psi$ for every
    $i\in\supp(\mu)$.  Unfolding satisfaction, choose witnesses
    $\calR_{\varphi,i},\calR_{\psi,i}$ such that
    \[
      \calR_{\varphi,i}\otimes\calR_{\psi,i}\preceq\calR_i,
      \qquad
      \Gamma,\calR_{\varphi,i}\vDash\varphi,
      \qquad
      \Gamma,\calR_{\psi,i}\vDash\psi
      \quad(i\in\supp(\mu))
    \]
    Since $\varphi$ is satisfiable under $\Gamma$,
    \Cref{lem:index-precise-stable-core} gives a stable core
    $\calQ_\varphi^0$ satisfying $\varphi$ and minimal among stable
    $\varphi$-resources:
    \begin{equation}
      \forall\calQ.\;
      \Gamma,\calQ\vDash\varphi
      \land
      \Stable{\weak}{\calQ}
      \implies
      \calQ_\varphi^0\preceq\calQ
      \label{eq:index-sep-convex-stable-minimal}
    \end{equation}

    For each $i\in\supp(\mu)$, apply $\Stable{\weak}{\varphi}$ to
    $\calR_{\varphi,i}$ and choose
    $\calR_{\varphi,i}^\memfn,\calR_{\varphi,i}^\cntfn$ such that
    $\calR_{\varphi,i}^\memfn\otimes\calR_{\varphi,i}^\cntfn
    \preceq\calR_{\varphi,i}$,
    $\Gamma,\calR_{\varphi,i}^\memfn\vDash\varphi$, and
    $\Stable{\weak}{\calR_{\varphi,i}^\memfn}$.  The second component is
    count-only:
    $V_{\calR_{\varphi,i}^\cntfn}=\emptyset$ and
    $\memfn_{\calR_{\varphi,i}^\cntfn}(\omega)=\emptyset$ for every
    $\omega$.
    By \eqref{eq:index-sep-convex-stable-minimal}, we get that
    $\calQ_\varphi^0\preceq\calR_{\varphi,i}^\memfn$ for every
    $i\in\supp(\mu)$.

    Put $W\triangleq V\setminus V_{\calQ_\varphi^0}$.  For each
    $i\in\supp(\mu)$, the count-only frame
    $\calR_{\psi,i}\otimes\calR_{\varphi,i}^\cntfn$ still satisfies $\psi$ by
    \Cref{lem:index-assertion-frame-preservation}.  Extend these resources
    to the common footprint $W$ and relocate them to pairwise-disjoint supports;
    write the normalized resource as $\calQ_{\psi,i}$.  These transformations
    preserve satisfaction, so $\Gamma,\calQ_{\psi,i}\vDash\psi$ for every
    $i\in\supp(\mu)$.
    By $\convex{\psi}$,
    \[
      \calQ_\psi^\oplus\triangleq\bigoplus_{i\sim\mu}\calQ_{\psi,i}
      \qquad\text{satisfies}\qquad
      \Gamma,\calQ_\psi^\oplus\vDash\psi
    \]

    Finally,
    \[
      \calQ_\varphi^0\otimes\calQ_\psi^\oplus
      \cong
        \bigoplus_{i\sim\mu}
        \left(\calQ_\varphi^0\otimes\calQ_{\psi,i}\right)
      \preceq
        \bigoplus_{i\sim\mu}
        \left(
          \calR_{\varphi,i}^\memfn
          \otimes
          \calR_{\varphi,i}^\cntfn
          \otimes
          \calR_{\psi,i}
        \right)
      \preceq
        \bigoplus_{i\sim\mu}
        \left(\calR_{\varphi,i}\otimes\calR_{\psi,i}\right)
      \preceq
        \calR^\oplus
      \]
    The witnesses $\calQ_\varphi^0$ and $\calQ_\psi^\oplus$ therefore show
    $\Gamma,\calR^\oplus\vDash\varphi\sep\psi$.

    The rule with the precise stable side on the right is the same argument
    with the roles of $\varphi$ and $\psi$ swapped.

    \item \emph{Case $\bigoplus^o_{X\sim d(E)}\varphi$, for
      $o\in\{\Leak,\Priv\}$.}
    Suppose
    $\Gamma,\calR_i\vDash\bigoplus^o_{X\sim d(E)}\varphi$ for every
    $i\in\supp(\mu)$.  Put
    $\alpha\triangleq d(\de{E}_{\LExp}(\Gamma))$.  Unfolding satisfaction,
    choose, for each $i\in\supp(\mu)$, branch resources
    $(\calR_{i,v})_{v\in\supp(\alpha)}$ such that
    \begin{align}
      \bigoplus_{v\sim\alpha}\calR_{i,v}
      &\preceq
        \calR_i
        \label{eq:index-constructor-indexed-oplus-refines}\\
      \Gamma[X\mapsto v],\calR_{i,v}
      &\vDash
        \varphi
        \qquad(v\in\supp(\alpha))
        \label{eq:index-constructor-indexed-oplus-branches}\\
      \Obs_o^\alpha(\calR_i,(\calR_{i,v})_{v\in\supp(\alpha)})
      &\text{ holds}
        \label{eq:index-constructor-indexed-oplus-obs}
    \end{align}
    Extend the resources $\calR_{i,v}$ to the common outer footprint $V$ and
    relocate them to pairwise-disjoint supports indexed by pairs $(i,v)$; call
    the resulting resources $\calQ_{i,v}$.  For each $v\in\supp(\alpha)$, put
    \[
      \calQ_v^\oplus\triangleq\bigoplus_{i\sim\mu}\calQ_{i,v}
    \]
    By $\convex{\varphi}$ and the branch facts above,
    $\Gamma[X\mapsto v],\calQ_v^\oplus\vDash\varphi$ for every
    $v\in\supp(\alpha)$.  Direct-sum regrouping gives
    \begin{align}
      \bigoplus_{v\sim\alpha}\calQ_v^\oplus
      &\cong
        \bigoplus_{i\sim\mu}
        \left(\bigoplus_{v\sim\alpha}\calQ_{i,v}\right)
        \tag{regrouping of direct sums}\\
      &\preceq
        \calR^\oplus 
        \tag{by \eqref{eq:index-constructor-indexed-oplus-refines} and
        monotonicity}
    \end{align}
    If $o=\Leak$, the observation condition is immediate.  If $o=\Priv$,
    \eqref{eq:index-constructor-indexed-oplus-obs} gives
    $\Proj{\calR_{i,v}}{\bbN}
    =
    \Proj{\calR_i}{\bbN}$ for all $i \in \supp(\mu)$ and $v \in \supp(\alpha)$.
    Thus, for every $v\in\supp(\alpha)$,
    \[
      \Proj{\calQ_v^\oplus}{\bbN}
      =
      \bigoplus_{i\sim\mu}\Proj{\calR_{i,v}}{\bbN}
      =
      \bigoplus_{i\sim\mu}\Proj{\calR_i}{\bbN}
      =
      \Proj{\calR^\oplus}{\bbN}
    \]
    Hence the witnesses $(\calQ_v^\oplus)_{v\in\supp(\alpha)}$ prove
    $\Gamma,\calR^\oplus\vDash\bigoplus^o_{X\sim d(E)}\varphi$.

    \item \emph{Case $\bignd_{X\in E}\varphi$.}
    Suppose $\Gamma,\calR_i\vDash\bignd_{X\in E}\varphi$ for every
    $i\in\supp(\mu)$.  Put $A\triangleq\de{E}_{\LExp}(\Gamma)$.  For each
    $i\in\supp(\mu)$, unfold nondeterministic choice to obtain a distribution
    $\alpha_i\in\D(A)$ and branches
    $(\calR_{i,v})_{v\in\supp(\alpha_i)}$ witnessing
    $\Gamma,\calR_i\vDash\bigoplus^\Leak_{X\sim\alpha_i}\varphi$.
    Let
    \[
      \alpha(v)\triangleq
      \sum_{i\in\supp(\mu)} \mu(i)\cdot \alpha_i(v)
      \qquad(v\in A)
    \]
    For $v\in\supp(\alpha)$, let $\mu_v$ be the conditional distribution on
    outer branches, proportional to $\mu(i)\cdot\alpha_i(v)$.  After extending
    and relocating the pair-indexed resources as in the previous case, define
    \[
      \calQ_v^\oplus\triangleq
      \bigoplus_{i\sim\mu_v}\calQ_{i,v}
    \]
    By $\convex{\varphi}$, each $\calQ_v^\oplus$ satisfies
    $\varphi$ under $\Gamma[X\mapsto v]$.  The flattening/regrouping identity
    for direct sums gives
    \begin{align}
      \bigoplus_{v\sim\alpha}\calQ_v^\oplus
      &\cong
        \bigoplus_{i\sim\mu}
        \left(\bigoplus_{v\sim\alpha_i}\calQ_{i,v}\right)
        \tag{direct-sum flattening}\\
      &\preceq
        \calR^\oplus
    \end{align}
    Since leaky observation imposes no side condition, the witnesses
    $(\calQ_v^\oplus)_{v\in\supp(\alpha)}$ show
    $\Gamma,\calR^\oplus\vDash\bigoplus^\Leak_{X\sim\alpha}\varphi$.  Since
    $\alpha\in\D(A)$, this folds to
    $\Gamma,\calR^\oplus\vDash\bignd_{X\in E}\varphi$.
  \end{itemize}
\end{proof}

\begin{lemma}[Pure assertions transfer across refinement]
  \label{lem:index-pure-assertion-refinement-transfer}
  Let $P$ be a pure assertion and define
  \[
    B_P^\Gamma
    \triangleq
    \{(\sigma,k)\in\Mem\times\bbN
    \mid
    \ForgetMem{\sigma}{\Var(P)}
    \in
    \sem{P}_\Gamma^{\Var(P)}\}
  \]
  as the event in which the concrete memory satisfies $P$ on the footprint of
  $P$.  If
  \[
    \Gamma,\calR\vDash\sure{P},
    \qquad
    \calR\preceq\mu,
  \]
  then $\mu(B_P^\Gamma)=1$.
\end{lemma}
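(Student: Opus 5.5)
The plan is to push the almost-sure event through the coupling $\lambda$ that witnesses $\calR\preceq\mu$. Let $A\triangleq\memfn_\calR^{-1}(\sem{P}_\Gamma^{V_\calR})$. From $\Gamma,\calR\vDash\sure{P}$ we know $A\in\calF_\calR$ and $\mu_\calR(A)=1$. Unfolding $\calR\preceq\mu$ gives a coupling $\lambda\in\calD(\Omega_\calR\times(\Mem\times\bbN))$ whose first marginal agrees with $\mu_\calR$ on $\calF_\calR$, whose second marginal is $\mu$, and which is supported on pairs $(\omega,(\sigma,n))$ with $\memfn_\calR(\omega)\sqsubseteq\ForgetMem{\sigma}{V_\calR}$. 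The first-marginal condition applied to $A$ gives $\lambda(A\times(\Mem\times\bbN))=\mu_\calR(A)=1$. Hence $\lambda$ is concentrated on $A\times(\Mem\times\bbN)$ intersected with its support.

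The key pointwise claim is then: if $\lambda(\omega,(\sigma,n))>0$ and $\omega\in A$, then $(\sigma,n)\in B_P^\Gamma$. Given the claim, $\lambda(\Omega_\calR\times B_P^\Gamma)\ge\lambda(A\times(\Mem\times\bbN))=1$, and the second-marginal condition yields $\mu(B_P^\Gamma)=1$.

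To prove the claim, note that $\omega\in A$ means $\memfn_\calR(\omega)\in\sem{P}_\Gamma^{V_\calR}$, while the support condition gives $\memfn_\calR(\omega)\sqsubseteq\ForgetMem{\sigma}{V_\calR}$. The natural tool is \Cref{lem:index-pure-assertion-footprint}, but it assumes $\Var(P)$ lies inside the footprint. That is the main obstacle: $\memfn_\calR(\omega)$ lives in $\Mem[V_\calR]$, possibly with domain strictly smaller than $V_\calR$, and $\Var(P)$ need not be contained in $V_\calR$.

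I would handle this with an upward-monotonicity fact for pure assertions, proved by a short structural induction on $P$ in the style of \Cref{lem:index-pure-assertion-footprint}: if $\tau\sqsubseteq\tau'$ and $\Gamma,\tau\vDash P$, then $\Gamma,\tau'\vDash P$. The interesting cases are as follows.
\begin{itemize}
\item For $e\mapsto E$, satisfaction in $\tau$ forces every variable of $e$ into $\dom(\tau)$, so $e$ evaluates identically in $\tau'$. This relies on expression evaluation being defined only when its variables are present.
\item For $P_1\sep P_2$, the same disjoint submemories still sit below $\tau'$.
\end{itemize}

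Apply this fact with $\tau=\memfn_\calR(\omega)$ and $\tau'=\ForgetMem{\sigma}{\Var(P)\cup\dom(\tau)}$. Satisfaction only depends on variables in $\Var(P)$ (by the same induction), so we can then project to $\ForgetMem{\sigma}{\Var(P)}$ using \Cref{lem:index-pure-assertion-footprint} with $T=\Var(P)$. This gives $\ForgetMem{\sigma}{\Var(P)}\in\sem{P}_\Gamma^{\Var(P)}$, i.e.\ $(\sigma,n)\in B_P^\Gamma$.

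If the footprint lemma's convention requires $\Var(P)\subseteq V_\calR$, I would instead use the intuitionistic reading directly and treat the monotonicity lemma as the single auxiliary fact needed. The rest is measure bookkeeping on the countable space.
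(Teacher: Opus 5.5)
Your proposal is correct and follows the paper's proof: the same coupling $\lambda$ witnessing $\calR\preceq\mu$, the same pointwise implication from the $P$-event of $\calR$ to $B_P^\Gamma$ via the support condition, and the same marginal bookkeeping giving $\mu(B_P^\Gamma)=1$. The only difference is the order of the two pure-assertion steps. You apply pure monotonicity first and then project within $\Var(P)\cup\dom(\memfn_\calR(\omega))$. The paper instead projects first with \Cref{lem:index-pure-assertion-footprint}, and that requires $\Var(P)\subseteq V_\calR$, an inclusion the paper asserts but which is not explicit in the stated $\sure{P}$ clause. Your ordering is therefore slightly more robust.
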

\begin{proof}
  Let $\Omega_\mu\triangleq\supp(\mu)$, and let $\lambda$ witness
  $\calR\preceq\mu$.  Define
  \[
    E_P
    \triangleq
    \{\omega\in\Omega_\calR
    \mid
    \memfn_\calR(\omega)\in\sem{P}_\Gamma^{V_\calR}\}
  \]
  as the event in which the resource memory satisfies $P$.  Unfolding
  $\Gamma,\calR\vDash\sure{P}$ gives $\mu_\calR(E_P)=1$.

  We first prove the pointwise implication from the resource event to the
  concrete event.  Suppose
  $\omega\in E_P$
  and
  $\lambda(\omega,(\sigma,k))>0$
  The support condition in $\calR\preceq\mu$ gives
  $\memfn_\calR(\omega)\sqsubseteq\ForgetMem{\sigma}{V_\calR}$.  The
  footprint requirement in the satisfaction clause for $\sure{P}$ gives
  $\Var(P)\subseteq V_\calR$.  Hence we derive
  \begin{align*}
    &\memfn_\calR(\omega)
      \in
      \sem{P}_\Gamma^{V_\calR}\\
    &\Longrightarrow
      \ForgetMem{\memfn_\calR(\omega)}{\Var(P)}
      \in
      \sem{P}_\Gamma^{\Var(P)}
      \tag{\Cref{lem:index-pure-assertion-footprint}}\\
    &\Longrightarrow
      \ForgetMem{\sigma}{\Var(P)}
      \in
      \sem{P}_\Gamma^{\Var(P)}
      \tag{pure monotonicity and $\memfn_\calR(\omega)
      \sqsubseteq\ForgetMem{\sigma}{V_\calR}$}
  \end{align*}
  Thus $(\sigma,k)\in B_P^\Gamma$, and so
  \[
    \omega\in E_P
    \;\land\;
    \lambda(\omega,(\sigma,k))>0
    \quad\Longrightarrow\quad
    (\sigma,k)\in B_P^\Gamma
  \]
  Therefore
  \begin{align*}
    \lambda(E_P\times(\Omega_\mu\setminus B_P^\Gamma))
    &=
      \sum_{\omega\in E_P}
      \sum_{(\sigma,k)\in\Omega_\mu\setminus B_P^\Gamma}
      \lambda(\omega,(\sigma,k))
      \tag{definition of product-set mass}\\
    &=
      0
      \tag{pointwise implication above}
  \end{align*}
  By the marginal equations for $\lambda$,
  \begin{align*}
    \mu(B_P^\Gamma)
    &=
      \lambda(\Omega_\calR\times(\Omega_\mu\cap B_P^\Gamma))
      \tag{right marginal of $\lambda$}\\
    &\ge
      \lambda(E_P\times(\Omega_\mu\cap B_P^\Gamma))
      \tag{monotonicity, since $E_P\subseteq\Omega_\calR$}\\
    &=
      \lambda(E_P\times\Omega_\mu)
      -
      \lambda(E_P\times(\Omega_\mu\setminus B_P^\Gamma))
      \tag{disjoint decomposition of $E_P\times\Omega_\mu$}\\
    &=
      \lambda(E_P\times\Omega_\mu)
      \tag{$\lambda(E_P\times(\Omega_\mu\setminus B_P^\Gamma))=0$}\\
    &=
      \mu_\calR(E_P)
      \tag{left marginal of $\lambda$}\\
    &=
      1 .
      \tag{satisfaction clause for $\sure{P}$}
  \end{align*}
  Since $\mu$ is a probability distribution, $\mu(B_P^\Gamma)=1$.
\end{proof}

\begin{lemma}[Guard assertions transfer across refinement]
  \label{lem:index-guard-assertion-refinement-transfer}
  Consider some expression $e$. Let $c\in\{\fls,\tru\}$ and define
  $G_c\triangleq
  \{(\sigma,k)\in\Mem\times\bbN
  \mid \de{e}_{\Exp}(\sigma)=c\}$.
  If
  \[
    \Gamma,\calR\vDash\sure{e\mapsto c},
    \qquad
    \calR\preceq\mu,
  \]
  then $\mu(G_c)=1$.
\end{lemma}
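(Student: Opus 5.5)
The plan is to reduce this to \Cref{lem:index-pure-assertion-refinement-transfer}, taking the pure assertion $P \triangleq (e \mapsto c)$, where $c$ is read as the logical constant value $c \in \Val$. That lemma immediately gives $\mu(B_P^\Gamma) = 1$, with
\[
  B_P^\Gamma = \{(\sigma,k) \mid \ForgetMem{\sigma}{\Var(P)} \in \sem{e\mapsto c}_\Gamma^{\Var(P)}\}.
\]
All that remains is to show $B_P^\Gamma \subseteq G_c$, or at least $\mu(B_P^\Gamma \setminus G_c) = 0$. Monotonicity of $\mu$ then yields $\mu(G_c) \ge \mu(B_P^\Gamma) = 1$, and hence $\mu(G_c) = 1$.

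For the inclusion, I would unfold the points-to semantics from \Cref{fig:pure-assertion-semantics}. A memory $\tau$ satisfies $e \mapsto c$ exactly when $\de{e}_\Exp(\tau) = \de{c}_\LExpr(\Gamma) = c$. Now take $(\sigma,k) \in B_P^\Gamma$ and set $\tau \triangleq \ForgetMem{\sigma}{\Var(P)}$. Then $\de{e}_\Exp(\tau) = c$. Since $\Var(P)$ contains every program variable read by $e$, evaluating $e$ in $\tau$ agrees with evaluating it in $\sigma$. This is the same argument used in the points-to case of \Cref{lem:index-pure-assertion-footprint}, and it gives $\de{e}_\Exp(\sigma) = c$, i.e.\ $(\sigma,k) \in G_c$.

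The main obstacle is this expression-agreement step, because expression evaluation on partial memories needs care: if some variable of $e$ is undefined in $\tau$, the evaluation could be ill-defined. I would handle it by relying on the convention, already implicit in \Cref{lem:index-pure-assertion-footprint}, that evaluation depends only on variables in $\Var(e) \subseteq \Var(P)$. When $\de{e}_\Exp(\tau) = c$ holds with those variables defined in $\tau$, the relation $\tau \sqsubseteq \sigma$ guarantees that $\sigma$ agrees with $\tau$ on them, so the two evaluations coincide.

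If a direct argument is preferred instead, I would replicate the coupling proof from \Cref{lem:index-pure-assertion-refinement-transfer}. Let $\lambda$ witness $\calR \preceq \mu$ and let $E$ be the probability-one event on which $\memfn_\calR(\omega)$ satisfies $e \mapsto c$. Every pair $(\omega,(\sigma,k))$ in the support of $\lambda$ with $\omega \in E$ has $\memfn_\calR(\omega) \sqsubseteq \ForgetMem{\sigma}{V_\calR}$, which forces $\de{e}_\Exp(\sigma) = c$. The marginal equations for $\lambda$ then give $\mu(G_c) \ge \mu_\calR(E) = 1$.
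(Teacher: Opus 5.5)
Your proposal is correct and matches the paper's proof. Both apply \Cref{lem:index-pure-assertion-refinement-transfer} with $P \triangleq e\mapsto c$, then identify the resulting event $B_{e\mapsto c}^\Gamma$ with $G_c$ using the fact that evaluation of $e$ depends only on $\Var(e)$. The paper asserts $B_{e\mapsto c}^\Gamma = G_c$, whereas you prove only $B_{e\mapsto c}^\Gamma \subseteq G_c$, which already suffices by monotonicity; your alternative coupling argument just inlines the proof of that lemma.
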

\begin{proof}
  Apply \Cref{lem:index-pure-assertion-refinement-transfer} to
  $P\triangleq e\mapsto c$.  The resulting event is
  \[
    B_{e\mapsto c}^\Gamma
    =
    \{(\sigma,k)\in\Mem\times\bbN
    \mid
    \ForgetMem{\sigma}{\Var(e)}
    \in
    \sem{e\mapsto c}_\Gamma^{\Var(e)}\}
  \]
  By the definition of $\sem{e\mapsto c}_\Gamma^{\Var(e)}$ and the fact that
  evaluation of $e$ depends only on $\Var(e)$, this event is exactly $G_c$.
  Hence $\mu(G_c)=1$.
\end{proof}

\begin{lemma}[Satisfaction Monotonicity]
  \label{lem:index-satisfaction-monotonicity}
  If $\Gamma,\calR\vDash\varphi$ and $\calR\preceq\calR'$, then
  $\Gamma,\calR'\vDash\varphi$.
\end{lemma}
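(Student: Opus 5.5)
We argue by structural induction on $\varphi$, generalizing over $\Gamma$, $\calR$, and $\calR'$. The cases $\top$, $\bot$, $\varphi_1\land\varphi_2$, $\varphi_1\lor\varphi_2$, and $\exists X.\,\varphi$ follow directly from the induction hypotheses, since their satisfaction clauses place no constraint on the resource.

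\emph{Case $\varphi=\bigoplus^o_{X\sim d(E)}\psi$.} The branch witnesses $(\calR_v)$ satisfy $\bigoplus_{v\sim\mu}\calR_v\preceq\calR$. Transitivity of $\preceq$ (a preorder, group (b)) then gives $\bigoplus_{v\sim\mu}\calR_v\preceq\calR'$, and we reuse the same branch witnesses. For the mode side condition with $o=\Priv$, we use the fact from group (b) that refinement preserves count marginals: $\Proj{\calR}{\bbN}=\Proj{\calR'}{\bbN}$. Hence $\Proj{\calR_v}{\bbN}=\Proj{\calR'}{\bbN}$ for every branch. The case $\bignd_{X\in E}\psi$ reduces to the $\Leak$ case with the same distribution. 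The case $\varphi_1\sep\varphi_2$ is analogous: a witness $\calR_1\otimes\calR_2\preceq\calR$ transfers to $\calR'$ by transitivity, with no appeal to the induction hypothesis. So the combinator cases never touch the witnesses themselves; they only compose refinements.

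\emph{Case $\sure{P}$.} This is the main obstacle, because it is the only case that inspects $\memfn$ and $\calF$ directly, and $\calR'$ may have a larger footprint $S'\supseteq S=V_\calR$. Let $h:\Omega_{\calR'}\to\Omega_\calR$ witness the refinement, and let $E=\memfn_\calR^{-1}(\sem{P}_\Gamma^{S})$, which is measurable with $\mu_\calR(E)=1$. Then $h^{-1}(E)\in\calF_{\calR'}$ and $\mu_{\calR'}(h^{-1}(E))=1$. We need $h^{-1}(E)\subseteq E'\triangleq\memfn_{\calR'}^{-1}(\sem{P}_\Gamma^{S'})$. Suppose $\omega\in h^{-1}(E)$. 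Then $\memfn_\calR(h(\omega))\in\sem{P}_\Gamma^S$ and $\memfn_\calR(h(\omega))\sqsubseteq\ForgetMem{\memfn_{\calR'}(\omega)}{S}\sqsubseteq\memfn_{\calR'}(\omega)$. Pure assertions are upward closed under $\sqsubseteq$: for points-to this holds because the variables of $e$ are already defined, and for $\sep$ it holds by the $\uplus\sqsubseteq\sigma$ clause. Alternatively, one can use \Cref{lem:index-pure-assertion-footprint} after projecting to $\Var(P)$. Either way, this gives $\omega\in E'$.

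Since $E'\supseteq h^{-1}(E)$ and the complement of $E'$ lies inside the null set $\Omega_{\calR'}\setminus h^{-1}(E)$, completeness of $\calP_{\calR'}$ makes $E'$ measurable with measure $1$. We prove the required upward-closure of pure assertions as a small auxiliary induction on $P$, mirroring the "pure monotonicity" step already invoked in \Cref{lem:index-pure-assertion-refinement-transfer}.
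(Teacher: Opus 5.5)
Your proof is correct and takes the same route as the paper's. It is a structural induction. The $\bigoplus^o$, $\bignd_{X\in E}$, and $\sep$ cases reuse the existing witnesses via transitivity of $\preceq$, plus preservation of count marginals for the $\Priv$ side condition. The $\sure{P}$ case pulls the probability-one event back along the refinement map $h$.

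Your $\sure{P}$ case is more careful than the paper's one-line version. You make explicit that $h^{-1}(E)$ is only contained in the $P$-event of $\calR'$, so upward closure of pure assertions under $\sqsubseteq$ and completeness of $\calP_{\calR'}$ are what finish the argument. Your auxiliary induction on $P$ is the right tool for this; the footprint lemma you mention as an alternative would not suffice alone, since $\memfn_\calR(h(\omega))$ may have a strictly smaller domain.
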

\begin{proof}
  By induction on the structure of $\varphi$. The cases for $\top$, $\bot$,
  $\varphi_1 \land \varphi_2$, and $\varphi_1 \lor \varphi_2$ are immediate.
  \begin{itemize}
    \item \emph{Case $\varphi=\bigoplus^o_{X\sim d(E)}\psi$.}
    Suppose
    $\Gamma,\calR\vDash\bigoplus^o_{X\sim d(E)}\psi$ and
    $\calR\preceq\calR'$.  Let $\mu=d(\de{E}_{\LExp}(\Gamma))$.  By
    unfolding satisfaction, choose witnesses
    $(\calR_v)_{v\in\supp(\mu)}$ such that
    $\bigoplus_{v\sim\mu}\calR_v\preceq\calR$ and
    \[
      \forall v\in\supp(\mu).\;
      \Gamma[X\mapsto v],\calR_v\vDash\psi,
    \]
    and
    $\Obs_o^\mu(\calR,(\calR_v)_{v\in\supp(\mu)})$.
    By transitivity of $\preceq$,
    $\bigoplus_{v\sim\mu}\calR_v\preceq\calR'$.
    If $o=\Leak$, the mode side condition is trivial.  If $o=\Priv$, then
    $\forall v\in\supp(\mu).\;
    \Proj{\calR_v}{\bbN}=\Proj{\calR}{\bbN}$
    and $\calR\preceq\calR'$ implies
    $\Proj{\calR}{\bbN}=\Proj{\calR'}{\bbN}$ by
    \Cref{fig:resource-algebra-refinement-facts}.  Hence
    \[
      \forall v\in\supp(\mu).\;
      \Proj{\calR_v}{\bbN}=\Proj{\calR'}{\bbN}
    \]
    so $\Obs_o^\mu(\calR',(\calR_v)_{v\in\supp(\mu)})$.
    Therefore
    $\Gamma,\calR'\vDash\bigoplus^o_{X\sim d(E)}\psi$.

    \item \emph{Case $\varphi=\bignd_{X\in E}\psi$.}
    Use the same distribution witnessing the nondeterministic choice and the
    previous case for $\bigoplus^\Leak$.

    \item \emph{Case $\varphi=\sure{P}$.}
    Let $h:\Omega_{\calR'}\to\Omega_\calR$ witness
    $\calR\preceq\calR'$.  The almost-sure event for $P$ in $\calR$ pulls
    back along $h$ to a measurable event of measure one in $\calR'$, so
    $\Gamma,\calR'\vDash\sure{P}$.

    \item \emph{Case $\varphi=\varphi_1\sep\varphi_2$.}
    Choose witnesses $\calR_1,\calR_2$ such that
    \[
      \calR_1\otimes\calR_2\preceq\calR,
      \qquad
      \Gamma,\calR_1\vDash\varphi_1,
      \qquad
      \Gamma,\calR_2\vDash\varphi_2
    \]
    By transitivity of $\preceq$, the same witnesses satisfy
    $\calR_1\otimes\calR_2\preceq\calR'$, hence
    $\Gamma,\calR'\vDash\varphi_1\sep\varphi_2$.
  \end{itemize}
\end{proof}

\section{Entailment and Assertion Algebra}

\subsection{Probabilistic and Nondeterministic Assertion}

\begin{lemma}[Basic entailment laws for outcome assertions]
  \label{lem:index-basic-entailment-laws}
  \label{lem:index-binary-probabilistic-commutativity}
  \label{lem:index-binary-probabilistic-associativity}
  \label{lem:index-nondeterministic-idempotence}
  \label{lem:index-nondeterministic-singleton-introduction}
  \label{lem:index-nondeterministic-support-membership}
  \label{lem:index-nondeterministic-support-weakening}
  \label{lem:index-probabilistic-choice-alpha-renaming}
  \label{lem:index-binary-nondeterministic-commutativity}
  \label{lem:index-binary-nondeterministic-associativity}
  \label{lem:index-binary-nondeterministic-distributivity}
  \label{lem:index-binary-probabilistic-zero}
  \label{lem:index-binary-probabilistic-one}
  The entailment rules in \Cref{fig:basic-entailment-laws} are valid.
\end{lemma}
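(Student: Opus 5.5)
We verify each law in \Cref{fig:basic-entailment-laws} semantically, one at a time, against \Cref{def:assertion-entailment}. In every case we fix $\Gamma$ and a satisfying resource, unfold the relevant clause of \Cref{fig:assertion-semantics}, and rebuild witnesses for the right-hand side. Three tools from earlier do most of the work. The first is the direct-sum laws of \Cref{fig:resource-algebra-refinement-facts}(d): regrouping, the Dirac case $\bigoplus_{i\sim\delta_k}\calR_i\cong\calR_k$, branch projection, and monotonicity. The second is transitivity of $\preceq$. The third is \Cref{lem:index-satisfaction-monotonicity}. A general principle handles the $\Priv$ side condition: whenever we regroup branches, count marginals of direct sums are mixtures of the branch marginals, by (d). If every branch already has the ambient marginal $\Proj{\calR}{\bbN}$, then so does every regrouped sub-mixture.

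Several laws are immediate from the definitions. \emph{Alpha-renaming} holds because only $\Gamma[X\mapsto v]$ versus $\Gamma[Y\mapsto v]$ changes, and $Y\notin\fv(\varphi)$. \emph{Commutativity} of $\oplus^o_p$ is the bijection $v\mapsto 1-v$ on $\supp(\bern{p})$, which swaps which tagged disjunct holds. For $\oplus_0$ and $\oplus_1$, the support is a singleton, so the direct sum is isomorphic to its unique branch. The $\Priv$ condition then says that branch has the ambient count marginal, and the branch refines $\calR$. The disjunct tagged $\sure{X=1}\sep\varphi$ reduces to $\varphi$ because the pure comparison part is satisfied by the unit resource $\UnitP$. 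The converse direction uses $\calR\cong\bigoplus_{\delta}\calR$ together with the fact that $\sure{X=1}\sep\varphi$ holds whenever $\varphi$ does, taking $\UnitP\otimes\calR\cong\calR$. The nondeterministic laws unfold $\bignd$ as $\bigoplus^\Leak$ over some $\mu$. \emph{Support weakening} ($E\subseteq E'$) pushes $\mu$ forward along the inclusion. \emph{Singleton introduction} takes $\mu=\delta_{E}$. \emph{Idempotence} ($X\notin\fv(\varphi)$, $\convex{\varphi}$) applies convexity to the branches, which requires first normalizing them to a common footprint and disjoint supports as in \Cref{lem:index-assertion-convexity-constructors}, and then uses monotonicity. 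For $\bignd_{X\in E}\sure{x\mapsto X}\vdash\sure{x\mapsto E}$, each branch satisfies $x\mapsto v$ almost surely with $v$ in the list denoted by $E$. Reading $x\mapsto E$ as membership, the union event has measure $\sum\mu(v)\cdot 1=1$ in the direct sum, as in the $\sure{P}$ convexity case, and it transfers along $\preceq$.

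The laws for binary $\andop$ (commutativity, associativity, and distribution over $\oplus^o_p$) are handled in the same way. Commutativity and associativity reduce to regrouping or swapping the nested index sets. Distributivity pushes the shared $\vartheta$-component into each branch. If $\andop$ is the binary nondeterministic choice, its witnessing distribution is combined with $\bern{p}$ via (d)'s flattening isomorphism. The $\Priv$ mode is preserved by the mixture principle above.

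The main obstacle is \emph{associativity} of $\oplus^o_p$: $(\varphi\oplus^o_p\psi)\oplus^o_q\vartheta\dashv\vdash\varphi\oplus^o_{pq}(\psi\oplus^o_{r}\vartheta)$ with $r=(1-p)q/(1-pq)$. The plan is as follows. Unfold the outer and inner Bernoulli witnesses to obtain branch resources $\calR_\varphi,\calR_\psi,\calR_\vartheta$ with weights $pq$, $(1-p)q$, and $1-q$. Use the flattening isomorphism in (d) to express the whole thing as a three-way direct sum. Then regroup it as $\calR_\varphi\oplus_{pq}(\calR_\psi\oplus_r\calR_\vartheta)$; the weight identities are routine arithmetic, and the hypothesis $pq<1$ makes $r$ well defined. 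Degenerate supports, such as $p\in\{0,1\}$ or $q\in\{0,1\}$, need separate handling by the $\oplus_0$ and $\oplus_1$ laws. The delicate part is the $\Priv$ mode. The inner side condition for the left-hand side only says that $\calR_\varphi$ and $\calR_\psi$ match the marginal of their mixture $\calR_1$, and the outer condition says $\calR_1$ and $\calR_\vartheta$ match the marginal of $\calR$. Chaining these gives all three branches the marginal of $\calR$, and the new inner mixture $\calR_\psi\oplus_r\calR_\vartheta$ inherits it. The new inner witness resource must be this mixture itself rather than some coarser resource, since the $\Priv$ condition is relative to the resource it is stated for; equality of count marginals across $\preceq$, from (b), then closes the gap. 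The reverse direction is symmetric.
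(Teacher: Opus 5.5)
Apart from one item, your plan matches the paper's proof. The paper also goes law by law, unfolds witnesses, and regroups direct sums via \Cref{fig:resource-algebra-refinement-facts}(d). For associativity of $\oplus^o_p$ it chains the count marginals and uses the regrouped mixture itself as the new inner witness, exactly as you propose.

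The gap is distributivity, $\andop{(\varphi\oplus^o_p\psi)}{\vartheta}\dashv\vdash(\andop{\varphi}{\vartheta})\oplus^o_p(\andop{\psi}{\vartheta})$. Here $\andop{\varphi}{\psi}$ is indeed binary nondeterministic choice: the paper unfolds it as $\varphi\oplus_q\psi$ for some $q\in[0,1]$.
\begin{itemize}
\item \emph{Left to right.} Unfolding yields a single $\vartheta$-witness $\calR_\vartheta$ of weight $1-q$. The right side needs two $\vartheta$-pieces, of weights $p(1-q)$ and $(1-p)(1-q)$, in different outer branches. The flattening and regrouping isomorphisms only rebracket existing branches; they never split one. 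Refinement cannot split it either: $\preceq$ is witnessed by a deterministic measure-preserving map $h$. So the preimage of the first outer branch is a union of atoms of $\calR$, and it must have mass exactly $p$.
\item \emph{Right to left.} The two $\andop$-witnesses carry independent weights $q_1,q_2$ and unrelated $\vartheta$-resources. Regrouping them into a single outer $\andop$ with inner weight $p$ needs $q_1=q_2$, and nothing provides that.
\end{itemize}

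No argument can close this, because the law fails in this model. Let $\varphi\triangleq\sure{x\mapsto 0}$, $\psi\triangleq\sure{x\mapsto 1}$, $\vartheta\triangleq\sure{x\mapsto 2}$ and $p=\frac12$. Take all scheduler counts to be $0$, so the $\Priv$ side conditions hold trivially.
\begin{itemize}
\item \emph{Against right to left.} The resource with two atoms of mass $\frac12$, carrying $x=0$ and $x=2$, satisfies $(\andop{\varphi}{\vartheta})\oplus_{\frac12}(\andop{\psi}{\vartheta})$, using weight $1$ in the first $\andop$ and weight $0$ in the second. But $\andop{(\varphi\oplus_{\frac12}\psi)}{\vartheta}$ with outer weight $q>0$ forces mass $q/2$ on $x=1$. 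With $q=0$ it forces $x=2$ almost surely. Neither holds.
\item \emph{Against left to right.} The discrete resource with atoms $x=0$, $x=1$, $x=2$ of masses $\frac14,\frac14,\frac12$ satisfies the left side with $q=\frac12$. Now take any $\calQ_1\oplus_{\frac12}\calQ_0\preceq\calR$ witnessing the right side. Here $\calQ_1$ carries $x\in\{0,2\}$ and $\calQ_0$ carries $x\in\{1,2\}$ almost surely, and $h$ can only forget memory. So the $x=0$ atom must land in $\calQ_1$ and the $x=1$ atom in $\calQ_0$. Hence $\Omega_{\calQ_1}$ pulls back to mass $\frac14$ or $\frac34$, never $\frac12$.
\end{itemize}

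The paper's own proof of this item makes the same move: it speaks of ``coalescing the two copies of the $\vartheta$ branch'', and its converse tacitly assumes a common weight. So the defect is in the statement, not specific to your plan. This law must be dropped or weakened; the remaining laws go through as you outline.
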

\begin{proof}
  We prove the rules in the order shown in \Cref{fig:basic-entailment-laws}.
  \begin{enumerate}[label=\textup{(\arabic*)}]
    \item
    $\inferrule
    {}
    {\textstyle
      \varphi \oplus^o_p \psi
      \dashv\vdash
      \psi \oplus^o_{1-p} \varphi}$.

    Let $\beta_p,\beta_{1-p}\in\calD(\{L,R\})$ be given by
    \[
      \beta_p(L)=p,
      \quad
      \beta_p(R)=1-p,
      \qquad
      \beta_{1-p}(L)=1-p,
      \quad
      \beta_{1-p}(R)=p
    \]
    We prove $\varphi \oplus^o_p \psi
    \vdash
    \psi \oplus^o_{1-p}\varphi$; the converse is identical with $p$ replaced
    by $1-p$.

    Fix $\Gamma,\calR$ and assume
    $\Gamma,\calR\vDash\varphi\oplus^o_p\psi$.
    Choose witnesses $\calR_L,\calR_R$ such that
    \[
      \bigoplus_{i\sim\beta_p}\calR_i\preceq\calR,\qquad
      \Gamma,\calR_L\vDash\varphi,\qquad
      \Gamma,\calR_R\vDash\psi,\qquad
      \Obs_o^{\beta_p}(\calR,(\calR_i)_{i\in\supp(\beta_p)})
    \]
    Define $\calR'_L\triangleq\calR_R$ and
    $\calR'_R\triangleq\calR_L$.  Then
    $\Gamma,\calR'_L\vDash\psi$ and
    $\Gamma,\calR'_R\vDash\varphi$.  The branch swap gives
    \[
      \bigoplus_{i\sim\beta_{1-p}}\calR'_i
      \cong
      \bigoplus_{i\sim\beta_p}\calR_i
      \preceq
      \calR
    \]
    If $o=\Leak$, the mode side condition is trivial.  If $o=\Priv$, then
    the observation condition gives
    $\Proj{\calR_L}{\bbN}
    =
    \Proj{\calR}{\bbN}
    =
    \Proj{\calR_R}{\bbN}$, and therefore
    $\Proj{\calR'_L}{\bbN}
    =
    \Proj{\calR}{\bbN}
    =
    \Proj{\calR'_R}{\bbN}$.  Thus
    \[
      \Obs_o^{\beta_{1-p}}
      (\calR,(\calR'_i)_{i\in\supp(\beta_{1-p})})
    \]
    Therefore $\Gamma,\calR\vDash\psi\oplus^o_{1-p}\varphi$.

    \item
    $\inferrule
    {pq<1}
    {\textstyle
      (\varphi \oplus^o_p \psi) \oplus^o_q \vartheta
      \dashv\vdash
      \varphi \oplus^o_{pq}
      (\psi \oplus^o_{(1-p)q/(1-pq)} \vartheta)}$.

    Assume $pq<1$ and write
    $r\triangleq\frac{(1-p)q}{1-pq}$.

    For the left-to-right entailment, fix $\Gamma,\calR$ and assume
    $\Gamma,\calR
    \vDash
    (\varphi\oplus^o_p\psi)\oplus^o_q\vartheta$.
    By unfolding the outer choice, choose
    $\calR_{\varphi\psi},\calR_\vartheta$ such that
    \[
      \calR_{\varphi\psi}\oplus_q\calR_\vartheta\preceq\calR,\qquad
      \Gamma,\calR_{\varphi\psi}\vDash\varphi\oplus^o_p\psi,\qquad
      \Gamma,\calR_\vartheta\vDash\vartheta
    \]
    By unfolding the inner choice, choose $\calR_\varphi,\calR_\psi$ such that
    \[
      \calR_\varphi\oplus_p\calR_\psi\preceq\calR_{\varphi\psi},\qquad
      \Gamma,\calR_\varphi\vDash\varphi,\qquad
      \Gamma,\calR_\psi\vDash\psi
    \]
    We derive the required refinement by regrouping the same three weighted
    branches:
    \[
      \calR_\varphi
      \oplus_{pq}
      (\calR_\psi\oplus_r\calR_\vartheta)
      \preceq
      (\calR_\varphi\oplus_p\calR_\psi)\oplus_q\calR_\vartheta
      \preceq
      \calR_{\varphi\psi}\oplus_q\calR_\vartheta
      \preceq
      \calR
    \]
    The right inner branch is witnessed by
    $\calR_\psi,\calR_\vartheta$.  By the branch obligations above and
    reflexivity of $\preceq$,
    \[
      \Gamma,\calR_\psi\vDash\psi,
      \qquad
      \Gamma,\calR_\vartheta\vDash\vartheta,
      \qquad
      \calR_\psi\oplus_r\calR_\vartheta
      \preceq
      \calR_\psi\oplus_r\calR_\vartheta
    \]
    If $o=\Leak$, it is trivial.  If $o=\Priv$, the private premises give
    \[
      \Proj{\calR_\varphi}{\bbN}
      =
      \Proj{\calR_\psi}{\bbN}
      =
      \Proj{\calR_{\varphi\psi}}{\bbN}
      =
      \Proj{\calR_\vartheta}{\bbN}
      =
      \Proj{\calR}{\bbN}
    \]
    Therefore both the inner and outer private mode side conditions on the
    right-hand side hold.  Thus
    $\Gamma,\calR
    \vDash
    \varphi\oplus^o_{pq}(\psi\oplus^o_r\vartheta)$.

    For the right-to-left entailment, fix $\Gamma,\calR$ and assume
    $\Gamma,\calR
    \vDash
    \varphi\oplus^o_{pq}(\psi\oplus^o_r\vartheta)$.
    Unfolding gives resources $\calR_\varphi$ and $\calR_{\psi\vartheta}$ with
    \[
      \calR_\varphi\oplus_{pq}\calR_{\psi\vartheta}\preceq\calR,\qquad
      \Gamma,\calR_\varphi\vDash\varphi,\qquad
      \Gamma,\calR_{\psi\vartheta}\vDash\psi\oplus^o_r\vartheta
    \]
    Unfolding the right inner choice gives $\calR_\psi,\calR_\vartheta$ with
    \[
      \calR_\psi\oplus_r\calR_\vartheta\preceq\calR_{\psi\vartheta},\qquad
      \Gamma,\calR_\psi\vDash\psi,\qquad
      \Gamma,\calR_\vartheta\vDash\vartheta
    \]
    Since $(1-pq)r=(1-p)q$ and $1-pq-(1-p)q=1-q$, the flattened three-way
    mixture has weights
    \[
      pq,
      \qquad
      (1-p)q,
      \qquad
      1-q
    \]
    Regrouping these weights gives
    \[
      (\calR_\varphi\oplus_p\calR_\psi)\oplus_q\calR_\vartheta
      \preceq
      \calR_\varphi\oplus_{pq}(\calR_\psi\oplus_r\calR_\vartheta)
      \preceq
      \calR_\varphi\oplus_{pq}\calR_{\psi\vartheta}
      \preceq
      \calR
    \]
    The left inner branch is witnessed by $\calR_\varphi,\calR_\psi$.  By the
    branch obligations above and reflexivity of $\preceq$,
    \[
      \Gamma,\calR_\varphi\vDash\varphi,
      \qquad
      \Gamma,\calR_\psi\vDash\psi,
      \qquad
      \calR_\varphi\oplus_p\calR_\psi
      \preceq
      \calR_\varphi\oplus_p\calR_\psi
    \]
    If $o=\Leak$, the mode side condition is trivial.  If $o=\Priv$, the
    private premises imply
    \[
      \Proj{\calR_\varphi}{\bbN}
      =
      \Proj{\calR_\psi}{\bbN}
      =
      \Proj{\calR_\vartheta}{\bbN}
      =
      \Proj{\calR}{\bbN},
    \]
    so the inner and outer private mode side conditions on the left-hand side
    hold. Therefore
    $\Gamma,\calR
    \vDash
    (\varphi\oplus^o_p\psi)\oplus^o_q\vartheta$.

    \item
    $\inferrule
    {X\notin\fv(\varphi) \and \convex{\varphi}}
    {\textstyle
      \bignd_{X\in E}\varphi \vdash \varphi}$.

    Assume $X\notin\fv(\varphi)$ and $\convex{\varphi}$.
    Fix $\Gamma,\calR$ and assume $\Gamma,\calR\vDash\bignd_{X\in E}\varphi$.
    Unfolding nondeterministic choice, choose
    $\mu\in\calD(\de{E}_{\LExp}(\Gamma))$ such that
    $\Gamma,\calR\vDash\bigoplus^\Leak_{X\sim\mu}\varphi$.  Unfolding this
    leaky probabilistic choice, choose witnesses $(\calR_v)_{v\in\supp(\mu)}$
    such that
    \[
      \bigoplus_{v\sim\mu}\calR_v
      \preceq
      \calR,
      \qquad
      \Gamma[X\mapsto v],\calR_v
      \vDash
      \varphi
      \quad(v\in\supp(\mu))
    \]
    Since $X\notin\fv(\varphi)$,
    $\Gamma,\calR_v\vDash\varphi$ for every $v\in\supp(\mu)$.  By
    $\convex{\varphi}$,
    $\Gamma,\bigoplus_{v\sim\mu}\calR_v\vDash\varphi$.  Since
    $\bigoplus_{v\sim\mu}\calR_v\preceq\calR$,
    \Cref{lem:index-satisfaction-monotonicity} gives
    $\Gamma,\calR\vDash\varphi$.

    \item
    $\inferrule
    {}
    {\textstyle
      \varphi[E/X] \vdash \bignd_{X\in\{E\}}\varphi}$.

    Fix $\Gamma,\calR$ and assume $\Gamma,\calR\vDash\varphi[E/X]$.  Put
    $e\triangleq\de{E}_{\LExp}(\Gamma)$.  By
    \Cref{lem:index-assertion-logical-substitution} we get
    $\Gamma[X\mapsto e],\calR\vDash\varphi$.
    In $\bignd_{X\in\{E\}}\varphi$, the support expression $\{E\}$ has exactly
    one choice under $\Gamma$, namely $e$.  Hence the point mass
    $\delta_e\in\calD(\{e\})$ witnesses the nondeterministic choice, and
    $\supp(\delta_e)=\{e\}$.  Define the branch family on this support by
    $\calR_e\triangleq\calR$.  Then
    \[
      \bigoplus_{v\sim\delta_e}\calR_v
      =
      \bigoplus_{v\sim\delta_e}\calR
      \cong
      \calR
      \preceq
      \calR,
      \qquad
      \Gamma[X\mapsto e],\calR_e\vDash\varphi
    \]
    The observation side condition
    $\Obs_\Leak^{\delta_e}(\calR,(\calR_v)_{v\in\supp(\delta_e)})$ is
    $\top$ by definition.  Therefore the satisfaction clause gives
    $\Gamma,\calR\vDash\bigoplus^\Leak_{X\sim\delta_e}\varphi$, and so
    $\Gamma,\calR\vDash\bignd_{X\in\{E\}}\varphi$.

    \item
    $\inferrule
    {}
    {\textstyle
      \bignd_{X\in E}\sure{x\mapsto X}
      \vdash
      \sure{x\mapsto E}}$.

    Fix $\Gamma,\calR$ and assume
    $\Gamma,\calR\vDash\bignd_{X\in E}\sure{x\mapsto X}$. Let
    $A\triangleq\de{E}_{\LExp}(\Gamma)$.  Unfolding nondeterministic choice,
    choose $\mu\in\calD(A)$ such that
    $\Gamma,\calR\vDash\bigoplus^\Leak_{X\sim\mu}\sure{x\mapsto X}$.
    Unfolding this probabilistic choice, choose
    $(\calR_v)_{v\in\supp(\mu)}$ such that, for
    $\calR_\oplus\triangleq\bigoplus_{v\sim\mu}\calR_v$,
    \[
      \calR_\oplus\preceq\calR,
      \qquad
      \Gamma[X\mapsto v],\calR_v\vDash\sure{x\mapsto X}
      \quad(v\in\supp(\mu))
    \]
    We first show
    $\Gamma,\calR_\oplus\vDash\sure{x\mapsto E}$.  For each
    $v\in\supp(\mu)$, define
    $A_v\triangleq
    \memfn_{\calR_v}^{-1}
    \left(\sem{x\mapsto X}_{\Gamma[X\mapsto v]}^{V_{\calR_v}}\right)$
    and
    $B_v\triangleq
    \memfn_{\calR_v}^{-1}
    \left(\sem{x\mapsto E}_\Gamma^{V_{\calR_v}}\right)$.
    The branch satisfaction gives $A_v\in\calF_{\calR_v}$ and
    $\mu_{\calR_v}(A_v)=1$.  If $\omega\in A_v$, then
    \[
      \de{x}_{\Exp}(\memfn_{\calR_v}(\omega))
      =
      \de{X}_{\LExp}(\Gamma[X\mapsto v])
      =
      v
      \in
      A
      =
      \de{E}_{\LExp}(\Gamma)
    \]
    By the support-valued reading of $x\mapsto E$, this means
    $\omega\in B_v$.  Hence $A_v\subseteq B_v$.  Since $\calR_v$ is complete
    and $B_v^c\subseteq A_v^c$ is contained in a null event,
    $B_v\in\calF_{\calR_v}$ and $\mu_{\calR_v}(B_v)=1$.

    Let
    $B_\oplus\triangleq
    \memfn_{\calR_\oplus}^{-1}
    (\sem{x\mapsto E}_\Gamma^{V_{\calR_\oplus}})$.
    The direct sum inherits memory observations branchwise, so
    $B_\oplus\cap\Omega_{\calR_v}=B_v$ for every $v\in\supp(\mu)$.  Thus
    $B_\oplus\in\calF_{\calR_\oplus}$ by the direct-sum event-space
    definition, and
    \[
      \mu_{\calR_\oplus}(B_\oplus)
      =
      \sum_{v\in\supp(\mu)}
      \mu(v)\cdot\mu_{\calR_v}(B_v)
      =
      1
    \]
    Hence $\Gamma,\calR_\oplus\vDash\sure{x\mapsto E}$.  Since
    $\calR_\oplus\preceq\calR$, \Cref{lem:index-satisfaction-monotonicity}
    gives $\Gamma,\calR\vDash\sure{x\mapsto E}$.

    \item
    $\inferrule
    {E\subseteq E'}
    {\textstyle
      \bignd_{X\in E}\varphi \vdash \bignd_{X\in E'}\varphi}$.

    The side condition means
    $\de{E}_{\LExp}(\Gamma)\subseteq\de{E'}_{\LExp}(\Gamma)$ for every
    $\Gamma$.  Fix $\Gamma,\calR$ and assume
    $\Gamma,\calR\vDash\bignd_{X\in E}\varphi$.  Unfolding nondeterministic
    choice, choose $\mu\in\calD(A)$, where
    $A\triangleq\de{E}_{\LExp}(\Gamma)$, such that
    $\Gamma,\calR\vDash\bigoplus^\Leak_{X\sim\mu}\varphi$.

    Put $A'\triangleq\de{E'}_{\LExp}(\Gamma)$. By the premise we get that
    $A\subseteq A'$.  Define the zero extension $\mu'\in\calD(A')$ by
    \[
      \mu'(v)
      \triangleq
      \begin{cases}
        \mu(v), & v\in A,\\
        0, & v\in A'\setminus A .
      \end{cases}
    \]
    Then $\supp(\mu')=\supp(\mu)$.  Unfold
    $\Gamma,\calR\vDash\bigoplus^\Leak_{X\sim\mu}\varphi$ and choose branch
    witnesses $(\calR_v)_{v\in\supp(\mu)}$ such that
    \[
      \bigoplus_{v\sim\mu}\calR_v\preceq\calR,
      \qquad
      \Gamma[X\mapsto v],\calR_v\vDash\varphi
      \quad(v\in\supp(\mu))
    \]
    Since $\supp(\mu')=\supp(\mu)$ and $\mu'$ agrees with $\mu$ on this
    support, we get that
    $\bigoplus_{v\sim\mu'}\calR_v
    =
    \bigoplus_{v\sim\mu}\calR_v
    \preceq
    \calR$.
    The branch satisfaction obligations are the same, and
    $\Obs_\Leak^{\mu'}(\calR,(\calR_v)_{v\in\supp(\mu')})=\top$ by
    definition.  Hence
    $\Gamma,\calR\vDash\bigoplus^\Leak_{X\sim\mu'}\varphi$ with
    $\mu'\in\calD(A')$, so
    $\Gamma,\calR\vDash\bignd_{X\in E'}\varphi$.

    \item
    $\inferrule
    {Y\notin\fv(\varphi)}
    {\textstyle
      \bigoplus_{X\sim d(E)}^o\varphi
      \dashv\vdash
      \bigoplus_{Y\sim d(E)}^o\varphi[Y/X]}$.

    Fix $\Gamma,\calR$ and put
    $\mu\triangleq d(\de{E}_{\LExp}(\Gamma))$.  Unfolding probabilistic choice,
    first assume $\Gamma,\calR\vDash\bigoplus_{X\sim d(E)}^o\varphi$.  Choose
    witnesses $(\calR_v)_{v\in\supp(\mu)}$ such that
    \[
      \bigoplus_{v\sim\mu}\calR_v\preceq\calR,
      \qquad
      \Gamma[X\mapsto v],\calR_v\vDash\varphi
      \quad(v\in\supp(\mu)),
    \]
    and $\Obs_o^\mu(\calR,(\calR_v)_{v\in\supp(\mu)})$.  Since
    $Y\notin\fv(\varphi)$, \Cref{lem:index-assertion-logical-substitution}
    gives
    \[
      \Gamma[Y\mapsto v],\calR_v\vDash\varphi[Y/X]
      \qquad(v\in\supp(\mu))
    \]
    The branch refinement and the mode side condition are unchanged, so the
    same witnesses prove
    $\Gamma,\calR\vDash\bigoplus_{Y\sim d(E)}^o\varphi[Y/X]$.

    Conversely, assume
    $\Gamma,\calR\vDash\bigoplus_{Y\sim d(E)}^o\varphi[Y/X]$.  Since the
    distribution expression $d(E)$ is evaluated in the outer environment in
    both assertions, the branch distribution is again $\mu$.  Choose witnesses
    $(\calR_v)_{v\in\supp(\mu)}$ such that
    \[
      \bigoplus_{v\sim\mu}\calR_v\preceq\calR,
      \qquad
      \Gamma[Y\mapsto v],\calR_v\vDash\varphi[Y/X]
      \quad(v\in\supp(\mu)),
    \]
    and $\Obs_o^\mu(\calR,(\calR_v)_{v\in\supp(\mu)})$.  The reverse
    direction of \Cref{lem:index-assertion-logical-substitution} gives
    \[
      \Gamma[X\mapsto v],\calR_v\vDash\varphi
      \qquad(v\in\supp(\mu))
    \]
    Therefore the same branch family satisfies the resource refinement, branch
    satisfaction, and mode side condition for
    $\Gamma,\calR\vDash\bigoplus_{X\sim d(E)}^o\varphi$.

    \item
    $\inferrule
    {}
    {\textstyle
      \andop{\varphi}{\psi} \dashv\vdash \andop{\psi}{\varphi}}$.

    We prove $\andop{\varphi}{\psi}\vdash\andop{\psi}{\varphi}$; the converse
    is symmetric.  Fix $\Gamma,\calR$ and assume
    $\Gamma,\calR\vDash\andop{\varphi}{\psi}$.  Choose $p\in[0,1]$ such that
    $\Gamma,\calR\vDash\varphi\oplus_p\psi$.  By the probabilistic
    commutativity case above,
    $\Gamma,\calR\vDash\psi\oplus_{1-p}\varphi$.  Folding the satisfaction
    clause for $\&$ gives $\Gamma,\calR\vDash\andop{\psi}{\varphi}$.

    \item
    $\inferrule
    {}
    {\textstyle
      \andop{\varphi}{(\andop{\psi}{\vartheta})}
      \dashv\vdash
      \andop{(\andop{\varphi}{\psi})}{\vartheta}}$.

    For the left-to-right entailment, fix $\Gamma,\calR$ and assume
    $\Gamma,\calR\vDash\andop{\varphi}{(\andop{\psi}{\vartheta})}$.  Unfolding
    the outer $\&$, choose $a\in[0,1]$ such that
    $\Gamma,\calR\vDash\varphi\oplus_a(\andop{\psi}{\vartheta})$.
    Unfolding this binary probabilistic assertion and then the right branch
    $\&$, choose $b\in[0,1]$ such that the same witnesses fold to
    $\Gamma,\calR\vDash\varphi\oplus_a(\psi\oplus_b\vartheta)$.

    Let $q\triangleq a+(1-a)b$.

    If $q=0$, then $a=0$ and $b=0$.  Hence
    \begin{align}
      \Gamma,\calR
      \vDash
      \varphi\oplus_a(\psi\oplus_b\vartheta)
      &\implies
        \Gamma,\calR\vDash\vartheta
        \tag{zero endpoint twice}\\
      &\implies
        \Gamma,\calR\vDash\andop{(\andop{\varphi}{\psi})}{\vartheta} .
        \tag{satisfaction clause for $\&$}
    \end{align}

    If $a=1$, then
    \begin{align}
      \Gamma,\calR
      \vDash
      \varphi\oplus_a(\psi\oplus_b\vartheta)
      &\implies
        \Gamma,\calR\vDash\varphi
        \tag{one endpoint}\\
      &\implies
        \Gamma,\calR\vDash\andop{(\andop{\varphi}{\psi})}{\vartheta} .
        \tag{satisfaction clause for $\&$}
    \end{align}

    It remains to consider the case $q\ne0$ and $a\ne1$.  Define
    $p\triangleq\frac{a}{q}$.
    Then $pq=a$, and therefore $pq<1$ because $a\ne1$.  Moreover,
    $1-pq=1-a$, and
    \[
      (1-p)q
      =
      q-pq
      =
      q-a
      =
      a+(1-a)b-a
      =
      (1-a)b
    \]
    Since $a\ne1$, we have $1-pq=1-a\ne0$, so
    \[
      \frac{(1-p)q}{1-pq}
      =
      \frac{(1-a)b}{1-a}
      =
      b
    \]
    By the probabilistic associativity case above,
    $\Gamma,\calR\vDash
    (\varphi\oplus_p\psi)\oplus_q\vartheta$.
    Folding the inner and outer satisfaction clauses for $\&$ gives
    $\Gamma,\calR\vDash\andop{(\andop{\varphi}{\psi})}{\vartheta}$.

    For the right-to-left entailment, fix $\Gamma,\calR$ and assume
    $\Gamma,\calR\vDash\andop{(\andop{\varphi}{\psi})}{\vartheta}$.
    Unfolding the outer $\&$, choose $q\in[0,1]$ such that
    $\Gamma,\calR\vDash(\andop{\varphi}{\psi})\oplus_q\vartheta$.
    Unfolding this binary probabilistic assertion and then the left branch
    $\&$, choose $p\in[0,1]$ such that the same witnesses fold to
    $\Gamma,\calR\vDash(\varphi\oplus_p\psi)\oplus_q\vartheta$.

    If $pq=1$, then $p=1$ and $q=1$.  Hence
    $\Gamma,\calR\vDash
    (\varphi\oplus_1\psi)\oplus_1\vartheta$.
    By two applications of the one endpoint case,
    $\Gamma,\calR\vDash\varphi$.  Applying the one endpoint case in the reverse
    direction gives
    $\Gamma,\calR\vDash\varphi\oplus_1(\andop{\psi}{\vartheta})$.
    Folding the outer satisfaction clause for $\&$ yields
    $\Gamma,\calR\vDash\andop{\varphi}{(\andop{\psi}{\vartheta})}$.

    Otherwise let
    $r\triangleq\frac{(1-p)q}{1-pq}$.  By the probabilistic associativity case
    above,
    $\Gamma,\calR\vDash
    \varphi\oplus_{pq}(\psi\oplus_r\vartheta)$.
    Folding the inner and outer satisfaction clauses for $\&$ gives
    $\Gamma,\calR\vDash\andop{\varphi}{(\andop{\psi}{\vartheta})}$.

    \item
    $\inferrule
    {}
    {\textstyle
      \andop{(\varphi \oplus^o_p \psi)}{\vartheta}
      \dashv\vdash
      (\andop{\varphi}{\vartheta}) \oplus^o_p
      (\andop{\psi}{\vartheta})}$.

    Unfold the satisfaction clause for $\andop{-}{-}$.  The two entailments
    are obtained by the same direct-sum regrouping.  In the left-to-right
    direction, if $q$ witnesses the outer $\&$, then after unfolding the
    left branch $\varphi\oplus^o_p\psi$ we have resources
    $\calR_\varphi,\calR_\psi,\calR_\vartheta$ with
    \[
      (\calR_\varphi\oplus_p\calR_\psi)\oplus_q\calR_\vartheta
      \preceq
      \calR
    \]
    By associativity of weighted direct sums, coalescing the two copies of the
    $\vartheta$ branch, this gives
    \[
      (\calR_\varphi\oplus_q\calR_\vartheta)
      \oplus_p
      (\calR_\psi\oplus_q\calR_\vartheta)
      \preceq
      \calR
    \]
    The left branch satisfies $\andop{\varphi}{\vartheta}$, witnessed by
    $\calR_\varphi,\calR_\vartheta$ and weight $q$; the right branch satisfies
    $\andop{\psi}{\vartheta}$, witnessed by
    $\calR_\psi,\calR_\vartheta$ and the same weight $q$.  Thus
    $\Gamma,\calR\vDash
    (\andop{\varphi}{\vartheta})\oplus^o_p
    (\andop{\psi}{\vartheta})$.

    Conversely, unfold the right-hand side, then unfold the two $\&$ branch
    witnesses.  Regrouping the resulting weighted direct sum in the reverse
    direction gives a witness for an outer $\&$ whose left branch is
    $\varphi\oplus^o_p\psi$ and whose right branch is $\vartheta$.  Hence
    $\Gamma,\calR\vDash
    \andop{(\varphi\oplus^o_p\psi)}{\vartheta}$.

    The observation side condition is vacuous for $o=\Leak$.  For $o=\Priv$,
    the regrouping preserves it because the branch count marginals required by
    the unfolded private choice are unchanged by associativity of weighted
    direct sums.

    \item
    $\inferrule
    {}
    {\textstyle
      \varphi\oplus_0\psi \dashv\vdash \psi}$.

    Let $\beta_0\in\calD(\{L,R\})$ be given by
    $
    \beta_0(i) \triangleq \begin{cases}
      0 & i = L\\
      1 & i = R
    \end{cases}
    $.
    Then $\supp(\beta_0)=\{R\}$, and $\varphi\oplus_0\psi$ is the binary
    instance whose $L$-branch is $\varphi$ and whose $R$-branch is $\psi$.

    We prove both entailments.
    \begin{itemize}
      \item Assume $\Gamma,\calR\vDash\varphi\oplus_0\psi$.  Unfolding
      satisfaction gives a family
      $(\calR_i)_{i\in\supp(\beta_0)}$, equivalently the single resource
      $\calR_R$, such that
      \[
        \bigoplus_{i\sim\beta_0}\calR_i\preceq\calR,
        \qquad
        \Gamma,\calR_R\vDash\psi
      \]
      Since $\supp(\beta_0)=\{R\}$, the definition of singleton direct sum gives
      $\calR_R
      \cong
      \bigoplus_{i\sim\beta_0}\calR_i
      \preceq
      \calR$.
      By \Cref{lem:index-satisfaction-monotonicity},
      $\Gamma,\calR\vDash\psi$.

      \item Assume $\Gamma,\calR\vDash\psi$.  Take the only supported branch
      witness to be $\calR_R\triangleq\calR$.  Then
      $\bigoplus_{i\sim\beta_0}\calR_i
      \cong
      \calR_R
      =
      \calR$
      and the only branch obligation is exactly
      $\Gamma,\calR_R\vDash\psi$.  Hence
      $\Gamma,\calR\vDash\varphi\oplus_0\psi$.
    \end{itemize}

    \item
    $\inferrule
    {}
    {\textstyle
      \varphi\oplus_1\psi \dashv\vdash \varphi}$.

    This is symmetric to the zero endpoint.  Replace $\beta_0,R,\psi$ by
    $\beta_1,L,\varphi$, where
    $\beta_1(i)\triangleq
    \begin{cases}
      1 & i = L\\
      0 & i = R
    \end{cases}$,
    so $\supp(\beta_1)=\{L\}$.
  \end{enumerate}
\end{proof}

\begin{lemma}[Entailment laws for separating conjunction]
  \label{lem:index-separation-entailment-laws}
  \label{lem:index-separating-assertion-monotonicity}
  \label{lem:index-separating-assertion-commutativity}
  \label{lem:index-separating-assertion-associativity}
  \label{lem:index-sure-separating-conjunction}
  \label{lem:index-sure-separating-unit-weakening}
  \label{lem:index-separating-assertion-weakening}
  The entailment rules in \Cref{fig:separation-entailment-laws} are valid.
\end{lemma}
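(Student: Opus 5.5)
We check the six rules of \Cref{fig:separation-entailment-laws} one at a time. Each follows by unfolding the satisfaction clause for $\sep$ and then appealing to the resource algebra facts of \Cref{fig:resource-algebra-refinement-facts}.

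\emph{Monotonicity.} Given witnesses $\calR_1\otimes\calR_2\preceq\calR$ with $\Gamma,\calR_1\vDash\varphi$ and $\Gamma,\calR_2\vDash\psi$, the premises $\varphi\vdash\varphi'$ and $\psi\vdash\psi'$ turn these into $\Gamma,\calR_1\vDash\varphi'$ and $\Gamma,\calR_2\vDash\psi'$. The same split then witnesses $\varphi'\sep\psi'$.

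\emph{Commutativity and associativity.} Swap or regroup the witnesses, using $\calR_1\otimes\calR_2\cong\calR_2\otimes\calR_1$ and the associativity isomorphism. Since $\cong$ implies $\preceq$ in both directions (fact (b)) and $\preceq$ is transitive, the refinement into $\calR$ is preserved. For associativity, one extra step is needed: the inner witness $\calR_2\otimes\calR_3$ satisfies $\psi\sep\theta$ by reflexivity.

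\emph{The law $\sure{P\sep Q}\dashv\vdash\sure{P}\sep\sure{Q}$.} This is the main obstacle.

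For $\vdash$, I would start from a resource with an almost-sure event $E$ for $P\sep Q$. First apply \Cref{lem:index-sure-coarsening} to obtain a zero-count memory part and a count-only part. Then use the canonical precise witnesses $\calR_P$ and $\calR_Q$ from \Cref{lem:index-assertion-precision-constructors}: these are trivial two-atom spaces on $\Var(P)$ and $\Var(Q)$, chosen so that the footprints are disjoint. The pointwise splitting of a memory satisfying $P\sep Q$ gives the two footprints. I would then form $\calR_P\otimes\calR_Q\otimes\calR^\cntfn$ and check that it refines $\calR$ via a map $h$ that sends $\omega$ to the pair of projections together with the count coordinate. Because the $\calR_P$ and $\calR_Q$ spaces are $0/1$-valued, the measure of products factors, exactly as in the coarsening proof. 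Measurability of preimages follows from completeness together with $\mu(E)=1$.

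The delicate point is that the footprint split is not canonical across outcomes. The split sets may vary with $\omega$, so I would instead take the union footprints and argue using the pure semantics: $\sigma_1\uplus\sigma_2\sqsubseteq\sigma$. If this fails, I would fall back to forming the product of the pure assertions on $V$ directly.

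For $\dashv$, I would take the product event $g(E_P\times E_Q)$. It has measure $1\cdot1$, and on it the memory $\memfn_1\uplus\memfn_2$ satisfies $P\sep Q$ by definition of the pure $\sep$. Pulling this back along the refinement map, as in \Cref{lem:index-satisfaction-monotonicity}, finishes the direction.

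\emph{Unit weakening $\sure{P}\vdash\sure{P}\sep\top$.} Take the split $\calR\otimes\UnitP\cong\calR$. The unit resource satisfies $\top$ trivially.

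\emph{Weakening $\varphi\sep\psi\vdash\varphi$.} I would use \Cref{lem:index-assertion-frame-preservation}, which gives $\Gamma,\calR_1\otimes\calR_2\vDash\varphi$, and then \Cref{lem:index-satisfaction-monotonicity} along $\calR_1\otimes\calR_2\preceq\calR$. I would not use $\calR_1\preceq\calR_1\otimes\calR_2$ directly, because the counts of $\calR_2$ could shift the count component.
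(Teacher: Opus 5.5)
Apart from one direction, your plan is the paper's proof. That covers witness reuse for monotonicity, commutativity and associativity; the product event $g(E_P\times E_Q)$ plus \Cref{lem:index-satisfaction-monotonicity} for $\sure{P}\sep\sure{Q}\vdash\sure{P * Q}$; $\UnitP$ for unit weakening; and frame preservation followed by monotonicity (for exactly the count-shift reason you give) for $\varphi\sep\psi\vdash\varphi$.

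The genuine gap is $\sure{P * Q}\vdash\sure{P}\sep\sure{Q}$, which you leave open. The product $\calR_P\otimes\calR_Q$ of the canonical witnesses exists only if their footprints $\Var(P)$ and $\Var(Q)$ are disjoint, and nothing lets you choose them so. Your fallbacks do not help either. $\sep$ needs two \emph{fixed} disjoint footprints shared by all outcomes, while the pointwise split of a $P * Q$-memory can move a variable from one side to the other between outcomes.

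In fact the step cannot be repaired without a further hypothesis. Take $P\triangleq(x\mapsto 1)\lor(y\mapsto 1)$ and $Q\triangleq(x\mapsto 2)\lor(y\mapsto 2)$. Let $\calR$ have two equally likely outcomes, the full $\sigma$-algebra, zero counts, and memories $\{x\mapsto 1, y\mapsto 2\}$ and $\{x\mapsto 2, y\mapsto 1\}$. Both memories satisfy $P * Q$, so $\Gamma,\calR\vDash\sure{P * Q}$. Now suppose $\calR_1\otimes\calR_2\preceq\calR$ with $\Gamma,\calR_1\vDash\sure{P}$ and $\Gamma,\calR_2\vDash\sure{Q}$. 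The product of their almost-sure events has probability one, so its preimage under the refinement map contains both outcomes. The first outcome then forces $x\in V_{\calR_1}$ and $y\in V_{\calR_2}$. The second forces $y\in V_{\calR_1}$, contradicting disjointness of the footprints.

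The paper argues this direction differently. It chooses separating witnesses $\sigma_P(\omega),\sigma_Q(\omega)$ pointwise and restricts them to $\Var(P)\cap V_\calR$ and $\Var(Q)\cap V_\calR$. It puts the $P$-side on the $\sigma$-algebra generated by the $P * Q$-event with zero counts, keeps the $Q$-side on $\calP_\calR$ with $\cntfn_\calR$, and uses the diagonal map. It also needs disjointness, and simply asserts that a nonempty full-measure $P * Q$-event forces it. The example above refutes that claim, so the rule needs a side condition such as $\Var(P)\cap\Var(Q)=\emptyset$.

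Under that condition your construction does close the gap. Take $h_P,h_Q$ to be the projections from the precision proof, with footprints intersected with $V_\calR$ if needed, and map $\omega\mapsto(h_P(\omega),h_Q(\omega),\omega)$ into $\calR_P\otimes\calR_Q\otimes\calR^\cntfn$. This map is measure-preserving because the first two factors are $0/1$-valued and the preimages of their probability-one events contain the full-measure $P * Q$-event (by monotonicity and locality of pure assertions, plus completeness). The memory condition holds because the two projections of $\memfn_\calR(\omega)$ are then disjoint. Compared with the paper's route, this keeps the counts in their own factor rather than on the $Q$ side and needs no per-outcome choice of witnesses.
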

\begin{proof}
  We prove the rules in the order shown in
  \Cref{fig:separation-entailment-laws}.
  \begin{enumerate}[label=\textup{(\arabic*)}]
    \item
    $\inferrule
    {\varphi \vdash \varphi' \and \psi \vdash \psi'}
    {\textstyle
      \varphi \sep \psi \vdash \varphi' \sep \psi'}$.

    Fix $\Gamma,\calR$ and assume $\Gamma,\calR\vDash\varphi\sep\psi$.
    Choose witnesses $\calR_\varphi,\calR_\psi$ such that
    \[
      \calR_\varphi\otimes\calR_\psi\preceq\calR,
      \qquad
      \Gamma,\calR_\varphi\vDash\varphi,
      \qquad
      \Gamma,\calR_\psi\vDash\psi
    \]
    The entailment premises give
    $\Gamma,\calR_\varphi\vDash\varphi'$ and
    $\Gamma,\calR_\psi\vDash\psi'$, so the same resources witness
    $\Gamma,\calR\vDash\varphi'\sep\psi'$.

    \item
    $\inferrule
    {}
    {\textstyle
      \varphi \sep \psi \dashv\vdash \psi \sep \varphi}$.

    This follows immediately from the commutativity of $\otimes$ and
    transitivity of $\preceq$

    \item
    $\inferrule
    {}
    {\textstyle
      (\varphi \sep \psi) \sep \theta
      \dashv\vdash
      \varphi \sep (\psi \sep \theta)}$.

    This follows immediately by the associativity of $\otimes$ and transitivity
    of $\preceq$.

    \item
    $\inferrule
    {}
    {\textstyle
      \sure{P * Q} \dashv\vdash \sure{P} \sep \sure{Q}}$.

    First assume $\Gamma,\calR\vDash\sure{P}\sep\sure{Q}$.  Choose witnesses
    $\calR_P,\calR_Q$ such that
    \[
      \calR_P\otimes\calR_Q\preceq\calR,
      \qquad
      \Gamma,\calR_P\vDash\sure{P},
      \qquad
      \Gamma,\calR_Q\vDash\sure{Q}
    \]
    Let $g$ be the product pairing for $\calR_P\otimes\calR_Q$, and define
    $E_P\triangleq
    \memfn_{\calR_P}^{-1}(\sem{P}_\Gamma^{V_{\calR_P}})$,
    $E_Q\triangleq
    \memfn_{\calR_Q}^{-1}(\sem{Q}_\Gamma^{V_{\calR_Q}})$, and
    $E_{P*Q}\triangleq
    \memfn_{\calR_P\otimes\calR_Q}^{-1}
    (\sem{P*Q}_\Gamma^{V_{\calR_P}\cup V_{\calR_Q}})$.  For
    $\omega\in\Omega_{\calR_P\otimes\calR_Q}$ with
    $g^{-1}(\omega)=(\omega_P,\omega_Q)$,
    \begin{align*}
      \omega\in E_{P*Q}
      &\Longleftrightarrow
        \memfn_{\calR_P}(\omega_P)\uplus\memfn_{\calR_Q}(\omega_Q)
        \in
        \sem{P*Q}_\Gamma^{V_{\calR_P}\cup V_{\calR_Q}}\\
      &\Longleftrightarrow
        \memfn_{\calR_P}(\omega_P)\in\sem{P}_\Gamma^{V_{\calR_P}}
        \land
        \memfn_{\calR_Q}(\omega_Q)\in\sem{Q}_\Gamma^{V_{\calR_Q}}\\
      &\Longleftrightarrow
        \omega\in g(E_P\times E_Q)
    \end{align*}
    Since $\mu_{\calR_P}(E_P)=\mu_{\calR_Q}(E_Q)=1$, it follows that
    $E_{P*Q}$ is measurable in $\calR_P\otimes\calR_Q$ and has measure $1$.
    Hence $\Gamma,\calR_P\otimes\calR_Q\vDash\sure{P*Q}$, and
    \Cref{lem:index-satisfaction-monotonicity} gives
    $\Gamma,\calR\vDash\sure{P*Q}$.

    Conversely, assume $\Gamma,\calR\vDash\sure{P*Q}$ and let
    $E_{P*Q}\triangleq
    \memfn_\calR^{-1}(\sem{P*Q}_\Gamma^{V_\calR})$.  This event is
    measurable and has measure $1$.  For $\omega\in E_{P*Q}$, choose
    separating witnesses $\sigma_P(\omega),\sigma_Q(\omega)$ for
    $\memfn_\calR(\omega)\in\sem{P*Q}_\Gamma^{V_\calR}$; outside $E_{P*Q}$,
    set both witnesses to $\emptyset$.  Put
    $V_P\triangleq\Var(P)\cap V_\calR$ and
    $V_Q\triangleq\Var(Q)\cap V_\calR$.  By the footprint and locality
    properties of pure assertions, the nonempty full-measure event $E_{P*Q}$
    forces $V_P\cap V_Q=\emptyset$, and the restrictions of the chosen
    witnesses to $V_P$ and $V_Q$ still satisfy $P$ and $Q$ on $E_{P*Q}$.

    Let $\calP_E$ be the coarsening of $\calP_\calR$ to the completion of the
    $\sigma$-algebra generated by $E_{P*Q}$, as in
    \Cref{lem:index-sure-coarsening}, and define
    \[
      \calR_P\triangleq
      \left\langle
        \calP_E, V_P,
        \omega\mapsto\ForgetMem{\sigma_P(\omega)}{V_P},
        \omega\mapsto 0
      \right\rangle,
      \qquad
      \calR_Q\triangleq
      \left\langle
        \calP_\calR, V_Q,
        \omega\mapsto\ForgetMem{\sigma_Q(\omega)}{V_Q},
        \cntfn_\calR
      \right\rangle
    \]
    The preceding paragraph gives full-measure inclusions into the
    $P$- and $Q$-satisfaction events for $\calR_P$ and $\calR_Q$; completeness
    of $\calP_E$ and $\calP_\calR$ therefore yields
    $\Gamma,\calR_P\vDash\sure{P}$ and
    $\Gamma,\calR_Q\vDash\sure{Q}$.

    Finally, let $g$ be the product pairing for
    $\calR_P\otimes\calR_Q$ and set
    $h(\omega)\triangleq g(\omega,\omega)$.  The coarsening makes this
    diagonal map measurable and measure-preserving.  On $E_{P*Q}$, its memory
    is below
    $\sigma_P(\omega)\uplus\sigma_Q(\omega)\sqsubseteq\memfn_\calR(\omega)$;
    off $E_{P*Q}$ it is empty, and its count is
    $0+\cntfn_\calR(\omega)=\cntfn_\calR(\omega)$.  Thus
    $h$ witnesses $\calR_P\otimes\calR_Q\preceq\calR$, so
    $\calR_P,\calR_Q$ witness
    $\Gamma,\calR\vDash\sure{P}\sep\sure{Q}$.

    \item
    $\inferrule
    {}
    {\textstyle
      \sure{P} \vdash \sure{P}\sep\top}$.

    Fix $\Gamma,\calR$ and assume $\Gamma,\calR\vDash\sure{P}$.  Since
    $\Gamma,\UnitP\vDash\top$ and $\calR\otimes\UnitP\cong\calR$, the
    resources $\calR,\UnitP$ witness
    $\Gamma,\calR\vDash\sure{P}\sep\top$.

    \item
    $\inferrule
    {}
    {\textstyle
      \varphi \sep \psi \vdash \varphi}$.

    Fix $\Gamma,\calR$ and assume $\Gamma,\calR\vDash\varphi\sep\psi$.
    Choose witnesses $\calR_\varphi,\calR_\psi$ such that
    \[
      \calR_\varphi\otimes\calR_\psi\preceq\calR,
      \qquad
      \Gamma,\calR_\varphi\vDash\varphi,
      \qquad
      \Gamma,\calR_\psi\vDash\psi
    \]
    By \Cref{lem:index-assertion-frame-preservation},
    $\Gamma,\calR_\varphi\otimes\calR_\psi\vDash\varphi$.  Then
    \Cref{lem:index-satisfaction-monotonicity} gives
    $\Gamma,\calR\vDash\varphi$.
  \end{enumerate}
\end{proof}

\begin{lemma}[Outcome conjunction entailment laws]
  \label{lem:index-oplus-entailment-laws}
  \label{lem:index-probabilistic-entails-nondeterministic}
  \label{lem:index-private-entails-leaky-choice}
  \label{lem:index-probabilistic-choice-monotonicity}
  \label{lem:index-probabilistic-choice-separating-right}
  \label{lem:index-probabilistic-choice-separating-left}
  \label{lem:index-probabilistic-choice-mode-commutation}
  \label{lem:index-probabilistic-choice-idempotence}
  The entailment rules in \Cref{fig:oplus-entailment-laws} are valid.
\end{lemma}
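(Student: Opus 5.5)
I will verify the seven rules of \Cref{fig:oplus-entailment-laws} one at a time. In each case I unfold the satisfaction clause for $\bigoplus^o$, reuse or rebuild the branch witnesses, and discharge the refinement obligation with the algebra of \Cref{fig:resource-algebra-refinement-facts}. The mode side condition $\Obs_o$ needs separate work only when $o=\Priv$.

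The first four rules are routine.
\begin{itemize}
\item \emph{Probabilistic to nondeterministic.} Take $\mu=d(\de{E}_{\LExp}(\Gamma))$ as the distribution witnessing $\bignd$; it lies in $\D(\supp(\mu))$. Keep the same branch family and drop the observation condition, since it is $\top$ for $\Leak$.
\item \emph{$\Priv$ to $\Leak$.} This is the same argument, forgetting $\Obs_\Priv$.
\item \emph{Monotonicity.} Apply the premise $\varphi\vdash\psi$ branchwise; the refinement and the mode condition are untouched.
\item \emph{Distributing the frame.} Given $\calR_1\otimes\calR_2\preceq\calR$ with $\calR_1$ satisfying $\bigoplus^o\varphi$ via branches $\calR_{1,v}$, set the new branches to $\calR_{1,v}\otimes\calR_2$, after relocating them onto disjoint supports. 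Then:
  \begin{itemize}
  \item $\bigoplus_v(\calR_{1,v}\otimes\calR_2)\cong(\bigoplus_v\calR_{1,v})\otimes\calR_2\preceq\calR$, by distributivity of $\otimes$ over $\oplus$ and monotonicity of $\otimes$;
  \item each branch satisfies $\varphi\sep\psi$, using $X\notin\fv(\psi)$;
  \item for $\Priv$, the count marginal is $\mathsf{add}_*(\Proj{\calR_{1,v}}{\bbN}\otimes\Proj{\calR_2}{\bbN})$, which equals that of $\calR_1\otimes\calR_2$ and hence that of $\calR$ (refinement preserves count marginals). This is the computation already done in \Cref{lem:index-assertion-frame-preservation}.
  \end{itemize}
\end{itemize}

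\emph{Factoring out a precise, stable $\psi$.} This is the main obstacle. From branches $\calR_v$ with $\calR_{\varphi,v}\otimes\calR_{\psi,v}\preceq\calR_v$, I will follow the $\varphi\sep\psi$ case of \Cref{lem:index-assertion-convexity-constructors}.
\begin{itemize}
\item Obtain the stable core $\calQ_\psi^0$ from \Cref{lem:index-precise-stable-core}.
\item Split each $\calR_{\psi,v}$ by weak stability into $\calR_{\psi,v}^\memfn\otimes\calR_{\psi,v}^\cntfn$, so that $\calQ_\psi^0\preceq\calR_{\psi,v}^\memfn$.
\item Push the count-only piece into the $\varphi$ side: by \Cref{lem:index-assertion-frame-preservation}, $\calR_{\varphi,v}\otimes\calR_{\psi,v}^\cntfn$ still satisfies $\varphi$.
\item After footprint extension and relocation, set $\calQ_\varphi=\bigoplus_v(\calR_{\varphi,v}\otimes\calR_{\psi,v}^\cntfn)$. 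Then $\calQ_\varphi\otimes\calQ_\psi^0\cong\bigoplus_v(\ldots\otimes\calQ_\psi^0)\preceq\bigoplus_v\calR_v\preceq\calR$.
\end{itemize}
The delicate point is the $\Priv$ condition for the new outer resource $\calQ_\varphi$. Because $\calQ_\psi^0$ has zero counts, $\Proj{\calQ_\varphi\otimes\calQ_\psi^0}{\bbN}=\Proj{\calQ_\varphi}{\bbN}$. Each branch of $\calQ_\varphi$ has count marginal equal to that of $\calR_v$, which is $\Proj{\calR}{\bbN}$ by privacy. Hence every branch marginal equals $\Proj{\calQ_\varphi}{\bbN}$, so I must check that $\calR_{\varphi,v}\otimes\calR_{\psi,v}^\cntfn$ has exactly the marginal of $\calR_v$, again using that the memory core contributes count $0$.

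\emph{Commuting outcome conjunctions.} Unfold both levels to get branches $\calR_{x,y}$. Regroup $\bigoplus_x\bigoplus_y$ into $\bigoplus_y\bigoplus_x$ via the flattening law, with weights $\mu_X(x)\mu_Y(y)$; this uses $X\notin\fv(E_Y)$ so that the inner distribution does not depend on $x$, and symmetrically $Y\notin\fv(E_X)$. For the side conditions, $o_1\sqsubseteq o_2$ means that if the new outer mode $o_2$ is $\Priv$, then $o_2$ was the inner mode. In that case each $\calR_{x,y}$ has marginal $\Proj{\calR_x}{\bbN}$, and averaging over $x$ gives the marginal of the new $y$-branch. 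If the new inner mode $o_1$ is also $\Priv$, the old outer privacy forces all the marginals to coincide with $\Proj{\calR}{\bbN}$.

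\emph{Collapse for convex $\varphi$.} This is exactly the argument of case (3) in \Cref{lem:index-basic-entailment-laws}: apply convexity to the branches and finish with \Cref{lem:index-satisfaction-monotonicity}.
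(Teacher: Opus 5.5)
Your proposal is correct and follows the paper's proof case by case. This includes the stable-core construction for factoring out $\psi$: you push $\calR_{\psi,v}^\cntfn$ into the $\varphi$ side, so each new branch inherits $\Proj{\calR_v}{\bbN}=\Proj{\calR}{\bbN}$. It also includes the rectangular regrouping for commutation, and a collapse argument that is just the paper's composition of rule (1) with the $\bignd$ collapse, unfolded.

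One wording fix: the role of $o_1\sqsubseteq o_2$ is to exclude the case $o_1=\Priv$, $o_2=\Leak$, and your step ``if $o_1$ is also $\Priv$'' tacitly relies on exactly this.
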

\begin{proof}
  We prove the rules in the order shown in \Cref{fig:oplus-entailment-laws}.
  \begin{enumerate}[label=\textup{(\arabic*)}]
    \item
    $\inferrule
    {}
    {\textstyle
      \bigoplus_{X \sim d(E)}^o \varphi
      \vdash
      \bignd_{X \in \supp(d(E))}\varphi}$.

    Fix $\Gamma,\calR$ and suppose
    $\Gamma,\calR\vDash\bigoplus_{X\sim d(E)}^o\varphi$.  Put
    $\mu\triangleq d(\de{E}_{\LExp}(\Gamma))$.  Unfolding satisfaction, choose
    witnesses $(\calR_v)_{v\in\supp(\mu)}$ such that
    \[
      \bigoplus_{v\sim\mu}\calR_v\preceq\calR,
      \qquad
      \Gamma[X\mapsto v],\calR_v\vDash\varphi
      \quad(v\in\supp(\mu))
    \]
    Since the leaky mode has no side condition, the same witnesses give
    $\Gamma,\calR\vDash\bigoplus^\Leak_{X\sim\mu}\varphi$.  Taking this
    $\mu\in\calD(\supp(\mu))$ in the satisfaction clause for nondeterministic
    choice gives
    $\Gamma,\calR\vDash\bignd_{X\in\supp(d(E))}\varphi$.

    \item
    $\inferrule
    {}
    {\textstyle
      \bigoplus_{X \sim d(E)}^\Priv \varphi
      \vdash
      \bigoplus_{X \sim d(E)}^\Leak \varphi}$.

    This follows immediately by definition, since $  \Obs_o^\mu(\calR,(\calR_v)_{v\in\supp(\mu)})$ is trivial for $\Obs_\Leak$.

    \item
    $\inferrule
    {\varphi \vdash \psi}
    {\textstyle
      \bigoplus_{X \sim d(E)}^o \varphi
      \vdash
      \bigoplus_{X \sim d(E)}^o \psi}$.

    Fix $\Gamma,\calR$ and suppose
    $\Gamma,\calR\vDash\bigoplus^o_{X\sim d(E)}\varphi$.  Put
    $\mu\triangleq d(\de{E}_{\LExp}(\Gamma))$.  Unfolding satisfaction, choose
    witnesses $(\calR_v)_{v\in\supp(\mu)}$ such that
    \[
      \bigoplus_{v\sim\mu}\calR_v\preceq\calR,
      \qquad
      \Gamma[X\mapsto v],\calR_v\vDash\varphi
      \quad(v\in\supp(\mu)),
    \]
    and $\Obs_o^\mu(\calR,(\calR_v)_{v\in\supp(\mu)})$.  By
    $\varphi\vdash\psi$, we get that
      $\Gamma[X\mapsto v],\calR_v\vDash\psi$ for all $v \in \supp(\mu)$.
    Hence the same witnesses and the same mode side condition prove
    $\Gamma,\calR\vDash\bigoplus^o_{X\sim d(E)}\psi$.

    \item
    $\inferrule
    {X \not\in \fv(\psi)}
    {\textstyle
      (\bigoplus_{X \sim d(E)}^o \varphi) \sep \psi
      \vdash
      \bigoplus_{X \sim d(E)}^o (\varphi \sep \psi)}$.

    Fix $\Gamma,\calR$ and suppose
    $\Gamma,\calR\vDash(\bigoplus_{X\sim d(E)}^o\varphi)\sep\psi$.  Put
    $\mu\triangleq d(\de{E}_{\LExp}(\Gamma))$.  Unfolding separating
    conjunction, choose $\calR_\oplus,\calR_\psi$ such that
    \[
      \calR_\oplus\otimes\calR_\psi\preceq\calR,
      \qquad
      \Gamma,\calR_\oplus\vDash\bigoplus_{X\sim d(E)}^o\varphi,
      \qquad
      \Gamma,\calR_\psi\vDash\psi
    \]
    Unfolding probabilistic choice, choose witnesses
    $(\calR_v)_{v\in\supp(\mu)}$ such that
    \[
      \bigoplus_{v\sim\mu}\calR_v\preceq\calR_\oplus,
      \qquad
      \Gamma[X\mapsto v],\calR_v\vDash\varphi
      \quad(v\in\supp(\mu)),
    \]
    and $\Obs_o^\mu(\calR_\oplus,(\calR_v)_{v\in\supp(\mu)})$.
    Since $X\notin\fv(\psi)$,
    $\Gamma[X\mapsto v],\calR_\psi\vDash\psi$ for all $v \in \supp(\mu)$.
    Hence, for every $v\in\supp(\mu)$,
    $\calR_v\otimes\calR_\psi$ satisfies
    $\varphi\sep\psi$ under $\Gamma[X\mapsto v]$, witnessed by
    $\calR_v$ and $\calR_\psi$.  The branch refinement is
    \[
      \bigoplus_{v\sim\mu}(\calR_v\otimes\calR_\psi)
      \cong
      \left(\bigoplus_{v\sim\mu}\calR_v\right)\otimes\calR_\psi
      \preceq
      \calR_\oplus\otimes\calR_\psi
      \preceq
      \calR
    \]
    If $o=\Leak$, the mode side condition is immediate.  If $o=\Priv$, then
    for every $v\in\supp(\mu)$,
    \begin{align*}
      \Proj{\calR_v\otimes\calR_\psi}{\bbN}
      &=
        (\mathsf{add}_{\bbN})_*
        (\Proj{\calR_v}{\bbN}\otimes\Proj{\calR_\psi}{\bbN})\\
      &=
        (\mathsf{add}_{\bbN})_*
        (\Proj{\calR_\oplus}{\bbN}\otimes\Proj{\calR_\psi}{\bbN})
        \tag{$\Obs_\Priv^\mu(\calR_\oplus,(\calR_v)_v)$}\\
      &=
        \Proj{\calR_\oplus\otimes\calR_\psi}{\bbN}\\
      &=
        \Proj{\calR}{\bbN}
        \tag{$\calR_\oplus\otimes\calR_\psi\preceq\calR$}
    \end{align*}
    by properties displayed in \Cref{fig:resource-algebra-refinement-facts}.
    Thus the witnesses
    $(\calR_v\otimes\calR_\psi)_{v\in\supp(\mu)}$ prove
    $\Gamma,\calR\vDash
    \bigoplus_{X\sim d(E)}^o(\varphi\sep\psi)$.

    \item
    $\inferrule
    {X \notin \fv(\psi) \and \precise{\psi} \and \Stable{\weak}{\psi}}
    {\textstyle
      \bigoplus_{X\sim d(E)}^o(\varphi\sep\psi)
      \vdash
      (\bigoplus_{X\sim d(E)}^o\varphi)\sep\psi}$.

    Fix $\Gamma,\calR$ and suppose
    $\Gamma,\calR\vDash\bigoplus_{X\sim d(E)}^o(\varphi\sep\psi)$.  Put
    $\mu\triangleq d(\de{E}_{\LExp}(\Gamma))$.  Unfolding probabilistic choice,
    choose branch resources $(\calR_v)_{v\in\supp(\mu)}$ such that
    \[
      \bigoplus_{v\sim\mu}\calR_v\preceq\calR,
      \qquad
      \Gamma[X\mapsto v],\calR_v\vDash\varphi\sep\psi
      \quad(v\in\supp(\mu)),
    \]
    and $\Obs_o^\mu(\calR,(\calR_v)_{v\in\supp(\mu)})$.

    For each branch $v$, we want to construct a resource $\calQ_v$ satisfying
    $\varphi$ and a single stable resource $\calQ_\psi^0$ satisfying $\psi$
    such that
    $\calQ_v\otimes\calQ_\psi^0\preceq\calR_v$.  This will let
    $\bigoplus_{v\sim\mu}\calQ_v$ witness the probabilistic choice and
    $\calQ_\psi^0$ witness the separated-out $\psi$ component.
    For each $v\in\supp(\mu)$, unfold separating conjunction.  Since
    $X\notin\fv(\psi)$, choose $\calR_{\varphi,v},\calR_{\psi,v}$ such that
    \[
      \calR_{\varphi,v}\otimes\calR_{\psi,v}\preceq\calR_v,
      \qquad
      \Gamma[X\mapsto v],\calR_{\varphi,v}\vDash\varphi,
      \qquad
      \Gamma,\calR_{\psi,v}\vDash\psi
    \]
    Applying $\Stable{\weak}{\psi}$ to $\calR_{\psi,v}$, choose
    $\calR_{\psi,v}^0,\calR_{\psi,v}^\cntfn$ such that
    \[
      \calR_{\psi,v}^0\otimes\calR_{\psi,v}^\cntfn\preceq\calR_{\psi,v},
      \qquad
      \Gamma,\calR_{\psi,v}^0\vDash\psi,
      \qquad
      \Stable{\weak}{\calR_{\psi,v}^0},
      \qquad
      V_{\calR_{\psi,v}^\cntfn}=\emptyset
    \]
    Let
    $\calQ_v
    \triangleq
    \calR_{\varphi,v}\otimes\calR_{\psi,v}^\cntfn$.
    By \Cref{lem:index-assertion-frame-preservation} we get that
    $\Gamma[X\mapsto v],\calQ_v\vDash\varphi$ for $v \in \supp(\mu)$.
    Since $\mu$ is a probability distribution, $\supp(\mu)$ is nonempty, so the
    displayed branch witnesses show that $\psi$ is satisfiable under $\Gamma$.
    Thus \Cref{lem:index-precise-stable-core} gives a stable core
    $\calQ_\psi^0$ such that
    \[
      \Gamma,\calQ_\psi^0\vDash\psi,
      \qquad
      \Stable{\weak}{\calQ_\psi^0},
    \]
    and
    \begin{equation}
      \forall\calQ.\;
      \Gamma,\calQ\vDash\psi
      \land
      \Stable{\weak}{\calQ}
      \implies
      \calQ_\psi^0\preceq\calQ .
      \label{eq:index-prob-sep-stable-core}
    \end{equation}
    Also, by \eqref{eq:index-prob-sep-stable-core},
    $\calQ_\psi^0\preceq\calR_{\psi,v}^0$ for every $v\in\supp(\mu)$.  Hence
    \[
      \calQ_v\otimes\calQ_\psi^0
      \preceq
      \calR_{\varphi,v}\otimes\calR_{\psi,v}^\cntfn
      \otimes
      \calR_{\psi,v}^0
      \cong
      \calR_{\varphi,v}\otimes
      (\calR_{\psi,v}^0\otimes\calR_{\psi,v}^\cntfn)
      \preceq
      \calR_{\varphi,v}\otimes\calR_{\psi,v}
      \preceq
      \calR_v
    \]

    It remains to verify the side condition on the scheduler observation of
    the probabilistic choice witnessed by the branches
    $(\calQ_v)_{v\in\supp(\mu)}$ with respect to modes.  The leaky case has
    no side condition.  In the private case, we must show that every branch
    $\calQ_v$ has the same count marginal as the whole choice resource.
    Define
    $\calQ_\oplus\triangleq\bigoplus_{v\sim\mu}\calQ_v$.  If $o=\Priv$, then
    $\Stable{\weak}{\calQ_\psi^0}$ gives
    $\Proj{\calQ_\psi^0}{\bbN}=\delta_0$, and therefore
    for every $v\in\supp(\mu)$,
    \begin{align*}
      \Proj{\calQ_v}{\bbN}
      &=
        \Proj{\calQ_v\otimes\calQ_\psi^0}{\bbN}
        \tag{
        $\Proj{\calQ_\psi^0}{\bbN}=\delta_0$}\\
      &=
        \Proj{\calR_v}{\bbN}\\
      &=
        \Proj{\calR}{\bbN}
        \tag{$\Obs_\Priv^\mu(\calR,(\calR_v)_v)$}
    \end{align*}
    by properties in \Cref{fig:resource-algebra-refinement-facts}
    Since $\calQ_\oplus$ is the direct sum of branches with this common count
    marginal, $\Proj{\calQ_\oplus}{\bbN}=\Proj{\calR}{\bbN}$.  Therefore
    $\Obs_o^\mu(\calQ_\oplus,(\calQ_v)_{v\in\supp(\mu)})$ holds: it is trivial
    when $o=\Leak$, and the displayed equality gives it when $o=\Priv$.  Hence
    $\Gamma,\calQ_\oplus\vDash\bigoplus_{X\sim d(E)}^o\varphi$.
    Finally,
    \[
      \calQ_\oplus\otimes\calQ_\psi^0
      \cong
      \bigoplus_{v\sim\mu}(\calQ_v\otimes\calQ_\psi^0)
      \preceq
      \bigoplus_{v\sim\mu}\calR_v
      \preceq
      \calR
    \]
    Thus $\calQ_\oplus$ and $\calQ_\psi^0$ witness
    $\Gamma,\calR\vDash(\bigoplus_{X\sim d(E)}^o\varphi)\sep\psi$.

    \item
    $\inferrule
    {o_1\sqsubseteq o_2 \and X\neq Y \and
      X\notin\fv(E_Y) \and Y\notin\fv(E_X)}
    {\textstyle
      \bigoplus_{X\sim d_X(E_X)}^{o_1}
        \left(\bigoplus_{Y\sim d_Y(E_Y)}^{o_2}\varphi\right)
      \vdash
      \bigoplus_{Y\sim d_Y(E_Y)}^{o_2}
        \left(\bigoplus_{X\sim d_X(E_X)}^{o_1}\varphi\right)}$.

    Fix $\Gamma,\calR$ and assume
    $\Gamma,\calR\vDash
    \bigoplus_{X\sim d_X(E_X)}^{o_1}
    \left(\bigoplus_{Y\sim d_Y(E_Y)}^{o_2}\varphi\right)$.
    Let
    $\mu\triangleq d_X(\de{E_X}_{\LExp}(\Gamma))$ and
    $\nu\triangleq d_Y(\de{E_Y}_{\LExp}(\Gamma))$.
    Since $X\notin\fv(E_Y)$ and $Y\notin\fv(E_X)$,
    \[
      d_Y(\de{E_Y}_{\LExp}(\Gamma[X\mapsto x]))=\nu,
      \qquad
      d_X(\de{E_X}_{\LExp}(\Gamma[Y\mapsto y]))=\mu
    \]
    for all $x\in\supp(\mu)$ and $y\in\supp(\nu)$.

    Unfolding the two choices gives witnesses
    $(\calR_x)_{x\in\supp(\mu)}$ and
    $(\calR_{x,y})_{y\in\supp(\nu)}$ such that
    \[
      \bigoplus_{x\sim\mu}\calR_x\preceq\calR,
      \qquad
      \bigoplus_{y\sim\nu}\calR_{x,y}\preceq\calR_x,
      \qquad
      \Gamma[X\mapsto x][Y\mapsto y],\calR_{x,y}\vDash\varphi,
    \]
    where the display holds for all $x\in\supp(\mu)$ and
    $y\in\supp(\nu)$, together with
    $\Obs_{o_1}^{\mu}(\calR,(\calR_x)_{x\in\supp(\mu)})$ and, for every
    $x\in\supp(\mu)$,
    $\Obs_{o_2}^{\nu}(\calR_x,(\calR_{x,y})_{y\in\supp(\nu)})$.

    For $y\in\supp(\nu)$, define
    $\calQ_y\triangleq\bigoplus_{x\sim\mu}\calR_{x,y}$.  By rectangular
    rearrangement of direct sums,
    \[
      \bigoplus_{y\sim\nu}\calQ_y
      \cong
      \bigoplus_{x\sim\mu}
        \left(\bigoplus_{y\sim\nu}\calR_{x,y}\right)
      \preceq
      \bigoplus_{x\sim\mu}\calR_x
      \preceq
      \calR
    \]

    For fixed $y$, the resources
    $(\calR_{x,y})_{x\in\supp(\mu)}$ witness the inner $o_1$-choice in
    $\calQ_y$, since $X\neq Y$ gives
    \[
      \Gamma[Y\mapsto y][X\mapsto x],\calR_{x,y}\vDash\varphi
      \quad\Longleftrightarrow\quad
      \Gamma[X\mapsto x][Y\mapsto y],\calR_{x,y}\vDash\varphi
    \]
    The $o_1$ side condition is vacuous when $o_1=\Leak$.  If $o_1=\Priv$,
    then $o_2=\Priv$, and the private side conditions give
    \[
      \Proj{\calQ_y}{\bbN}
      =
      \Proj{\calR}{\bbN}
      =
      \Proj{\calR_{x,y}}{\bbN}
    \]
    Hence
    $\Gamma[Y\mapsto y],\calQ_y
    \vDash
    \bigoplus_{X\sim d_X(E_X)}^{o_1}\varphi$ for all $y \in \supp(\nu)$.

    Finally, the outer $o_2$ side condition is vacuous when $o_2=\Leak$.  If
    $o_2=\Priv$, then the inner private side condition gives
    \[
      \Proj{\calQ_y}{\bbN}
      =
      \bigoplus_{x\sim\mu}\Proj{\calR_{x,y}}{\bbN}
      =
      \bigoplus_{x\sim\mu}\Proj{\calR_x}{\bbN}
      =
      \Proj{\calR}{\bbN}
    \]
    Therefore the satisfaction clause yields
    $\Gamma,\calR\vDash
    \bigoplus_{Y\sim d_Y(E_Y)}^{o_2}
    \left(\bigoplus_{X\sim d_X(E_X)}^{o_1}\varphi\right)$.

    \item
    $\inferrule
    {X\notin\fv(\varphi) \and \convex{\varphi}}
    {\textstyle
      \bigoplus_{X\sim d(E)}^o\varphi \vdash \varphi}$.

    Immediate by composing rule~(1),
    $\bigoplus_{X\sim d(E)}^o\varphi\vdash\bignd_{X\in\supp(d(E))}\varphi$,
    with the nondeterministic collapse
    $\bignd_{X\in E}\varphi\vdash\varphi$ of
    \Cref{fig:basic-entailment-laws} (valid since $X\notin\fv(\varphi)$ and
    $\convex{\varphi}$).
  \end{enumerate}
\end{proof}

\noam{These context invariance lemmas are pretty trivial too. I don't think we need to spell them out}

\begin{lemma}[Logical substitution for assertion satisfaction]
  \label{lem:index-assertion-logical-substitution}
  For every assertion $\varphi$ and logical expression $E$,
  \[
    \Gamma,\calR\vDash\varphi[E/X]
    \quad\Longleftrightarrow\quad
    \Gamma[X\mapsto\de{E}_{\LExp}(\Gamma)],\calR\vDash\varphi
  \]
  Moreover, if $Y\notin\fv(\varphi)$, then for every value $v$,
  \[
    \Gamma[Y\mapsto v],\calR\vDash\varphi[Y/X]
    \quad\Longleftrightarrow\quad
    \Gamma[X\mapsto v],\calR\vDash\varphi
  \]
\end{lemma}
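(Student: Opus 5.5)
The first equivalence will be proved by structural induction on $\varphi$, generalizing over $\Gamma$ and $\calR$. The induction needs a companion fact at the level of expressions: a logical expression substitution lemma $\de{E'[E/X]}_\LExpr(\Gamma)=\de{E'}_\LExpr(\Gamma[X\mapsto\de{E}_\LExpr(\Gamma)])$. Its pure-assertion analogue is $\sem{P[E/X]}_\Gamma^{S}=\sem{P}_{\Gamma[X\mapsto\de{E}(\Gamma)]}^{S}$, proved by a routine induction on $P$ using the clauses of \Cref{fig:pure-assertion-semantics}.

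\textbf{Base and propositional cases.} For the case $\sure{P}$ the satisfaction clause only mentions the event $\memfn_\calR^{-1}(\sem{P}_\Gamma^{V_\calR})$. The pure lemma gives literally the same event on both sides, so measurability and measure $1$ transfer verbatim. The cases $\top,\bot,\land,\lor$ are immediate from the induction hypothesis.

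\textbf{$\sep$ case.} The satisfaction clause quantifies over the same witnesses $\calR_1,\calR_2$ with the context-independent condition $\calR_1\otimes\calR_2\preceq\calR$. Applying the induction hypothesis to each component closes the case.

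\textbf{Binders and the second statement.} The real work lies in the binders $\exists Y$, $\bigoplus^o_{Y\sim d(E')}$ and $\bignd_{Y\in E'}$. I assume capture-avoiding substitution, so after $\alpha$-renaming we may take $Y\neq X$ and $Y\notin\fv(E)$.
\begin{itemize}
\item The distribution parameter: $d(E'[E/X])$ evaluated under $\Gamma$ equals $d(E')$ evaluated under $\Gamma[X\mapsto v_E]$, where $v_E\triangleq\de{E}(\Gamma)$.
\item The branch condition: for each $v$ we need $\Gamma[Y\mapsto v],\calR_v\vDash\varphi[E/X]$. By the induction hypothesis this is $\Gamma[Y\mapsto v][X\mapsto\de{E}(\Gamma[Y\mapsto v])],\calR_v\vDash\varphi$.
\item Since $Y\notin\fv(E)$, we have $\de{E}(\Gamma[Y\mapsto v])=v_E$. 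Since $X\neq Y$, the two updates commute, giving $\Gamma[X\mapsto v_E][Y\mapsto v]$.
\item The refinement $\bigoplus\calR_v\preceq\calR$ and the side condition $\Obs^\mu_o$ do not mention $\Gamma$, so the same branch family witnesses both sides. The nondeterministic case reuses this, with the same chosen $\mu\in\D(\de{E'}(\cdot))$.
\end{itemize}
The second statement then follows from the first with $E\triangleq Y$: we get $\Gamma[Y\mapsto v],\calR\vDash\varphi[Y/X]$ iff $\Gamma[Y\mapsto v][X\mapsto v],\calR\vDash\varphi$. A separate easy induction (coincidence: satisfaction depends only on $\Gamma$ restricted to $\fv$) removes the irrelevant $Y$ binding, using $Y\notin\fv(\varphi)$.

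\textbf{Main obstacle.} The main obstacle is the binder case: keeping track of freshness/renaming conditions and the commutation of environment updates. I would first establish the coincidence lemma, since both the renaming step and the last step of the second statement use it.
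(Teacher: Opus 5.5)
Your proposal is correct and follows essentially the same route as the paper. Both argue by structural induction on $\varphi$ using the expression- and pure-level substitution identities. Both reuse the same resource witnesses in the $\sep$, $\bigoplus^o$ and $\bignd$ cases after $\alpha$-converting the bound variable away from $X$ and $\fv(E)$, and both derive the second equivalence by instantiating $E\triangleq Y$ under $\Gamma[Y\mapsto v]$ and then applying context irrelevance. Your explicit treatment of the $\exists$ binder (which the paper's proof leaves implicit), and your remark that the $\alpha$-renaming step itself depends on a coincidence/renaming fact, are if anything slightly more careful than the paper.
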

\begin{proof}
  We prove the first equivalence by structural induction on $\varphi$.  Write
  $e\triangleq\de{E}_{\LExp}(\Gamma)$ and
  $\Gamma_X\triangleq\Gamma[X\mapsto e]$.  We use the standard substitution
  properties for logical expressions and pure assertions:
  \[
    \de{F[E/X]}_{\LExp}(\Gamma)
    =
    \de{F}_{\LExp}(\Gamma_X),
    \qquad
    \sem{P[E/X]}_\Gamma^V
    =
    \sem{P}_{\Gamma_X}^V
  \]
  The cases for $\top$, $\bot$, conjunction, and disjunction are immediate
  from the induction hypotheses.  The case $\sure{P}$ follows from the pure
  substitution identity, and the case $\varphi_1\sep\varphi_2$ follows from
  the induction hypotheses because the resource witnesses and refinement
  condition do not mention the logical environment.

  It remains to consider the binding assertions.  Let
  $\varphi=\bigoplus^o_{Z\sim d(F)}\psi$ and alpha-convert so that
  $Z\notin\{X\}\cup\fv(E)$.  Then
  \[
    \varphi[E/X]
    =
    \bigoplus^o_{Z\sim d(F[E/X])}\psi[E/X],
    \qquad
    d(\de{F[E/X]}_{\LExp}(\Gamma))
    =
    d(\de{F}_{\LExp}(\Gamma_X))
  \]
  For every branch value $v$, since $Z\notin\fv(E)$ and $Z\ne X$,
  \[
    \de{E}_{\LExp}(\Gamma[Z\mapsto v])=e,
    \qquad
    \Gamma[Z\mapsto v][X\mapsto e]
    =
    \Gamma_X[Z\mapsto v]
  \]
  Hence the induction hypothesis for $\psi$, applied under
  $\Gamma[Z\mapsto v]$, identifies the branch obligations
  \[
    \Gamma[Z\mapsto v],\calQ\vDash\psi[E/X]
    \quad\Longleftrightarrow\quad
    \Gamma_X[Z\mapsto v],\calQ\vDash\psi
  \]
  The direct-sum refinement and observation side condition are unchanged, so
  the same branch resources witness the two probabilistic-choice satisfaction
  clauses.

  The case $\varphi=\bignd_{Z\in F}\psi$ is analogous.  After the same
  alpha-conversion,
  \[
    (\bignd_{Z\in F}\psi)[E/X]
    =
    \bignd_{Z\in F[E/X]}\psi[E/X],
    \qquad
    \de{F[E/X]}_{\LExp}(\Gamma)
    =
    \de{F}_{\LExp}(\Gamma_X)
  \]
  The fixed leaky probabilistic-choice argument above gives, for every
  $\mu\in\calD(\de{F}_{\LExp}(\Gamma_X))$,
  \[
    \Gamma,\calR\vDash\bigoplus^\Leak_{Z\sim\mu}\psi[E/X]
    \quad\Longleftrightarrow\quad
    \Gamma_X,\calR\vDash\bigoplus^\Leak_{Z\sim\mu}\psi
  \]
  Quantifying over such $\mu$ gives the result for $\bignd$.

  For the second displayed equivalence, apply the first equivalence with the
  logical expression $Y$ and environment $\Gamma[Y\mapsto v]$:
  \[
    \Gamma[Y\mapsto v],\calR\vDash\varphi[Y/X]
    \Longleftrightarrow
    \Gamma[Y\mapsto v][X\mapsto v],\calR\vDash\varphi
  \]
  Since $Y\notin\fv(\varphi)$, the environments
  $\Gamma[Y\mapsto v][X\mapsto v]$ and $\Gamma[X\mapsto v]$ agree on
  $\fv(\varphi)$.  By context irrelevance,
  \[
    \Gamma[Y\mapsto v][X\mapsto v],\calR\vDash\varphi
    \Longleftrightarrow
    \Gamma[X\mapsto v],\calR\vDash\varphi
  \]
\end{proof}

\section{Soundness of the Sequential Rules}
\label{app:sequential-rules}
\subsection{Command Support Invariants}
\begin{lemma}[Singleton bits constrain count support]
  \label{lem:index-singleton-bits-count-support}
  Fix a schedule $s$.  If $\textsf{bits}(C)=\{k\}$, then for every
  $\sigma,\sigma'\in\Mem$ and $n,n'\in\bbN$,
  \[
    \bigl(\de{C}_s(\sigma,n)\bigr)(\sigma',n')>0
    \quad\Longrightarrow\quad
    n'=n+k
  \]
\end{lemma}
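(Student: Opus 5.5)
We prove a slightly stronger statement by structural induction on $C$: for every command $C$, every schedule $s$, and all $\sigma,\sigma',n,n'$, if $\bigl(\de{C}_s(\sigma,n)\bigr)(\sigma',n')>0$ then $n'-n\in\textsf{bits}(C)$, with $n'\ge n$. The lemma follows immediately when $\textsf{bits}(C)=\{k\}$. The generalization is needed because the subcommands of a command with singleton bits need not themselves have singleton bits. For example, $\textsf{bits}(C_1\fatsemi C_2)=\{k\}$ forces both $\textsf{bits}(C_i)$ to be singletons. However, the loop case $\{n\cdot k\}$ and the general union cases only make sense in the set-valued form.

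The base cases follow by unfolding \Cref{lem:index-schedule-indexed-distribution-semantic-equations}. For $\skp$, assignment, and sampling, every output in the support has index $n$, and the offset $0$ lies in $\{0\}$. For $x\gets e$ the support is a single point with index $n+1$. For sequencing, the semantics is the Kleisli composite $\de{C_2}_s^\dagger(\de{C_1}_s(\sigma,n))$. A positive-mass output $(\sigma',n')$ must therefore arise through some intermediate $(\tau,m)$ with $\de{C_1}_s(\sigma,n)(\tau,m)>0$ and $\de{C_2}_s(\tau,m)(\sigma',n')>0$. The two induction hypotheses then give $m-n\in\textsf{bits}(C_1)$ and $n'-m\in\textsf{bits}(C_2)$, and the two offsets add.

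The branching cases are similar. For a conditional, the output comes from one branch, and that branch's offset lies in the union. For $C_1\oplus_e C_2$, a mixture has positive mass only where some component with positive weight does, so again the offset lies in the union. For $C_0\nd C_1$, the semantics runs $C_{s(n)\bmod 2}$ from $(\sigma,n+1)$. The induction hypothesis gives $n'-(n+1)\in\textsf{bits}(C_i)$, so $n'-n\in\{1+k\}$.

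The main obstacle is the while loop. Here I would use \Cref{lem:index-schedule-indexed-while-semantic-equations}. The supremum in $\D_\bot$ is taken pointwise on ordinary outputs, so $(\sigma',n')$ has positive mass in the loop semantics iff it has positive mass in some approximant $F_{m,s}(\sigma,n)$. An inner induction on $m$ then shows that every such output satisfies $n'-n=\sum_{j<r}k_j$ with each $k_j\in\textsf{bits}(C)$ and $r\le m$. The zero-iteration case is the guard-false branch $\delta_{(\sigma,n)}$, giving offset $0$. The successor case composes the outer induction hypothesis for $C$ with the inner hypothesis, exactly as in sequencing.

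This yields offsets that are finite sums of elements of $\textsf{bits}(C)$, whereas the definition of $\textsf{bits}(\cmdwhile{e}{C})$ only lists multiples $n\cdot k$ of a single $k$. In the case relevant to the lemma, this discrepancy disappears: if $\textsf{bits}(\cmdwhile eC)=\{k\}$ then either $\textsf{bits}(C)=\{0\}$ or $k=0$ with $\textsf{bits}(C)=\{0\}$, so every sum is $0$. For the general invariant, I would therefore state the loop case with the additive closure, namely finite sums of elements of $\textsf{bits}(C)$. This agrees with the paper's set whenever $\textsf{bits}(C)$ is a singleton, and the singleton case is all the lemma needs. Alternatively, I would restrict the invariant to subterms whose bits are singletons, which is closed under taking subterms by the arithmetic above.
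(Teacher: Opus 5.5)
Your argument is correct once you adopt either of the two repairs in your last paragraph. Your option (b), restricting to subterms with singleton bits, is exactly the paper's proof. The paper runs a structural induction that carries the hypothesis $\textsf{bits}(C)=\{k\}$ down to subterms. In the loop case, the multiples for $n=0$ and $n=1$ force $k=0$ and body bits $\{0\}$. An inner induction on the Kleene approximants then shows the count is unchanged, and the proof finishes by noting that positive ordinary mass in the supremum already appears in some finite approximant.

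The generalization you lead with, however, rests on a false premise. Every $\textsf{bits}$ set is nonempty: atoms give $\{0\}$ or $\{1\}$, and the loop clause always contains $0$. Hence a singleton at a sequence, conditional, probabilistic choice, or nondeterministic choice forces singletons on the components, and the same singleton in the branching cases. So subterms of a command with singleton bits always have singleton bits, as your own sequencing example and closing sentence concede.

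Your suspicion about the loop clause is nevertheless well founded and worth making explicit. The unrestricted invariant $n'-n\in\textsf{bits}(C)$ is false. Take a loop body that chooses with probability $\frac12$ between two and three ticks, run twice under a counter guard. The body has bits $\{2,3\}$ and can consume $5$ tape entries over the two iterations, yet $5\notin\{n\cdot k\mid k\in\{2,3\}\}$. So $\textsf{bits}$ is a sound over-approximation of tape consumption only in the singleton case.

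Your additive-closure variant does give a valid invariant for all commands. That is what the paper's informal description of $\textsf{bits}$ (``how much of the tape may be consumed along any execution path'') actually calls for. To recover this lemma from it, though, you still need the variant to agree with $\textsf{bits}$ on commands with singleton bits, and that is proved by the same subterm-closure argument. For the lemma as stated, the direct route is strictly shorter.
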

\begin{proof}
  By structural induction on $C$, unfolding the schedule-indexed semantics.
  We write only the
  cases where the count argument is not immediate from the displayed semantic
  equation.

  The atomic commands are immediate from their point-mass semantics.  The
  sequential case follows by factoring positive mass through an ordinary
  intermediate state and adding the two singleton counts.  The conditional and
  probabilistic-choice cases are immediate because positive ordinary output
  mass comes from a branch whose bit set is the same singleton $\{k\}$.

  \emph{Nondeterministic choice.}
  For $C=C_0\cmdnondetS C_1$, the singleton assumption on
  $(1+\textsf{bits}(C_0))\cup(1+\textsf{bits}(C_1))$ gives a $k'$ such that
  $k=k'+1$ and $\textsf{bits}(C_i)=\{k'\}$ for $i\in\{0,1\}$.  With
  $i=s(n)\bmod 2$, the semantic equation is
  \[
    \de{C_0\cmdnondetS C_1}_s(\sigma,n)
    =
    \de{C_i}_s(\sigma,n+1)
  \]
  Thus positive ordinary output mass gives, by the induction hypothesis for
  $C_i$, $n'=(n+1)+k'=n+k$.

  \emph{While loop.}
  It remains to consider $C=\cmdwhile{e}{C_0}$.  Since
  \[
    \textsf{bits}(\cmdwhile{e}{C_0})
    =
    \bigcup_{m\in\bbN}m\cdot\textsf{bits}(C_0)
    =
    \{k\},
  \]
  the $m=0$ summand gives $k=0$, and the $m=1$ summand gives
  $\textsf{bits}(C_0)=\{0\}$.  Let $F_{m,s}$ be the schedule-indexed while
  approximants.  We show by induction on $m$ that
  \[
    F_{m,s}(\sigma,n)(\sigma',n')>0
    \quad\Longrightarrow\quad
    n'=n
  \]
  The zero approximant has no ordinary output mass.  For the successor
  approximant, if the guard is false, the output is $\delta_{(\sigma,n)}$.  If
  the guard is true, positive ordinary output mass factors through an ordinary
  intermediate $(\tau,j)$ with
  \[
    \de{C_0}_s(\sigma,n)(\tau,j)>0,
    \qquad
    F_{m,s}(\tau,j)(\sigma',n')>0
  \]
  The structural induction hypothesis for $C_0$, using
  $\textsf{bits}(C_0)=\{0\}$, gives $j=n$; the approximant induction
  hypothesis then gives $n'=j=n$.

  Finally,
  $\de{\cmdwhile{e}{C_0}}_s(\sigma,n)=\bigsqcup_m F_{m,s}(\sigma,n)$.
  Hence any positive ordinary output mass of the loop appears in some finite
  approximant, so $n'=n=n+k$.
\end{proof}

\begin{lemma}[Singleton bits preserve count marginal]
  \label{lem:index-singleton-bits-count-marginal}
  If $\textsf{bits}(C)=\{k\}$, then for every
  $\mu\in\calD(\Mem\times\bbN)$ such that $\de{C}_s^\dagger(\mu)$ is proper,
  and every
  $E\subseteq\bbN$,
  \[
    \bigl((\pi_2)_*(\de{C}_s^\dagger(\mu))\bigr)(E)
    =
    ((\pi_2)_*\mu)(\{n\in\bbN\mid n+k\in E\})
  \]
  Consequently, if
  $\textsf{bits}(C_1)=\textsf{bits}(C_2)=\{k\}$,
  $\mu_i\in\calD(\Mem\times\bbN)$,
  $(\pi_2)_*\mu_1=(\pi_2)_*\mu_2$, and
  $\bigl(\de{C_i}_s^\dagger(\mu_i)\bigr)(\bot)=0$ for $i\in\{1,2\}$, then
  \[
    (\pi_2)_*\bigl(\de{C_1}_s^\dagger(\mu_1)\bigr)
    =
    (\pi_2)_*\bigl(\de{C_2}_s^\dagger(\mu_2)\bigr)
  \]
\end{lemma}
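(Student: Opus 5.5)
The plan is to derive the marginal identity directly from \Cref{lem:index-singleton-bits-count-support}, and then obtain the two-command consequence by composing it with the hypothesis on the input marginals. Fix $s$, $\mu$ and $E$, and write $E^{-k}\triangleq\{n\in\bbN\mid n+k\in E\}$. By \Cref{lem:index-schedule-indexed-distribution-semantic-equations}, for every ordinary output $(\sigma',n')$,
\[
  \bigl(\de{C}_s^\dagger(\mu)\bigr)(\sigma',n')
  =
  \sum_{(\sigma,n)\in\Mem\times\bbN}\mu(\sigma,n)\cdot\bigl(\de{C}_s(\sigma,n)\bigr)(\sigma',n').
\]
Summing over all $(\sigma',n')$ with $n'\in E$ and exchanging the nonnegative countable sums gives
\[
  \bigl((\pi_2)_*(\de{C}_s^\dagger(\mu))\bigr)(E)
  =
  \sum_{(\sigma,n)}\mu(\sigma,n)\cdot\bigl(\de{C}_s(\sigma,n)\bigr)(\Mem\times E).
\]

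Next, evaluate the inner term. By \Cref{lem:index-singleton-bits-count-support}, all ordinary mass of $\de{C}_s(\sigma,n)$ lies on $\Mem\times\{n+k\}$. Hence $\bigl(\de{C}_s(\sigma,n)\bigr)(\Mem\times E)$ equals the total ordinary mass $1-\bigl(\de{C}_s(\sigma,n)\bigr)(\bot)$ when $n\in E^{-k}$, and equals $0$ otherwise.

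This is where properness of the output matters. Because $\bigl(\de{C}_s^\dagger(\mu)\bigr)(\bot)=\sum_{(\sigma,n)}\mu(\sigma,n)\,\bigl(\de{C}_s(\sigma,n)\bigr)(\bot)=0$, every $(\sigma,n)\in\supp(\mu)$ has $\bigl(\de{C}_s(\sigma,n)\bigr)(\bot)=0$. Inputs outside the support contribute nothing regardless. The sum therefore collapses to $\sum_{(\sigma,n),\,n\in E^{-k}}\mu(\sigma,n)=((\pi_2)_*\mu)(E^{-k})$, as required.

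For the consequence, apply the first part to $(C_1,\mu_1)$ and to $(C_2,\mu_2)$ with the same $k$; the hypotheses give properness of both outputs. For each $E$ we then get $((\pi_2)_*\mu_1)(E^{-k})=((\pi_2)_*\mu_2)(E^{-k})$, since the input count marginals agree. As $E$ is arbitrary, the two output marginals coincide.

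The main obstacle I expect is the bookkeeping of the $\bot$ mass. We must use that the output is proper, not merely that the input is, in order to drop the nontermination terms pointwise on $\supp(\mu)$. We also need to justify exchanging the countable sums, which is standard because all terms are nonnegative.
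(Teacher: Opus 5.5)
Your proposal is correct and follows essentially the same route as the paper's proof. Both arguments expand the Kleisli extension as a weighted sum over inputs. Both then use the singleton-support lemma to place all ordinary output mass of each input $(\sigma,n)$ at count $n+k$, and use properness of the output together with nonnegativity to discard the $\bot$ terms on $\supp(\mu)$. Finally, both obtain the two-command consequence by applying the identity to each command and using that the input count marginals agree.
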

\begin{proof}
  For $E\subseteq\bbN$, write
  $E-k\triangleq\{n\in\bbN\mid n+k\in E\}$.  By
  \Cref{lem:index-singleton-bits-count-support}, every ordinary output of
  $\de{C}_s(\sigma,n)$ has count $n+k$.  Since
  $\de{C}_s^\dagger(\mu)$ is proper, no input mass contributes to $\bot$ in
  the aggregate: by nonnegativity,
  $\mu(\sigma,n)\cdot\de{C}_s(\sigma,n)(\bot)=0$ for every $(\sigma,n)$.
  Thus summing over inputs gives
  \begin{align*}
    \bigl((\pi_2)_*(\de{C}_s^\dagger(\mu))\bigr)(E)
    &=
      \bigl(\de{C}_s^\dagger(\mu)\bigr)(\Mem\times E)\\
    &=
      \sum_{(\sigma,n)\in\Mem\times\bbN}
        \mu(\sigma,n)\cdot
        \bigl(\de{C}_s(\sigma,n)\bigr)(\Mem\times E)
      \tag{definition of Kleisli extension}\\
    &=
      \sum_{\substack{(\sigma,n)\in\Mem\times\bbN\\ n+k\in E}}
        \mu(\sigma,n)
      \tag{support lemma and properness}\\
    &=
      ((\pi_2)_*\mu)(E-k)
  \end{align*}

  The consequence follows by applying the displayed equality to $C_1$ and
  $C_2$.  For every $E\subseteq\bbN$,
  \[
    \bigl((\pi_2)_*(\de{C_1}_s^\dagger(\mu_1))\bigr)(E)
    =
      ((\pi_2)_*\mu_1)(E-k)
    =
      ((\pi_2)_*\mu_2)(E-k)
    =
      \bigl((\pi_2)_*(\de{C_2}_s^\dagger(\mu_2))\bigr)(E)\]
  Hence
  $(\pi_2)_*\bigl(\de{C_1}_s^\dagger(\mu_1)\bigr)
  =
  (\pi_2)_*\bigl(\de{C_2}_s^\dagger(\mu_2)\bigr)$.
\end{proof}

\begin{lemma}[Owned-cell point update]
  \label{lem:index-owned-cell-point-update}
  Let $\mu\in\calD(\Mem\times\bbN)$ and let
  $\calR_\psi,\calR_x,\calR_F$ be resources such that
  \[
    \Gamma,\calR_\psi\vDash\psi,
    \qquad
    \Gamma,\calR_x\vDash\sure{\own(x)},
    \qquad
    (\calR_\psi\otimes\calR_x)\otimes\calR_F\preceq\mu
  \]
  For $a\in\Val$, define the update map
  $U_a^x:\Mem\times\bbN\to\Mem\times\bbN$ for updating $x$ with $a$, as
  \[
    U_a^x(\sigma,n)\triangleq(\sigma[x\coloneqq a],n),
  \]
  and define
  $\mu[a/x]\triangleq(U_a^x)_*\mu$.
  Then there exists a resource $\calR_x^a$ such that
  \[
    \Gamma,\calR_x^a\vDash\sure{x\mapsto a},
    \qquad
    \Proj{\calR_x^a}{\bbN}=\Proj{\calR_x}{\bbN},
    \qquad
    (\calR_\psi\otimes\calR_x^a)\otimes\calR_F
    \preceq
    \mu[a/x]
  \]
\end{lemma}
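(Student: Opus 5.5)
The plan is to construct $\calR_x^a$ from $\calR_x$ by keeping its probability space and count map and replacing only the memory observation. We will define
\[
  \calR_x^a \triangleq \langle \calP_{\calR_x},\, \{x\},\, \omega\mapsto[x\mapsto a],\, \cntfn_{\calR_x}\rangle .
\]
The footprint is $\{x\}$: the satisfaction $\Gamma,\calR_x\vDash\sure{\own(x)}$ forces $x\in V_{\calR_x}$ on a full-measure event, and we discard any other variables, since $\own(x)$ only needs $x$. Because $\memfn_{\calR_x^a}$ is constant, its preimage of $\sem{x\mapsto a}_\Gamma^{\{x\}}$ is all of $\Omega_{\calR_x}$, which is measurable with measure $1$. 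This gives $\Gamma,\calR_x^a\vDash\sure{x\mapsto a}$. The count component is unchanged, so $\Proj{\calR_x^a}{\bbN}=\Proj{\calR_x}{\bbN}$ holds by definition.

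The refinement is the substantive part. Let $\lambda$ be a coupling witnessing $(\calR_\psi\otimes\calR_x)\otimes\calR_F\preceq\mu$, with sample space $\Omega$. Since $\calR_x^a$ shares its probability space with $\calR_x$, the products $(\calR_\psi\otimes\calR_x^a)\otimes\calR_F$ and $(\calR_\psi\otimes\calR_x)\otimes\calR_F$ can be built over the same index set with the same measure, event space and count map. Only their memory maps differ, and only in the $x$-coordinate. We then push $\lambda$ forward along $\mathrm{id}\times U_a^x$ to get $\lambda'$. The left marginal of $\lambda'$ is unchanged, and its right marginal is $(U_a^x)_*\mu=\mu[a/x]$.

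The remaining step, and the expected obstacle, is the support condition. Take $\lambda(\omega,(\sigma,n))>0$ with $\omega$ corresponding to $(\omega_\psi,\omega_x,\omega_F)$. We know
\[
  \memfn_{\calR_\psi}(\omega_\psi)\uplus\memfn_{\calR_x}(\omega_x)\uplus\memfn_{\calR_F}(\omega_F)\sqsubseteq\sigma|_V ,
\]
and we must show the same with $[x\mapsto a]$ in place of $\memfn_{\calR_x}(\omega_x)$, relative to $\sigma[x\coloneqq a]$. Counts are preserved because $U_a^x$ keeps $n$. For memory, it suffices that $x\notin V_{\calR_\psi}\cup V_{\calR_F}$. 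Footprint disjointness of the product gives this as soon as $x\in V_{\calR_x}$.

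The delicate point is that $\sqsubseteq$ uses partial memories, so $\memfn_{\calR_x}(\omega_x)$ might omit $x$ on a null set. We handle this using the full-measure event on which $x$ is defined. The step requiring care is to argue that $x\in V_{\calR_x}$, so that disjointness excludes $x$ from the other footprints. Then the $\psi$- and $F$-components still agree with $\sigma[x\coloneqq a]$ off $x$, and the new $x$-component agrees at $x$. Finally, points of $\Omega$ lying in $\lambda$-null events contribute no mass to $\lambda'$, so the support condition only needs checking off those null sets.
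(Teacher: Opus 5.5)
Your proposal is correct and is essentially the paper's proof. It keeps $\calP_{\calR_x}$ and $\cntfn_{\calR_x}$, so the count marginal is unchanged. It pushes the coupling forward along $\mathrm{id}\times U_a^x$. It checks the support condition using $x\in V_{\calR_x}$ together with footprint disjointness, which gives $x\notin V_{\calR_\psi}\cup V_{\calR_F}$.

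The main difference is the choice of memory observation. The paper keeps the footprint $V_{\calR_x}$ and uses $\omega\mapsto\memfn_{\calR_x}(\omega)[x\coloneqq a]$, whereas you use the constant $[x\mapsto a]$ on $\{x\}$. With your choice the product memories also drop $V_{\calR_x}\setminus\{x\}$ rather than differing only at $x$, so your claim that they differ "only in the $x$-coordinate" is not quite accurate. This is harmless under $\sqsubseteq$, and it still gives $V_\calQ\subseteq V_\calR$ where the lemma is used. Your null-set worry is also unnecessary, because the new $x$-component is defined at every index.
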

\begin{proof}
  Since $\Gamma,\calR_x\vDash\sure{\own(x)}$, the footprint condition gives
  $x\in V_{\calR_x}$.  Define
  \[
    \calR_x^a
    \triangleq
    \left\langle
      \calP_{\calR_x},\,
      V_{\calR_x},\,
      \omega\mapsto\memfn_{\calR_x}(\omega)[x\coloneqq a],\,
      \cntfn_{\calR_x}
    \right\rangle
  \]
  Then $\Proj{\calR_x^a}{\bbN}=\Proj{\calR_x}{\bbN}$, and every memory
  observation of $\calR_x^a$ maps $x$ to $a$, so
  $\Gamma,\calR_x^a\vDash\sure{x\mapsto a}$.

  It remains to prove the distribution refinement.  Put
  \[
    \calR_0\triangleq(\calR_\psi\otimes\calR_x)\otimes\calR_F,
    \qquad
    \calR_a\triangleq(\calR_\psi\otimes\calR_x^a)\otimes\calR_F
  \]
  Choose the same product index set and product pairing for $\calR_0$ and
  $\calR_a$.  They have the same probability space, footprint, and count map,
  and for every index $\omega$,
  $\memfn_{\calR_a}(\omega)
  =
  \memfn_{\calR_0}(\omega)[x\coloneqq a]$.

  Let $\lambda$ witness $\calR_0\preceq\mu$, and push it forward along
  $\id_{\Omega_{\calR_0}}\times U_a^x$, i.e.
  $\lambda_a\triangleq(\id_{\Omega_{\calR_0}}\times U_a^x)_*\lambda$.
  The resource marginal is unchanged, and the distribution marginal is
  $(U_a^x)_*\mu=\mu[a/x]$.  It remains only to check support.  If
  $\lambda_a(\omega,(\sigma',n'))>0$, then for some $(\sigma,n)$,
  $\lambda(\omega,(\sigma,n))>0$ and
  $U_a^x(\sigma,n)=(\sigma',n')$.  Thus
  $\sigma'=\sigma[x\coloneqq a]$ and $n'=n$.  Since $\lambda$ witnesses
  $\calR_0\preceq\mu$, by $x\in V_{\calR_0}$,
  $\sigma'=\sigma[x\coloneqq a]$, and $V_{\calR_a}=V_{\calR_0}$,
  \[
    \memfn_{\calR_a}(\omega)
    =
    \memfn_{\calR_0}(\omega)[x\coloneqq a]
    \sqsubseteq
    \ForgetMem{\sigma}{V_{\calR_0}}[x\coloneqq a]
    =
    \ForgetMem{\sigma[x\coloneqq a]}{V_{\calR_0}}
    \\
    =
    \ForgetMem{\sigma'}{V_{\calR_a}}
  \]
  \[
    \cntfn_{\calR_a}(\omega)
    =
    \cntfn_{\calR_0}(\omega)
    =
    n
    =
    n'
  \]
  Thus $\lambda_a$ witnesses $\calR_a\preceq\mu[a/x]$.
\end{proof}

\begin{lemma}[Owned-cell mixture update]
  \label{lem:index-owned-cell-mixture-update}
  Let $\mu\in\calD(\Mem\times\bbN)$ and let
  $\calR_\psi,\calR_x,\calR_F$ be resources such that
  \[
    \Gamma,\calR_\psi\vDash\psi,
    \qquad
    \Gamma,\calR_x\vDash\sure{\own(x)},
    \qquad
    (\calR_\psi\otimes\calR_x)\otimes\calR_F\preceq\mu
  \]
  For $a\in\Val$, define the update map
  $U_a^x:\Mem\times\bbN\to\Mem\times\bbN$ for updating $x$ with $a$, as
  \[
    U_a^x(\sigma,n)\triangleq(\sigma[x\coloneqq a],n),
  \]
  and define
  $\mu[a/x]\triangleq(U_a^x)_*\mu$.
  Then, for every countable $A\subseteq\Val$ and
  $\alpha\in\calD(A)$, there is a resource $\calR_x^\alpha$ such that
  \[
    \Gamma,\calR_x^\alpha
    \vDash
    \bigoplus^\Priv_{Y\sim\alpha}\sure{x\mapsto Y},
    \qquad
    (\calR_\psi\otimes\calR_x^\alpha)\otimes\calR_F
    \preceq
    \bigoplus_{a\sim\alpha}\mu[a/x]
  \]
\end{lemma}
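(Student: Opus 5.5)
Our plan is to build $\calR_x^\alpha$ as a direct sum of point updates and then glue the pieces together with direct-sum monotonicity. For each $a\in\supp(\alpha)$, \Cref{lem:index-owned-cell-point-update} yields a resource $\calR_x^a$ with three properties: $\Gamma,\calR_x^a\vDash\sure{x\mapsto a}$, $\Proj{\calR_x^a}{\bbN}=\Proj{\calR_x}{\bbN}$, and $(\calR_\psi\otimes\calR_x^a)\otimes\calR_F\preceq\mu[a/x]$. From its proof, $\calR_x^a$ has the same probability space, footprint, and count map as $\calR_x$, and differs only in the memory map. We therefore reindex each $\calR_x^a$ onto a fresh copy $\Omega_a$ of $\Omega_{\calR_x}$, making the copies pairwise disjoint; this does not change anything up to $\cong$. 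We then set $\calR_x^\alpha\triangleq\bigoplus_{a\sim\alpha}\calR_x^a$. This direct sum is well formed, since all branches share the footprint $V_{\calR_x}$.

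\emph{Satisfaction.} The assertion $\bigoplus^\Priv_{Y\sim\alpha}\sure{x\mapsto Y}$ is witnessed by the branches $(\calR_x^a)_a$ themselves, and the refinement $\bigoplus_{a\sim\alpha}\calR_x^a\preceq\calR_x^\alpha$ holds by reflexivity. Each branch satisfies $\Gamma[Y\mapsto a],\calR_x^a\vDash\sure{x\mapsto Y}$, because $\sure{x\mapsto a}$ holds under $\Gamma$ and $Y$ is fresh. For the private side condition we need $\Proj{\calR_x^a}{\bbN}=\Proj{\calR_x^\alpha}{\bbN}$. The direct-sum count law in \Cref{fig:resource-algebra-refinement-facts}(d) gives $\Proj{\calR_x^\alpha}{\bbN}=\bigoplus_{a\sim\alpha}\Proj{\calR_x^a}{\bbN}$. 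Every branch has the common marginal $\Proj{\calR_x}{\bbN}$, so this mixture equals $\Proj{\calR_x}{\bbN}$, and hence equals each branch marginal.

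\emph{Refinement.} We chain three steps:
\[
(\calR_\psi\otimes\calR_x^\alpha)\otimes\calR_F
\cong
\bigoplus_{a\sim\alpha}\bigl((\calR_\psi\otimes\calR_x^a)\otimes\calR_F\bigr)
\preceq
\bigoplus_{a\sim\alpha}\mu[a/x].
\]
The first step uses commutativity and associativity of $\otimes$ together with distributivity of $\otimes$ over $\oplus$, applied twice (\Cref{fig:resource-algebra-refinement-facts}(c)). The second step uses the refinement constructor for direct sums in \Cref{fig:resource-algebra-refinement-facts}(h), fed by the branchwise refinements from the point-update lemma. Finally, transport along $\cong$ (also (h)) gives the stated refinement.

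The main obstacle is the bookkeeping in the distributivity step. The law $(\bigoplus_i\calR_i)\otimes\calR_F\cong\bigoplus_i(\calR_i\otimes\calR_F)$ is stated only with the frame on the right, so we must commute $\calR_\psi$ past the sum, distribute, and commute back, making sure the fresh-index reindexings compose into a single resource isomorphism. We must also check that the disjoint-support hypothesis of the direct sum still holds after taking products, which we do by choosing fresh product index sets for each branch. The remaining steps are routine.
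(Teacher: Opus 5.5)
Your proposal is correct and follows essentially the same route as the paper: take fresh copies of the point-update witnesses $\calR_x^a$ and form their direct sum as $\calR_x^\alpha$. The private side condition then comes from the common count marginal $\Proj{\calR_x}{\bbN}$, and the framed refinement comes from distributivity of $\otimes$ over $\oplus$ together with the direct-sum refinement constructor.
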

\begin{proof}
  By \Cref{lem:index-owned-cell-point-update}, for each
  $a\in\supp(\alpha)$ choose a fresh copy of a resource $\calR_x^a$ such that
  \[
    \Gamma,\calR_x^a\vDash\sure{x\mapsto a},
    \qquad
    \Proj{\calR_x^a}{\bbN}=\Proj{\calR_x}{\bbN},
    \qquad
    (\calR_\psi\otimes\calR_x^a)\otimes\calR_F
    \preceq
    \mu[a/x]
  \]
  Thus, for every $b\in\supp(\alpha)$,
  \begin{equation}
    \Proj{\calR_x^b}{\bbN}=\Proj{\calR_x}{\bbN}.
    \label{eq:index-owned-cell-mixture-count-projection}
  \end{equation}
  Put
  $\calR_x^\alpha\triangleq\bigoplus_{a\sim\alpha}\calR_x^a$, using fresh
  copies of the branch resources.  The satisfaction clause for
  $\bigoplus^\Priv$ is witnessed by
  $(\calR_x^a)_{a\in\supp(\alpha)}$.  The refinement and branch-satisfaction
  obligations are
  \[
    \bigoplus_{a\sim\alpha}\calR_x^a
    =
    \calR_x^\alpha
    \preceq
    \calR_x^\alpha
    \tag{reflexivity}
  \]
  and, for every $a\in\supp(\alpha)$,
  \begin{align*}
    \Gamma,\calR_x^a
    &\vDash
      \sure{x\mapsto a}
      \tag{\Cref{lem:index-owned-cell-point-update}}\\
    &\Longleftrightarrow
      \Gamma,\calR_x^a
      \vDash
      \sure{x\mapsto Y}[a/Y]
      \tag{definition of assertion substitution}\\
    &\Longleftrightarrow
      \Gamma[Y\mapsto\de{a}_{\LExp}(\Gamma)],\calR_x^a
      \vDash
      \sure{x\mapsto Y}
      \tag{\Cref{lem:index-assertion-logical-substitution}
      with $\varphi=\sure{x\mapsto Y}$}\\
    &\Longleftrightarrow
      \Gamma[Y\mapsto a],\calR_x^a
      \vDash
      \sure{x\mapsto Y}.
      \tag{$\de{a}_{\LExp}(\Gamma)=a$}
  \end{align*}
  The private side condition follows from \eqref{eq:index-owned-cell-mixture-count-projection}: for every $a\in\supp(\alpha)$,
  \[
    \Proj{\calR_x^\alpha}{\bbN}
    =
    \bigoplus_{b\sim\alpha}\Proj{\calR_x^b}{\bbN}
    =
    \bigoplus_{b\sim\alpha}\Proj{\calR_x}{\bbN}
    =
    \Proj{\calR_x}{\bbN}
    =
    \Proj{\calR_x^a}{\bbN}
  \]
  The distribution-refinement claim is
  \[
    (\calR_\psi\otimes\calR_x^\alpha)\otimes\calR_F
    \cong
    \bigoplus_{a\sim\alpha}
    \bigl((\calR_\psi\otimes\calR_x^a)\otimes\calR_F\bigr)
    \preceq
    \bigoplus_{a\sim\alpha}\mu[a/x]
  \]
  via properties in \Cref{fig:resource-algebra-refinement-facts}
\end{proof}

\subsection{Inference Rules}
\label{app:inference-rules}
\begin{lemma}[Soundness of sequential inference rules]
  \label{lem:index-sequential-rules-soundness}
  Each sequential inference rule displayed below is sound: whenever an
  instance of one of these rules derives
  $\vdash_m\triple{\varphi}{C}{\psi}$ from semantically valid premises,
  the semantic triple $\vDash_m\triple{\varphi}{C}{\psi}$ holds.
\end{lemma}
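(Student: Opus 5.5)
The proof goes rule by rule, unfolding \Cref{def:triple} each time. We fix $s$, $\Gamma$, $\mu$, $\calR$ and $\calR_F$ with $\calR\otimes\calR_F\preceq\mu$, $\Stable{m}{\calR_F}$ and $\Gamma,\calR\vDash\varphi$, and we construct $\calQ$ with $\calQ\otimes\calR_F\preceq\de{C}_s^\dagger(\mu)$, $\Gamma,\calQ\vDash\psi$ and $V_\calQ\subseteq V_\calR$. The output distribution is always computed through \Cref{lem:index-schedule-indexed-distribution-semantic-equations} and \Cref{lem:index-schedule-indexed-rule-output-equations}.

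\emph{Easy rules.} \ruleref{Skip} is immediate with $\calQ=\calR$. For \ruleref{Seq}, the premise for $C_1$ yields an intermediate $\calQ_1$ refining $\de{C_1}_s^\dagger(\mu)$. Since $\calQ_1\otimes\calR_F\preceq\de{C_1}_s^\dagger(\mu)$ and \Cref{fig:resource-algebra-refinement-facts}(g) applies, this distribution is proper. We then feed it to the $C_2$ premise and use the sequencing equation. The footprint bounds compose.

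For \ruleref{IfT}, \Cref{lem:index-guard-assertion-refinement-transfer} gives $\mu(G_\tru)=1$. The if-semantics therefore agrees with $\de{C_1}_s^\dagger(\mu)$, and the premise applies directly.

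For \ruleref{Assign} and \ruleref{Samp}, we split $\calR$ via the premise entailment into $\calR_\psi\otimes\calR_x\preceq\calR$. We use $\sure{e\mapsto E}$ and \Cref{lem:index-pure-assertion-refinement-transfer} to see that $e$ evaluates to $\de{E}(\Gamma)$ $\mu$-almost surely. The output is then $\mu[a/x]$, respectively $\bigoplus_{a\sim\alpha}\mu[a/x]$. \Cref{lem:index-owned-cell-point-update} and \Cref{lem:index-owned-cell-mixture-update} supply $\calQ=\calR_\psi\otimes\calR_x^a$, respectively $\calR_\psi\otimes\calR_x^\alpha$, with the private mode in the second case. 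We also need the monotonicity transfer $\calR_\psi\otimes\calR_x\otimes\calR_F\preceq\calR\otimes\calR_F\preceq\mu$.

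\emph{Probabilistic choice.} By part (1) of \Cref{lem:index-schedule-indexed-rule-output-equations}, the output is $\nu_1\oplus_p\nu_2$ with $\nu_i=\de{C_i}_s^\dagger(\mu)$. Both premises give $\calQ_i\otimes\calR_F\preceq\nu_i$. We reindex the $\calQ_i$ to disjoint supports and a common footprint $V_\calR$, extending where needed. The distributivity of $\otimes$ over $\oplus$ and the direct-sum refinement constructor then give $(\calQ_1\oplus_p\calQ_2)\otimes\calR_F\preceq\nu_1\oplus_p\nu_2$. The Boolean-tagged disjunction witnesses $\psi_1\oplus_E\psi_2$, and the degenerate cases $p\in\{0,1\}$ are handled with the endpoint laws.

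For \ruleref{Prob-Safe} we also need the private side condition $\Proj{\calQ_i}{\bbN}=\Proj{\calQ_1\oplus_p\calQ_2}{\bbN}$. By \Cref{lem:index-singleton-bits-count-marginal}, both $\nu_i$ have the same count marginal, namely $\mu$'s marginal shifted by $k$. Refinement fixes marginals: $\Proj{\calQ_i\otimes\calR_F}{\bbN}=(\pi_2)_*\nu_i$. The obstacle is that $\calQ_i$'s own marginal is this one deconvolved by $\calR_F$'s count. When $m=\weak$, $\calR_F$ has zero counts and there is no issue. In strong mode, equal convolutions need not give equal factors, so the plan is to push $\calR_F$'s count into the frame-free component. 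Concretely, we use $\Stable{\weak}{}$-style splitting, or observe that the induction hypothesis can be instantiated with a frame $\calR_F^\memfn$ and the count fragment moved into $\calR$ via \Cref{lem:index-assertion-frame-preservation}. This makes the residual frame count-free before invoking the premises.

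\emph{Nondeterminism.} For \ruleref{NAssign} and \ruleref{ND} (weak mode, so $\calR_F$ has counts $0$), we first apply $\Stable{\weak}{\varphi}$, respectively $\Stable{\weak}{\psi}$, to decompose the active resource into a count-free memory part and a count part. We then condition on $I_i=\{n\mid s(n)\bmod|A|=i\}$ using $\RestrictN{}{I_i}$, as in part (2) of \Cref{lem:index-schedule-indexed-rule-output-equations}. Because the memory part and $\calR_F$ are independent of counts, conditioning only affects the count part, so $\RestrictN{(\calR^\memfn\otimes\calR^\cntfn)}{I_i}\otimes\calR_F\cong\calR^\memfn\otimes\RestrictN{\calR^\cntfn}{I_i}\otimes\calR_F$. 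Satisfaction survives by \Cref{lem:index-assertion-frame-preservation}.

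Next we apply $\Tick$ (\Cref{lem:index-tick-satisfaction}) and either invoke the premise on each $C_i$ or perform the point update as before. The branches are reassembled with weights $\alpha(i)$ into a $\bigoplus^\Leak$ with this $\alpha$, which witnesses $\bignd$. Making this conditioning step precise is the main technical obstacle: it requires showing that count-restriction commutes with the product when the other factor has zero count, which is where stability is essential.
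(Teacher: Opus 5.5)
\paragraph{Assessment.} For the rules you treat, your arguments for \ruleref{Skip}, \ruleref{Seq}, \ruleref{IfT}, \ruleref{Assign}, \ruleref{Samp}, \ruleref{Prob-Leak}, \ruleref{NAssign} and \ruleref{ND} follow the paper's proof closely. This includes the stability split, restriction to $I_i$ with the zero-count factors untouched, $\Tick$, and recombination with weights $\alpha(i)$.

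\paragraph{The gap: \ruleref{Prob-Safe} in strong mode.} You claim that equal convolutions need not have equal factors. That is false here, because count marginals are distributions on $\bbN$, and convolution on $\bbN$ is cancellative. Write $\nu_F\triangleq\Proj{\calR_F}{\bbN}$ and let $n_0$ be the least element of its support. Then $\bigl((\mathsf{add}_{\bbN})_*(\alpha\otimes\nu_F)\bigr)(n_0+k)=\sum_{j\le k}\alpha(j)\,\nu_F(n_0+k-j)$. This is a triangular system with positive diagonal coefficient $\nu_F(n_0)$, so $\alpha$ is recovered by induction on $k$. Consequently, $\Proj{\calQ_i\otimes\calR_F}{\bbN}=(\pi_2)_*\nu_i$ together with $(\pi_2)_*\nu_1=(\pi_2)_*\nu_2$ gives $\Proj{\calQ_1}{\bbN}=\Proj{\calQ_2}{\bbN}$ directly, in either mode. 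That is exactly the paper's step.

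The workaround you substitute does not go through, for two reasons.
\begin{itemize}
\item A strong-mode frame carries no stability hypothesis; stability is only assumed of the assertion $\vartheta$ in \ruleref{Frame}, never of the semantic frame. So the frame's memory may be correlated with its count, and $\calR_F^\memfn\otimes\calR_F^\cntfn\preceq\calR_F$ fails in general.
\item Even granting such a split, running the premises with frame $\calR_F^\memfn$ only yields $\calQ_i\otimes\calR_F^\memfn\preceq\nu_i$. \Cref{def:triple} demands $\calQ\otimes\calR_F\preceq\nu$ for the original $\calR_F$. You would still have to peel $\calR_F$'s count back out of $\calQ_i$, which is the very cancellation you were trying to avoid.
\end{itemize}

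\paragraph{Missing rules.} The paper's lemma also covers the structural rules displayed in its proof, which your proposal does not address:
\begin{itemize}
\item \ruleref{IfF}, \ruleref{Weakening}, \ruleref{Disj}, \ruleref{Consequence} and \ruleref{Inst};
\item \ruleref{Frame}, whose weak case moves the count fragment $\calR_\vartheta^\cntfn$ into the active resource and hands $\calR_\vartheta^\memfn\otimes\calR_F$ to the premise as a count-free frame;
\item the split rules \textsc{Pub-Split}, \textsc{Priv-Split} and \textsc{ND-Split}, with their derived forms.
\end{itemize}
The private side condition of \textsc{Priv-Split} is again obtained by cancelling $\Proj{\calR_F}{\bbN}$, so the same misconception would block that case as well.

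\paragraph{A slip.} Your ``respectively'' in the nondeterminism paragraph is reversed. \ruleref{NAssign} assumes $\Stable{\weak}{\psi}$, and its precondition $\varphi$ need not be stable. \ruleref{ND} assumes $\Stable{\weak}{\varphi}$.
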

\begin{proof}
  We prove the rules in order.  Each case must also establish the footprint
  bound $V_\calQ\subseteq V_\calR$ required by the validity of \LogicName{}
  triples; for the structural and base-command rules this is uniform, so we
  record the argument once here rather than repeating it in every case.  For
  \ruleref{Skip}, \ruleref{Seq}, \ruleref{IfT}, \ruleref{IfF},
  \ruleref{Weakening}, \ruleref{Disj}, and \ruleref{Inst} the output witness
  $\calQ$ is a premise witness, or a composition of premise witnesses, so the
  bound follows by reflexivity and transitivity from the premises, which carry
  the same invariant.  For \ruleref{Assign} and \ruleref{Samp} the output
  resource is built over $V_\calR$ and modifies only the owned cell $x$; since
  the precondition entails $\sure{\own(x)}$, the footprint condition gives
  $x\in V_\calR$, so no new variables are introduced.  For \ruleref{Frame} the
  witness $\calQ_\psi\otimes\calR_\vartheta$ has footprint
  $V_{\calQ_\psi}\cup V_{\calR_\vartheta}\subseteq V_\calR$.  The remaining
  rules, which recombine branch witnesses into a direct sum
  (\ruleref{ND}, \ruleref{NAssign}, \ruleref{Prob-Leak}, \ruleref{Prob-Safe},
  and the split rules), establish the bound as part of the branch construction
  in their respective cases.
  \begin{enumerate}[label=\textup{(\arabic*)}]
    \item
  \begin{mathpar}
    \inferrule*[right=Skip]
    { }
    {\vdash_m \triple{\varphi}{\cmdskip}{\varphi}}
  \end{mathpar}

  Immediate: given any semantic input
  $\Gamma,\calR\vDash\varphi$ and
  $\calR\otimes\calR_F\preceq\mu$, take $\calQ\triangleq\calR$.
  Then $\Gamma,\calQ\vDash\varphi$, and
  $\de{\cmdskip}_s^\dagger(\mu)=\mu$ by
  \Cref{lem:index-schedule-indexed-distribution-semantic-equations}.

    \item
  \begin{mathpar}
    \inferrule*[right=Seq]
    {
      \vdash_m\triple{\varphi}{C_1}{\theta}
      \and
      \vdash_m\triple{\theta}{C_2}{\psi}
    }
    {
      \vdash_m\triple{\varphi}{C_1 \cmdseqS C_2}{\psi}
    }
  \end{mathpar}

  Immediate after \Cref{lem:index-schedule-indexed-distribution-semantic-equations}.

    \item
  \begin{mathpar}
    \inferrule*[right=Assign]
    {
      \varphi \Rightarrow
      \sure{e \mapsto E} \land (\psi \sep \sure{\own(x)})
    }
    {
      \vdash_m
      \triple{\varphi}{\actassign{x}{e}}
      {\psi \sep \sure{x \mapsto E}}
    }
  \end{mathpar}

  Let $m\in\{\strong,\weak\}$, and fix
  $s,\Gamma,\mu,\calR,\calR_F$ such that
  \[
    \Gamma,\calR\vDash\varphi,
    \qquad
    \Stable{m}{\calR_F},
    \qquad
    \calR\otimes\calR_F\preceq\mu
  \]
  Since refinement against a partial distribution is defined only for proper
  targets, $\mu$ is proper; from now on we use the same symbol for its ordinary
  coercion.  Also let
  $a\triangleq\de{E}_{\LExp}(\Gamma)$.

  The entailment premise gives
  \[
    \Gamma,\calR\vDash\sure{e\mapsto E},
    \qquad
    \Gamma,\calR\vDash\psi\sep\sure{\own(x)}
  \]
  Unfold the separating conjunction and choose
  $\calR_\psi,\calR_x$ such that
  \[
    \calR_\psi\otimes\calR_x\preceq\calR,
    \qquad
    \Gamma,\calR_\psi\vDash\psi,
    \qquad
    \Gamma,\calR_x\vDash\sure{\own(x)}
  \]
  Hence
  $(\calR_\psi\otimes\calR_x)\otimes\calR_F
  \preceq
  \calR\otimes\calR_F
  \preceq
  \mu 
  $.
  Applying
  \Cref{lem:index-owned-cell-point-update} gives a resource
  $\calR_x^a$ such that
  \[
    \Gamma,\calR_x^a\vDash\sure{x\mapsto a},
    \qquad
    (\calR_\psi\otimes\calR_x^a)\otimes\calR_F
    \preceq
    \mu[a/x]
  \]

  We now identify the command output.  By
  \Cref{lem:index-assertion-frame-preservation},
  $\Gamma,\calR\otimes\calR_F\vDash\sure{e\mapsto E}$.  Applying
  \Cref{lem:index-pure-assertion-refinement-transfer} to
  $P\triangleq e\mapsto E$ and
  $\calR\otimes\calR_F\preceq\mu$, and then unfolding the event returned by
  that lemma, gives
  \begin{align*}
    1
    &=
      \mu\bigl(\{(\sigma,n)\mid
      \ForgetMem{\sigma}{\Var(e)}
      \in
      \sem{e\mapsto E}_\Gamma^{\Var(e)}\}\bigr)
      \tag{\Cref{lem:index-pure-assertion-refinement-transfer}
      and $\Var(e\mapsto E)=\Var(e)$}\\
    &=
      \mu\bigl(\{(\sigma,n)\mid
      \Gamma,\ForgetMem{\sigma}{\Var(e)}
      \vDash e\mapsto E\}\bigr)
      \tag{definition of $\sem{P}_\Gamma^S$ with $P=e\mapsto E$}\\
    &=
      \mu\bigl(\{(\sigma,n)\mid
      \de{e}_{\Exp}(\ForgetMem{\sigma}{\Var(e)})
      =
      \de{E}_{\LExp}(\Gamma)\}\bigr)
      \tag{satisfaction clause for $e\mapsto E$}\\
    &=
      \mu\bigl(\{(\sigma,n)\mid
      \de{e}_{\Exp}(\sigma)=a\}\bigr).
      \tag{evaluation of $e$ depends only on $\Var(e)$ and definition of $a$}
  \end{align*}
  Therefore
  \begin{align*}
    \de{\actassign{x}{e}}_s^\dagger(\mu)
    &=
      \bigoplus_{(\sigma,n)\sim\mu}
      \delta_{(\sigma[x\coloneqq\de{e}_{\Exp}(\sigma)],n)}
      \tag{\Cref{lem:index-schedule-indexed-distribution-semantic-equations}}\\
    &=
      \bigoplus_{(\sigma,n)\sim\mu}
      \delta_{(\sigma[x\coloneqq a],n)}
      \tag{$\mu(\{(\sigma,n)\mid\de{e}_{\Exp}(\sigma)=a\})=1$}\\
    &=
      \mu[a/x].
      \tag{definition of $\mu[a/x]$}
  \end{align*}
  Let $\calQ\triangleq\calR_\psi\otimes\calR_x^a$.  The satisfaction clause
  for $\sep$ gives
  $\Gamma,\calQ\vDash\psi\sep\sure{x\mapsto E}$, and the displayed refinement
  gives
  \[
    \calQ\otimes\calR_F
    \preceq
    \de{\actassign{x}{e}}_s^\dagger(\mu)
  \]
  Thus $\calQ$ is the required postcondition witness.

    \item
  \begin{mathpar}
    \inferrule*[right=SAMP]
    {
      \Stable{\weak}{\psi}
      \and
      \varphi \Rightarrow
      \sure{\vec e \mapsto \vec E} \land
      (\psi \sep \sure{\own(x)})
    }
    {
      \vdash_m
      \triple{\varphi}{x\samp d(\vec e)}
      {\psi \sep (x \sim d(\vec E))}
    }
  \end{mathpar}

  Let $m\in\{\strong,\weak\}$, and fix
  $s,\Gamma,\mu,\calR,\calR_F$ such that
  \[
    \Gamma,\calR\vDash\varphi,
    \qquad
    \Stable{m}{\calR_F},
    \qquad
    \calR\otimes\calR_F\preceq\mu
  \]
  Since refinement against a partial distribution is defined only for proper
  targets, $\mu$ is proper; from now on we use the same symbol for its ordinary
  coercion.  Consider the distribution
  $\alpha\triangleq d(\de{\vec E}_{\LExp}(\Gamma))$.

  The entailment premise gives
  \[
    \Gamma,\calR\vDash\sure{\vec e\mapsto\vec E},
    \qquad
    \Gamma,\calR\vDash\psi\sep\sure{\own(x)}
  \]
  Unfold the separating conjunction and choose
  $\calR_\psi,\calR_x$ such that
  \[
    \calR_\psi\otimes\calR_x\preceq\calR,
    \qquad
    \Gamma,\calR_\psi\vDash\psi,
    \qquad
    \Gamma,\calR_x\vDash\sure{\own(x)}
  \]
  Hence
  $(\calR_\psi\otimes\calR_x)\otimes\calR_F
  \preceq
  \calR\otimes\calR_F
  \preceq
  \mu
  $.
  Applying
  \Cref{lem:index-owned-cell-mixture-update} with this $\alpha$ gives a
  resource $\calR_x^\alpha$ such that
  \[
    \Gamma,\calR_x^\alpha
    \vDash
    \bigoplus^\Priv_{Y\sim\alpha}\sure{x\mapsto Y},
    \qquad
    (\calR_\psi\otimes\calR_x^\alpha)\otimes\calR_F
    \preceq
    \bigoplus_{a\sim\alpha}\mu[a/x]
  \]

  We now identify the command output.  By
  \Cref{lem:index-assertion-frame-preservation},
  $\Gamma,\calR\otimes\calR_F\vDash\sure{\vec e\mapsto\vec E}$, so
  \Cref{lem:index-pure-assertion-refinement-transfer} gives
  \[
    \mu(\{(\sigma,n)\mid
      d(\de{\vec e}_{\Exp}(\sigma))=\alpha\})=1
  \]
  Therefore
  \begin{align*}
    \de{x\samp d(\vec e)}_s^\dagger(\mu)
    &=
      \bigoplus_{(\sigma,n)\sim\mu}
      \bigoplus_{a\sim d(\de{\vec e}_{\Exp}(\sigma))}
      \delta_{(\sigma[x\coloneqq a],n)}
      \tag{\Cref{lem:index-schedule-indexed-distribution-semantic-equations}}\\
    &=
      \bigoplus_{(\sigma,n)\sim\mu}
      \bigoplus_{a\sim\alpha}
      \delta_{(\sigma[x\coloneqq a],n)}
      \tag{$\mu(\{(\sigma,n)\mid d(\de{\vec e}_{\Exp}(\sigma))=\alpha\})=1$}\\
    &=
      \bigoplus_{a\sim\alpha}
      \bigoplus_{(\sigma,n)\sim\mu}
      \delta_{(\sigma[x\coloneqq a],n)}
      \tag{regrouping convex mixtures}\\
    &=
      \bigoplus_{a\sim\alpha}\mu[a/x].
      \tag{definition of $\mu[a/x]$}
  \end{align*}
  Let $\calQ\triangleq\calR_\psi\otimes\calR_x^\alpha$.  Since
  $x\sim d(\vec E)$ abbreviates
  $\bigoplus^\Priv_{Y\sim d(\de{\vec E}_{\LExp}(\Gamma))}
  \sure{x\mapsto Y}$, the satisfaction clause for $\sep$ gives
  \[
    \Gamma,\calQ\vDash\psi\sep(x\sim d(\vec E))
  \]
  The displayed refinement gives
  \[
    \calQ\otimes\calR_F
    \preceq
    \de{x\samp d(\vec e)}_s^\dagger(\mu)
  \]
  Thus $\calQ$ is the required postcondition witness.

    \item
  \begin{mathpar}
    \inferrule*[right=NAssign]
    {
      \Stable{\weak}{\psi}
      \and
      \varphi \Rightarrow
      \sure{e \mapsto E} \land (\psi \sep \sure{\own(x)})
      \and
      X\notin\fv(\psi)
    }
    {
      \vdash_\weak
      \triple{\varphi}{x\gets e}
      {\textstyle\bignd_{X\in E}(\psi \sep \sure{x \mapsto X})}
    }
  \end{mathpar}

  Fix $s,\Gamma,\mu,\calR,\calR_F$ such that
  \[
    \Gamma,\calR\vDash\varphi,
    \qquad
    \Stable{\weak}{\calR_F},
    \qquad
    \calR\otimes\calR_F\preceq\mu
  \]
  Since refinement against a partial distribution is defined only for proper
  targets, $\mu$ is proper; from now on we use the same symbol for its ordinary
  coercion.  Consider the set
  $A\triangleq\de{E}_{\LExp}(\Gamma)$.

  The entailment premise gives
  \[
    \Gamma,\calR\vDash\sure{e\mapsto E},
    \qquad
    \Gamma,\calR\vDash\psi\sep\sure{\own(x)}
  \]
  Unfold the separating conjunction and choose
  $\calR_\psi,\calR_x$ such that
  \[
    \calR_\psi\otimes\calR_x\preceq\calR,
    \qquad
    \Gamma,\calR_\psi\vDash\psi,
    \qquad
    \Gamma,\calR_x\vDash\sure{\own(x)}
  \]
  By $\Stable{\weak}{\psi}$, choose resources
  $\calR_\psi^\memfn$ and $\calR_\psi^\cntfn$ such that
  \[
    \calR_\psi^\memfn\otimes\calR_\psi^\cntfn\preceq\calR_\psi,
    \qquad
    \Stable{\weak}{\calR_\psi^\memfn},
    \qquad
    \Gamma,\calR_\psi^\memfn\vDash\psi,
    \qquad
    V_{\calR_\psi^\cntfn}=\emptyset
  \]
  Let
  $\calR_{x,c}\triangleq\calR_x\otimes\calR_\psi^\cntfn$.
  By \Cref{lem:index-assertion-frame-preservation},
  $\Gamma,\calR_{x,c}\vDash\sure{\own(x)}$.  Moreover,
  \begin{align}
    (\calR_\psi^\memfn\otimes\calR_F)\otimes\calR_{x,c}
    &=
    (\calR_\psi^\memfn\otimes\calR_F)
    \otimes(\calR_x\otimes\calR_\psi^\cntfn)
    \tag{definition of $\calR_{x,c}$}\\
    &\cong
    ((\calR_\psi^\memfn\otimes\calR_\psi^\cntfn)
    \otimes\calR_x)\otimes\calR_F
      \notag\\
    &\preceq
    (\calR_\psi\otimes\calR_x)\otimes\calR_F
    \tag{$\otimes$ monotonicity and reflexivity}\\
    &\preceq
    \calR\otimes\calR_F
    \tag{$\otimes$ monotonicity and reflexivity}\\
    &\preceq
    \mu .
    \label{eq:index-nassign-input-refinement}
  \end{align}

  For $a\in A$, define $I_a$
  as the set of scheduler-count positions at which the fixed schedule selects
  an occurrence of $a$ from the list $A$, that is
  \[
    I_a
    \triangleq
    \{n\in\bbN
    \mid A[(s(n)\bmod |A|)]=a\},
  \]
  and define its distribution as
  $\alpha(a)\triangleq\mu(\Mem\times I_a)$.
  The family $(I_a)_{a\in A}$ partitions $\bbN$ up to empty classes, so
  $\alpha\in\calD(A)$.  Since $\calR_\psi^\memfn$ and $\calR_F$ contain only
  zero scheduler counts, \eqref{eq:index-nassign-input-refinement} gives, for
  each $a\in A$,
  \begin{align*}
    \alpha(a)
    &=
      \mu(\Mem\times I_a)
      \tag{definition of $\alpha$}\\
    &=
      ((\pi_2)_*\mu)(I_a)
      \tag{definition of $(\pi_2)_*$}\\
    &=
      \Proj{(\calR_\psi^\memfn\otimes\calR_F)\otimes\calR_{x,c}}{\bbN}(I_a)
      \tag{
      by \eqref{eq:index-nassign-input-refinement}}\\
    &=
      \left((\mathsf{add}_{\bbN})_*
      \bigl(\Proj{\calR_\psi^\memfn\otimes\calR_F}{\bbN}
      \otimes
      \Proj{\calR_{x,c}}{\bbN}\bigr)\right)(I_a)\\
    &=
      \left((\mathsf{add}_{\bbN})_*
      \bigl(\delta_0\otimes\Proj{\calR_{x,c}}{\bbN}\bigr)\right)(I_a)
      \tag{zero-count factors}\\
    &=
      \Proj{\calR_{x,c}}{\bbN}(I_a).
      \tag{definition of $\mathsf{add}_{\bbN}$}
  \end{align*}
  For each $a\in\supp(\alpha)$, let
  $\calR_{x,c}^a\triangleq\RestrictN{\calR_{x,c}}{I_a}$ and
  $\mu_a\triangleq\RestrictN{\mu}{I_a}$.
  The preceding display ensures that these restrictions are defined.  Moreover,
  \begin{align}
    (\calR_\psi^\memfn\otimes\calR_{x,c}^a)\otimes\calR_F
    &\cong
      (\calR_\psi^\memfn\otimes\calR_F)\otimes\calR_{x,c}^a
      \tag{associativity and commutativity of $\otimes$}\\
    &=
      \RestrictN{((\calR_\psi^\memfn\otimes\calR_F)
      \otimes\calR_{x,c})}{I_a}
      \tag{definition of $\calR_{x,c}^a$ and zero-counts}\\
    &\preceq
      \RestrictN{\mu}{I_a}
      \tag{by
      \eqref{eq:index-nassign-input-refinement}}\\
    &=
      \mu_a .
      \label{eq:index-nassign-restricted-input-refinement}
  \end{align}
  By \Cref{lem:index-count-restriction-pure-assertion} applied to
  $\Gamma,\calR_{x,c},\own(x)$ and $I_a$,
  $\Gamma,\calR_{x,c}^a\vDash\sure{\own(x)}$.

  Applying \Cref{lem:index-owned-cell-point-update} to
  \eqref{eq:index-nassign-restricted-input-refinement} gives, for every
  $a\in\supp(\alpha)$, a resource $\calR_{x,a}$ such that
  \[
    \Gamma,\calR_{x,a}\vDash\sure{x\mapsto a},
    \qquad
    (\calR_\psi^\memfn\otimes\calR_{x,a})\otimes\calR_F
    \preceq
    \mu_a[a/x]
  \]
  By \Cref{lem:index-tick-satisfaction},
  $\Gamma,\Tick(\calR_{x,a})\vDash\sure{x\mapsto a}$.  Also,
  \[
    (\calR_\psi^\memfn\otimes\Tick(\calR_{x,a}))\otimes\calR_F
    \cong
    \Tick\bigl((\calR_\psi^\memfn\otimes\calR_{x,a})\otimes\calR_F\bigr)\\
    \preceq
    \Tick_\calD(\mu_a[a/x])\\
  \]

  For $a\in\supp(\alpha)$, put
  $\calQ_a\triangleq
  \calR_\psi^\memfn\otimes\Tick(\calR_{x,a})$.
  Because $X\notin\fv(\psi)$, context irrelevance gives
  $\Gamma[X\mapsto a],\calR_\psi^\memfn\vDash\psi$.  Hence the satisfaction
  clause for $\sep$ gives
  $\Gamma[X\mapsto a],\calQ_a
  \vDash
  \psi\sep\sure{x\mapsto X}$.

  Let
  $\calQ\triangleq\bigoplus_{a\sim\alpha}\calQ_a$.
  The satisfaction clause for leaky probabilistic choice, with the trivial
  $\Obs_\Leak^\alpha$ side condition, gives
  $\Gamma,\calQ
  \vDash
  \textstyle\bigoplus^\Leak_{X\sim\alpha}
  (\psi\sep\sure{x\mapsto X})$.
  Since $\alpha\in\calD(A)$, the same $\alpha$ witnesses
  $\Gamma,\calQ
  \vDash
  \textstyle\bignd_{X\in E}(\psi\sep\sure{x\mapsto X})$.

  It remains to identify the command output.  By
  \Cref{lem:index-assertion-frame-preservation},
  $\Gamma,\calR\otimes\calR_F\vDash\sure{e\mapsto E}$, so
  \Cref{lem:index-pure-assertion-refinement-transfer} gives
  $\mu(\{(\sigma,n)\mid \de{e}_{\Exp}(\sigma)=A\})=1$.
  Thus
  \begin{align}
    \de{x\gets e}_s^\dagger(\mu)
    &=
      \bigoplus_{(\sigma,n)\sim\mu}
      \delta_{
        (\sigma[x\coloneqq
        \de{e}_{\Exp}(\sigma)[(s(n)\bmod |\de{e}_{\Exp}(\sigma)|)]],n+1)
      }
      \tag{\Cref{lem:index-schedule-indexed-distribution-semantic-equations}}\\
    &=
      \bigoplus_{(\sigma,n)\sim\mu}
      \delta_{
        (\sigma[x\coloneqq
        A[(s(n)\bmod |A|)]],n+1)
      }
      \tag{$\mu(\{(\sigma,n)\mid \de{e}_{\Exp}(\sigma)=A\})=1$}\\
    &=
      \bigoplus_{a\sim\alpha}
      \Tick_\calD(\mu_a[a/x]).
      \tag{definition of $\alpha$, $\mu_a$, and $I_a$}
      \label{eq:index-nassign-output-decomposition}
  \end{align}
  Therefore,
  \[
    \calQ\otimes\calR_F
    =
    \left(\bigoplus_{a\sim\alpha}\calQ_a\right)\otimes\calR_F
    \cong
    \bigoplus_{a\sim\alpha}(\calQ_a\otimes\calR_F)\\
    \preceq
    \bigoplus_{a\sim\alpha}\Tick_\calD(\mu_a[a/x])\\
    =
    \de{x\gets e}_s^\dagger(\mu)
  \]
  Thus $\calQ$ is the required postcondition witness.

    \item
  \begin{mathpar}
    \inferrule*[right=IfT]
    {
      \varphi \Rightarrow \sure{b\mapsto\tru}
      \\
      \vdash_m \triple{\varphi}{C_1}{\psi}
    }
    {
      \vdash_m \triple{\varphi}{\cmdcase{b}{C_1}{C_2}}{\psi}
    }
  \end{mathpar}

  Let $m\in\{\strong,\weak\}$, and fix
  $s,\Gamma,\mu,\calR,\calR_F$ such that
  \[
    \Gamma,\calR\vDash\varphi,
    \qquad
    \Stable{m}{\calR_F},
    \qquad
    \calR\otimes\calR_F\preceq\mu
  \]
  Since refinement against a partial distribution is defined only for proper
  targets, $\mu$ is proper; from now on we use the same symbol for its ordinary
  coercion.

  By the entailment premise, $\Gamma,\calR\vDash\sure{b\mapsto\tru}$.
  Since $\calR\otimes\calR_F$ is defined,
  \Cref{lem:index-assertion-frame-preservation} gives
  $\Gamma,\calR\otimes\calR_F\vDash\sure{b\mapsto\tru}$.
  Consider
  $G_{\tru}
  \triangleq
  \{(\sigma,n)\in\Mem\times\bbN
  \mid \de{b}_{\Exp}(\sigma)=\Ttrue\}$.
  Applying \Cref{lem:index-guard-assertion-refinement-transfer} with guard
  $b$ and $c=\tru$ to
  $\calR\otimes\calR_F\preceq\mu$ gives
  $\mu(G_{\tru})=1$.

  Applying the premise triple for $C_1$ to the same framed input gives a
  resource $\calQ$ such that
  \[
    \Gamma,\calQ\vDash\psi,
    \qquad
    \calQ\otimes\calR_F\preceq\de{C_1}_s^\dagger(\mu)
  \]
  It remains to identify the command denotation.  For every
  $y\in(\Mem\times\bbN)\cup\{\bot\}$,
  \begin{align*}
    \left(\de{\cmdcase{b}{C_1}{C_2}}_s^\dagger(\mu)\right)(y)
    &=
      \sum_{(\sigma,n)\in\Mem\times\bbN}
      \mu(\sigma,n)\cdot
      \begin{cases}
        \de{C_1}_s(\sigma,n)(y),
        & \de{b}_{\Exp}(\sigma)=\Ttrue,\\
        \de{C_2}_s(\sigma,n)(y),
        & \de{b}_{\Exp}(\sigma)=\Tfalse
      \end{cases}
      \tag{\Cref{lem:index-schedule-indexed-distribution-semantic-equations}}\\
    &=
      \sum_{(\sigma,n)\in G_{\tru}}
      \mu(\sigma,n)\cdot\de{C_1}_s(\sigma,n)(y)
      \tag{$\mu(G_{\tru})=1$}\\
    &=
      \sum_{(\sigma,n)\in\Mem\times\bbN}
      \mu(\sigma,n)\cdot\de{C_1}_s(\sigma,n)(y)
      \tag{$\mu((\Mem\times\bbN)\setminus G_{\tru})=0$}\\
    &=
      \left(\de{C_1}_s^\dagger(\mu)\right)(y).
      \tag{definition of Kleisli extension}
  \end{align*}
  Therefore
  $\de{\cmdcase{b}{C_1}{C_2}}_s^\dagger(\mu)=\de{C_1}_s^\dagger(\mu)$, and
  $\calQ$ is the required output witness.

    \item
  \begin{mathpar}
    \inferrule*[right=IfF]
    {
      \varphi \Rightarrow \sure{b\mapsto\fls}
      \\
      \vdash_m \triple{\varphi}{C_2}{\psi}
    }
    {
      \vdash_m \triple{\varphi}{\cmdcase{b}{C_1}{C_2}}{\psi}
    }
  \end{mathpar}

  Symmetric to the IfT case, exchanging
  $\tru$ with $\fls$ and $C_1$ with $C_2$.

    \item
  \begin{mathpar}
    \inferrule*[right=ND]
    {
      \Stable{\weak}{\varphi}\\
      \vdash_\weak\triple{\varphi}{C_0}{\psi_0}\\
      \vdash_\weak\triple{\varphi}{C_1}{\psi_1}
    }
    {
      \vdash_\weak\triple{\varphi}{C_0 \cmdnondetS C_1}{\andop{\psi_0}{\psi_1}}
    }
  \end{mathpar}

  Suppose
  $\Gamma,\calR\vDash\varphi$,
  $\Stable{\weak}{\calR_F}$,
  $\calR\otimes\calR_F\preceq\mu$.
  Since refinement against a partial distribution is defined only for proper
  targets, $\mu$ is proper; from now on we use the same symbol for its ordinary
  coercion.

  By $\Stable{\weak}{\varphi}$, choose a memory resource $\calR^\memfn$ and a
  count resource $\calR^\cntfn$ such that
  $\calR^\memfn\otimes\calR^\cntfn\preceq\calR$
  and
  $\Gamma,\calR^\memfn\vDash\varphi$.
  Since $\calR_F$ is weakly stable, it contains only zero scheduler counts.
  Thus
  \[
    (\calR^\memfn\otimes\calR_F)\otimes\calR^\cntfn
    \cong
    (\calR^\memfn\otimes\calR^\cntfn)\otimes\calR_F
    \preceq
    \calR\otimes\calR_F
    \preceq
    \mu
  \]

  For $i\in\{0,1\}$, let
  $I_i\triangleq\{n\in\bbN\mid s(n)\bmod 2=i\}$ and
  $\alpha(i)\triangleq(\cntfn_{\calR^\cntfn})_*\mu_{\calR^\cntfn}(I_i)$.
  Since $I_0$ and $I_1$ partition $\bbN$, $\alpha\in\calD(\{0,1\})$.  For
  each $i\in\{0,1\}$,
  \begin{align*}
    \alpha(i)
    &=
      \Proj{\calR^\cntfn}{\bbN}(I_i)
      \tag{definition of $\alpha$}\\
    &=
      \Proj{(\calR^\memfn\otimes\calR_F)\otimes\calR^\cntfn}{\bbN}(I_i)
      \tag{
      by zero-count factors}\\
    &=
      ((\pi_2)_*\mu)(I_i)\\
    &=
      \mu(\Mem\times I_i).
      \tag{definition of $(\pi_2)_*$}
  \end{align*}
  Thus, for each $i\in\{0,1\}$,
  \begin{equation}
    \alpha(i)=\mu(\Mem\times I_i).
    \label{eq:index-nd-branch-weight}
  \end{equation}
  For each $i\in\supp(\alpha)$, let
  $\calR_i^\cntfn\triangleq
  \calR^\cntfn\mid\cntfn_{\calR^\cntfn}^{-1}(I_i)$ and
  $\mu_i\triangleq\RestrictN{\mu}{I_i}$.  Thus the distributions $\mu_i$ form the
  branch decomposition of $\mu$ along the partition $(I_i)_{i\in\{0,1\}}$.
  For such $i$, the count-marginal calculation above gives
  \[
    \Proj{(\calR^\memfn\otimes\calR_F)\otimes\calR^\cntfn}{\bbN}(I_i)
    =
    \alpha(i)
    >
    0
  \]
  and 
  \begin{align*}
    (\calR^\memfn\otimes\calR_F)\otimes\calR_i^\cntfn
    &=
      \RestrictN{((\calR^\memfn\otimes\calR_F)\otimes\calR^\cntfn)}{I_i}
      \tag{(definition of $\calR_i^\cntfn$ and zero-count)}
    \\
    &\preceq
      \RestrictN{\mu}{I_i}
    \\
    &=
      \mu_i .
      \tag{(definition of $\mu_i$)}
  \end{align*}
  The input refinement for the branch premise is then
  \begin{align*}
    \Tick(\calR^\memfn\otimes\calR_i^\cntfn)\otimes\calR_F
    &\cong
      \Tick\bigl((\calR^\memfn\otimes\calR_i^\cntfn)\otimes\calR_F\bigr)\\
    &\cong
      \Tick\bigl((\calR^\memfn\otimes\calR_F)\otimes\calR_i^\cntfn\bigr)\\
    &\preceq
      \Tick_\calD(\mu_i)
  \end{align*}
  Since $\calR_i^\cntfn$ is a count restriction of the count-only resource
  $\calR^\cntfn$, it is count-only.  Hence
  \begin{align*}
    \Gamma,\calR^\memfn\vDash\varphi
    &\Longrightarrow
      \Gamma,\calR^\memfn\otimes\calR_i^\cntfn\vDash\varphi
      \tag{\Cref{lem:index-assertion-frame-preservation}
      and $\calR_i^\cntfn$ is count-only}\\
    &\Longrightarrow
      \Gamma,\Tick(\calR^\memfn\otimes\calR_i^\cntfn)\vDash\varphi .
      \tag{\Cref{lem:index-tick-satisfaction}}
  \end{align*}
  Applying the corresponding premise triple gives, for each
  $i\in\supp(\alpha)$, a resource $\calQ_i$ such that
  \[
    \Gamma,\calQ_i\vDash\psi_i
    \qquad\text{and}\qquad
    \calQ_i\otimes\calR_F
    \preceq
    \de{C_i}_s^\dagger(\Tick_\calD(\mu_i))
  \]
  Each displayed partial-distribution refinement has a proper target by
  definition.  Hence the branch mixture is proper, since finite mixtures of
  proper distributions are proper;
  in the next display we use the same symbols for the ordinary coercions of
  these branch outputs.

  Let $\calQ\triangleq\bigoplus_{i\sim\alpha}\calQ_i$.  By the distributivity of
  $\otimes$ and direct sum, and the refinement under direct sum, we get that
  \begin{align*}
    \calQ\otimes\calR_F
    &\cong
      \bigoplus_{i\sim\alpha}(\calQ_i\otimes\calR_F)\\
    &\preceq
      \bigoplus_{i\sim\alpha}
      \de{C_i}_s^\dagger(\Tick_\calD(\mu_i))\\
    &=
      \de{C_0\cmdnondetS C_1}_s^\dagger(\mu).
      \tag{\Cref{lem:index-schedule-indexed-rule-output-equations} and
      \eqref{eq:index-nd-branch-weight}}
  \end{align*}
  Finally, the satisfaction clause for $\&$ gives
  $\Gamma,\calQ\vDash\andop{\psi_0}{\psi_1}$, because each positive-weight branch
  $\calQ_i$ satisfies $\psi_i$.  Taking this $\calQ$ gives the required
  output witness.

    \item
  \begin{mathpar}
    \inferrule*[right=Weakening]
    {
      \vdash_\strong \triple{\varphi}{C}{\psi}
    }
    {
      \vdash_\weak \triple{\varphi}{C}{\psi}
    }
  \end{mathpar}

  Let $s,\Gamma,\mu,\calR,\calR_F$ satisfy the semantic precondition for the
  weak conclusion.  Since $\Stable{\strong}{\calR_F}$ holds trivially, the
  strong premise applies to the same input and yields exactly the required
  output witness.

    \item
  \begin{mathpar}
    \inferrule*[right=Disj]
    {
      \vdash_m\triple{\varphi_1}{C}{\psi_1}
      \and
      \vdash_m\triple{\varphi_2}{C}{\psi_2}
    }
    {
      \vdash_m\triple{\varphi_1 \lor \varphi_2}{C}{\psi_1 \lor \psi_2}
    }
  \end{mathpar}

  Fix a semantic input for the conclusion.  By the satisfaction clause for
  disjunction, the active resource satisfies either $\varphi_1$ or
  $\varphi_2$.  Apply the corresponding premise triple to the same framed
  input; its output witness satisfies $\psi_1$ or $\psi_2$, respectively, and
  hence satisfies $\psi_1\lor\psi_2$.

    \item
  \begin{mathpar}
    \inferrule*[right=Frame]
    {
      \vdash_m \triple{\varphi}{C}{\psi}\\
      \Stable{m}{\vartheta}
    }
    {
      \vdash_m \triple{\varphi \sep \vartheta}{C}{\psi \sep \vartheta}
    }
  \end{mathpar}

  Let $m\in\{\strong,\weak\}$, and fix a semantic input for the conclusion:
  \[
    \Gamma,\calR\vDash\varphi \sep \vartheta,
    \qquad
    \Stable{m}{\calR_F},
    \qquad
    \calR\otimes\calR_F\preceq\mu
  \]
  Unfolding the separating precondition, choose
  $\calR_\varphi$ and $\calR_\vartheta$ such that
  \[
    \calR_\varphi\otimes\calR_\vartheta\preceq\calR,
    \qquad
    \Gamma,\calR_\varphi\vDash\varphi,
    \qquad
    \Gamma,\calR_\vartheta\vDash\vartheta
  \]

  \emph{Case $m=\strong$.}
  Use $\calR_\vartheta\otimes\calR_F$ as the frame for the premise triple.
  By monotonicity, associativity, and transitivity,
  $\calR_\varphi\otimes(\calR_\vartheta\otimes\calR_F)
  \preceq
  \mu$.
  Strong resource stability is trivial, so the premise gives
  $\calQ_\psi$ such that, for $\nu\triangleq\de{C}_s^\dagger(\mu)$,
  \[
    \Gamma,\calQ_\psi\vDash\psi
    \qquad\text{and}\qquad
    \calQ_\psi\otimes(\calR_\vartheta\otimes\calR_F)
    \preceq
    \nu
  \]
  Let $\calQ\triangleq\calQ_\psi\otimes\calR_\vartheta$.  The separating
  satisfaction clause gives $\Gamma,\calQ\vDash\psi \sep \vartheta$, and
  associativity of $\otimes$ gives
  $\calQ\otimes\calR_F
  \preceq
  \nu$.
  Thus $\calQ$ is the required output witness.

  \emph{Case $m=\weak$.}
  By $\Stable{\weak}{\vartheta}$, choose
  $\calR_\vartheta^\memfn$ and $\calR_\vartheta^\cntfn$ such that
  \[
    \calR_\vartheta^\memfn\otimes\calR_\vartheta^\cntfn
    \preceq\calR_\vartheta,
    \qquad
    \Stable{\weak}{\calR_\vartheta^\memfn},
    \qquad
    \Gamma,\calR_\vartheta^\memfn\vDash\vartheta,
    \qquad
    V_{\calR_\vartheta^\cntfn}=\emptyset
  \]
  By associativity, monotonicity, and transitivity,
  $(\calR_\varphi\otimes\calR_\vartheta^\cntfn)
  \otimes
  (\calR_\vartheta^\memfn\otimes\calR_F)
  \preceq
  \mu$.
  Since $\calR_\vartheta^\memfn$ and $\calR_F$ are weakly stable, so is
  $\calR_\vartheta^\memfn\otimes\calR_F$.  Since
  $\calR_\vartheta^\cntfn$ is count-only, monotonicity of satisfaction gives
  \[
    \Gamma,\calR_\varphi\otimes\calR_\vartheta^\cntfn\vDash\varphi
  \]
  Applying the premise with active resource
  $\calR_\varphi\otimes\calR_\vartheta^\cntfn$ and frame
  $\calR_\vartheta^\memfn\otimes\calR_F$ gives $\calQ_\psi$ such that, for
  $\nu\triangleq\de{C}_s^\dagger(\mu)$,
  \[
    \Gamma,\calQ_\psi\vDash\psi
    \qquad\text{and}\qquad
    \calQ_\psi\otimes(\calR_\vartheta^\memfn\otimes\calR_F)
    \preceq
    \nu
  \]
  Let
  $\calQ\triangleq\calQ_\psi\otimes\calR_\vartheta^\memfn$.  The separating
  satisfaction clause gives $\Gamma,\calQ\vDash\psi \sep \vartheta$, and
  associativity of $\otimes$ gives
  $\calQ\otimes\calR_F
  \preceq
  \nu$.
  Thus $\calQ$ is the required output witness.

    \item
  \begin{mathpar}
    \inferrule*[right=Prob-Leak]
    {
      \varphi \Rightarrow \sure{e \mapsto p} \and
      \vdash_m\triple{\varphi}{C_1}{\psi_1} \and
      \vdash_m\triple{\varphi}{C_2}{\psi_2}
    }
    {
      \vdash_m\triple{\varphi}{C_1 \cmdchoiceS{e} C_2}{\psi_1 \oplus_{p} \psi_2}
    }
  \end{mathpar}

  Let $m\in\{\strong,\weak\}$, and fix
  $s,\Gamma,\mu,\calR,\calR_F$ such that
  \[
    \Gamma,\calR\vDash\varphi,
    \qquad
    \Stable{m}{\calR_F},
    \qquad
    \calR\otimes\calR_F\preceq\mu
  \]
  Since refinement against a partial distribution is defined only for proper
  targets, $\mu$ is proper; from now on we use the same symbol for its ordinary
  coercion.
  By $\varphi\Rightarrow\sure{e\mapsto p}$,
  $\Gamma,\calR\vDash\sure{e\mapsto p}$.  Since
  $\calR\otimes\calR_F$ is defined,
  \Cref{lem:index-assertion-frame-preservation} gives
  \[
    \Gamma,\calR\otimes\calR_F\vDash\sure{e\mapsto p}
  \]
  Define
  \[
    G_p
    \triangleq
    \{(\sigma,k)\in\Mem\times\bbN
    \mid
    \de{e}_{\Exp}(\sigma)=p\}
  \]
  as the event in which the concrete guard expression evaluates to $p$.
  Applying \Cref{lem:index-pure-assertion-refinement-transfer} to
  $P\triangleq e\mapsto p$ and the refinement
  $\calR\otimes\calR_F\preceq\mu$ gives
  $\mu(G_p)=1$
  because the event $B_{e\mapsto p}^\Gamma$ from that lemma is exactly
  $G_p$.  Thus, writing
  \[
    \nu_i\triangleq\de{C_i}_s^\dagger(\mu)
    \quad (i\in\{1,2\}),
    \qquad
    \nu\triangleq\de{C_1\cmdchoiceS{e}C_2}_s^\dagger(\mu),
  \]
  the event $G_p$ is precisely the event $G_p^e$ from
  \Cref{lem:index-schedule-indexed-rule-output-equations}.  Hence, by
  \Cref{lem:index-schedule-indexed-rule-output-equations},
  \[
    \nu=\nu_1\oplus_p\nu_2
  \]

  Applying the two premise triples to the same framed input yields resources
  $\calQ_1$ and $\calQ_2$ such that
  \[
    \Gamma,\calQ_i\vDash\psi_i,
    \qquad
    \calQ_i\otimes\calR_F\preceq\nu_i
    \qquad (i\in\{1,2\})
  \]
  Each displayed partial-distribution refinement has a proper target by
  definition, so each $\nu_i$ is proper.  Since
  $\nu=\nu_1\oplus_p\nu_2$, $\nu$ is proper as well.  In the
  following direct-sum refinement, we use the same symbols for the ordinary
  coercions of $\nu_1$ and $\nu_2$.
  Let $\beta\in\calD(\{1,2\})$ be given by
  $\beta(1)=p$ and $\beta(2)=1-p$, and put
  $\calQ\triangleq\bigoplus_{i\sim\beta}\calQ_i$.  Then
  \[
    \calQ\otimes\calR_F
    \cong
    \bigoplus_{i\sim\beta}(\calQ_i\otimes\calR_F)
    \preceq
    \bigoplus_{i\sim\beta}\nu_i
    =
    \nu
  \]

  The branch resources $\calQ_1,\calQ_2$ witness the leaky probabilistic
  choice assertion: $\bigoplus_{i\sim\beta}\calQ_i\preceq\calQ$ by
  reflexivity, $\Gamma,\calQ_i\vDash\psi_i$ for each branch, and
  $\Obs_{\Leak}^{\beta}(\calQ,(\calQ_i)_{i\in\supp(\beta)})=\top$.
  Hence the satisfaction clause gives
  $\Gamma,\calQ\vDash\psi_1\oplus_p\psi_2$.
  Thus $\calQ$ is the required output resource.

    \item
  \begin{mathpar}
    \inferrule*[right=Prob-Safe]
    {
      \varphi \implies \sure{e \mapsto p} \and
      \textsf{bits}(C_1) = \textsf{bits}(C_2) = \{k\}\\
      \vdash_m\triple{\varphi}{C_1}{\psi_1} \\
      \vdash_m\triple{\varphi}{C_2}{\psi_2}
    }
    {
      \vdash_m\triple{\varphi}{C_1 \cmdchoiceS{e} C_2}{\psi_1 \oplus^{\textsf{priv}}_p \psi_2}
    }
  \end{mathpar}

  Let $m\in\{\strong,\weak\}$, and fix
  $s,\Gamma,\mu,\calR,\calR_F$ such that
  \[
    \Gamma,\calR\vDash\varphi,
    \qquad
    \Stable{m}{\calR_F},
    \qquad
    \calR\otimes\calR_F\preceq\mu
  \]
  Since refinement against a partial distribution is defined only for proper
  targets, $\mu$ is proper; from now on we use the same symbol for its ordinary
  coercion.
  Repeat the construction from the Prob-Leak case, using
  \Cref{lem:index-schedule-indexed-rule-output-equations} for the
  probabilistic-choice output equation.  Writing
  \[
    \nu_i\triangleq\de{C_i}_s^\dagger(\mu)
    \quad (i\in\{1,2\}),
    \qquad
    \nu\triangleq\de{C_1\cmdchoiceS{e}C_2}_s^\dagger(\mu),
  \]
  let $\beta\in\calD(\{1,2\})$ be given by
  $\beta(1)=p$ and $\beta(2)=1-p$.  We obtain resources
  $\calQ_1,\calQ_2$ and
  $\calQ\triangleq\bigoplus_{i\sim\beta}\calQ_i$ such that
  \[
    \Gamma,\calQ_i\vDash\psi_i,
    \qquad
    \calQ_i\otimes\calR_F\preceq\nu_i
    \qquad(i\in\{1,2\}),
    \qquad
    \calQ\otimes\calR_F\preceq\nu
  \]
  The same construction also gives $\nu_i(\bot)=0$ for $i\in\{1,2\}$ by
  definition of partial-distribution refinement.  We use the same symbols for
  the ordinary coercions of $\nu_1$ and $\nu_2$ in the count calculation below.

  It remains to check the private mode side condition.
  Write $\mathsf{add}_{\bbN}(n_1,n_2)\triangleq n_1+n_2$.  For each
  $i\in\{1,2\}$,
  \[
    (\mathsf{add}_{\bbN})_*
    (\Proj{\calQ_i}{\bbN}\otimes\Proj{\calR_F}{\bbN})
    =
    \Proj{\calQ_i\otimes\calR_F}{\bbN}
    =
    (\pi_2)_*\nu_i
  \]
  By $\textsf{bits}(C_1)=\textsf{bits}(C_2)=\{k\}$ and
  \Cref{lem:index-singleton-bits-count-marginal}, applied with both input
  distributions equal to $\mu$,
  $(\pi_2)_*\nu_1
  =
  (\pi_2)_*\nu_2$.
  Therefore
  \[
    (\mathsf{add}_{\bbN})_*
    (\Proj{\calQ_1}{\bbN}\otimes\Proj{\calR_F}{\bbN})
    =
    (\mathsf{add}_{\bbN})_*
    (\Proj{\calQ_2}{\bbN}\otimes\Proj{\calR_F}{\bbN})
  \]
  Cancelling the common framed count marginal $\Proj{\calR_F}{\bbN}$ gives
  $\Proj{\calQ_1}{\bbN}
  =
  \Proj{\calQ_2}{\bbN}$.

  Since $\calQ=\calQ_1\oplus_p\calQ_2$, we get that
  $\Proj{\calQ}{\bbN}
  =
  p\cdot\Proj{\calQ_1}{\bbN}
  +(1-p)\cdot\Proj{\calQ_2}{\bbN}
  =
  \Proj{\calQ_1}{\bbN}
  =
  \Proj{\calQ_2}{\bbN}$.
  Therefore
  \[
    \Proj{\calQ_1}{\bbN}
    =
    \Proj{\calQ}{\bbN}
    =
    \Proj{\calQ_2}{\bbN},
  \]
  so $\Obs_\Priv^\beta(\calQ,(\calQ_i)_{i\in\supp(\beta)})$ holds.
  Together with the branch facts
  $\Gamma,\calQ_i\vDash\psi_i$ and the reflexive refinement
  $\bigoplus_{i\sim\beta}\calQ_i\preceq\calQ$, the satisfaction clause for
  private probabilistic choice gives
  $\Gamma,\calQ\vDash\psi_1\oplus^{\textsf{priv}}_p\psi_2$.
  Thus $\calQ$ is the required output resource.

    \item
  \begin{mathpar}
    \inferrule*[right=Pub-Split]
    {
      X\notin\fv(E) \\
      \forall v\in\supp(d(E)).\;
      \vdash_m\triple{\varphi[v/X]}{C}{\psi[v/X]}
    }
    {\textstyle
      \vdash_m
      \triple
      {\bigoplus^\Leak_{X\sim d(E)}\varphi}
      {C}
      {\bigoplus^\Leak_{X\sim d(E)}\psi}
    }
  \end{mathpar}

  Let $m\in\{\strong,\weak\}$, and fix
  $s,\Gamma,\mu,\calR,\calR_F$ such that
  \[
    \Gamma,\calR\vDash\bigoplus^\Leak_{X\sim d(E)}\varphi,
    \qquad
    \Stable{m}{\calR_F},
    \qquad
    \calR\otimes\calR_F\preceq\mu
  \]
  Since refinement against a partial distribution is defined only for proper
  targets, $\mu$ is proper; from now on we use the same symbol for its ordinary
  coercion.
  Put $\alpha\triangleq d(\de{E}_{\LExp}(\Gamma))$.  Unfolding the
  probabilistic choice assertion gives resources
  $(\calR_v)_{v\in\supp(\alpha)}$ such that
  \[
    \bigoplus_{v\sim\alpha}\calR_v\preceq\calR,
    \qquad
    \Gamma[X\mapsto v],\calR_v\vDash\varphi
    \quad(v\in\supp(\alpha))
  \]
  By \Cref{lem:index-assertion-logical-substitution}, for
  $v \in \supp(\alpha)$,
  $\Gamma,\calR_v\vDash\varphi[v/X]$.
  Moreover,
  \[
    \bigoplus_{v\sim\alpha}(\calR_v\otimes\calR_F)
    \cong
    \left(\bigoplus_{v\sim\alpha}\calR_v\right)\otimes\calR_F
    \preceq
    \calR\otimes\calR_F
    \preceq
    \mu
  \]
  Since the above is a refinement, we can choose
  distributions $(\mu_v)_{v\in\supp(\alpha)}$ such that
  $\mu=\bigoplus_{v\sim\alpha}\mu_v$ and
  $\calR_v\otimes\calR_F\preceq\mu_v$ for $v \in \supp(\alpha)$.

  For each $v\in\supp(\alpha)$, put
  $\nu_v\triangleq\de{C}_s^\dagger(\mu_v)$.  The $v$-premise gives a resource
  $\calQ_v$ such that
  \[
    \Gamma,\calQ_v\vDash\psi[v/X],
    \qquad
    \calQ_v\otimes\calR_F\preceq\nu_v
    \quad(v\in\supp(\alpha))
  \]
  Each displayed partial-distribution refinement has a proper target by
  definition.  Hence the branch-output mixture is proper.  We use the same
  symbols for the ordinary coercions of the $\nu_v$ in the recombination below.
  By \Cref{lem:index-assertion-logical-substitution},
  $\Gamma[X\mapsto v],\calQ_v\vDash\psi$ for $v \in \supp(\alpha)$.
  Let $\calQ\triangleq\bigoplus_{v\sim\alpha}\calQ_v$ and
  $\nu\triangleq\de{C}_s^\dagger(\mu)$.  Then, by \Cref{lem:index-dbot-kleisli-linearity}
  \[
    \calQ\otimes\calR_F
    \cong
    \bigoplus_{v\sim\alpha}(\calQ_v\otimes\calR_F)
    \preceq
    \bigoplus_{v\sim\alpha}\nu_v
    =
    \de{C}_s^\dagger\left(\bigoplus_{v\sim\alpha}\mu_v\right)
    =
    \nu
  \]

  The resources $(\calQ_v)_{v\in\supp(\alpha)}$ witness the leaky output
  choice: $\bigoplus_{v\sim\alpha}\calQ_v\preceq\calQ$ by reflexivity,
  $\Gamma[X\mapsto v],\calQ_v\vDash\psi$ for each supported branch, and
  $\Obs_\Leak^\alpha(\calQ,(\calQ_v)_{v\in\supp(\alpha)})=\top$.
  Hence the satisfaction clause gives
  $\Gamma,\calQ\vDash\bigoplus^\Leak_{X\sim d(E)}\psi$.  Since
  $\calQ\otimes\calR_F\preceq\nu$, $\calQ$ is the required output resource.

    \item
  \begin{mathpar}
    \inferrule*[right=Priv-Split]
    {
      X\notin\fv(E) \\
      \textsf{bits}(C)=\{n\} \\
      \forall v\in\supp(d(E)).\;
      \vdash_m\triple{\varphi[v/X]}{C}{\psi[v/X]}
    }
    {\textstyle
      \vdash_m
      \triple
      {\bigoplus^\Priv_{X\sim d(E)}\varphi}
      {C}
      {\bigoplus^\Priv_{X\sim d(E)}\psi}
    }
  \end{mathpar}

  Let $m\in\{\strong,\weak\}$, and fix
  $s,\Gamma,\mu,\calR,\calR_F$ such that
  \[
    \Gamma,\calR\vDash\bigoplus^\Priv_{X\sim d(E)}\varphi,
    \qquad
    \Stable{m}{\calR_F},
    \qquad
    \calR\otimes\calR_F\preceq\mu
  \]
  Since refinement against a partial distribution is defined only for proper
  targets, $\mu$ is proper; from now on we use the same symbol for its ordinary
  coercion.
  Put $\alpha\triangleq d(\de{E}_{\LExp}(\Gamma))$.  Unfolding the private
  probabilistic choice assertion gives resources
  $(\calR_v)_{v\in\supp(\alpha)}$ such that
  \[
    \bigoplus_{v\sim\alpha}\calR_v\preceq\calR,
    \qquad
    \Gamma[X\mapsto v],\calR_v\vDash\varphi,
    \qquad
    \Proj{\calR_v}{\bbN}=\Proj{\calR}{\bbN}
    \quad(v\in\supp(\alpha))
  \]
  Repeating the branch decomposition and recombination construction from
  the Pub-Split case for this same family gives
  distributions $(\mu_v)_{v\in\supp(\alpha)}$, resources
  $(\calQ_v)_{v\in\supp(\alpha)}$, and
  $\calQ\triangleq\bigoplus_{v\sim\alpha}\calQ_v$ such that, for
  $\nu_v\triangleq\de{C}_s^\dagger(\mu_v)$ and
  $\nu\triangleq\de{C}_s^\dagger(\mu)$,
  \[
    \mu=\bigoplus_{v\sim\alpha}\mu_v,
    \qquad
    \calR_v\otimes\calR_F\preceq\mu_v,
    \qquad
    \Gamma[X\mapsto v],\calQ_v\vDash\psi,
    \qquad
    \calQ_v\otimes\calR_F\preceq\nu_v
    \quad(v\in\supp(\alpha)),
  \]
  and
  $\calQ\otimes\calR_F\preceq\nu$.
  The same construction gives $\nu_v(\bot)=0$ for every
  $v\in\supp(\alpha)$ by definition of partial-distribution refinement; below
  we use the same symbols for the ordinary coercions of the $\nu_v$.

  It remains to prove the private mode side condition for the output choice.
  For any $u, v \in \supp(\alpha)$, we get that
  \[
    (\mathsf{add}_{\bbN})_*
    (\Proj{\calR_u}{\bbN}\otimes\Proj{\calR_F}{\bbN})
    =
    \Proj{\calR_u\otimes\calR_F}{\bbN}
    =
    (\pi_2)_*\mu_u
  \]
  and similarly for $v$.  Since
  $\Proj{\calR_u}{\bbN}=\Proj{\calR}{\bbN}=\Proj{\calR_v}{\bbN}$, we get
  $(\pi_2)_*\mu_u=(\pi_2)_*\mu_v$.  By
  \Cref{lem:index-singleton-bits-count-marginal}, applied to $C$ with inputs
  $\mu_u$ and $\mu_v$, and $\textsf{bits}(C)=\{n\}$, this implies
  $(\pi_2)_*\nu_u=(\pi_2)_*\nu_v$.

  Applying the count-marginal lemmas to the output refinements gives
  \[
    (\mathsf{add}_{\bbN})_*
    (\Proj{\calQ_u}{\bbN}\otimes\Proj{\calR_F}{\bbN})
    =
    (\pi_2)_*\nu_u
    =
    (\pi_2)_*\nu_v
    =
    (\mathsf{add}_{\bbN})_*
    (\Proj{\calQ_v}{\bbN}\otimes\Proj{\calR_F}{\bbN})
  \]
  Cancelling the common framed count marginal $\Proj{\calR_F}{\bbN}$ gives
  $\Proj{\calQ_u}{\bbN}=\Proj{\calQ_v}{\bbN}$ for all
  $u,v\in\supp(\alpha)$.  Since
  $\calQ=\bigoplus_{v\sim\alpha}\calQ_v$, the direct-sum definition then gives
  \[
    \Proj{\calQ_v}{\bbN}=\Proj{\calQ}{\bbN}
    \qquad(v\in\supp(\alpha))
  \]
  Thus $\Obs_\Priv^\alpha(\calQ,(\calQ_v)_{v\in\supp(\alpha)})$ holds.
  Together with $\Gamma[X\mapsto v],\calQ_v\vDash\psi$ for each supported
  branch and $\bigoplus_{v\sim\alpha}\calQ_v\preceq\calQ$ by reflexivity, the
  satisfaction clause gives
  $\Gamma,\calQ\vDash\bigoplus^\Priv_{X\sim d(E)}\psi$.  Since
  $\calQ\otimes\calR_F\preceq\nu$, $\calQ$ is the required output resource.

    \item
  \begin{mathpar}
    \inferrule*[right=ND-Split]
    {
      X\notin\fv(E) \\
      \forall v\in E.\;
      \vdash_m\triple{\varphi[v/X]}{C}{\psi[v/X]}
    }
    {\textstyle
      \vdash_m
      \triple
      {\bignd_{X\in E}\varphi}
      {C}
      {\bignd_{X\in E}\psi}
    }
  \end{mathpar}

  Let $m\in\{\strong,\weak\}$, and fix
  $s,\Gamma,\mu,\calR,\calR_F$ such that
  \[
    \Gamma,\calR\vDash\bignd_{X\in E}\varphi,
    \qquad
    \Stable{m}{\calR_F},
    \qquad
    \calR\otimes\calR_F\preceq\mu
  \]
  Since refinement against a partial distribution is defined only for proper
  targets, $\mu$ is proper; from now on we use the same symbol for its ordinary
  coercion.
  Put $A\triangleq\de{E}_{\LExp}(\Gamma)$.  Unfolding the nondeterministic
  assertion, choose $\mu_E\in\calD(A)$ such that
  $\Gamma,\calR\vDash\bigoplus^\Leak_{X\sim\mu_E}\varphi$.
  Repeating the construction from the Pub-Split case with
  this fixed distribution $\mu_E$ gives a resource
  $\calQ\triangleq\bigoplus_{v\sim\mu_E}\calQ_v$ such that, for
  $\nu\triangleq\de{C}_s^\dagger(\mu)$,
  \[
    \calQ\otimes\calR_F\preceq\nu,
    \qquad
    \Gamma,\calQ\vDash\bigoplus^\Leak_{X\sim\mu_E}\psi
  \]
  Since $\mu_E\in\calD(A)$, the same $\mu_E$ witnesses the nondeterministic
  output assertion
  $\Gamma,\calQ\vDash\bignd_{X\in E}\psi$.  Thus $\calQ$ is the required
  output resource.

    \item
  For every value $a$,
  \begin{mathpar}
    \inferrule*[right=Inst]
    {
      \vdash_m\triple{\varphi}{C}{\psi}
    }
    {
      \vdash_m\triple{\varphi[a/X]}{C}{\psi[a/X]}
    }
  \end{mathpar}

  Fix $s,\Gamma,\mu,\calR,\calR_F$ satisfying the semantic precondition for the
  conclusion triple:
  \[
    \Gamma,\calR\vDash\varphi[a/X],
    \qquad
    \Stable{m}{\calR_F},
    \qquad
    \calR\otimes\calR_F\preceq\mu
  \]
  By \Cref{lem:index-assertion-logical-substitution},
  $\Gamma[X\mapsto a],\calR\vDash\varphi$.  Applying
  $\vdash_m\triple{\varphi}{C}{\psi}$ with the logical environment
  $\Gamma[X\mapsto a]$ gives a resource $\calQ$ such that, for
  $\nu\triangleq\de{C}_s^\dagger(\mu)$,
  \[
    \Gamma[X\mapsto a],\calQ\vDash\psi,
    \qquad
    \calQ\otimes\calR_F\preceq\nu
  \]
  Another use of \Cref{lem:index-assertion-logical-substitution} gives
  $\Gamma,\calQ\vDash\psi[a/X]$, which is exactly the required semantic
  postcondition for the conclusion.

    \item
  \begin{mathpar}
    \inferrule*[right=Priv-Split1]
    {
      \textsf{bits}(C)=\{n\}\;\;
      \vdash_m\triple{\varphi}{C}{\psi}
    }
    {\textstyle
      \vdash_m
      \triple
      {\bigoplus^\Priv_{X\sim d(E)}\varphi}
      {C}
      {\bigoplus^\Priv_{X\sim d(E)}\psi}
    }
  \end{mathpar}

  The derived rule follows from the derivation
  \[
    \inferrule*[right=Priv-Split]
    {
      \inferrule*[right=Inst]
      {
        \vdash_m\triple{\varphi}{C}{\psi}
      }
      {
        \forall v\in\supp(d(E)).\;
        \vdash_m\triple{\varphi[v/X]}{C}{\psi[v/X]}
      }
    }
    {
      \vdash_m
      \triple
      {\textstyle\bigoplus^\Priv_{X\sim d(E)}\varphi}
      {C}
      {\textstyle\bigoplus^\Priv_{X\sim d(E)}\psi}
    }
  \]
  By the usual freshness convention for the bound split variable, the elided
  side condition $X\notin\fv(E)$ holds; the remaining side condition
  $\textsf{bits}(C)=\{n\}$ is the displayed premise.

    \item
  For every $o\in\{\Priv,\Leak\}$,
  \begin{mathpar}
    \inferrule*[right=Split2]
    {
      X\notin\fv(\psi) \\
      \convex{\psi}\;\;
      \vdash_m\triple{\varphi}{C}{\psi}
    }
    {\textstyle
      \vdash_m
      \triple
      {\bigoplus^o_{X\sim d(E)}\varphi}
      {C}
      {\psi}
    }
  \end{mathpar}

  The derived rule follows from the derivation
  \[
    \inferrule*[right=Consequence]
    {
      \inferrule*[right=Pub-Split]
      {
        \inferrule*[right=Inst]
        {
          \vdash_m\triple{\varphi}{C}{\psi}
        }
        {
          \forall v\in\supp(d(E)).\;
          \vdash_m\triple{\varphi[v/X]}{C}{\psi[v/X]}
        }
      }
      {
        \vdash_m
        \triple
        {\textstyle\bigoplus^\Leak_{X\sim d(E)}\varphi}
        {C}
        {\textstyle\bigoplus^\Leak_{X\sim d(E)}\psi}
      }
    }
    {
      \vdash_m
      \triple
      {\textstyle\bigoplus^o_{X\sim d(E)}\varphi}
      {C}
      {\psi}
    }
  \]
  By the usual freshness convention for the bound split variable, the elided
  side condition of Pub-Split ($X\notin\fv(E)$) holds.
  The left entailment of Consequence is
  \[
    \textstyle
    \bigoplus^o_{X\sim d(E)}\varphi
    \vdash
    \bigoplus^\Leak_{X\sim d(E)}\varphi
  \]
  by reflexivity if $o=\Leak$, and by
  \Cref{lem:index-private-entails-leaky-choice} if $o=\Priv$.  The right
  entailment is the idempotence collapse
  $\textstyle\bigoplus^\Leak_{X\sim d(E)}\psi \vdash \psi$ from
  \Cref{lem:index-probabilistic-choice-idempotence} (using $X\notin\fv(\psi)$
  and $\convex{\psi}$).

    \item
  For every $o\in\{\Priv,\Leak\}$,
  \begin{mathpar}
    \inferrule*[right=Split1]
    {
      \vdash_m\triple{\varphi}{C}{\psi}
    }
    {\textstyle
      \vdash_m
      \triple
      {\bigoplus^o_{X\sim d(E)}\varphi}
      {C}
      {\bigoplus^\Leak_{X\sim d(E)}\psi}
    }
  \end{mathpar}

  The derived rule follows from the derivation
  \[
    \inferrule*[right=Consequence]
    {
      \inferrule*[right=Pub-Split]
      {
        \inferrule*[right=Inst]
        {
          \vdash_m\triple{\varphi}{C}{\psi}
        }
        {
          \forall v\in\supp(d(E)).\;
          \vdash_m\triple{\varphi[v/X]}{C}{\psi[v/X]}
        }
      }
      {
        \vdash_m
        \triple
        {\textstyle\bigoplus^\Leak_{X\sim d(E)}\varphi}
        {C}
        {\textstyle\bigoplus^\Leak_{X\sim d(E)}\psi}
      }
    }
    {
      \vdash_m
      \triple
      {\textstyle\bigoplus^o_{X\sim d(E)}\varphi}
      {C}
      {\textstyle\bigoplus^\Leak_{X\sim d(E)}\psi}
    }
  \]
  By the usual freshness convention for the bound split variable, the elided
  side condition of Pub-Split ($X\notin\fv(E)$) holds.
  The left entailment of Consequence is reflexivity if $o=\Leak$, and
  \Cref{lem:index-private-entails-leaky-choice} if $o=\Priv$; the right
  entailment is reflexivity.

    \item
  \begin{mathpar}
    \inferrule*[right=ND-Split1]
    {
      \vdash_m\triple{\varphi}{C}{\psi}
    }
    {\textstyle
      \vdash_m
      \triple
      {\bignd_{X\in E}\varphi}
      {C}
      {\bignd_{X\in E}\psi}
    }
  \end{mathpar}

  The derived rule follows from the derivation
  \[
    \inferrule*[right=ND-Split]
    {
      \inferrule*[right=Inst]
      {
        \vdash_m\triple{\varphi}{C}{\psi}
      }
      {
        \forall v\in E.\;
        \vdash_m\triple{\varphi[v/X]}{C}{\psi[v/X]}
      }
    }
    {
      \vdash_m
      \triple
      {\textstyle\bignd_{X\in E}\varphi}
      {C}
      {\textstyle\bignd_{X\in E}\psi}
    }
  \]
  By the usual freshness convention for the bound split variable, the elided
  side condition of ND-Split ($X\notin\fv(E)$) holds.

    \item
  \begin{mathpar}
    \inferrule*[right=ND-Split2]
    {
      X\notin\fv(\psi)\;\;
      \convex{\psi} \\
      \vdash_m\triple{\varphi}{C}{\psi}
    }
    {\textstyle
      \vdash_m
      \triple
      {\bignd_{X\in E}\varphi}
      {C}
      {\psi}
    }
  \end{mathpar}

  The derived rule follows from the derivation
  \[
    \inferrule*[right=Consequence]
    {
      \inferrule*[right=ND-Split]
      {
        \inferrule*[right=Inst]
        {
          \vdash_m\triple{\varphi}{C}{\psi}
        }
        {
          \forall v\in E.\;
          \vdash_m\triple{\varphi[v/X]}{C}{\psi[v/X]}
        }
      }
      {
        \vdash_m
        \triple
        {\textstyle\bignd_{X\in E}\varphi}
        {C}
        {\textstyle\bignd_{X\in E}\psi}
      }
    }
    {
      \vdash_m
      \triple
      {\textstyle\bignd_{X\in E}\varphi}
      {C}
      {\psi}
    }
  \]
  By the usual freshness convention for the bound split variable, the elided
  side condition of ND-Split ($X\notin\fv(E)$) holds, and the elided
  entailments of Consequence are reflexivity on the
  precondition and the idempotence collapse
  $\textstyle\bignd_{X\in E}\psi \vdash \psi$ from
  \Cref{lem:index-nondeterministic-idempotence} (using $X\notin\fv(\psi)$ and
  $\convex{\psi}$).

    \item
  \begin{mathpar}
    \inferrule*[right=Consequence]
    {
      \varphi' \Rightarrow \varphi\\
      \vdash_m \triple{\varphi}{C}{\psi}\\
      \psi \Rightarrow \psi'
    }
    {
      \vdash_m \triple{\varphi'}{C}{\psi'}
    }
  \end{mathpar}

  Immediate.  Given a semantic input for the conclusion, the pre-entailment
  gives $\Gamma,\calR\vDash\varphi$.  The premise triple yields a resource
  $\calQ$ such that, for $\nu\triangleq\de{C}_s^\dagger(\mu)$,
  \[
    \Gamma,\calQ\vDash\psi,
    \qquad
    \calQ\otimes\calR_F\preceq\nu,
    \qquad
    V_\calQ\subseteq V_\calR
  \]
  The post-entailment gives $\Gamma,\calQ\vDash\psi'$, so the same $\calQ$
  witnesses the conclusion.

  \end{enumerate}
\end{proof}

\section{Bounded-Rank Soundness}
\label{app:bounded-rank}
\subsection{Foundations}
\begin{lemma}[Framed decomposition of bounded rank choice]
  \label{lem:index-framed-bounded-rank-choice-decomposition}
  For integers $L\le U$, write
  $E_{L,U}\triangleq\{L,L+1,\ldots,U\}$ and read
  $\bignd_{R=L}^{U}\psi$ as notation for
  $\bignd_{R\in E_{L,U}}\psi$.
  Let $\Gamma$ be a logical environment and let
  $\rho\in\calD_\bot(\Mem\times\bbN)$.  If
  \[
    \Gamma,\calQ\vDash\bignd_{R=L}^{U}\varphi,
    \qquad
    \calQ\otimes\calR_F\preceq\rho,
  \]
  then $\rho$ is proper and, using the same symbol for its ordinary coercion,
  there exist a distribution $\beta\in\calD(E_{L,U})$, resources
  $(\calR_m)_{m\in\supp(\beta)}$, and distributions
  $(\xi_m)_{m\in\supp(\beta)}$ such that
  $\rho
  =
  \bigoplus_{m\sim\beta}\xi_m$ and
  \[
    \Gamma[R\mapsto m],\calR_m
    \vDash
    \varphi,
    \qquad 
    \calR_m\otimes\calR_F
    \preceq
    \xi_m
    \quad(m\in\supp(\beta))
  \]
\end{lemma}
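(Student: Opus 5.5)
We unfold the nondeterministic choice and then decompose the target distribution along the resulting direct sum. Properness of $\rho$ is immediate: by definition, the refinement $\calQ\otimes\calR_F\preceq\rho$ is only defined when $\rho(\bot)=0$ (fact (g) of \Cref{fig:resource-algebra-refinement-facts}). From now on we treat $\rho$ as an ordinary distribution.

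\textbf{Unfolding and pulling the frame inside.} By the satisfaction clause for $\bignd$, choose $\beta\in\calD(E_{L,U})$ with $\Gamma,\calQ\vDash\bigoplus^\Leak_{R\sim\beta}\varphi$. Unfolding the leaky outcome conjunction gives branch resources $(\calR_m)_{m\in\supp(\beta)}$ with
\[
  \bigoplus_{m\sim\beta}\calR_m\preceq\calQ
  \qquad\text{and}\qquad
  \Gamma[R\mapsto m],\calR_m\vDash\varphi
  \quad(m\in\supp(\beta)).
\]
The leaky mode imposes no side condition. Distributivity of $\otimes$ over direct sums (fact (c)), monotonicity of $\otimes$ and transitivity then give
\[
  \bigoplus_{m\sim\beta}(\calR_m\otimes\calR_F)
  \cong
  \Bigl(\bigoplus_{m\sim\beta}\calR_m\Bigr)\otimes\calR_F
  \preceq
  \calQ\otimes\calR_F
  \preceq
  \rho .
\]
Fact (g) composes $\preceq$ with distribution refinement, and fact (h) shows that refinement is stable under $\cong$.

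\textbf{Decomposing the distribution.} The last clause of fact (h) states that if $\bigoplus_{i\sim\alpha}\calQ_i\preceq\nu$, then there exist $(\nu_i)$ with $\nu=\bigoplus_{i\sim\alpha}\nu_i$ and $\calQ_i\preceq\nu_i$. Applying it with $\calQ_m\triangleq\calR_m\otimes\calR_F$ yields the $\xi_m$ with $\rho=\bigoplus_{m\sim\beta}\xi_m$ and $\calR_m\otimes\calR_F\preceq\xi_m$, as required. This is the same decomposition step used in the Pub-Split case of \Cref{lem:index-sequential-rules-soundness}.

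\textbf{Main obstacle.} The only nontrivial ingredient is this last decomposition fact. If it needed justifying, we would take the coupling $\lambda$ witnessing the refinement and restrict it to $\Omega_{\calR_m\otimes\calR_F}\times(\Mem\times\bbN)$. We would set $\xi_m$ to be the right marginal normalized by $\beta(m)$. The branch index sets are disjoint and measurable in the direct-sum $\sigma$-algebra, so each restricted, normalized coupling witnesses $\calR_m\otimes\calR_F\preceq\xi_m$, and the $\xi_m$ mix back to $\rho$. A minor bookkeeping point is that the direct sum requires a common footprint and disjoint supports for the $\calR_m$. The unfolded witnesses already satisfy this by the definition of the direct sum, so no re-indexing is needed.
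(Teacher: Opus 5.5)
Your proposal is correct and follows the paper's proof essentially step for step. You unfold $\bignd$ to a leaky $\bigoplus$ over some $\beta\in\calD(E_{L,U})$, push $\calR_F$ through the direct sum by distributivity and monotonicity, get properness of $\rho$ from the definition of distribution refinement, and split $\rho$ using the last clause of group (h) in \Cref{fig:resource-algebra-refinement-facts}. Your coupling-restriction sketch soundly justifies that last clause, which the paper only cites; note that disjointness of the product branches $\calR_m\otimes\calR_F$ comes from the fresh-index convention of $\otimes$, with common footprint $V_\oplus\cup V_{\calR_F}$.
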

\begin{proof}
  Unfolding $\bignd$ gives a distribution
  $\beta\in\calD(E_{L,U})$ over rank values such that
  $\Gamma,\calQ\vDash\bigoplus_{R\sim\beta}\varphi$.
  Unfolding this $\bigoplus$ gives resources
  $(\calR_m)_{m\in\supp(\beta)}$ satisfying
  \[
    \bigoplus_{m\sim\beta}\calR_m\preceq\calQ,
    \qquad
    \Gamma[R\mapsto m],\calR_m\vDash\varphi
    \quad(m\in\supp(\beta))
  \]
  Push the fixed frame $\calR_F$ through this decomposition:
  \[
    \bigoplus_{m\sim\beta}(\calR_m\otimes\calR_F)
    \cong
    \left(\bigoplus_{m\sim\beta}\calR_m\right)\otimes\calR_F
    \preceq
    \calQ\otimes\calR_F
    \preceq
    \rho
  \]
  The displayed partial-distribution refinement has a proper target by
  definition, so $\rho$ is proper.
  For the rest of the proof we use the same symbol for its ordinary coercion.
  Because we have a refinement, we can choose
  distributions $(\xi_m)_{m\in\supp(\beta)}$ such that
  \[
    \rho=\bigoplus_{m\sim\beta}\xi_m,
    \qquad
    \calR_m\otimes\calR_F\preceq\xi_m
    \quad(m\in\supp(\beta))
  \]
  These are the required witnesses.
\end{proof}

\begin{lemma}[Tagged recombination of a rank-progress split]
  \label{lem:index-tagged-rank-progress-recombination}
  For integers $a\le b$, write
  $E_{a,b}\triangleq\{a,a+1,\ldots,b\}$.
  Let $\ell<n\le h$, let $p,q\in[0,1]$ with $p\le q$, let $\Gamma$ be a
  logical environment, and let
  $\rho^{<},\rho^{\ge}\in\calD(\Mem\times\bbN)$.  Define
  $\rho\triangleq q\cdot\rho^{<}+(1-q)\cdot\rho^{\ge}$.

  Suppose that
  $\beta^{<}\in\calD(E_{\ell,n-1})$,
  $\beta^{\ge}\in\calD(E_{n,h})$, resources
  $(\calR_m^{<})_{m\in\supp(\beta^{<})}$ and
  $(\calR_m^{\ge})_{m\in\supp(\beta^{\ge})}$, and distributions
  $(\xi_m^{<})_{m\in\supp(\beta^{<})}$ and
  $(\xi_m^{\ge})_{m\in\supp(\beta^{\ge})}$ satisfy for $i \in \{<, \ge\}$,
    $\rho^i
    =
    \bigoplus_{m\sim\beta^i}\xi_m^i$ and, for $m \in \supp(\beta^i)$
  \[
    \Gamma[R\mapsto m],\calR_m^i\vDash\varphi,
    \qquad
    \calR_m^i\otimes\calR_F\preceq\xi_m^i
  \]
  Then there exist a countable set $B$, a distribution $\beta\in\calD(B)$, a
  rank projection $r:B\to E_{\ell,h}$, resources
  $(\calR_b)_{b\in\supp(\beta)}$, and distributions
  $(\xi_b)_{b\in\supp(\beta)}$ such that
  $\rho
  =
  \bigoplus_{b\sim\beta}\xi_b$ and
  \begin{align*}
    \Gamma[R\mapsto r(b)],\calR_b
    &\vDash
    \varphi,
    &
    \calR_b\otimes\calR_F
    &\preceq
    \xi_b
    \quad(b\in\supp(\beta)),\\
    \beta(\{b\mid r(b)<n\})
    &\ge p,
    &
    \beta(\{b\mid \ell\le r(b)\le h\})
    &=1
  \end{align*}
  Intuitively, tags separate decreasing-rank branches from remaining-rank
  branches; each branch keeps its original witnesses, and $\beta$ gives mass
  $q\ge p$ to the decreasing ones.
\end{lemma}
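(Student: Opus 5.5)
The plan is to build the tagged index set explicitly as a disjoint union and read off every required property from the hypotheses branch by branch. Take
\[
  B\triangleq(\{<\}\times E_{\ell,n-1})\uplus(\{\ge\}\times E_{n,h}),
  \qquad
  r(i,m)\triangleq m .
\]
Define $\beta(<,m)\triangleq q\cdot\beta^{<}(m)$ and $\beta(\ge,m)\triangleq(1-q)\cdot\beta^{\ge}(m)$. For each $(i,m)\in\supp(\beta)$, set $\calR_{(i,m)}\triangleq\calR_m^i$ and $\xi_{(i,m)}\triangleq\xi_m^i$.

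Two things must be checked for $\beta$. First, it is a distribution, since its total mass is $q\cdot 1+(1-q)\cdot 1=1$. Second, its support is contained in the tagged supports of $\beta^{<}$ (when $q>0$) and of $\beta^{\ge}$ (when $q<1$). This means every $b\in\supp(\beta)$ has well-defined witnesses.

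One degenerate case needs a word. If $q=0$, no ${<}$-tag is in $\supp(\beta)$. The hypotheses on $\rho^{<}$ are then simply unused, and the bound $\beta(\{b\mid r(b)<n\})\ge p$ still holds because $p\le q=0$. The case $q=1$ is symmetric for $\ge$.

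With these choices, the per-branch obligations transfer directly. For $b=(i,m)$ we have $\Gamma[R\mapsto r(b)],\calR_b\vDash\varphi$ and $\calR_b\otimes\calR_F\preceq\xi_b$, and these are literally the hypotheses for $\calR_m^i$ and $\xi_m^i$. No resource is rebuilt, so no reindexing or frame argument is needed.

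The rank bounds are also immediate. Every tag ${<}$ carries $r(b)\in E_{\ell,n-1}$ and every tag $\ge$ carries $r(b)\in E_{n,h}$, so $\ell\le r(b)\le h$ on all of $B$. The set $\{b\mid r(b)<n\}$ is exactly the ${<}$-tagged part, so
\[
  \beta(\{b\mid r(b)<n\})=q\cdot\beta^{<}(E_{\ell,n-1})=q\ge p .
\]
This uses $\ell<n$, which ensures that $E_{\ell,n-1}$ is nonempty and disjoint from $E_{n,h}$.

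The one computation that needs care is the mixture identity $\rho=\bigoplus_{b\sim\beta}\xi_b$. I would prove it pointwise at each $x\in\Mem\times\bbN$:
\[
  \sum_{b\in\supp(\beta)}\beta(b)\,\xi_b(x)
  = q\sum_{m}\beta^{<}(m)\,\xi^{<}_m(x)+(1-q)\sum_{m}\beta^{\ge}(m)\,\xi^{\ge}_m(x)
  = q\,\rho^{<}(x)+(1-q)\,\rho^{\ge}(x).
\]
The first equality splits the countable nonnegative sum over the disjoint tags. The second uses the hypotheses $\rho^i=\bigoplus_{m\sim\beta^i}\xi_m^i$. This is the same regrouping of nonnegative countable sums used in \Cref{lem:index-dbot-kleisli-linearity}.

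The only point I would watch is that the mixture is indexed over $\supp(\beta)$ rather than over all of $B$. When $q\in\{0,1\}$, one side drops out, and its term in the sum is zero anyway. Tagging is exactly what stops a rank value that appears on both sides from being merged, which is why $B$ is a disjoint union rather than $E_{\ell,h}$ itself.
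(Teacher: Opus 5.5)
Your proposal is correct and follows essentially the same route as the paper: a tagged disjoint union weighted by $q\cdot\beta^{<}$ and $(1-q)\cdot\beta^{\ge}$, reuse of the original branch witnesses, and the mixture identity obtained by splitting the sum over the two tags. The only cosmetic difference is that you index $B$ by the full rank intervals instead of by $\supp(\beta^{<})$ and $\supp(\beta^{\ge})$. This is harmless, because the witnesses are only needed on $\supp(\beta)$, and your explicit treatment of $q\in\{0,1\}$ matches the paper's support argument.
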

\begin{proof}
  Let
  $B\triangleq
  (\{\mathsf{lt}\}\times\supp(\beta^{<}))
  \uplus
  (\{\mathsf{ge}\}\times\supp(\beta^{\ge}))$
  be the tagged disjoint union of the decreasing-rank and remaining-rank
  pieces.  Define the tagged weights by
  \[
    \beta(b)\triangleq
    \begin{cases}
      q\cdot\beta^{<}(m),
        & b=(\mathsf{lt},m),\\
      (1-q)\cdot\beta^{\ge}(m),
        & b=(\mathsf{ge},m)
    \end{cases}
  \]
  Then $\beta\in\calD(B)$ because
  $\beta(B)
  =
  q\cdot\beta^{<}(\supp(\beta^{<}))
  +(1-q)\cdot\beta^{\ge}(\supp(\beta^{\ge}))
  =
  1$.

  Define the rank projection, tagged resource, and tagged output distribution in
  the obvious way, i.e.
  \[
    r(b)\triangleq
    \begin{cases}
      m,
        & b=(\mathsf{lt},m),\\
      m,
        & b=(\mathsf{ge},m),
    \end{cases}
  \]
  \[
    \calR_{(\mathsf{lt},m)}\triangleq\calR_m^{<},
    \qquad
    \calR_{(\mathsf{ge},m)}\triangleq\calR_m^{\ge},
  \]
  \[
    \xi_{(\mathsf{lt},m)}\triangleq\xi_m^{<},
    \qquad
    \xi_{(\mathsf{ge},m)}\triangleq\xi_m^{\ge}
  \]

  Expanding the tagged direct sum gives the following calculation.
  \begin{align*}
    \bigoplus_{b\sim\beta}\xi_b
    &=
    \sum_{b\in\supp(\beta)}\beta(b)\cdot\xi_b
    =
    \sum_{b\in B}\beta(b)\cdot\xi_b
    \tag{definition of $\bigoplus$}\\
    &=
    \sum_{m\in\supp(\beta^{<})}
      \beta(\mathsf{lt},m)\cdot\xi_{(\mathsf{lt},m)}
    +
    \sum_{m\in\supp(\beta^{\ge})}
      \beta(\mathsf{ge},m)\cdot\xi_{(\mathsf{ge},m)}
    \tag{definition of $B$}\\
    &=
    \sum_{m\in\supp(\beta^{<})}
      q\cdot\beta^{<}(m)\cdot\xi_m^{<}
    +
    \sum_{m\in\supp(\beta^{\ge})}
      (1-q)\cdot\beta^{\ge}(m)\cdot\xi_m^{\ge}
    \tag{definitions of $\beta$ and $\xi$}\\
    &=
    q\cdot
    \left(
      \sum_{m\in\supp(\beta^{<})}
        \beta^{<}(m)\cdot\xi_m^{<}
    \right)
    +(1-q)\cdot
    \left(
      \sum_{m\in\supp(\beta^{\ge})}
        \beta^{\ge}(m)\cdot\xi_m^{\ge}
    \right)\\
    &=
    q\cdot
    \left(\bigoplus_{m\sim\beta^{<}}\xi_m^{<}\right)
    +(1-q)\cdot
    \left(\bigoplus_{m\sim\beta^{\ge}}\xi_m^{\ge}\right)
    \tag{definition of $\bigoplus$}\\
    &=
    q\cdot\rho^{<}
    +(1-q)\cdot\rho^{\ge}
    \tag{assumed decompositions of $\rho^{<}$ and $\rho^{\ge}$}\\
    &=
    \rho .
    \tag{definition of $\rho$}
  \end{align*}
  We now prove the branch judgment and framed refinement for each
  $b\in\supp(\beta)$.
  \begin{itemize}
    \item Suppose $b=(\mathsf{lt},m)$.  Since $b\in\supp(\beta)$, the
    definition of $\beta$ gives
    $q\cdot\beta^{<}(m)=\beta(b)>0$, hence
    $m\in\supp(\beta^{<})$.  Therefore the assumptions for the
    decreasing-rank side give
    \[
      \Gamma[R\mapsto m],\calR_m^{<}\vDash\varphi,
      \qquad
      \calR_m^{<}\otimes\calR_F\preceq\xi_m^{<}
    \]
    By the definitions of $r$, $\calR_b$, and $\xi_b$ for
    $b=(\mathsf{lt},m)$, this is exactly
    \[
      \Gamma[R\mapsto r(b)],\calR_b\vDash\varphi,
      \qquad
      \calR_b\otimes\calR_F\preceq\xi_b
    \]

    \item Suppose $b=(\mathsf{ge},m)$.  Since $b\in\supp(\beta)$, the
    definition of $\beta$ gives
    $(1-q)\cdot\beta^{\ge}(m)=\beta(b)>0$, hence
    $m\in\supp(\beta^{\ge})$.  Therefore the assumptions for the
    remaining-rank side give
    \[
      \Gamma[R\mapsto m],\calR_m^{\ge}\vDash\varphi,
      \qquad
      \calR_m^{\ge}\otimes\calR_F\preceq\xi_m^{\ge}
    \]
    By the definitions of $r$, $\calR_b$, and $\xi_b$ for
    $b=(\mathsf{ge},m)$, this is exactly
    \[
      \Gamma[R\mapsto r(b)],\calR_b\vDash\varphi,
      \qquad
      \calR_b\otimes\calR_F\preceq\xi_b
    \]
  \end{itemize}

  The $\mathsf{lt}$-tagged branches are all decreasing-rank branches.  To show
  this, fix $b$ and derive:
  \begin{align*}
    b\in\{\mathsf{lt}\}\times\supp(\beta^{<})
    &\Longrightarrow
    \exists m.\; b=(\mathsf{lt},m)
      \;\land\; m\in\supp(\beta^{<})\\
    &\Longrightarrow
    \exists m.\; b=(\mathsf{lt},m)
      \;\land\; m\in E_{\ell,n-1}
    \tag{$\beta^{<}\in\calD(E_{\ell,n-1})$}\\
    &\Longrightarrow
    \exists m.\; b=(\mathsf{lt},m)
      \;\land\; m<n\\
    &\Longrightarrow
    r(b)<n .
    \tag{definition of $r$}
  \end{align*}
  Therefore we get
  $\{\mathsf{lt}\}\times\supp(\beta^{<})
  \subseteq
  \{b\mid r(b)<n\}$. Then
  \begin{align*}
    \beta(\{b\mid r(b)<n\})
    &\ge
    \beta(\{\mathsf{lt}\}\times\supp(\beta^{<}))
    \tag{monotonicity of measure}\\
    &=
    \sum_{m\in\supp(\beta^{<})}
      \beta(\mathsf{lt},m)
    \tag{definition of mass on the tagged set}\\
    &=
    \sum_{m\in\supp(\beta^{<})}
      q\cdot\beta^{<}(m)
    \tag{definition of $\beta$}\\
    &=
    q\cdot
    \sum_{m\in\supp(\beta^{<})}\beta^{<}(m)\\
    &=
    q
    \tag{$\beta^{<}\in\calD(E_{\ell,n-1})$}\\
    &\ge p .
    \tag{$q\in[p,1]$}
  \end{align*}
  Finally, $\supp(\beta^{<})\subseteq E_{\ell,n-1}$ and
  $\supp(\beta^{\ge})\subseteq E_{n,h}$, so every tagged rank lies in
  $E_{\ell,h}$.  Thus
  $\beta(\{b\mid\ell\le r(b)\le h\})=1$.
\end{proof}

For the bounded-rank soundness proof, it is useful to view the loop as a
random walk through its guard tests: at each test, some mass exits the loop
and the remaining mass continues to the next iteration.  The following
definition packages this bookkeeping: the weights record the mass reaching
each exit/live case, while the normalized distributions are the corresponding
conditional inputs used by the next proof step.  Later, the bounded-rank
invariant will show that no live-step mass is lost to $\bot$.

\begin{definition}[Normalized exit/live decomposition]
  \label[definition]{def:index-normalized-exit-live-decomposition}
  Fix a schedule $s$, a guard $e$, and a command $C$.  For
  $c\in\{\tru,\fls\}$, define
  \[
    G_c\triangleq
    \{(\sigma,k)\in\Mem\times\bbN
    \mid \de{e}_{\Exp}(\sigma)=c\}
  \]
  as the event in which the guard $e$ evaluates to $c$.
  First define conditioning on these guard events for partial distributions.
  For $\xi\in\calD_\bot(\Mem\times\bbN)$, write
  $\xi(G_c)$ for the ordinary mass that $\xi$ assigns to $G_c$.  If
  $\xi(G_c)>0$, define the ordinary conditional distribution
  \[
    (\xi\mid G_c)(\sigma,k)
    \triangleq
    \begin{cases}
      \xi(\sigma,k)/\xi(G_c),
      &(\sigma,k)\in G_c,\\
      0,
      &(\sigma,k)\notin G_c .
    \end{cases}
  \]
  If $\xi(G_c)=0$, choose an arbitrary fixed ordinary distribution for
  $\xi\mid G_c$.

  Now fix an \emph{ordinary starting distribution}
  $\rho\in\calD(\Mem\times\bbN)$, and define
  $w_i^{\rho,\exit}$ and $w_i^{\rho,\live}$ recursively as the weight with
  which the loop exits at the $i$th guard test and remains live after $i$
  executions of $C$, respectively.
  Similarly, we define $\mu_i^{\rho,\exit}$ and
  $\mu_i^{\rho,\live}$ recursively as the normalized distributions obtained by
  conditioning on those two cases.
  \[
    w_0^{\rho,\exit}\triangleq \rho(G_{\fls}),
    \qquad
    w_0^{\rho,\live}\triangleq \rho(G_{\tru}),
    \qquad
    \mu_0^{\rho,\exit}\triangleq\rho\mid G_{\fls},
    \qquad
    \mu_0^{\rho,\live}\triangleq\rho\mid G_{\tru},
  \]
  and, recursively,
  \[
    w_{i+1}^{\rho,\exit}
    \triangleq
    w_i^{\rho,\live}\cdot
    \left(\de{C}_s^\dagger(\mu_i^{\rho,\live})\right)(G_{\fls}),
    \qquad
    w_{i+1}^{\rho,\live}
    \triangleq
    w_i^{\rho,\live}\cdot
    \left(\de{C}_s^\dagger(\mu_i^{\rho,\live})\right)(G_{\tru}),
  \]
  \[
    \mu_{i+1}^{\rho,\exit}
    \triangleq
    \de{C}_s^\dagger(\mu_i^{\rho,\live})\mid G_{\fls},
    \qquad
    \mu_{i+1}^{\rho,\live}
    \triangleq
    \de{C}_s^\dagger(\mu_i^{\rho,\live})\mid G_{\tru}
  \]
  When the starting distribution $\rho$ is clear, we omit it from the
  superscript.
\end{definition}

\begin{definition}[Live-step properness]
  \label[definition]{def:index-live-step-properness}
  With notation from
  \Cref{def:index-normalized-exit-live-decomposition}, fix an ordinary
  starting distribution $\rho\in\calD(\Mem\times\bbN)$.  We say that $C$
  satisfies \emph{live-step properness} from $\rho$ along the guard $e$ and
  schedule $s$, written $\mathsf{LiveStepProper}_{s,e}(C,\rho)$, if
  \[
    \forall i\in\bbN.\;
    w_i^{\rho,\live}>0
    \Longrightarrow
    \left(\de{C}_s^\dagger(\mu_i^{\rho,\live})\right)(\bot)=0
  \]
  When $s,e,C,\rho$ are clear, we also write
  $\mathsf{LiveStepProper}$.
\end{definition}

\begin{lemma}[Exit/live accounting facts]
  \label{lem:index-exit-live-accounting-facts}
  \label{lem:index-positive-live-weight-conditioning-event}
  \label{lem:index-normalized-weight-accounting}
  \label{lem:index-normalized-live-weight-monotonicity}
  \label{lem:index-block-decay-iteration}
  \label{lem:index-block-decay-exhausts-exit-stream}
  With notation from
  \Cref{def:index-normalized-exit-live-decomposition}, the following facts
  hold.
  \begin{enumerate}[label=\textup{(\alph*)}]
    \item More generally, let $X$ be countable, $G\subseteq X$,
    $\mu\in\calD(X)$, and suppose sequences $(w_i)_{i\in\bbN}$ of
    nonnegative reals, $(\mu_i)_{i\in\bbN}$ in $\calD(X)$, and
    $(\rho_i)_{i\in\bbN}$ in $\calD_\bot(X)$ satisfy
    \[
      w_0=\mu(G),\qquad
      \mu_0=\mu\mid G,\qquad
      w_{i+1}=w_i\cdot\rho_i(G),\qquad
      \mu_{i+1}=\rho_i\mid G
    \]
    Then $w_i>0$ implies $\mu_i(G)=1$.

    \item The live weights are nonincreasing:
    \[
      w_{i+1}^{\rho,\live}
      \le
      w_i^{\rho,\live}
    \]

    \item If $\mathsf{LiveStepProper}_{s,e}(C,\rho)$, then
    \[
      w_{i+1}^{\rho,\exit}
      +
      w_{i+1}^{\rho,\live}
      =
      w_i^{\rho,\live}
    \]

    \item If $\mathsf{LiveStepProper}_{s,e}(C,\rho)$, then, for every
    $n\in\bbN$,
    \[
      \left(\sum_{i=0}^{n}w_i^{\rho,\exit}\right)
      +
      w_n^{\rho,\live}
      =
      1
    \]

    \item Let $H\in\bbN$ and $p\in(0,1]$.  If
    $\mathsf{LiveStepProper}_{s,e}(C,\rho)$,
    \[
      H=0\Longrightarrow w_0^{\rho,\live}=0,
      \qquad
      H>0\Longrightarrow
      \forall m.\;
      w_{m+H}^{\rho,\live}
      \le
      (1-p^H)\cdot w_m^{\rho,\live},
    \]
    then
    \[
      \sum_{j\in\bbN}w_j^{\rho,\exit}=1
    \]
  \end{enumerate}
\end{lemma}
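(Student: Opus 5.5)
The five items will be proved in order, each by elementary manipulation of the recursive definitions in \Cref{def:index-normalized-exit-live-decomposition}.

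\emph{Part (a).} We argue by cases on $i$. For $i=0$, $w_0=\mu(G)>0$ makes $\mu_0=\mu\mid G$ the genuine conditional distribution, so $\mu_0(G)=1$. For $i+1$, $w_{i+1}=w_i\cdot\rho_i(G)>0$ forces $\rho_i(G)>0$. Hence $\mu_{i+1}=\rho_i\mid G$ is again a genuine conditional supported on $G$. The arbitrary fallback distribution is never used when the weight is positive; this is the only point needing care.

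\emph{Part (b).} This is immediate: $(\de{C}_s^\dagger(\mu_i^{\live}))(G_\tru)\le 1$, since $\de{C}_s^\dagger$ yields a subprobability on ordinary outcomes.

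\emph{Part (c).} If $w_i^{\live}=0$, both sides vanish. Otherwise live-step properness gives $\nu_i\triangleq\de{C}_s^\dagger(\mu_i^{\live})$ with $\nu_i(\bot)=0$. Because $G_\fls$ and $G_\tru$ partition $\Mem\times\bbN$, we get $\nu_i(G_\fls)+\nu_i(G_\tru)=1$. Multiplying by $w_i^{\live}$ gives the claim.

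\emph{Part (d).} This follows by induction on $n$. The base case is $\rho(G_\fls)+\rho(G_\tru)=1$, since $\rho$ is ordinary. The step adds (c) and telescopes.

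\emph{Part (e), the main step.} By (d), $\sum_{i\le n}w_i^{\exit}=1-w_n^{\live}$. Since the exit weights are nonnegative, the series converges with sum $1-\lim_n w_n^{\live}$, where the limit exists by the monotonicity in (b). It therefore suffices to show $w_n^{\live}\to 0$.
\begin{itemize}
\item If $H=0$, then $w_0^{\live}=0$, and (b) gives $w_n^{\live}=0$ for all $n$.
\item If $H>0$, iterating the hypothesis gives $w_{kH}^{\live}\le(1-p^H)^k w_0^{\live}\le(1-p^H)^k$. Since $p\in(0,1]$, we have $0\le 1-p^H<1$, so this tends to $0$. Monotonicity (b) extends the decay from the subsequence $(kH)_k$ to all $n$.
\end{itemize}

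The only subtlety I expect is bookkeeping rather than an obstacle: keeping $\bot$-mass and the conditioning fallback out of the sums. Part (a) handles the fallback and live-step properness handles $\bot$.
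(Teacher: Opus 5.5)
Your proposal is correct and follows essentially the same route as the paper. In (a), a positive weight forces the conditioning event to have positive mass, so the fallback distribution is never used. Parts (b)--(d) come directly from the recursive definitions, the guard partition, live-step properness, and a telescoping induction. In (e), you bound the live weights geometrically along the subsequence $kH$, extend this to all indices by monotonicity, and pass to the limit in (d).
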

\begin{proof}
  For (a), argue by induction on $i$: if $w_0>0$, then $\mu(G)>0$; if
  $w_{i+1}>0$, then $\rho_i(G)>0$.  Thus the relevant conditioning is the
  ordinary conditional distribution on $G$, which has $G$-mass $1$.

  For (b), the recursive definition gives
  \[
    w_{i+1}^{\rho,\live}
    =
    w_i^{\rho,\live}\cdot
    \left(\de{C}_s^\dagger(\mu_i^{\rho,\live})\right)(G_{\tru}),
  \]
  and the second factor lies in $[0,1]$.  For (c), if
  $w_i^{\rho,\live}=0$ the equality is immediate; otherwise live-step
  properness and the partition $G_{\fls},G_{\tru}$ give
  \[
    w_{i+1}^{\rho,\exit}+w_{i+1}^{\rho,\live}
    =
    w_i^{\rho,\live}\cdot
    \left(1-
    \left(\de{C}_s^\dagger(\mu_i^{\rho,\live})\right)(\bot)\right)
    =
    w_i^{\rho,\live}
  \]
  Clause (d) follows by telescoping (c), with base case
  $w_0^{\rho,\exit}+w_0^{\rho,\live}=\rho(G_{\fls})+\rho(G_{\tru})=1$.

  For (e), if $H=0$, then $w_0^{\rho,\live}=0$, and (b) gives
  $w_n^{\rho,\live}=0$ for all $n$.  If $H>0$, then iterating the block-decay
  hypothesis gives
  \[
    w_{kH}^{\rho,\live}
    \le
    (1-p^H)^k\cdot w_0^{\rho,\live}
    \qquad(k\in\bbN)
  \]
  Since $0\le1-p^H<1$ and the live weights are nonincreasing by (b), this
  implies $w_n^{\rho,\live}\to0$.  Taking limits in (d) gives
  $\sum_{j\in\bbN}w_j^{\rho,\exit}=1$.
\end{proof}

\begin{lemma}[Exit/live restart and approximant facts]
  \label{lem:index-exit-live-restart-approximant-facts}
  \label{lem:index-shifted-normalized-live-sequence}
  \label{lem:index-normalized-exit-stream-approximants}
  With notation from
  \Cref{def:index-normalized-exit-live-decomposition}, the following facts
  hold.
  \begin{enumerate}[label=\textup{(\alph*)}]
    \item Suppose $w_0^{\rho,\live}>0$ and
    $\de{C}_s^\dagger(\mu_0^{\rho,\live})$ is proper.  Let
    $\bar w_j^{\exit}$, $\bar w_j^{\live}$,
    $\bar\mu_j^{\exit}$, and $\bar\mu_j^{\live}$ denote the weights and
    normalized distributions obtained by restarting the decomposition from
    $\de{C}_s^\dagger(\mu_0^{\rho,\live})$.  Then, for every $j$,
    \[
      \bar w_j^{\exit}
      =
      \frac{w_{j+1}^{\rho,\exit}}{w_0^{\rho,\live}},
      \qquad
      \bar w_j^{\live}
      =
      \frac{w_{j+1}^{\rho,\live}}{w_0^{\rho,\live}},
      \qquad 
      \bar\mu_j^{\exit}=\mu_{j+1}^{\rho,\exit},
      \qquad
      \bar\mu_j^{\live}=\mu_{j+1}^{\rho,\live}
    \]
    \item Let $(F_m)_{m\in\bbN}$ and $F_{m,s}$ be the Kleene approximants and
    schedule-indexed projections from
    \Cref{lem:index-schedule-indexed-while-semantic-equations}, instantiated
    with $s,e,C$.  If $\mathsf{LiveStepProper}_{s,e}(C,\mu)$, then, for every
    $n\in\bbN$,
    \[
      \bigl((F_{n+1})_s^\bot\bigr)^\dagger(\mu)
      =
      \left(\sum_{i=0}^{n}
        w_i^{\exit}\cdot\mu_i^{\exit}\right)
      +
      w_n^{\live}\cdot\delta_\bot
    \]
  \end{enumerate}
\end{lemma}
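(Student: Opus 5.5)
Both parts come from unfolding the recursive definitions in \Cref{def:index-normalized-exit-live-decomposition}. Part (a) uses induction on $j$; part (b) uses induction on $n$ together with \Cref{lem:index-while-approximant-distribution-unfolding}.

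For (a), set $\rho'\triangleq\de{C}_s^\dagger(\mu_0^{\rho,\live})$, which is ordinary by hypothesis. The base case $j=0$ is direct. We have $\bar w_0^{\exit}=\rho'(G_{\fls})$, and the definition gives $w_1^{\rho,\exit}=w_0^{\rho,\live}\cdot\rho'(G_{\fls})$. Since $w_0^{\rho,\live}>0$ we may divide, which yields the weight identities. The distribution identities hold because both sides equal $\rho'\mid G_c$. For the step, assume the four identities at $j$. Then $\bar\mu_j^{\live}=\mu_{j+1}^{\rho,\live}$, so the one-step outputs $\de{C}_s^\dagger(\bar\mu_j^{\live})$ and $\de{C}_s^\dagger(\mu_{j+1}^{\rho,\live})$ coincide. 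Multiplying $\bar w_j^{\live}=w_{j+1}^{\rho,\live}/w_0^{\rho,\live}$ by the common guard masses gives the weight identities at $j+1$. Conditioning the same distribution gives the distribution identities. Properness of $\rho'$ matters only because the restarted decomposition requires an ordinary starting distribution.

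For (b), I would prove the claim by induction on $n$, simultaneously for every ordinary starting distribution $\mu$ satisfying $\mathsf{LiveStepProper}_{s,e}(C,\mu)$. In the base case $n=0$, apply \Cref{lem:index-while-approximant-distribution-unfolding} with $m=0$. Since $F_0$ is constantly $\delta_\bot$, we get $((F_0)_s^\bot)^\dagger(\nu)=\delta_\bot$ for any input $\nu$. So $((F_1)_s^\bot)^\dagger(\mu)=w_0^{\exit}\mu_0^{\exit}+w_0^{\live}\delta_\bot$, with the convention that a zero-weight term is harmless.

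For the step $n+1$, apply \Cref{lem:index-while-approximant-distribution-unfolding} with $m=n+1$:
\[
((F_{n+2})_s^\bot)^\dagger(\mu)=w_0^{\exit}\mu_0^{\exit}+w_0^{\live}\cdot((F_{n+1})_s^\bot)^\dagger(\rho'),\qquad \rho'\triangleq\de{C}_s^\dagger(\mu_0^{\live}).
\]
If $w_0^{\live}=0$, then every later live and exit weight vanishes by \Cref{lem:index-exit-live-accounting-facts}(b) and the recursive definition, and the claim is immediate. Otherwise live-step properness at $i=0$ makes $\rho'$ proper, so part (a) applies. Part (a) also shows that $\mathsf{LiveStepProper}_{s,e}(C,\rho')$ follows from $\mathsf{LiveStepProper}_{s,e}(C,\mu)$: indeed $\bar w_j^{\live}>0$ iff $w_{j+1}^{\live}>0$, and the live distributions agree. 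The induction hypothesis for $\rho'$ gives $\sum_{j\le n}\bar w_j^{\exit}\bar\mu_j^{\exit}+\bar w_n^{\live}\delta_\bot$. Multiplying by $w_0^{\live}$ and substituting the identities from (a) reindexes this to $\sum_{i=1}^{n+1}w_i^{\exit}\mu_i^{\exit}+w_{n+1}^{\live}\delta_\bot$, which finishes the step.

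The main obstacle is the bookkeeping in this inductive step. The induction must quantify over the starting distribution. One must also check that $\mathsf{LiveStepProper}$ transfers to the restarted distribution $\rho'$. Finally, the degenerate zero-weight cases must be handled, where the arbitrary choice of conditional distribution does not matter because it is always multiplied by zero.
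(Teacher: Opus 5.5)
Your proposal is correct and follows essentially the same route as the paper. Part (a) is an induction on $j$. Part (b) is an induction on the approximant index using \Cref{lem:index-while-approximant-distribution-unfolding}, splitting on whether $w_0^{\live}=0$ and otherwise restarting from $\de{C}_s^\dagger(\mu_0^{\live})$ via (a). You also make explicit two points the paper's sketch leaves implicit: the induction in (b) must be generalized over the starting distribution, and $\mathsf{LiveStepProper}$ must be shown to transfer to the restarted distribution.
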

\begin{proof}
  Clause (a) is a direct induction on $j$, using the recursive equations for
  the restarted decomposition and for the original decomposition.  Clause (b)
  is an induction on the approximant index.  The base case follows from the
  first unfolding of the while approximant, which exits on $G_{\fls}$ and
  sends the $G_{\tru}$ mass to $\bot$.  The step uses the distribution
  unfolding of the while approximants; if $w_0^{\rho,\live}=0$, all later
  exit and live weights vanish, and otherwise clause (a) identifies the
  restarted decomposition from
  $\de{C}_s^\dagger(\mu_0^{\rho,\live})$ with the shifted original sequence.
\end{proof}

\begin{lemma}[Branchwise progress gives block decay]
  \label{lem:index-branchwise-progress-block-decay}
  Let $H>0$ and $0<p\le1$.  Let $(L_j)_{j\in\bbN}$ be events satisfying
  \[
    L_{j+1}\subseteq L_j
    \qquad(j\in\bbN),
  \]
  and let $(w_j)_{j\in\bbN}$ be nonnegative reals with
  $\PP[L_j]=w_j$.  Fix $i$ with $w_i>0$.  Let $A$ be countable,
  $\theta\in\calD(A)$, and let
  $(L_i^a)_{a\in\supp(\theta)}$ be a partition of $L_i$ such that
  \[
    \PP[L_i^a]=\theta(a)\cdot w_i
    \qquad(a\in\supp(\theta))
  \]
  Thus $L_i^a$ is the part of the live event represented by branch $a$.

  For every $a\in\supp(\theta)$, let $d_a\in\{1,\ldots,H\}$ and let
  $(D_t^a)_{0\le t\le d_a}$ be integer-valued random variables.  Define
  \[
    S_a\triangleq
    \bigcap_{t=0}^{d_a-1}
    \left(\{D_t^a=0\}\cup\{D_{t+1}^a<D_t^a\}\right)
  \]
  This is the event that branch $a$ has already exited, or strictly
  decreases, at every step in its block.

  Suppose that, for every $a\in\supp(\theta)$, the progress event implies
  exit within the block:
  \[
    L_i^a\cap S_a
    \subseteq
    L_{i+H}^{c},
  \]
  and each prefix step decreases with conditional probability at least $p$:
  \[
    \Pr\!\left[
      D_t^a=0 \;\lor\; D_{t+1}^a<D_t^a
      \,\middle|\,
      L_i^a\cap
      \bigcap_{u=0}^{t-1}
      \left(\{D_u^a=0\}\cup\{D_{u+1}^a<D_u^a\}\right)
    \right]\ge p
    \qquad(0\le t<d_a)
  \]
  Then
  \[
    w_{i+H}\le(1-p^H)\cdot w_i
  \]
  Intuitively, conditioned on being live at time $i$, the branch slices
  partition the live mass according to $\theta$.  On each slice, the
  probability of progressing until exit within the block is at least $p^H$;
  hence at most a $(1-p^H)$-fraction of the live mass can remain live.
\end{lemma}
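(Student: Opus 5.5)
The plan is to prove the bound branch by branch and then sum over the partition $(L_i^a)_{a\in\supp(\theta)}$.

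First, the nestedness $L_{j+1}\subseteq L_j$ gives $L_{i+H}\subseteq L_i$. Combined with the partition of $L_i$, this yields
\[
  w_{i+H}=\PP[L_{i+H}]=\sum_{a\in\supp(\theta)}\PP[L_i^a\cap L_{i+H}].
\]
Since $L_i^a\cap S_a\subseteq L_{i+H}^c$ for each branch, we have $L_i^a\cap L_{i+H}\subseteq L_i^a\setminus S_a$. It is therefore enough to show, for every $a\in\supp(\theta)$,
\[
  \PP[L_i^a\cap S_a]\ge p^H\cdot\PP[L_i^a].
\]
Given this, summing gives
\[
  w_{i+H}\le\sum_a(1-p^H)\theta(a)w_i=(1-p^H)w_i,
\]
using $\sum_a\theta(a)=1$.

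Fix a branch $a$. Branches with $\PP[L_i^a]=0$ contribute nothing, so assume $\PP[L_i^a]>0$. For $0\le t\le d_a$ let
\[
  S_a^{t}\triangleq\bigcap_{u=0}^{t-1}\left(\{D_u^a=0\}\cup\{D_{u+1}^a<D_u^a\}\right),
\]
so $S_a^0$ is the whole space and $S_a^{d_a}=S_a$. We show by induction on $t$ that
\[
  \PP[L_i^a\cap S_a^{t}]\ge p^{t}\,\PP[L_i^a].
\]
\begin{itemize}
  \item The base case $t=0$ holds with equality.
  \item For the step, suppose $\PP[L_i^a\cap S_a^t]>0$. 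Then the conditional hypothesis applies, and the chain rule gives $\PP[L_i^a\cap S_a^{t+1}]\ge p\cdot\PP[L_i^a\cap S_a^t]$.
  \item If instead $\PP[L_i^a\cap S_a^t]=0$, the conditional probability is undefined, but the inductive hypothesis gives $p^t\PP[L_i^a]\le 0$. Since $p>0$, this is impossible.
\end{itemize}
At $t=d_a$ we get $p^{d_a}\PP[L_i^a]$, and since $d_a\le H$ and $p\le 1$, we have $p^{d_a}\ge p^H$.

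The main care point is the conditioning. The hypothesis is stated as a conditional probability, which only makes sense when the conditioning event has positive mass. The induction must carry the strict positivity $\PP[L_i^a\cap S_a^t]\ge p^t\PP[L_i^a]>0$, which follows from $p>0$ and $\PP[L_i^a]>0$. That positivity is what lets the hypothesis be invoked at each of the $d_a$ steps.

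Beyond that, the only subtlety is the countable sum over $a$. Countable additivity of $\PP$ over the disjoint slices $L_i^a$ justifies it, and all terms are nonnegative, so rearrangement is safe.
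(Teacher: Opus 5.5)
Your proposal is correct and follows essentially the same route as the paper. On each slice $L_i^a$ you apply the chain rule with the one-step bounds to get $\PP[L_i^a\cap S_a]\ge p^{d_a}\PP[L_i^a]\ge p^H\PP[L_i^a]$, and then total probability over the partition $(L_i^a)_{a\in\supp(\theta)}$ with weights $\theta(a)$ gives $w_{i+H}\le(1-p^H)w_i$. The one difference is that you work with unconditional masses and carry $\PP[L_i^a\cap S_a^t]>0$ through the induction; this is automatic from $\PP[L_i^a]=\theta(a)w_i>0$, so your zero-mass branch case is vacuous. That positivity justifies the chain-rule step which the paper simply writes as an exact product of conditional probabilities.
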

\begin{proof}
  Fix $a\in\supp(\theta)$.  The inclusion assumption gives
  $\PP[L_{i+H}^{c}\mid L_i^a]
  \ge
  \PP[S_a\mid L_i^a]$.
  By the chain rule for conditional probabilities applied to the events
  $\{D_t^a=0\}\cup\{D_{t+1}^a<D_t^a\}$ for $0\le t<d_a$,
  and by the one-step lower bounds,
  \begin{align*}
    \PP[S_a\mid L_i^a]
    &=
    \prod_{t=0}^{d_a-1}
      \Pr\!\left[
        D_t^a=0 \;\lor\; D_{t+1}^a<D_t^a
        \,\middle|\,
        L_i^a\cap
        \bigcap_{u=0}^{t-1}
        \left(\{D_u^a=0\}\cup\{D_{u+1}^a<D_u^a\}\right)
      \right]\\
    &\ge
      \prod_{t=0}^{d_a-1}p\\
    &=
    p^{d_a}\\
    &\ge
    p^H .
      \tag{$d_a\le H$ and $0<p\le1$}
  \end{align*}
  Hence, by total probability over the partition $(L_i^a)_a$,
  \[
    \PP[L_{i+H}^{c}\mid L_i]
    =
    \sum_{a\in\supp(\theta)}
      \PP[L_i^a\mid L_i]\cdot
      \PP[L_{i+H}^{c}\mid L_i^a]\\
    \ge
    \sum_{a\in\supp(\theta)}
      \theta(a)\cdot p^H\\
    =
    p^H
    \]
  Since $L_{j+1}\subseteq L_j$ for every $j$, $L_{i+H}\subseteq L_i$.
  Therefore
  \[
    \frac{w_{i+H}}{w_i}
    =
    \frac{\PP[L_{i+H}]}{\PP[L_i]}\\
    =
    \PP[L_{i+H}\mid L_i]\\
    =
    1-\PP[L_{i+H}^{c}\mid L_i]\\
    \le
    1-p^H
  \]
  Since $w_i>0$, multiplying by $w_i$ gives
  $w_{i+H}\le(1-p^H)\cdot w_i$.
\end{proof}

\begin{definition}[Bounded-rank exit/live witnesses]
  \label[definition]{def:index-bounded-rank-exit-live-witnesses}
  With notation from
  \Cref{def:index-normalized-exit-live-decomposition}, fix a schedule $s$,
  guard $e$, command $C$, and starting distribution $\rho$, and let
  $w_i^{\exit}$, $w_i^{\live}$,
  $\mu_i^{\exit}$, and $\mu_i^{\live}$ be the
  normalized exit/live weights and distributions generated by that definition
  (omitting $\rho$ from the superscripts).
  Also fix a logical environment $\Gamma$, assertions $\varphi$ and $\chi$,
  integers $\ell\le h$, a frame $\calR_F$, and a finite footprint $V_b$.

  We define two predicates $\Exit(i)$ and $\Live(i)$.
  The predicate $\Exit(i)$ is the exit-state postcondition at the $i$th guard
  test.  When the exit weight is positive, it requires an active resource
  satisfying $\chi$ whose framed version refines the normalized exit
  distribution, with footprint contained in $V_b$.  That is, for $i\in\bbN$:
  \[
    \Exit(i) \triangleq
    w_i^{\exit}>0
    \Longrightarrow
    \exists\calQ_i.\;
    \Gamma,\calQ_i\vDash\chi
    \;\land\;
    \calQ_i\otimes\calR_F\preceq\mu_i^{\exit}
    \;\land\;
    V_{\calQ_i}\subseteq V_b
  \]

  The predicate $\Live(i)$ is the live-state invariant after $i$ executions of
  $C$.  When the live weight is positive, it requires the normalized live
  distribution to split into indexed components, each assigned a rank in
  $\{\ell+1,\ldots,h\}$, an active resource satisfying $\varphi$ at that rank,
  and a framed refinement into the corresponding component, again with active
  footprint contained in $V_b$.  That is, for $i\in\bbN$:
  \begin{align*}
    \Live(i)
    &\triangleq
    w_i^{\live}>0
    \Longrightarrow
    \exists A_i,\theta_i\in\calD(A_i),
    r_i:\supp(\theta_i)\to\{n\in\bbN\mid \ell<n\le h\},\\
    &\qquad
    (\calR_{i,a})_{a\in\supp(\theta_i)},
    (\mu_{i,a})_{a\in\supp(\theta_i)}.\\
    &\qquad
    \mu_i^{\live}
    =
    \bigoplus_{a\sim\theta_i}\mu_{i,a}\\
    &\qquad{}\land
    \forall a\in\supp(\theta_i).\;
    \Gamma[R\mapsto r_i(a)],\calR_{i,a}\vDash\varphi
    \;\land\;
    \calR_{i,a}\otimes\calR_F\preceq\mu_{i,a}
    \;\land\;
    V_{\calR_{i,a}}\subseteq V_b
  \end{align*}
\end{definition}

\begin{lemma}[Live invariant gives live-step properness]
  \label{lem:index-bounded-rank-live-step-properness}
  Fix an assertion $\varphi$, a mode $m$, a schedule $s$, a guard $e$, a
  command $C$, a logical environment $\Gamma$, a frame $\calR_F$, an input
  distribution $\mu\in\calD(\Mem\times\bbN)$, integers $\ell\le h$, a finite
  footprint $V_b$, and $p\in(0,1]$.  For integers $a\le b$, write
  $E_{a,b}\triangleq\{a,a+1,\ldots,b\}$, and let
  $\chi\triangleq\varphi[\ell/R]$.

  Instantiate the normalized exit/live decomposition of
  \Cref{def:index-normalized-exit-live-decomposition} with $(s,e,C,\mu)$,
  omitting $\mu$ from superscripts, and let $\Live$ be the live predicate from
  \Cref{def:index-bounded-rank-exit-live-witnesses} for
  $(\Gamma,\varphi,\chi,\ell,h,\calR_F,V_b)$ and the generated weights and
  distributions.  Suppose
  \[
    \vdash_m
    \triple
      {\varphi\sep\sure{R=N>\ell}}
      {C}
      {
        \left(\bignd_{R=\ell}^{N-1}\varphi\right)
        \oplus_{\ge p}
        \left(\bignd_{R=N}^{h}\varphi\right)
      },
    \qquad
    N\notin\vars(\varphi),
    \qquad
    \Stable{m}{\calR_F}
  \]
  and
  \[
    \Live(i)
    \qquad(i\in\bbN)
  \]
  Then $\mathsf{LiveStepProper}_{s,e}(C,\mu)$.
\end{lemma}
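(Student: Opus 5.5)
To establish $\mathsf{LiveStepProper}_{s,e}(C,\mu)$, I fix $i$ with $w_i^{\live}>0$ and show $\bigl(\de{C}_s^\dagger(\mu_i^{\live})\bigr)(\bot)=0$. By $\Live(i)$ I obtain the decomposition
\[
  \mu_i^{\live}=\bigoplus_{a\sim\theta_i}\mu_{i,a},
\]
together with, for each $a\in\supp(\theta_i)$, ranks $r_i(a)\in\{\ell+1,\ldots,h\}$ and resources $\calR_{i,a}$ such that
\[
  \Gamma[R\mapsto r_i(a)],\calR_{i,a}\vDash\varphi
  \qquad\text{and}\qquad
  \calR_{i,a}\otimes\calR_F\preceq\mu_{i,a}.
\]
By \Cref{lem:index-dbot-kleisli-linearity}, $\de{C}_s^\dagger(\mu_i^{\live})=\bigoplus_{a\sim\theta_i}\de{C}_s^\dagger(\mu_{i,a})$. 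The $\bot$-mass of a convex mixture is the weighted sum of the branch $\bot$-masses, so it suffices to show that each $\de{C}_s^\dagger(\mu_{i,a})$ is proper.

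For a fixed branch $a$, I apply the premise triple in the logical environment $\Gamma_a\triangleq\Gamma[R\mapsto r_i(a)][N\mapsto r_i(a)]$. Since $N\notin\vars(\varphi)$, context irrelevance (the second part of \Cref{lem:index-assertion-logical-substitution}) gives $\Gamma_a,\calR_{i,a}\vDash\varphi$.

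Next I need $\Gamma_a,\calR_{i,a}\vDash\varphi\sep\sure{R=N>\ell}$. The pure assertion $R=N>\ell$ holds under $\Gamma_a$ because $r_i(a)>\ell$, so it is satisfied by the unit resource $\UnitP$: the relevant event is the whole space, it is measurable, and it has measure $1$. Since $\calR_{i,a}\otimes\UnitP\cong\calR_{i,a}$, the pair $(\calR_{i,a},\UnitP)$ witnesses the separating conjunction.

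Applying the validity of the premise (\Cref{def:triple}) with schedule $s$, frame $\calR_F$ (which satisfies $\Stable{m}{\calR_F}$ by hypothesis), and input distribution $\mu_{i,a}$, I obtain some $\calQ$ with $\calQ\otimes\calR_F\preceq\de{C}_s^\dagger(\mu_{i,a})$. By the first fact in group (g) of \Cref{fig:resource-algebra-refinement-facts}, refinement into a partial distribution forces $\de{C}_s^\dagger(\mu_{i,a})(\bot)=0$. Summing over the branches with weights $\theta_i(a)$ gives the result.

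The main obstacle is the bookkeeping around the premise. The premise uses the fresh logical variable $N$, and its precondition is a separating conjunction with a pure fact rather than $\varphi$ itself. I handle both through context irrelevance and the $\UnitP$ witness as described. I also need $\mu_{i,a}$ to be an ordinary distribution; this holds because it is a component of an ordinary mixture and it is the target of a refinement. Nothing about the postcondition's content is needed, only the existence of a refining witness.
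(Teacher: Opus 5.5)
Your proposal is correct and follows the paper's proof essentially step for step. You unpack $\Live(i)$ and apply the premise to each component in the environment that sets both $R$ and $N$ to $r_i(a)$, using $N\notin\vars(\varphi)$ and the unit resource $\UnitP$ as the witness for $\sure{R=N>\ell}$. You then read off properness of each $\de{C}_s^\dagger(\mu_{i,a})$ from the refinement and conclude by linearity of the Kleisli extension. One small quibble: the context-irrelevance step is not literally the second clause of the substitution lemma, which concerns renaming $\varphi[Y/X]$. The paper also just appeals to context irrelevance directly here, so nothing is lost.
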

\begin{proof}
  Fix $i\in\bbN$ and assume $w_i^{\live}>0$.  By $\Live(i)$, choose
  $A_i,\theta_i \in \calD(A_i),r_i : \supp(\theta_i) \to E_{\ell+1, h},(\calR_{i,a})_{a\in\supp(\theta_i)}$, and
  $(\mu_{i,a})_{a\in\supp(\theta_i)}$ such that
  $\mu_i^{\live}
  =
  \bigoplus_{a\sim\theta_i}\mu_{i,a}$
  and, for every $a\in\supp(\theta_i)$,
  \[
    \Gamma[R\mapsto r_i(a)],\calR_{i,a}\vDash\varphi,
    \qquad
    \calR_{i,a}\otimes\calR_F\preceq\mu_{i,a},
    \qquad
    V_{\calR_{i,a}}\subseteq V_b
  \]
  We first show that each component output is proper.  Fix
  $a\in\supp(\theta_i)$ and put $n_a\triangleq r_i(a)$.  Let
  $\Gamma_{i,a}\triangleq\Gamma[N\mapsto n_a][R\mapsto n_a]$.  Since
  $N\notin\vars(\varphi)$, the environments $\Gamma[R\mapsto n_a]$ and
  $\Gamma_{i,a}$ agree on $\fv(\varphi)$.  Since $n_a\in E_{\ell+1,h}$, the
  premise precondition follows by
  \begin{align*}
    &\Gamma[R\mapsto n_a],\calR_{i,a}\vDash\varphi
    \quad\Longrightarrow\quad
    \Gamma_{i,a},\calR_{i,a}\vDash\varphi
    \tag{context irrelevance}\\
    &\Gamma_{i,a}(R)=\Gamma_{i,a}(N)=n_a
    \;\land\;
    n_a>\ell
    \quad\Longrightarrow\quad
    \Gamma_{i,a},\UnitP\vDash\sure{R=N>\ell}
    \tag{satisfaction clause for $\sure{-}$}\\
    &\Gamma_{i,a},\calR_{i,a}\vDash\varphi
    \;\land\;
    \Gamma_{i,a},\UnitP\vDash\sure{R=N>\ell}
    \;\land\;
    \calR_{i,a}\otimes\UnitP\cong\calR_{i,a}\\
    &\quad\Longrightarrow\quad
    \Gamma_{i,a},\calR_{i,a}
    \vDash
    \varphi\sep\sure{R=N>\ell}.
    \tag{definition of $\UnitP$ and satisfaction clause for $\sep$}
  \end{align*}
  Semantic soundness of the premise triple, applied with environment
  $\Gamma_{i,a}$, input resource $\calR_{i,a}$, frame $\calR_F$, and input
  distribution $\mu_{i,a}$, gives a resource $\calQ_{i,a}$ such that
  $\calQ_{i,a}\otimes\calR_F
  \preceq
  \de{C}_s^\dagger(\mu_{i,a})$.
  Therefore $\de{C}_s^\dagger(\mu_{i,a})$ is proper by definition of
  partial-distribution refinement.

  By \Cref{lem:index-dbot-kleisli-linearity}, the Kleisli extension of
  $\calD_\bot$ is linear,
  $\de{C}_s^\dagger(\mu_i^{\live})
  =
  \bigoplus_{a\sim\theta_i}\de{C}_s^\dagger(\mu_{i,a})$.
  Since every branch output in this mixture is proper, the mixture is proper,
  so $\left(\de{C}_s^\dagger(\mu_i^{\live})\right)(\bot)=0$.
  Since $i$ was arbitrary, this proves
  $\mathsf{LiveStepProper}_{s,e}(C,\mu)$.
\end{proof}

\begin{lemma}[Initial bounded-rank exit/live witnesses]
  \label{lem:index-bounded-rank-initial-exit-live-witnesses}
  Fix an assertion $\varphi$, a schedule $s$, a guard $e$, a command $C$, an
  environment $\Gamma$, resources $\calR,\calR_F$, an input distribution
  $\mu\in\calD(\Mem\times\bbN)$, and integers $\ell\le h$.

  For integers $a\le b$, write
  $E_{a,b}\triangleq\{a,a+1,\ldots,b\}$.  Let
  $\chi\triangleq\varphi[\ell/R]$.  Instantiate the
  normalized exit/live decomposition of
  \Cref{def:index-normalized-exit-live-decomposition} with $(s,e,C,\mu)$,
  omitting $\mu$ from superscripts, and let $\Exit$ and
  $\Live$ be the predicates of
  \Cref{def:index-bounded-rank-exit-live-witnesses} for
  $(\Gamma,\varphi,\chi,\ell,h,\calR_F,V_\calR)$ and the generated weights and
  distributions.  If
  \[
    \Gamma,\calR\vDash\bignd_{R=\ell}^{h}\varphi,
    \qquad
    \calR\otimes\calR_F\preceq\mu,
  \]
  and
  \[
    \varphi\sep\sure{R=\ell}\Rightarrow\sure{e\mapsto\fls},
    \qquad
    \varphi\sep\sure{R>\ell}\Rightarrow\sure{e\mapsto\tru},
  \]
  then
  \[
    \Exit(0)
    \qquad\text{and}\qquad
    \Live(0)
  \]
\end{lemma}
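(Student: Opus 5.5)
My plan is to begin from \Cref{lem:index-framed-bounded-rank-choice-decomposition}, applied with $L=\ell$, $U=h$, $\calQ\triangleq\calR$ and $\rho\triangleq\mu$. This yields three things. First, a rank distribution $\beta\in\calD(E_{\ell,h})$. Second, branch resources $(\calR_m)_{m\in\supp(\beta)}$ with $\Gamma[R\mapsto m],\calR_m\vDash\varphi$. Third, distributions $(\xi_m)_{m\in\supp(\beta)}$ with $\mu=\bigoplus_{m\sim\beta}\xi_m$ and $\calR_m\otimes\calR_F\preceq\xi_m$.

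The footprint bound $V_{\calR_m}\subseteq V_\calR$ should follow from $\bigoplus_m\calR_m\preceq\calR$, since direct sums share a common footprint and refinement only shrinks footprints. I will record this as part of the unfolding.

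Next I split the ranks into two groups, $m=\ell$ and $m>\ell$. For $m=\ell$, the environment $\Gamma[R\mapsto\ell]$ together with $\UnitP$ satisfies $\sure{R=\ell}$. Using $\calR_\ell\otimes\UnitP\cong\calR_\ell$, I get $\Gamma[R\mapsto\ell],\calR_\ell\vDash\varphi\sep\sure{R=\ell}$, and the first guard entailment then gives $\sure{e\mapsto\fls}$. Framing with \Cref{lem:index-assertion-frame-preservation} and transferring along the refinement with \Cref{lem:index-guard-assertion-refinement-transfer} gives $\xi_\ell(G_{\fls})=1$. Symmetrically, every $m>\ell$ gives $\xi_m(G_{\tru})=1$. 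It follows that
\[
  w_0^{\exit}=\mu(G_{\fls})=\beta(\ell),\qquad w_0^{\live}=\mu(G_{\tru})=1-\beta(\ell).
\]
When these weights are positive, $\mu_0^{\exit}=\xi_\ell$, and $\mu_0^{\live}$ is the renormalized mixture $\bigoplus_{m\sim\beta'}\xi_m$, where $\beta'$ is $\beta$ conditioned on $\{m>\ell\}$. Both facts are elementary computations with conditional distributions of a mixture whose components are concentrated on complementary events.

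For $\Exit(0)$, I take $\calQ_0\triangleq\calR_\ell$. The relation $\Gamma[R\mapsto\ell],\calR_\ell\vDash\varphi$ converts to $\Gamma,\calR_\ell\vDash\varphi[\ell/R]=\chi$ by \Cref{lem:index-assertion-logical-substitution}. Refinement and footprint were obtained above.

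For $\Live(0)$, I take $A_0\triangleq\supp(\beta)\setminus\{\ell\}$, $\theta_0\triangleq\beta'$, $r_0$ the identity (which lands in $\{\ell+1,\dots,h\}$), $\calR_{0,m}\triangleq\calR_m$ and $\mu_{0,m}\triangleq\xi_m$.

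I expect the main obstacle to be the identification of the conditional distributions, so I will handle it carefully. I will show, for instance, that $\mu\mid G_{\tru}=\bigoplus_{m\sim\beta'}\xi_m$ by computing pointwise on $G_{\tru}$. Each $\xi_m$ with $m>\ell$ has full mass there, while $\xi_\ell$ has none, so $\mu(\sigma,k)=\sum_{m>\ell}\beta(m)\xi_m(\sigma,k)$ on $G_{\tru}$. Dividing by $1-\beta(\ell)$ gives the claim. The remaining obligations are bookkeeping.
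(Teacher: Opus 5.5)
Your proposal is correct and follows the paper's proof essentially step for step. You unfold $\bignd$ into rank branches and split $\mu$ along the framed refinement. You then use $\UnitP$, the two guard entailments, frame preservation and guard transfer to concentrate each branch on $G_{\fls}$ or $G_{\tru}$, and read off $w_0^{\exit}=\beta(\ell)$, $\mu_0^{\exit}=\xi_\ell$, and the renormalized live mixture.

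The only difference is that you cite \Cref{lem:index-framed-bounded-rank-choice-decomposition} where the paper unfolds inline. That is fine provided you take $\bigoplus_{m\sim\beta}\calR_m\preceq\calR$ from that lemma's proof to get the footprint bound, as you note, because its statement does not export that fact.
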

\begin{proof}
  Unfolding $\Gamma,\calR\vDash\bignd_{R=\ell}^{h}\varphi$ gives
  $\alpha\in\calD(E_{\ell,h})$ and resources
  $(\calR_n)_{n\in\supp(\alpha)}$ such that
  \[
    \bigoplus_{n\sim\alpha}\calR_n\preceq\calR,
    \qquad
    \Gamma[R\mapsto n],\calR_n\vDash\varphi
    \quad(n\in\supp(\alpha))
  \]
  Since $\bigoplus_{n\sim\alpha}\calR_n\preceq\calR$, the resource order gives
  $V_\oplus\subseteq V_\calR$ for the common direct-sum footprint $V_\oplus$.
  Thus $V_{\calR_n}\subseteq V_\calR$ for every $n\in\supp(\alpha)$.
  Since $\calR\otimes\calR_F\preceq\mu$,
  by refinement we can choose 
  normalized distributions $(\mu_n)_{n\in\supp(\alpha)}$ with
  \[
    \mu=\bigoplus_{n\sim\alpha}\mu_n,
    \qquad
    \calR_n\otimes\calR_F\preceq\mu_n
    \quad(n\in\supp(\alpha))
  \]
  For each $n\in\supp(\alpha)$, consider the $n$-indexed component
  $(\calR_n,\mu_n)$ of this decomposition.  We prove
  $\mu_n(G_{\fls})=1$ in the case $n=\ell$, and
  $\mu_n(G_{\tru})=1$ in the case $n>\ell$.

  If $n=\ell$, then
  \begin{align*}
    &\Gamma[R\mapsto n],\calR_n\vDash\varphi
      \land
      \Gamma[R\mapsto n],\UnitP\vDash\sure{R=\ell}
      \land
      \calR_n\otimes\UnitP\cong\calR_n\\
    &\Longrightarrow
      \Gamma[R\mapsto n],\calR_n
      \vDash
      \varphi\sep\sure{R=\ell}
      \tag{satisfaction clause for $\sep$}\\
    &\Longrightarrow
      \Gamma[R\mapsto n],\calR_n\vDash\sure{e\mapsto\fls}.
      \tag{$\varphi\sep\sure{R=\ell}\Rightarrow\sure{e\mapsto\fls}$}
  \end{align*}
  By \Cref{lem:index-assertion-frame-preservation},
  $\Gamma[R\mapsto n],\calR_n\otimes\calR_F
  \vDash\sure{e\mapsto\fls}$.  Applying
  \Cref{lem:index-guard-assertion-refinement-transfer} with
  $\calR=\calR_n\otimes\calR_F$, $\mu=\mu_n$, and $c=\fls$, using
  $\calR_n\otimes\calR_F\preceq\mu_n$, gives
  $\mu_n(G_{\fls})=1$.

  The case $n>\ell$ is identical, using $\sure{R>\ell}$ and
  $\varphi\sep\sure{R>\ell}\Rightarrow\sure{e\mapsto\tru}$, and gives
  $\mu_n(G_{\tru})=1$.

  The derivations above give, for every $n\in\supp(\alpha)$,
  \[
    n=\ell\Longrightarrow\mu_n(G_{\fls})=1,
    \qquad
    n>\ell\Longrightarrow\mu_n(G_{\tru})=1
  \]
  Since $G_{\fls}$ and $G_{\tru}$ partition $\Mem\times\bbN$, the opposite
  guard masses are zero.  Therefore
  \[
    w_0^{\exit}
    =
    \mu(G_{\fls})
    =
    \sum_{n\in\supp(\alpha)}
      \alpha(n)\cdot\mu_n(G_{\fls})
    =
    \alpha(\ell)
  \]
  and
  \[
    w_0^{\live}
    =
    \mu(G_{\tru})
    =
    \sum_{n\in\supp(\alpha)}
      \alpha(n)\cdot\mu_n(G_{\tru})
    =
    \sum_{n\in\supp(\alpha)\cap E_{\ell+1,h}}\alpha(n)
  \]

  If $w_0^{\exit}>0$, then $\alpha(\ell)>0$.  Conditioning
  $\mu=\bigoplus_{n\sim\alpha}\mu_n$ on $G_{\fls}$ keeps only the rank-$\ell$
  component, because $\mu_\ell(G_{\fls})=1$ and the other components have
  $G_{\fls}$-mass zero.  Therefore
  $\mu_0^{\exit}
  =
  \mu\mid G_{\fls}
  =
  \mu_\ell$.
  Since $\Gamma[R\mapsto\ell],\calR_\ell\vDash\varphi$ and
  $\chi=\varphi[\ell/R]$,
  \Cref{lem:index-assertion-logical-substitution} gives
  $\Gamma,\calR_\ell\vDash\chi$.
  Together with
  $\calR_\ell\otimes\calR_F\preceq\mu_\ell
  =
  \mu_0^{\exit}$ and $V_{\calR_\ell}\subseteq V_\calR$, this proves
  $\Exit(0)$, using
  $\calQ_0\triangleq\calR_\ell$.

  If $w_0^{\live}>0$, define
  $\alpha_0^{\live}\in\calD(E_{\ell,h})$ as the initial live branch
  distribution:
  \[
    \alpha_0^{\live}(n)
    \triangleq
    \begin{cases}
      \alpha(n)/w_0^{\live},
        & n\in\supp(\alpha)\cap E_{\ell+1,h},\\
      0,
        & \text{otherwise}.
    \end{cases}
  \]
  Conditioning $\mu=\bigoplus_{n\sim\alpha}\mu_n$ on $G_{\tru}$ normalizes
  precisely the components with $n>\ell$, because the rank-$\ell$ components
  have zero $G_{\tru}$-mass and the components with $n>\ell$ have
  $G_{\tru}$-mass one.  Therefore
  $\mu_0^{\live}
  =
  \mu\mid G_{\tru}
  =
  \bigoplus_{n\sim\alpha_0^{\live}}\mu_n$.
  Set
  \[
    A_0\triangleq\supp(\alpha)\cap E_{\ell+1,h},
    \qquad
    \theta_0\triangleq\alpha_0^{\live},
    \qquad
    r_0(n)\triangleq n,
    \qquad
    \calR_{0,n}\triangleq\calR_n,
    \qquad
    \mu_{0,n}\triangleq\mu_n
  \]
  Then $r_0:\supp(\theta_0)\to\{n\in\bbN\mid\ell<n\le h\}$, the displayed
  equality for $\mu_0^{\live}$ gives
  $\mu_0^{\live}
  =
  \bigoplus_{n\sim\theta_0}\mu_{0,n}$, and the branch facts give
  \[
    \Gamma[R\mapsto r_0(n)],\calR_{0,n}\vDash\varphi,
    \qquad
    \calR_{0,n}\otimes\calR_F\preceq\mu_{0,n},
    \qquad
    V_{\calR_{0,n}}\subseteq V_\calR
    \quad(n\in\supp(\theta_0))
  \]
  This proves $\Live(0)$.
\end{proof}

\begin{lemma}[One-step live component rank-progress decomposition]
  \label{lem:index-live-component-rank-progress-decomposition}
  \label{lem:index-bounded-rank-live-component-one-step-decrease}
  Fix an assertion $\varphi$, a mode $m$, a schedule $s$, a guard $e$, a
  command $C$, a logical environment $\Gamma$, a frame $\calR_F$, an input
  distribution $\mu\in\calD(\Mem\times\bbN)$, integers $\ell\le h$, a finite
  footprint $V_b$, and $p\in(0,1]$.  For integers $a\le b$, write
  $E_{a,b}\triangleq\{a,a+1,\ldots,b\}$, and let
  $\chi\triangleq\varphi[\ell/R]$.

  Instantiate the normalized exit/live decomposition of
  \Cref{def:index-normalized-exit-live-decomposition} with $(s,e,C,\mu)$,
  omitting $\mu$ from superscripts, and let $\Live$ be the live predicate from
  \Cref{def:index-bounded-rank-exit-live-witnesses} for
  $(\Gamma,\varphi,\chi,\ell,h,\calR_F,V_b)$ and the generated weights and
  distributions.  Suppose
  \[
    \vdash_m
    \triple
      {\varphi\sep\sure{R=N>\ell}}
      {C}
      {
        \left(\bignd_{R=\ell}^{N-1}\varphi\right)
        \oplus_{\ge p}
        \left(\bignd_{R=N}^{h}\varphi\right)
      },
    \qquad
    N\notin\vars(\varphi),
    \qquad
    \Stable{m}{\calR_F}
  \]
  Fix $i\in\bbN$ with $w_i^{\live}>0$, and suppose $\Live(i)$.  Choose
  witnesses
  $A_i,\theta_i,r_i,
  (\calR_{i,a})_{a\in\supp(\theta_i)}$ and
  $(\mu_{i,a})_{a\in\supp(\theta_i)}$ such that
  \[
    \theta_i\in\calD(A_i),
    \qquad
    r_i:\supp(\theta_i)\to E_{\ell+1,h},
    \qquad
    \mu_i^{\live}
    =
    \bigoplus_{a\sim\theta_i}\mu_{i,a},
  \]
  and, for every $a\in\supp(\theta_i)$,
  \[
    \Gamma[R\mapsto r_i(a)],\calR_{i,a}\vDash\varphi,
    \qquad
    \calR_{i,a}\otimes\calR_F\preceq\mu_{i,a},
    \qquad
    V_{\calR_{i,a}}\subseteq V_b
  \]
  Then, for every $a\in\supp(\theta_i)$, putting
  $n_a\triangleq r_i(a)$, there exist a countable set $B_{i,a}$, a
  distribution $\beta_{i,a}\in\calD(B_{i,a})$, a rank map
  $r'_{i,a}:B_{i,a}\to E_{\ell,h}$, resources
  $(\calR'_{i,a,b})_{b\in\supp(\beta_{i,a})}$, and distributions
  $(\xi_{i,a,b})_{b\in\supp(\beta_{i,a})}$ such that
  $\de{C}_s^\dagger(\mu_{i,a})
  =
  \bigoplus_{b\sim\beta_{i,a}}\xi_{i,a,b}$
  and
  \begin{align*}
    &\Gamma[R\mapsto r'_{i,a}(b)],\calR'_{i,a,b}\vDash\varphi,
    \qquad
    \calR'_{i,a,b}\otimes\calR_F\preceq\xi_{i,a,b}
    \quad(b\in\supp(\beta_{i,a})),\\
    &V_{\calR'_{i,a,b}}\subseteq V_{\calR_{i,a}},\\
    &\beta_{i,a}(\{b\mid r'_{i,a}(b)<n_a\})\ge p,
    \qquad
    \beta_{i,a}(\{b\mid \ell\le r'_{i,a}(b)\le h\})=1
  \end{align*}
\end{lemma}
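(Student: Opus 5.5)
Fix $a\in\supp(\theta_i)$ and put $n_a\triangleq r_i(a)$. The plan is to apply the premise triple to the single live component $(\calR_{i,a},\mu_{i,a})$. Then I unfold its $\oplus_{\ge p}$ postcondition into a lower-rank part and a remaining-rank part. Each part is decomposed with \Cref{lem:index-framed-bounded-rank-choice-decomposition}, and the two are glued by \Cref{lem:index-tagged-rank-progress-recombination}.

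\emph{Step 1: feed the premise.} Exactly as in the proof of \Cref{lem:index-bounded-rank-live-step-properness}, set $\Gamma_{i,a}\triangleq\Gamma[N\mapsto n_a][R\mapsto n_a]$. Context irrelevance (using $N\notin\vars(\varphi)$) and the unit resource $\UnitP$ give $\Gamma_{i,a},\calR_{i,a}\vDash\varphi\sep\sure{R=N>\ell}$. Semantic validity of the premise, with frame $\calR_F$ (allowed since $\Stable{m}{\calR_F}$) and input $\mu_{i,a}$, yields $\calQ$ with the following three properties:
\[
\Gamma_{i,a},\calQ\vDash\textstyle\big(\bignd_{R=\ell}^{N-1}\varphi\big)\oplus_{\ge p}\big(\bignd_{R=N}^{h}\varphi\big),
\qquad
\calQ\otimes\calR_F\preceq\de{C}_s^\dagger(\mu_{i,a}),
\qquad
V_\calQ\subseteq V_{\calR_{i,a}}.
\]

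\emph{Step 2: unfold the outcome conjunction.} The sugar $\oplus_{\ge p}$ provides a $q$ with $p\le q\le1$ such that $\calQ$ satisfies $\varphi^<\oplus^\Leak_q\varphi^\ge$, where $\varphi^<\triangleq\bignd_{R=\ell}^{n_a-1}\varphi$ and $\varphi^\ge\triangleq\bignd_{R=n_a}^{h}\varphi$ under $\Gamma_{i,a}$. Unfolding $\bigoplus^\Leak_{X\sim\bern{q}}$ and the Boolean-tagged disjunction gives branch resources $\calQ^<,\calQ^\ge$ with $\calQ^<\oplus_q\calQ^\ge\preceq\calQ$, which satisfy $\varphi^<$ and $\varphi^\ge$ respectively.

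There is a minor wrinkle here. The tagged disjunct reads $\sure{X=1}\sep\varphi$, so one must project out the pure $\sure{X=1}$ conjunct via the weakening law $\varphi\sep\psi\vdash\varphi$. One must also handle the degenerate endpoints $q=1$ or $q=0$, where only one branch is in the support. For $q=0$ we have $p\le0$, which is impossible, so the lower-rank branch always has positive weight. If $q=1$, take $\rho^\ge$ arbitrary with weight $0$.

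Pushing $\calR_F$ through the direct sum, using distributivity of $\otimes$ over $\oplus$ and the last refinement fact in group (h) of \Cref{fig:resource-algebra-refinement-facts}, splits the output as
\[
\de{C}_s^\dagger(\mu_{i,a})=q\cdot\rho^<+(1-q)\cdot\rho^\ge,
\qquad
\calQ^<\otimes\calR_F\preceq\rho^<,
\qquad
\calQ^\ge\otimes\calR_F\preceq\rho^\ge.
\]
Both sides are proper; this also re-derives properness of $\de{C}_s^\dagger(\mu_{i,a})$.

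\emph{Step 3: decompose and recombine.} Apply \Cref{lem:index-framed-bounded-rank-choice-decomposition} to each side, with $(L,U)=(\ell,n_a-1)$ and $(n_a,h)$ respectively. This yields $\beta^<\in\calD(E_{\ell,n_a-1})$ and $\beta^\ge\in\calD(E_{n_a,h})$, together with resources and distributions $\xi_m^{<},\xi_m^{\ge}$ satisfying $\varphi$ under $\Gamma_{i,a}[R\mapsto m]$. Since $N\notin\vars(\varphi)$, context irrelevance reduces these to $\Gamma[R\mapsto m]$. Then \Cref{lem:index-tagged-rank-progress-recombination}, with $n=n_a$ and the same $p\le q$, produces $B_{i,a}$, $\beta_{i,a}$, $r'_{i,a}$, $\calR'_{i,a,b}$ and $\xi_{i,a,b}$. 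It also delivers the mass bound $\beta_{i,a}(\{b\mid r'_{i,a}(b)<n_a\})\ge p$ and full support on $E_{\ell,h}$.

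\emph{Expected obstacle: footprints.} The footprint bound $V_{\calR'_{i,a,b}}\subseteq V_{\calR_{i,a}}$ is the only conclusion not delivered by the cited lemmas. I would obtain it by tracking footprints through each unfolding. Each branch resource of a direct sum shares the sum's common footprint. Moreover, $\preceq$ can only shrink footprints: the witness $h$ of $\calR\preceq\calR'$ requires $V_\calR\subseteq V_{\calR'}$, and this is used in \Cref{lem:index-bounded-rank-initial-exit-live-witnesses}. Chaining these facts gives footprints contained in $V_\calQ\subseteq V_{\calR_{i,a}}$. To make this airtight, I would strengthen \Cref{lem:index-framed-bounded-rank-choice-decomposition} by adding the clause $V_{\calR_m}\subseteq V_\calQ$, which follows from the same observation.
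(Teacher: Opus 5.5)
Your proposal is correct and follows the paper's proof step for step. You instantiate the premise at $\Gamma[N\mapsto n_a][R\mapsto n_a]$ with frame $\calR_F$, unfold $\oplus_{\ge p}$ into a split with some $q\ge p$, push the frame through to split $\de{C}_s^\dagger(\mu_{i,a})$, decompose each side with \Cref{lem:index-framed-bounded-rank-choice-decomposition}, and glue with \Cref{lem:index-tagged-rank-progress-recombination}. Your footprint chain through $V_\calQ\subseteq V_{\calR_{i,a}}$ makes explicit what the paper only asserts.

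The one detail to tighten is $q=1$, a case the paper also skips. The recombination lemma still demands $\varphi$-witnesses on the $\ge$ side, so an arbitrary $\rho^{\ge}$ does not meet its hypotheses. Use a witnessed dummy instead, such as $\beta^{\ge}=\delta_{n_a}$ with resource $\calR_{i,a}$ and distribution $\mu_{i,a}$; it receives weight $0$.
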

\begin{proof}
  Fix $a\in\supp(\theta_i)$ and put $n_a\triangleq r_i(a)$.  Since
  $r_i:\supp(\theta_i)\to E_{\ell+1,h}$, we have $n_a>\ell$.
  Let
  $\Gamma_{i,a}\triangleq\Gamma[N\mapsto n_a][R\mapsto n_a]$.
  Since $N\notin\vars(\varphi)$, the environments
  $\Gamma[R\mapsto n_a]$ and $\Gamma_{i,a}$ agree on $\fv(\varphi)$.
  The premise precondition for component $a$ is obtained by the derivation
  \begin{align*}
    &\Gamma[R\mapsto n_a],\calR_{i,a}\vDash\varphi
    \quad\Longrightarrow\quad
    \Gamma_{i,a},\calR_{i,a}\vDash\varphi
    \tag{context irrelevance}\\
    &\Gamma_{i,a}(R)=\Gamma_{i,a}(N)=n_a
    \;\land\;
    n_a>\ell
    \quad\Longrightarrow\quad
    \Gamma_{i,a},\UnitP\vDash\sure{R=N>\ell}
    \tag{satisfaction clause for $\sure{-}$}\\
    &\Gamma_{i,a},\calR_{i,a}\vDash\varphi
    \;\land\;
    \Gamma_{i,a},\UnitP\vDash\sure{R=N>\ell}
    \;\land\;
    \calR_{i,a}\otimes\UnitP\cong\calR_{i,a}\\
    &\quad\Longrightarrow\quad
    \Gamma_{i,a},\calR_{i,a}
    \vDash
    \varphi\sep\sure{R=N>\ell}.
    \tag{definition of $\UnitP$ and satisfaction clause for $\sep$}
  \end{align*}

  Put
  \[
    \psi_{i,a}^{<}\triangleq\bignd_{R=\ell}^{n_a-1}\varphi,
    \qquad
    \psi_{i,a}^{\ge}\triangleq\bignd_{R=n_a}^{h}\varphi
  \]
  Apply semantic soundness of the premise triple with environment
  $\Gamma_{i,a}$, input resource $\calR_{i,a}$, frame $\calR_F$, and input
  distribution $\mu_{i,a}$.  The precondition is the judgment derived above,
  and the framed refinement is
  $\calR_{i,a}\otimes\calR_F\preceq\mu_{i,a}$.  The frame side condition is
  $\Stable{m}{\calR_F}$.  This gives a resource
  $\calQ_{i,a}$ such that
  \[
    \Gamma_{i,a},\calQ_{i,a}
    \vDash
    \psi_{i,a}^{<}\oplus_{\ge p}\psi_{i,a}^{\ge},
    \qquad
    \calQ_{i,a}\otimes\calR_F
    \preceq
    \de{C}_s^\dagger(\mu_{i,a})
  \]
  Unfolding $\oplus_{\ge p}$ gives $q_{i,a}\in[p,1]$ and resources
  $\calQ_{i,a}^{<},\calQ_{i,a}^{\ge}$ such that
  \[
    \Gamma_{i,a},\calQ_{i,a}^{<}\vDash\psi_{i,a}^{<},
    \qquad
    \Gamma_{i,a},\calQ_{i,a}^{\ge}\vDash\psi_{i,a}^{\ge},
    \qquad
    \calQ_{i,a}^{<}\oplus_{q_{i,a}}\calQ_{i,a}^{\ge}
    \preceq
    \calQ_{i,a}
  \]
  Moving the fixed frame through this probabilistic split gives
  \begin{align*}
    (\calQ_{i,a}^{<}\otimes\calR_F)
    \oplus_{q_{i,a}}
    (\calQ_{i,a}^{\ge}\otimes\calR_F)
    &\cong
    (\calQ_{i,a}^{<}\oplus_{q_{i,a}}\calQ_{i,a}^{\ge})
    \otimes\calR_F
    \tag{distributivity}\\
    &\preceq
    \calQ_{i,a}\otimes\calR_F
    \tag{monotonicity}\\
    &\preceq
    \de{C}_s^\dagger(\mu_{i,a})
  \end{align*}
  The displayed partial-distribution refinement has a proper target by
  definition, so $\de{C}_s^\dagger(\mu_{i,a})$ is proper.  We use the same
  symbol for its ordinary coercion.
  Hence by refinement we can choose
  normalized distributions $\nu_{i,a}^{<}$ and $\nu_{i,a}^{\ge}$ such
  that
  \[
    \de{C}_s^\dagger(\mu_{i,a})
    =
    q_{i,a}\cdot\nu_{i,a}^{<}
    +(1-q_{i,a})\cdot\nu_{i,a}^{\ge},
    \qquad
    \calQ_{i,a}^{<}\otimes\calR_F\preceq\nu_{i,a}^{<},
    \qquad
    \calQ_{i,a}^{\ge}\otimes\calR_F\preceq\nu_{i,a}^{\ge}
  \]

  Put $E_{i,a}^{<}\triangleq E_{\ell,n_a-1}$ and
  $E_{i,a}^{\ge}\triangleq E_{n_a,h}$.  Applying
  \Cref{lem:index-framed-bounded-rank-choice-decomposition} to
  $\psi_{i,a}^{\kappa}$ for each $\kappa\in\{<,\ge\}$ generates rank
  distributions $\beta_{i,a}^{\kappa}\in\calD(E_{i,a}^{\kappa})$, active
  resources
  $(\calR_{i,a,m}^{\kappa})_{m\in\supp(\beta_{i,a}^{\kappa})}$, and concrete
  branch distributions
  $(\xi_{i,a,m}^{\kappa})_{m\in\supp(\beta_{i,a}^{\kappa})}$ such that
  \begin{align*}
    \nu_{i,a}^{\kappa}
    &=
    \bigoplus_{m\sim\beta_{i,a}^{\kappa}}\xi_{i,a,m}^{\kappa}
    \quad(\kappa\in\{<,\ge\}),\\
    \Gamma[R\mapsto m],\calR_{i,a,m}^{\kappa}
    &\vDash\varphi,
    \qquad
    \calR_{i,a,m}^{\kappa}\otimes\calR_F
    \preceq\xi_{i,a,m}^{\kappa}
    \quad(\kappa\in\{<,\ge\},\;
    m\in\supp(\beta_{i,a}^{\kappa}))
  \end{align*}
  In both applications, using $N\notin\vars(\varphi)$ drops the irrelevant
  $N$-assignment from branch judgments.  The generated branch resources have
  footprints contained in $V_{\calR_{i,a}}$.
  Applying \Cref{lem:index-tagged-rank-progress-recombination} with
  $n=n_a$, $q=q_{i,a}$, logical environment $\Gamma$, and the two
  decompositions just obtained yields witnesses
  $B_{i,a}$,
  $\beta_{i,a}\in\calD(B_{i,a})$,
  $r'_{i,a}:B_{i,a}\to E_{\ell,h}$,
  $(\calR'_{i,a,b})_{b\in\supp(\beta_{i,a})}$, and
  $(\xi_{i,a,b})_{b\in\supp(\beta_{i,a})}$
  such that
  $\de{C}_s^\dagger(\mu_{i,a})
  =
  \bigoplus_{b\sim\beta_{i,a}}\xi_{i,a,b}$ and
  \begin{align*}
    &\Gamma[R\mapsto r'_{i,a}(b)],\calR'_{i,a,b}\vDash\varphi,
    \qquad
    \calR'_{i,a,b}\otimes\calR_F\preceq\xi_{i,a,b}
    \quad(b\in\supp(\beta_{i,a})),\\
    &V_{\calR'_{i,a,b}}\subseteq V_{\calR_{i,a}},\\
    &\beta_{i,a}(\{b\mid r'_{i,a}(b)<n_a\})\ge p,
    \qquad
    \beta_{i,a}(\{b\mid \ell\le r'_{i,a}(b)\le h\})=1
  \end{align*}
\end{proof}

\begin{lemma}[Successor bounded-rank exit/live witnesses]
  \label{lem:index-bounded-rank-exit-live-successor}
  Fix an assertion $\varphi$, a mode $m$, a schedule $s$, a guard $e$, a
  command $C$, a logical environment $\Gamma$, a frame $\calR_F$, an input
  distribution $\mu\in\calD(\Mem\times\bbN)$, integers $\ell\le h$, a finite
  footprint $V_b$, and $p\in(0,1]$.  For integers $a\le b$, write
  $E_{a,b}\triangleq\{a,a+1,\ldots,b\}$, and let
  $\chi\triangleq\varphi[\ell/R]$.

  Instantiate the normalized exit/live decomposition of
  \Cref{def:index-normalized-exit-live-decomposition} with $(s,e,C,\mu)$,
  omitting $\mu$ from superscripts, and let $\Exit$ and $\Live$ be the
  predicates of \Cref{def:index-bounded-rank-exit-live-witnesses} for
  $(\Gamma,\varphi,\chi,\ell,h,\calR_F,V_b)$ and the generated weights and
  distributions.  Suppose
  \[
    \vdash_m
    \triple
      {\varphi\sep\sure{R=N>\ell}}
      {C}
      {
        \left(\bignd_{R=\ell}^{N-1}\varphi\right)
        \oplus_{\ge p}
        \left(\bignd_{R=N}^{h}\varphi\right)
      },
    \qquad
    N\notin\vars(\varphi),
    \qquad
    \Stable{m}{\calR_F}
  \]
  \[
    \varphi\sep\sure{R=\ell}\Rightarrow\sure{e\mapsto\fls},
    \qquad
    \varphi\sep\sure{R>\ell}\Rightarrow\sure{e\mapsto\tru},
  \]
  and
  $\convex{\chi}$.
  Then, for every $i\in\bbN$,
  \[
    \Live(i)
    \Longrightarrow
    \Exit(i+1)\land\Live(i+1)
  \]
\end{lemma}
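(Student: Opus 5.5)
Fix $i$ and assume $\Live(i)$. If $w_i^{\live}=0$, then $w_{i+1}^{\exit}=w_{i+1}^{\live}=0$ by the recursive definition, so both $\Exit(i+1)$ and $\Live(i+1)$ hold vacuously. Otherwise I will choose the $\Live(i)$ witnesses $A_i,\theta_i,r_i,(\calR_{i,a}),(\mu_{i,a})$ and apply \Cref{lem:index-live-component-rank-progress-decomposition} to every $a\in\supp(\theta_i)$. For each component this gives a tagged decomposition
\[
\de{C}_s^\dagger(\mu_{i,a})=\bigoplus_{b\sim\beta_{i,a}}\xi_{i,a,b},
\]
with branch resources $\calR'_{i,a,b}$ at ranks $r'_{i,a}(b)\in E_{\ell,h}$ and footprints inside $V_{\calR_{i,a}}\subseteq V_b$. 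By linearity (\Cref{lem:index-dbot-kleisli-linearity}) and regrouping, $\nu\triangleq\de{C}_s^\dagger(\mu_i^{\live})$ becomes the flattened mixture over pairs $(a,b)$ with weights $\gamma(a,b)=\theta_i(a)\beta_{i,a}(b)$. Every component output is proper, so $\nu$ is proper.

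Next I will classify each pair by its guard. Exactly as in \Cref{lem:index-bounded-rank-initial-exit-live-witnesses}, combine the branch judgment with $\sure{R=\ell}$ or $\sure{R>\ell}$ using $\UnitP$. Then apply the guard entailments, \Cref{lem:index-assertion-frame-preservation} and \Cref{lem:index-guard-assertion-refinement-transfer}. This gives $\xi_{i,a,b}(G_{\fls})=1$ when $r'_{i,a}(b)=\ell$, and $\xi_{i,a,b}(G_{\tru})=1$ when $r'_{i,a}(b)>\ell$. It follows that $\nu(G_{\fls})=\gamma(\{r'=\ell\})$ and $\nu\mid G_{\fls}$ is the $\gamma$-normalized mixture of the rank-$\ell$ branches. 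Likewise $\nu\mid G_{\tru}$ is the normalized mixture of the branches with rank $>\ell$.

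$\Live(i+1)$ then follows directly. Take $A_{i+1}$ to be the set of pairs with rank $>\ell$, $\theta_{i+1}$ the normalized restriction of $\gamma$, $r_{i+1}=r'$, and the resources and distributions inherited from those pairs. The branch judgments, framed refinements and footprint bounds carry over unchanged.

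For $\Exit(i+1)$, assume $w_{i+1}^{\exit}>0$, so the rank-$\ell$ pairs have positive mass. Each such pair satisfies $\Gamma,\calR'_{i,a,b}\vDash\chi$ by \Cref{lem:index-assertion-logical-substitution}. I will set $\calQ_{i+1}$ to be the normalized direct sum of these resources, and here \textbf{the main obstacle} arises. Convexity requires a common footprint and disjoint supports, but the branch resources may have differing footprints inside $V_b$. I will first extend each resource to the common footprint $W=\bigcup V_{\calR'_{i,a,b}}\subseteq V_b$ and relocate them onto fresh disjoint supports, as in the precision and convexity proofs. I must check that extension to $W$ preserves satisfaction of $\chi$ (via monotonicity, \Cref{lem:index-satisfaction-monotonicity}) and still refines $\xi_{i,a,b}$. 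The concern is that every concrete state in the support of $\xi_{i,a,b}$ defines the added variables of $W$. This holds because $\calR_F$ is disjoint and the added variables lie in footprints of sibling resources that also refine parts of the same $\nu$. If that argument is too delicate, I will instead extend using the memory projected from the coupling of $\calR'\otimes\calR_F\preceq\xi$.

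With common footprints in place, $\convex{\chi}$ yields $\Gamma,\calQ_{i+1}\vDash\chi$. Distributivity of $\otimes$ over $\oplus$ together with the direct-sum refinement constructor gives $\calQ_{i+1}\otimes\calR_F\preceq\nu\mid G_{\fls}=\mu_{i+1}^{\exit}$, and the footprint bound $V_{\calQ_{i+1}}\subseteq V_b$ holds by construction.
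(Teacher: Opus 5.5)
Your proposal is correct and follows the paper's proof essentially step for step. The steps match: the vacuous case when $w_i^{\live}=0$, the per-component use of the one-step rank-progress decomposition, flattening over pairs $(a,b)$, guard classification of each branch, the normalized restriction to pairs of rank $>\ell$ for $\Live(i+1)$, and convexity of $\chi$ for $\Exit(i+1)$ (the paper routes this through the $\bigoplus$-idempotence law).

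The common-footprint normalization you single out is left implicit in the paper, and it is simpler than you fear. Since $\Mem[V_{\calR'_{i,a,b}}]\subseteq\Mem[W]$, you can enlarge the declared footprint to $W$ without changing the memory map. $W$ stays disjoint from $V_{\calR_F}$ because every branch footprint lies in some $V_{\calR_{i,a}}$. The same coupling then still witnesses $\calR'_{i,a,b}\otimes\calR_F\preceq\xi_{i,a,b}$, and no concrete state needs to define the added variables. You should drop that sub-claim rather than argue for it: your sibling-resource justification is not valid, since siblings refine other branch distributions and resources may leave footprint variables undefined.
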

\begin{proof}
    Fix $i\in\bbN$ and assume $\Live(i)$.  We prove
    $\Exit(i+1)$ and $\Live(i+1)$.

    If $w_i^{\live}=0$, then the recursive definitions give
    $w_{i+1}^{\exit}=w_{i+1}^{\live}=0$, so both conclusions
    at $i+1$ have false antecedent.  It remains to consider the case
    $w_i^{\live}>0$.

    \emph{Setup.}
    By $\Live(i)$, choose
    $A_i,\theta_i,r_i,(\calR_{i,a})_{a\in\supp(\theta_i)}$, and
    $(\mu_{i,a})_{a\in\supp(\theta_i)}$ such that
    \[
      \theta_i\in\calD(A_i),
      \qquad
      r_i:\supp(\theta_i)\to E_{\ell+1,h},
      \qquad
      \mu_i^{\live}
      =
      \bigoplus_{a\sim\theta_i}\mu_{i,a},
    \]
    and, for every $a\in\supp(\theta_i)$,
    \[
      \Gamma[R\mapsto r_i(a)],\calR_{i,a}\vDash\varphi,
      \qquad
      \calR_{i,a}\otimes\calR_F\preceq\mu_{i,a},
      \qquad
      V_{\calR_{i,a}}\subseteq V_b
    \]
    The recursive definitions give
    $w_j^{\live}\ge0$ for every $j$.
    By \Cref{lem:index-positive-live-weight-conditioning-event}, applied
    with
    \[
      X=\Mem\times\bbN,\qquad
      G=G_{\tru},\qquad
      w_j=w_j^{\live},\qquad
      \mu_j=\mu_j^{\live},\qquad
      \rho_j=\de{C}_s^\dagger(\mu_j^{\live}),
    \]
    the assumption $w_i^{\live}>0$ gives
    \[
      \mu_i^{\live}(G_{\tru})=1
    \]
    Since $G_{\fls}$ and $G_{\tru}$ partition $\Mem\times\bbN$, this implies
    \[
      \mu_i^{\live}(G_{\fls})
      =
      1-\mu_i^{\live}(G_{\tru})
      =
      0
    \]

    By \Cref{lem:index-live-component-rank-progress-decomposition}, for
    every $a\in\supp(\theta_i)$, putting $n_a\triangleq r_i(a)$, there exist
    a countable set $B_{i,a}$, a distribution
    $\beta_{i,a}\in\calD(B_{i,a})$, a rank map
    $r'_{i,a}:B_{i,a}\to E_{\ell,h}$, resources
    $(\calR'_{i,a,b})_{b\in\supp(\beta_{i,a})}$, and distributions
    $(\xi_{i,a,b})_{b\in\supp(\beta_{i,a})}$ such that
    \begin{equation}
      \de{C}_s^\dagger(\mu_{i,a})
      =
      \bigoplus_{b\sim\beta_{i,a}}\xi_{i,a,b},
      \label{eq:index-bounded-rank-component-output}
    \end{equation}
    and
    \begin{align*}
      &\Gamma[R\mapsto r'_{i,a}(b)],\calR'_{i,a,b}\vDash\varphi,
      \qquad
      \calR'_{i,a,b}\otimes\calR_F\preceq\xi_{i,a,b}
      \quad(b\in\supp(\beta_{i,a})),\\
      &V_{\calR'_{i,a,b}}\subseteq V_{\calR_{i,a}},\\
      &\beta_{i,a}(\{b\mid r'_{i,a}(b)<n_a\})\ge p,
      \qquad
      \beta_{i,a}(\{b\mid \ell\le r'_{i,a}(b)\le h\})=1
    \end{align*}

    To classify the one-step outputs by their new ranks, keep both the live
    component label $a$ and the output branch label $b$ in a single index.
    Define the pair-index set $B_i$ and its mixture weight $\zeta_i$ by
    \[
      B_i\triangleq
      \{(a,b)\mid a\in\supp(\theta_i),\;
      b\in\supp(\beta_{i,a})\},
      \qquad
      \zeta_i(a,b)\triangleq
      \theta_i(a)\cdot\beta_{i,a}(b)
    \]
    Then $\zeta_i\in\calD(B_i)$.
    For $(a,b)\in B_i$, write
    $r'_{i}(a,b)\triangleq r'_{i,a}(b)$.
    For every $(a,b)\in B_i$, we get
    \[
    \begin{aligned}
      &\Gamma[R\mapsto r'_i(a,b)],\calR'_{i,a,b}\vDash\varphi,
      \qquad
      \calR'_{i,a,b}\otimes\calR_F\preceq\xi_{i,a,b},\\
      &V_{\calR'_{i,a,b}}\subseteq V_b .
    \end{aligned}
    \]
    The guard side conditions, frame preservation, and
    \Cref{lem:index-guard-assertion-refinement-transfer} give
    \begin{equation}
      r'_i(a,b)=\ell
      \Longrightarrow
      \xi_{i,a,b}(G_{\fls})=1,
      \qquad
      r'_i(a,b)>\ell
      \Longrightarrow
      \xi_{i,a,b}(G_{\tru})=1 .
      \label{eq:index-bounded-rank-guard-classification}
    \end{equation}

    Expanding the live mixture and then each component output gives
    \begin{align}
      \de{C}_s^\dagger(\mu_i^{\live})
      &=
      \de{C}_s^\dagger
      \left(
        \bigoplus_{a\sim\theta_i}\mu_{i,a}
      \right)
      \tag{by $\Live(i)$ that $\mu_i^{\live}
        =\bigoplus_{a\sim\theta_i}\mu_{i,a}$}\\
      &=
      \bigoplus_{a\sim\theta_i}
        \de{C}_s^\dagger(\mu_{i,a})
      \tag{\Cref{lem:index-dbot-kleisli-linearity}}\\
      &=
      \bigoplus_{a\sim\theta_i}
        \left(\bigoplus_{b\sim\beta_{i,a}}\xi_{i,a,b}\right)
      \tag{by \eqref{eq:index-bounded-rank-component-output}}\\
      &=
      \bigoplus_{(a,b)\sim\zeta_i}\xi_{i,a,b}.
      \tag{definition of $B_i$ and $\zeta_i$}
    \end{align}
    Hence
    \begin{equation}
      \de{C}_s^\dagger(\mu_i^{\live})
      =
      \bigoplus_{(a,b)\sim\zeta_i}\xi_{i,a,b}.
      \label{eq:index-bounded-rank-flattened-output}
    \end{equation}

    Define $B_i^{\exit}$ as the flattened branches whose output rank is
    $\ell$, and define $B_i^{\live}$ as the flattened branches whose
    output rank is still above $\ell$:
    \[
      B_i^{\exit}\triangleq
      \{(a,b)\in B_i\mid r'_i(a,b)=\ell\},
      \qquad
      B_i^{\live}\triangleq
      \{(a,b)\in B_i\mid r'_i(a,b)>\ell\}
    \]
    Since $r'_i(a,b)\in E_{\ell,h}$ for every $(a,b)\in B_i$,
    $B_i^{\exit}$ and $B_i^{\live}$ partition $B_i$.
    Therefore \eqref{eq:index-bounded-rank-guard-classification} gives
    \begin{equation}
      \xi_{i,a,b}(G_{\fls})
      =
      \begin{cases}
        1, &(a,b)\in B_i^{\exit},\\
        0, &(a,b)\in B_i^{\live} .
      \end{cases}
      \qquad
      \xi_{i,a,b}(G_{\tru})
      =
      \begin{cases}
        0, &(a,b)\in B_i^{\exit},\\
        1, &(a,b)\in B_i^{\live} .
      \end{cases}
      \label{eq:index-bounded-rank-branch-false-guard-mass}
    \end{equation}
    where the second equality follows because $G_{\fls}$ and $G_{\tru}$
    partition $\Mem\times\bbN$.

    \emph{Exit case.}
    If $w_{i+1}^{\exit}>0$, then the recursive definition of
    $w_{i+1}^{\exit}$ gives
    $\left(\de{C}_s^\dagger(\mu_i^{\live})\right)(G_{\fls})
    >0$.
    The false-guard mass of the one-step output is exactly the total
    $\zeta_i$-weight of the exit branches:
    \begin{align*}
      \left(\de{C}_s^\dagger(\mu_i^{\live})\right)(G_{\fls})
      &=
      \left(
        \bigoplus_{(a,b)\sim\zeta_i}\xi_{i,a,b}
      \right)(G_{\fls})
      \tag{by \eqref{eq:index-bounded-rank-flattened-output}}\\
      &=
      \sum_{(a,b)\in B_i}
        \zeta_i(a,b)\cdot\xi_{i,a,b}(G_{\fls})\\
      &=
      \sum_{(a,b)\in B_i^{\exit}}\zeta_i(a,b).
      \tag{by \eqref{eq:index-bounded-rank-branch-false-guard-mass}}
    \end{align*}
    Define a distribution
    $\zeta_i^{\exit}\in\calD(B_i^{\exit})$ normalized by
    $\left(\de{C}_s^\dagger(\mu_i^{\live})\right)(G_{\fls})$ as
    $\zeta_i^{\exit}(a,b)
    \triangleq
    \frac{\zeta_i(a,b)}
    {\left(\de{C}_s^\dagger(\mu_i^{\live})\right)(G_{\fls})}$.
    We now prove the claimed conditional distribution pointwise.  The
    conditioning denominator equals the total $\zeta_i$-weight of
    $B_i^{\exit}$, as shown above.  If $x\in G_{\fls}$, then
    \begin{align*}
      \left(
        \de{C}_s^\dagger(\mu_i^{\live})\mid G_{\fls}
      \right)(x)
      &=
      \frac{
        \left(\de{C}_s^\dagger(\mu_i^{\live})\right)(x)}
      {\left(\de{C}_s^\dagger(\mu_i^{\live})\right)(G_{\fls})}
      \tag{definition of conditioning}\\
      &=
      \frac{
        \sum_{(a,b)\in B_i}
          \zeta_i(a,b)\cdot\xi_{i,a,b}(x)}
      {\left(\de{C}_s^\dagger(\mu_i^{\live})\right)(G_{\fls})}
      \tag{by \eqref{eq:index-bounded-rank-flattened-output}}\\
      &=
      \frac{
        \sum_{(a,b)\in B_i^{\exit}}
          \zeta_i(a,b)\cdot\xi_{i,a,b}(x)}
      {\left(\de{C}_s^\dagger(\mu_i^{\live})\right)(G_{\fls})}
      \tag{by \eqref{eq:index-bounded-rank-branch-false-guard-mass}
      and $x\in G_{\fls}$}\\
      &=
      \sum_{(a,b)\in B_i^{\exit}}
        \zeta_i^{\exit}(a,b)\cdot\xi_{i,a,b}(x).
      \tag{definition of $\zeta_i^{\exit}$}
    \end{align*}
    If $x\notin G_{\fls}$, then the left-hand side is $0$ by the definition
    of conditioning, and the right-hand side is also $0$ because each
    $(a,b)\in B_i^{\exit}$ satisfies
    $\xi_{i,a,b}(G_{\fls})=1$.  Therefore
    \begin{equation}
      \mu_{i+1}^{\exit}
      =
      \de{C}_s^\dagger(\mu_i^{\live})\mid G_{\fls}
      =
      \bigoplus_{(a,b)\sim\zeta_i^{\exit}}\xi_{i,a,b}.
      \label{eq:index-bounded-rank-exit-conditional-decomposition}
    \end{equation}
    For every $(a,b)\in\supp(\zeta_i^{\exit})$,
    \begin{align*}
      (a,b)\in\supp(\zeta_i^{\exit})
      &\Longrightarrow
      (a,b)\in B_i^{\exit}
      \tag{definition of $\zeta_i^{\exit}$}\\
      &\Longrightarrow
      r'_i(a,b)=\ell
      \tag{definition of $B_i^{\exit}$}\\
      &\Longrightarrow
      \Gamma[R\mapsto\ell],\calR'_{i,a,b}\vDash\varphi
      \tag{branch judgment for $\calR'_{i,a,b}$}\\
      &\Longrightarrow
      \Gamma,\calR'_{i,a,b}\vDash\varphi[\ell/R]=\chi .
      \tag{\Cref{lem:index-assertion-logical-substitution}}
    \end{align*}
    Define $\calQ_{i+1}$ as the normalized mixture of the exit-branch
    resources:
    $\calQ_{i+1}
    \triangleq
    \bigoplus_{(a,b)\sim\zeta_i^{\exit}}
    \calR'_{i,a,b}$.
    Let $J$ be a fresh logical index whose values are pairs $(a,b)$.  By
    \Cref{lem:index-probabilistic-choice-idempotence} and $\convex{\chi}$, we
    get that
    \begin{align*}
      \forall(a,b)\in\supp(\zeta_i^{\exit}).\;
      \Gamma,\calR'_{i,a,b}\vDash\chi
      \Longrightarrow
      \Gamma,\calQ_{i+1}
      \vDash
      \bigoplus^{\Leak}_{J\sim\zeta_i^{\exit}}\chi
      \Longrightarrow
      \Gamma,\calQ_{i+1}\vDash\chi 
      \tag{$J\notin\fv(\chi)$}
    \end{align*}
    Moreover, by \eqref{eq:index-bounded-rank-exit-conditional-decomposition},
    \[
      \calQ_{i+1}\otimes\calR_F
      =
      \left(
        \bigoplus_{(a,b)\sim\zeta_i^{\exit}}
        \calR'_{i,a,b}
      \right)\otimes\calR_F
      \cong
      \bigoplus_{(a,b)\sim\zeta_i^{\exit}}
      \left(\calR'_{i,a,b}\otimes\calR_F\right)
      \preceq
      \bigoplus_{(a,b)\sim\zeta_i^{\exit}}\xi_{i,a,b}
      =
      \mu_{i+1}^{\exit}
    \]
    The same branch footprint bound gives
    $V_{\calQ_{i+1}}\subseteq V_b$.
    This proves $\Exit(i+1)$ when $w_{i+1}^{\exit}>0$.

    \emph{Live case.}
    If $w_{i+1}^{\live}>0$, then
    $\left(\de{C}_s^\dagger(\mu_i^{\live})\right)(G_{\tru})>0$.
    The goal is to prove $\Live(i+1)$, i.e. to exhibit witnesses
    satisfying
    \[
      \begin{array}{l}
      \exists A_{i+1},\theta_{i+1},r_{i+1},
      (\calR_{i+1,a})_{a\in\supp(\theta_{i+1})},
      (\mu_{i+1,a})_{a\in\supp(\theta_{i+1})}.\\[0.2em]
      \qquad
      \theta_{i+1}\in\calD(A_{i+1})
      \;\land\;
      r_{i+1}:\supp(\theta_{i+1})\to E_{\ell+1,h}\\
      \qquad
      {}\land\;
      \mu_{i+1}^{\live}
      =
      \bigoplus_{a\sim\theta_{i+1}}\mu_{i+1,a}\\
      \qquad
      {}\land\;
      \forall a\in\supp(\theta_{i+1}).\;
      \Gamma[R\mapsto r_{i+1}(a)],\calR_{i+1,a}\vDash\varphi
      \;\land\;
      \calR_{i+1,a}\otimes\calR_F\preceq\mu_{i+1,a}
      \\
      \qquad{}\land\;
      V_{\calR_{i+1,a}}\subseteq V_b.
      \end{array}
    \]
    The live witness for time $i+1$ is indexed by those pair-indices whose
    new rank is still above $\ell$, namely $B_i^{\live}$.  We normalize
    $\zeta_i$ on this set and reuse the corresponding branch resources and
    distributions:
    \[
      A_{i+1}\triangleq B_i^{\live},
      \qquad
      \theta_{i+1}(a,b)
      \triangleq
      \frac{\zeta_i(a,b)}
      {\left(\de{C}_s^\dagger(\mu_i^{\live})\right)(G_{\tru})},
      \qquad
      r_{i+1}(a,b)\triangleq r'_i(a,b),
    \]
    and set
    $\calR_{i+1,(a,b)}\triangleq\calR'_{i,a,b}$ and
    $\mu_{i+1,(a,b)}\triangleq\xi_{i,a,b}$.

    First, $\theta_{i+1}$ is a probability distribution on $A_{i+1}$.  By
    \eqref{eq:index-bounded-rank-flattened-output} and
    \eqref{eq:index-bounded-rank-branch-false-guard-mass}, its total mass is
      \[
        \sum_{(a,b)\in A_{i+1}}\theta_{i+1}(a,b)
        =
        \frac{
          \sum_{(a,b)\in B_i^{\live}}\zeta_i(a,b)}
        {\left(\de{C}_s^\dagger(\mu_i^{\live})\right)(G_{\tru})}
        =
        \frac{
          \left(\de{C}_s^\dagger(\mu_i^{\live})\right)(G_{\tru})}
        {\left(\de{C}_s^\dagger(\mu_i^{\live})\right)(G_{\tru})}
        =
        1
      \]
    Nonnegativity follows from nonnegativity of $\zeta_i$.  Hence
    $\theta_{i+1}\in\calD(A_{i+1})$.

    Second, the rank map has the required codomain.  For every
    $(a,b)\in\supp(\theta_{i+1})$,
    \begin{align*}
      (a,b)\in\supp(\theta_{i+1})
      &\Longrightarrow
      (a,b)\in A_{i+1}=B_i^{\live}
      \tag{definition of $\theta_{i+1}$}\\
      &\Longrightarrow
      r'_i(a,b)>\ell
      \tag{definition of $B_i^{\live}$}\\
      &\Longrightarrow
      r_{i+1}(a,b)=r'_i(a,b)\in E_{\ell+1,h}.
      \tag{$r'_i(a,b)\in E_{\ell,h}$ and definition of $r_{i+1}$}
    \end{align*}
    Thus $r_{i+1}:\supp(\theta_{i+1})\to E_{\ell+1,h}$.

    Third, the live distribution is the required indexed mixture.  We prove
    the equality pointwise.  If $x\in G_{\tru}$, then
    \begin{align*}
      \left(
        \de{C}_s^\dagger(\mu_i^{\live})\mid G_{\tru}
      \right)(x)
      &=
      \frac{
        \left(\de{C}_s^\dagger(\mu_i^{\live})\right)(x)}
      {\left(\de{C}_s^\dagger(\mu_i^{\live})\right)(G_{\tru})}
      \tag{definition of conditioning}\\
      &=
      \frac{
        \sum_{(a,b)\in B_i}
          \zeta_i(a,b)\cdot\xi_{i,a,b}(x)}
      {\left(\de{C}_s^\dagger(\mu_i^{\live})\right)(G_{\tru})}
      \tag{by \eqref{eq:index-bounded-rank-flattened-output}}\\
      &=
      \frac{
        \sum_{(a,b)\in B_i^{\live}}
          \zeta_i(a,b)\cdot\xi_{i,a,b}(x)}
      {\left(\de{C}_s^\dagger(\mu_i^{\live})\right)(G_{\tru})}
      \tag{by \eqref{eq:index-bounded-rank-branch-false-guard-mass}
      and $x\in G_{\tru}$}\\
      &=
      \sum_{(a,b)\in A_{i+1}}
        \theta_{i+1}(a,b)\cdot\mu_{i+1,(a,b)}(x).
      \tag{definitions of $A_{i+1}$, $\theta_{i+1}$, and
      $\mu_{i+1,(a,b)}$}
    \end{align*}
    If $x\notin G_{\tru}$, then the left-hand side is $0$ by the definition
    of conditioning, and the right-hand side is also $0$ because each
    $(a,b)\in A_{i+1}=B_i^{\live}$ satisfies
    $\xi_{i,a,b}(G_{\tru})=1$.  Therefore
    \begin{equation}
      \mu_{i+1}^{\live}
      =
      \de{C}_s^\dagger(\mu_i^{\live})\mid G_{\tru}
      =
      \bigoplus_{(a,b)\sim\theta_{i+1}}\mu_{i+1,(a,b)}.
      \label{eq:index-bounded-rank-live-conditional-decomposition}
    \end{equation}

    Fourth, each supported live branch has the required assertion judgment and
    framed refinement. By definition of $r_{i + 1}$ and $\calR_{i+1, (a, b)}$, for every $(a,b)\in\supp(\theta_{i+1})$
    \begin{align*}
      \Gamma[R\mapsto r'_i(a,b)],\calR'_{i,a,b}\vDash\varphi
      \Longrightarrow
      \Gamma[R\mapsto r_{i+1}(a,b)],
      \calR_{i+1,(a,b)}\vDash\varphi 
    \end{align*}
    and
    \[
      \calR_{i+1,(a,b)}\otimes\calR_F
      =
      \calR'_{i,a,b}\otimes\calR_F
      \preceq
      \xi_{i,a,b}
      =
      \mu_{i+1,(a,b)}
    \]
    Finally, $V_{\calR_{i+1,(a,b)}}=V_{\calR'_{i,a,b}}\subseteq V_b$.
    The five displayed checks prove $\Live(i+1)$ when $w_{i+1}^{\live}>0$.
\end{proof}

\begin{lemma}[Bounded-rank block decay]
  \label{lem:index-bounded-rank-block-decay}
  Fix an assertion $\varphi$, a mode $m$, a schedule $s$, a guard $e$, a
  command $C$, a logical environment $\Gamma$, a frame $\calR_F$, an input
  distribution $\mu\in\calD(\Mem\times\bbN)$, integers $\ell\le h$, a finite
  footprint $V_b$, and $p\in(0,1]$.  Let $H\triangleq h-\ell$.  For integers $a\le b$, write
  $E_{a,b}\triangleq\{a,a+1,\ldots,b\}$, and let
  $\chi\triangleq\varphi[\ell/R]$.

  Instantiate the normalized exit/live decomposition of
  \Cref{def:index-normalized-exit-live-decomposition} with $(s,e,C,\mu)$,
  omitting $\mu$ from superscripts, and let $\Live$ be the live predicate from
  \Cref{def:index-bounded-rank-exit-live-witnesses} for
  $(\Gamma,\varphi,\chi,\ell,h,\calR_F,V_b)$ and the generated weights and
  distributions.  Suppose
  \[
    \vdash_m
    \triple
      {\varphi\sep\sure{R=N>\ell}}
      {C}
      {
        \left(\bignd_{R=\ell}^{N-1}\varphi\right)
        \oplus_{\ge p}
        \left(\bignd_{R=N}^{h}\varphi\right)
      },
    \qquad
    N\notin\vars(\varphi),
    \qquad
    \Stable{m}{\calR_F},
  \]
  and suppose the live invariant $\Live(j)$ holds at every index $j \in \bbN$.
  Then, for every $i\in\bbN$,
  \[
    H>0\land w_i^{\live}>0
    \Longrightarrow
    w_{i+H}^{\live}
    \le
    (1-p^H)\cdot w_i^{\live}
  \]
  Intuitively, a live execution carries a rank in the interval
  $\{\ell+1,\ldots,h\}$, whose height is $H = h - \ell$.  Each iteration has probability
  at least $p$ of strictly decreasing this rank; hence, over any block of
  $H$ live iterations, the run has probability at least $p^H$ of reaching the
  exit rank $\ell$.
\end{lemma}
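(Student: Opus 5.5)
The plan is to reduce the claim to \Cref{lem:index-branchwise-progress-block-decay}, using the one-step decomposition \Cref{lem:index-live-component-rank-progress-decomposition} to supply the per-step progress bound. Fix $i$ with $H>0$ and $w_i^{\live}>0$. By \Cref{lem:index-bounded-rank-live-step-properness}, live-step properness holds, so by \Cref{lem:index-exit-live-accounting-facts} the live weights behave like the probabilities of a decreasing chain of events $L_j$ on an auxiliary probability space. This space is the tree of rank-tagged branches generated by iterating the one-step decomposition starting from time $i$. We have $\PP[L_j]=w_j$, and $L_{j+1}\subseteq L_j$ by monotonicity, clause (b).

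The construction goes as follows. Use $\Live(i)$ to obtain the initial slices $L_i^a$, $a\in\supp(\theta_i)$, with weights $\theta_i(a)\cdot w_i^{\live}$. Then unroll $H$ steps, applying \Cref{lem:index-live-component-rank-progress-decomposition} at each live node. That lemma gives a distribution $\beta$ over successor branches with ranks in $E_{\ell,h}$, and it places mass at least $p$ on strictly smaller ranks. Branches of rank $\ell$ are exit branches, by the guard classification in the proof of \Cref{lem:index-bounded-rank-exit-live-successor}. Define $D_t^a$ to be the rank minus $\ell$ after $t$ steps along a path, with $D=0$ once exited (absorbing). Set $d_a\triangleq r_i(a)-\ell\le H$.

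Two hypotheses of \Cref{lem:index-branchwise-progress-block-decay} must then be checked. The first is the conditional bound $\ge p$ for each prefix step. This is exactly the estimate $\beta_{i,a}(\{b\mid r'(b)<n_a\})\ge p$, read at the current node, with the disjunct $D_t=0$ covering already-exited paths. The second is the inclusion $L_i^a\cap S_a\subseteq L_{i+H}^c$. Along $S_a$ the rank strictly decreases from $r_i(a)\le h$ at each step until it hits $\ell$, so exit occurs within $d_a\le H$ steps. The event $L_{i+H}^c$ follows, since exited mass never re-enters; this is also what \Cref{lem:index-exit-live-restart-approximant-facts} records. The chain rule then gives $p^{d_a}\ge p^H$ per slice, and summing over slices yields $w_{i+H}^{\live}\le(1-p^H)\,w_i^{\live}$.

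I expect the main obstacle to be the bookkeeping that turns the recursively normalized distributions $\mu_j^{\live}$ into a single probability space on which the events $L_j$, $L_i^a$ and $D_t^a$ are honestly defined, with the correct marginals. Here the plan is to build the path space inductively, as in the proof of \Cref{lem:index-bounded-rank-exit-live-successor}. Indices are sequences of pair-labels with weights $\zeta$, and at each step the live part is renormalized to match $\mu_{j+1}^{\live}$ via the restart identity \Cref{lem:index-exit-live-restart-approximant-facts}(a). Once that identification is in place, each step of the argument is a direct instance of an earlier lemma.
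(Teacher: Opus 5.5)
Your skeleton matches the paper's proof. You reduce to \Cref{lem:index-branchwise-progress-block-decay} using the slices $L_i^a$ from $\Live(i)$, $d_a=r_i(a)-\ell\le H$, an absorbing distance process $D_t^a$, the inclusion $L_i^a\cap S_a\subseteq L_{i+H}^{c}$, and the per-step bound $p$ from \Cref{lem:index-live-component-rank-progress-decomposition}, with the disjunct $D_t^a=0$ absorbing exited mass (the paper's $q_t^a+(1-q_t^a)p\ge p$).

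Where you differ is in how the path space is built, and your version is the more careful one. The paper reuses the independently chosen witnesses of $\Live(i+t)$ and asserts that the conditioned live input is a mixture $\bigoplus_{c\sim\lambda_t^a}\mu_{i+t,c}$. That never explains why the rank read off $\Live(i+t+1)$ is the rank $r'$ the one-step lemma controls. You instead grow one tree from the time-$i$ witness and regenerate later witnesses with the successor construction, so you never use $\Live(j)$ for $j>i$ and ranks are linked parent to child. This makes $D_t^a$ and the conditioning well defined.

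Two small corrections. First, $L_{j+1}\subseteq L_j$ holds because exit nodes are leaves of your tree; clause (b) only gives $w_{j+1}^{\live}\le w_j^{\live}$. Second, $\PP[L_{i+t}]=w_{i+t}^{\live}/w_i^{\live}$ comes from the live-case identities $\mu_{j+1}^{\live}=\bigoplus_{(a,b)\sim\theta_{j+1}}\xi_{j,a,b}$ and $w_{j+1}^{\live}=w_j^{\live}\sum_{(a,b)\in B_j^{\live}}\zeta_j(a,b)$ in the proof of \Cref{lem:index-bounded-rank-exit-live-successor}, not from the restart identity.

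The genuine gap is the step ``branches of rank $\ell$ are exit branches''. You also need its dual, that rank above $\ell$ means live, for the identities above. That classification uses $\varphi\sep\sure{R=\ell}\Rightarrow\sure{e\mapsto\fls}$ and $\varphi\sep\sure{R>\ell}\Rightarrow\sure{e\mapsto\tru}$. These are premises of \Cref{lem:index-bounded-rank-exit-live-successor}, but not of this lemma.

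No argument can do without them. For $\ell<h$, take $\varphi\triangleq\sure{\ell\le R\le h}$, $e\triangleq\tru$, $C\triangleq\skp$, $\calR_F$ the unit resource, and $\mu$ a point mass at count $0$. Then:
\begin{itemize}
\item the premise triple follows from \ruleref{Skip} and \ruleref{Consequence} (weight $1$, rank $\ell$);
\item each $\Live(j)$ holds with the single component $\mu$ at rank $\ell+1$ and the unit resource;
\item yet $w_j^{\live}=1$ for every $j$, contradicting the conclusion.
\end{itemize}
So the lemma as stated is false, and the paper's own argument tacitly needs the same guard facts. Add the two guard side conditions as hypotheses; they are available at the only call site, in the proof of \Cref{lem:index-bounded-rank-soundness}. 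With them, your argument goes through.
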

\begin{proof}
  Fix $i$ with $w_i^{\live}>0$ and assume $H>0$.  The goal is
  \begin{equation}
    w_{i+H}^{\live}
    \le
    (1-p^H)\cdot w_i^{\live} .
    \tag{BD}
  \end{equation}
  If $w_{i+t}^{\live}=0$ for some $t\in E_{0,H}$, then
  \Cref{lem:index-normalized-live-weight-monotonicity} gives
  $w_{i+H}^{\live}=0$, so (BD) holds.  Therefore assume
  $w_{i+t}^{\live}>0$ for $(0 \le t \le H)$.

  Let $L_j$ be the event that the run is still live after $j$ executions of
  $C$.  Then $L_{j+1}\subseteq L_j$, because a run that has exited cannot
  become live again, and $\PP[L_j]=w_j^{\live}$.  For each
  $t\in E_{0,H}$, fix the witnesses supplied by $\Live(i+t)$:
  $A_{i+t},\theta_{i+t},r_{i+t}$,
  $(\calR_{i+t,c})_{c\in\supp(\theta_{i+t})}$, and
  $(\mu_{i+t,c})_{c\in\supp(\theta_{i+t})}$ such that
  \begin{align*}
    \theta_{i+t}\in\calD(A_{i+t}),
    \qquad
    r_{i+t}:\supp(\theta_{i+t})\to E_{\ell+1,h},
    \qquad
    \mu_{i+t}^{\live}
    =
    \bigoplus_{c\sim\theta_{i+t}}\mu_{i+t,c},\\
    \Gamma[R\mapsto r_{i+t}(c)],\calR_{i+t,c}\vDash\varphi,
    \qquad
    \calR_{i+t,c}\otimes\calR_F\preceq\mu_{i+t,c},
    \qquad
    V_{\calR_{i+t,c}}\subseteq V_b
    \quad(c\in\supp(\theta_{i+t}))
  \end{align*}
  Let $(L_i^a)_{a\in\supp(\theta_i)}$ be the partition of $L_i$ induced by
  the initial live decomposition
  $\mu_i^{\live}=\bigoplus_{a\sim\theta_i}\mu_{i,a}$; hence
  $\PP[L_i^a]=\theta_i(a)\cdot w_i^{\live}$ for
  $a \in \supp(\theta_i)$

  For each $a\in\supp(\theta_i)$, put
  $d_a\triangleq r_i(a)-\ell$.  Since
  $r_i(a)\in E_{\ell+1,h}$, we have $d_a\in E_{1,H}$.  Let $D_t^a$ be the
  rank distance from $\ell$ after $t$ further executions of $C$ from time
  $i$, with $D_t^a=0$ once the run has exited.  Thus $D_0^a=d_a$ on
  $L_i^a$.  Define
  \[
    S_a\triangleq
    \bigcap_{t=0}^{d_a-1}
    \left(\{D_t^a=0\}\cup\{D_{t+1}^a<D_t^a\}\right)
  \]
  On $L_i^a$, the event $S_a$ says that the distance has either already
  reached $0$, or strictly decreases, at each of the first $d_a$ steps.  Hence
  $D_{d_a}^a=0$ on $L_i^a\cap S_a$, and because $d_a\le H$,
  \[
    L_i^a\cap S_a\subseteq L_{i+H}^{c}
  \]

  We verify the one-step lower bound required by
  \Cref{lem:index-branchwise-progress-block-decay}.  Fix
  $a\in\supp(\theta_i)$ and $0\le t<d_a$, and condition on
  \[
    L_i^a\cap
    \bigcap_{u=0}^{t-1}
    \left(\{D_u^a=0\}\cup\{D_{u+1}^a<D_u^a\}\right)
  \]
  For this displayed conditioning event, set
  \[
    q_t^a
    \triangleq
    \Pr\!\left[
      D_t^a=0
      \,\middle|\,
      L_i^a\cap
      \bigcap_{u=0}^{t-1}
      \left(\{D_u^a=0\}\cup\{D_{u+1}^a<D_u^a\}\right)
    \right]
  \]
  If $q_t^a<1$, condition further on $D_t^a>0$.  The resulting live input at
  time $i+t$ decomposes along the fixed live witness as
  $\bigoplus_{c\sim\lambda_t^a}\mu_{i+t,c}$
  for some $\lambda_t^a\in\calD(A_{i+t})$.  If $q_t^a=1$, choose any
  $\lambda_t^a\in\calD(A_{i+t})$; the coefficient $1-q_t^a$ below is then
  zero.  Applying
  \Cref{lem:index-live-component-rank-progress-decomposition} at time $i+t$
  to every $c\in\supp(\lambda_t^a)$ gives
  \[
    \beta_{i+t,c}
    \bigl(\{b\mid r'_{i+t,c}(b)<r_{i+t}(c)\}\bigr)
    \ge p
  \]
  Therefore
  \begin{align*}
    &\Pr\!\left[
      D_t^a=0 \;\lor\; D_{t+1}^a<D_t^a
      \,\middle|\,
      L_i^a\cap
      \bigcap_{u=0}^{t-1}
      \left(\{D_u^a=0\}\cup\{D_{u+1}^a<D_u^a\}\right)
    \right]\\
    &\qquad =
    q_t^a
    +
    (1-q_t^a)\cdot
    \sum_{c\in\supp(\lambda_t^a)}
    \lambda_t^a(c)\cdot
    \beta_{i+t,c}
    \bigl(\{b\mid r'_{i+t,c}(b)<r_{i+t}(c)\}\bigr)
    \tag{split on $D_t^a=0$, then on the live decomposition}\\
    &\qquad \ge
    q_t^a
    +
    (1-q_t^a)\cdot
    \sum_{c\in\supp(\lambda_t^a)}
    \lambda_t^a(c)\cdot p
    \tag{\Cref{lem:index-live-component-rank-progress-decomposition}}\\
    &\qquad =
    q_t^a+(1-q_t^a)\cdot p
    \tag{$\lambda_t^a\in\calD(A_{i+t})$}\\
    &\qquad \ge
    p .
    \tag{$0\le q_t^a\le1$ and $p\le1$}
  \end{align*}
  Thus the branchwise hypotheses of
  \Cref{lem:index-branchwise-progress-block-decay} hold for every
  $a\in\supp(\theta_i)$.  Therefore
  \Cref{lem:index-branchwise-progress-block-decay}, applied with
  $w_j=w_j^{\live}$, $A=A_i$, $\theta=\theta_i$, the branch slices $L_i^a$,
  the branch distances $D_t^a$, and the events $S_a$, gives (BD).
\end{proof}

\subsection{Bounded-Rank Soundness}
\begin{lemma}[Bounded-Rank Soundness]
\label{lem:index-bounded-rank-soundness}
\begin{mathpar}
  \inferrule*[right=\rulename{Bounded-Rank}]
  {
    \inferrule*{}{
      \substack{
      \varphi \Rightarrow \sure{\ell \le R \le h} \\
      \varphi\sep[R = \ell] \Rightarrow \sure{e \mapsto \fls} \\
      \varphi\sep[R > \ell] \Rightarrow \sure{e \mapsto \tru}
      }
    }\;\;
    \vdash_m\triple{\varphi * \sure{R = N > \ell}}{C}{\textstyle\big(\bignd_{R=\ell}^{N-1} \varphi\big) \oplus_{\ge p}^\Leak \big(\bignd_{R=N}^h \varphi\big)}\;\;
    \inferrule*{}{
      \substack{
      0 < p \le 1 \\
      N \notin \vars(\varphi) \\
      \convex{\varphi[\ell/R]}
    }}
  }
  {
    \vdash_m\triple{\textstyle\bignd_{R=\ell}^h \varphi}{\whl eC}{\varphi[\ell/R]}
  }
 \end{mathpar}
\end{lemma}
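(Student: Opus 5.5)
We assemble the soundness proof from the infrastructure lemmas of this section. First we fix the semantic input of \Cref{def:triple}: a schedule $s$, an environment $\Gamma$, resources $\calR,\calR_F$ with $\Stable{m}{\calR_F}$, a distribution $\mu$, and the assumptions $\calR\otimes\calR_F\preceq\mu$ and $\Gamma,\calR\vDash\bignd_{R=\ell}^h\varphi$. Refinement forces $\mu$ to be proper, so we may treat it as an ordinary distribution. We then instantiate the normalized exit/live decomposition of \Cref{def:index-normalized-exit-live-decomposition} with $(s,e,C,\mu)$. We also take the predicates $\Exit,\Live$ of \Cref{def:index-bounded-rank-exit-live-witnesses} with $\chi\triangleq\varphi[\ell/R]$ and footprint bound $V_b\triangleq V_\calR$.

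\emph{Invariant.} \Cref{lem:index-bounded-rank-initial-exit-live-witnesses} gives $\Exit(0)$ and $\Live(0)$. Induction on $i$ with \Cref{lem:index-bounded-rank-exit-live-successor} then gives $\Exit(i)$ and $\Live(i)$ for every $i$. This step uses the guard premises, the premise triple, $N\notin\vars(\varphi)$, and $\convex{\chi}$. From $\Live(i)$ for all $i$, \Cref{lem:index-bounded-rank-live-step-properness} yields $\mathsf{LiveStepProper}_{s,e}(C,\mu)$. The premise $\varphi\Rightarrow\sure{\ell\le R\le h}$ only serves to keep the ranks inside the range $E_{\ell,h}$ that these lemmas assume.

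\emph{Termination.} Set $H\triangleq h-\ell$. If $H>0$, \Cref{lem:index-bounded-rank-block-decay} gives $w^{\live}_{i+H}\le(1-p^H)w^{\live}_i$ whenever $w^{\live}_i>0$, and the inequality is trivial when $w^{\live}_i=0$. If $H=0$, every branch of the initial decomposition has rank $\ell$, so $w^{\live}_0=0$. This follows from the computation in the proof of \Cref{lem:index-bounded-rank-initial-exit-live-witnesses}: there $w^{\live}_0$ is the $\alpha$-mass on $E_{\ell+1,h}=\emptyset$. \Cref{lem:index-exit-live-accounting-facts}(e) then gives $\sum_j w^{\exit}_j=1$. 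Next we combine \Cref{lem:index-schedule-indexed-while-distribution-supremum} with \Cref{lem:index-exit-live-restart-approximant-facts}(b). Taking the supremum over $n$ on ordinary outputs, using $w^{\live}_n\to0$, and noting that the partial sums are monotone, we get
\[
  \de{\whl eC}_s^\dagger(\mu)=\bigoplus_{j\sim w^{\exit}}\mu^{\exit}_j .
\]
Here $w^{\exit}$ is now a genuine distribution on $\bbN$, and this mixture is proper.

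\emph{Output witness.} For each $j\in\supp(w^{\exit})$, $\Exit(j)$ supplies $\calQ_j$ with $\Gamma,\calQ_j\vDash\chi$, $\calQ_j\otimes\calR_F\preceq\mu^{\exit}_j$, and $V_{\calQ_j}\subseteq V_\calR$. The direct sum requires a common footprint and disjoint supports. We therefore extend each $\calQ_j$ to the common footprint $W\triangleq\bigcup_j V_{\calQ_j}\subseteq V_\calR$, which is finite, and reindex the supports apart, as is done in the precision and convexity proofs. We then set $\calQ\triangleq\bigoplus_{j\sim w^{\exit}}\calQ_j$. Convexity of $\chi$ gives $\Gamma,\calQ\vDash\chi$. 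Distributivity of $\otimes$ over $\oplus$ together with the refinement constructor for direct sums (\Cref{fig:resource-algebra-refinement-facts}) gives $\calQ\otimes\calR_F\preceq\bigoplus_j\mu^{\exit}_j=\de{\whl eC}_s^\dagger(\mu)$, and $V_\calQ=W\subseteq V_\calR$.

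We expect the main obstacle to be the limit identification. The work is in passing from the Kleene approximant formula to the countable exit mixture rigorously: the $\bot$ mass $w^{\live}_n$ must vanish, and the directed supremum in $\calD_\bot$ must agree with the countable convex sum on every ordinary point. A secondary, mostly bookkeeping, point is that footprint extension must preserve satisfaction of $\chi$ and the frame refinements.
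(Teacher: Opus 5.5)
Your proposal is correct and follows the paper's proof step for step. It establishes the invariant with \Cref{lem:index-bounded-rank-initial-exit-live-witnesses} and \Cref{lem:index-bounded-rank-exit-live-successor}, derives live-step properness, feeds block decay into \Cref{lem:index-exit-live-accounting-facts}(e), identifies the loop output with the countable exit mixture, and collapses a direct sum of the $\Exit(j)$ witnesses by convexity.

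The differences are cosmetic. The paper gets $w_0^{\live}=0$ for $H=0$ directly from $\Live(0)$, since there is no map from a nonempty support into $\emptyset$. The paper also forms $\bigoplus_{i\sim\beta}\calQ_i$ without spelling out your footprint and support normalization, which is harmless extra care on your part. Finally, the paper's proof never invokes the range premise $\varphi\Rightarrow\sure{\ell\le R\le h}$: the ranks stay in $E_{\ell,h}$ because of the $\bignd$ bounds in the precondition and in the premise's postcondition.
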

\begin{proof}
  Put $\chi\triangleq\varphi[\ell/R]$\noam{why not call it $\psi$?} and
  $H\triangleq h-\ell$.  We prove semantic validity of the conclusion in mode
  $m$.  Fix a schedule $s$, an environment $\Gamma$, resources
  $\calR,\calR_F$, and an input distribution $\mu\in\calD(\Mem\times\bbN)$ such
  that
  \[
    \Gamma,\calR\vDash\bignd_{R=\ell}^{h}\varphi,
    \qquad
    \Stable{m}{\calR_F},
    \qquad
    \calR\otimes\calR_F\preceq\mu
  \]
  It suffices to construct a resource $\calQ$ such that
  \[
    \Gamma,\calQ\vDash\chi,
    \qquad
    \calQ\otimes\calR_F\preceq\de{\whl eC}_s^\dagger(\mu)
  \]

  We use the normalized exit/live decomposition of
  \Cref{def:index-normalized-exit-live-decomposition}, instantiated with the
  fixed schedule $s$, guard $e$, command $C$, and starting distribution $\mu$;
  we omit $\mu$ from the superscript.  Here $G_c$ is the event that the guard
  evaluates to $c$, $w_i^{\exit}$ is the probability of exiting at the
  $i$th guard test, and $w_i^{\live}$ is the probability of remaining
  live after $i$ executions of $C$.  The distributions
  $\mu_i^{\exit}$ and $\mu_i^{\live}$ are the corresponding
  normalized exit and live distributions.  After this instantiation, the
  defining equations are:
  \[
    G_c
    \triangleq
      \{(\sigma,k)\in\Mem\times\bbN
      \mid \de{e}_{\Exp}(\sigma)=c\}
  \]
  \[
    w_0^{\exit}
    \triangleq \mu(G_{\fls}),
    \qquad
    w_0^{\live}
    \triangleq \mu(G_{\tru}),
    \qquad
    \mu_0^{\exit}
    \triangleq\mu\mid G_{\fls},
    \qquad
    \mu_0^{\live}
    \triangleq\mu\mid G_{\tru}
  \]
  \[
    w_{i+1}^{\exit}
    \triangleq
      w_i^{\live}\cdot
      \left(\de{C}_s^\dagger(\mu_i^{\live})\right)(G_{\fls}),
    \qquad
    w_{i+1}^{\live}
    \triangleq
      w_i^{\live}\cdot
      \left(\de{C}_s^\dagger(\mu_i^{\live})\right)(G_{\tru})
  \]
  \[
    \mu_{i+1}^{\exit}
    \triangleq
      \de{C}_s^\dagger(\mu_i^{\live})\mid G_{\fls},
    \qquad
    \mu_{i+1}^{\live}
    \triangleq
      \de{C}_s^\dagger(\mu_i^{\live})\mid G_{\tru}
  \]

  Let $(F_m)_{m\in\bbN}$ and $F_{m,s}$ be the Kleene approximants and
  schedule-indexed projections from
  \Cref{lem:index-schedule-indexed-while-semantic-equations}, instantiated
  with $s,e,C$.

  Let $\Exit$ and $\Live$ be the predicates from
  \Cref{def:index-bounded-rank-exit-live-witnesses}, instantiated with
  $(\Gamma,\varphi,\chi,\ell,h,\calR_F,V_\calR)$ and the generated exit/live weights
  and distributions.  By induction on $i$, using
  \Cref{lem:index-bounded-rank-initial-exit-live-witnesses} for $i=0$ and
  \Cref{lem:index-bounded-rank-exit-live-successor} for the successor step, for
  every $i\in\bbN$, $\Exit(i)$ and $\Live(i)$ holds.
  Hence, by \Cref{lem:index-bounded-rank-live-step-properness}, we get
  $\mathsf{LiveStepProper}_{s, e}(C, \mu)$.
  Therefore \Cref{lem:index-normalized-exit-stream-approximants} gives, for
  every $n\in\bbN$,
  \begin{equation}
    \bigl((F_{n+1})_s^\bot\bigr)^\dagger(\mu)
    =
    \left(\sum_{i=0}^{n}
      w_i^{\exit}\cdot\mu_i^{\exit}\right)
    +
    \left(w_n^{\live}\cdot\delta_\bot\right).
    \label{eq:index-bounded-rank-exit-stream}
  \end{equation}
  By the bounded-rank block-decay argument, applied with
  $\Stable{m}{\calR_F}$ and the just-proved live invariant, whenever $H>0$
  and $w_i^{\live}>0$,
  $w_{i+H}^{\live}\le(1-p^H)\cdot w_i^{\live}$.  If
  $w_i^{\live}=0$, then the recursive definition of $w_j^{\live}$ gives
  $w_{i+H}^{\live}=0$.  Hence the positive-height hypothesis needed below is
  \begin{equation}
    H>0
    \Longrightarrow
    \forall j\in\bbN.\;
    w_{j+H}^{\live}
    \le
    (1-p^H)\cdot w_j^{\live}.
    \label{eq:index-bounded-rank-block-decay}
  \end{equation}

  We now verify the hypotheses of
  \Cref{lem:index-block-decay-exhausts-exit-stream}.
  The zero-height hypothesis is
  \begin{equation}
    H=0\Longrightarrow w_0^{\live}=0.
    \label{eq:index-bounded-rank-zero-height-no-live}
  \end{equation}
  Indeed, if $H=0$, then
  $\{n\in\bbN\mid\ell<n\le h\}=\emptyset$.  Since $\Live(0)$ holds and every
  normalized distribution has nonempty support, the antecedent
  $w_0^{\live}>0$ in $\Live(0)$ is false; otherwise the consequent of
  $\Live(0)$ would require a map from a nonempty support into $\emptyset$.
  By \Cref{lem:index-block-decay-exhausts-exit-stream}, applied to the
  normalized exit/live decomposition starting from $\mu$ and the hypotheses
  \eqref{eq:index-bounded-rank-zero-height-no-live} and
  \eqref{eq:index-bounded-rank-block-decay}, together with the just-established
  $\mathsf{LiveStepProper}_{s,e}(C,\mu)$, we get
  $\sum_{j\in\bbN}w_j^{\exit}=1$.

  We now prove that the denotation of the loop is exactly the countable
  mixture of the normalized exit distributions.  First we prove equality on
  ordinary outputs.
  For every ordinary output $x\in\Mem\times\bbN$,
  \begin{align*}
    \left(\de{\whl eC}_s^\dagger(\mu)\right)(x)
    &=
    \sup_{n\in\bbN}
    \left(\bigl((F_{n+1})_s^\bot\bigr)^\dagger(\mu)\right)(x)
      \tag{\Cref{lem:index-schedule-indexed-while-distribution-supremum}}\\
    &=
    \sup_{n\in\bbN}
    \left[
      \left(\sum_{i=0}^{n}
      w_i^{\exit}\cdot\mu_i^{\exit}\right)(x)
      +
      \left(w_n^{\live}\cdot\delta_\bot\right)(x)
    \right]
      \tag{by \eqref{eq:index-bounded-rank-exit-stream}}\\
    &=
    \sup_{n\in\bbN}
    \sum_{i=0}^{n}
      w_i^{\exit}\cdot\mu_i^{\exit}(x)
      \tag{$x\ne\bot$}\\
    &=
    \left(
      \sum_{i\in\bbN}
      w_i^{\exit}\cdot\mu_i^{\exit}
    \right)(x)
      \tag{definition of countable sum of distributions}
  \end{align*}
  Since the exit weights have total mass $1$,
  $\sum_{i\in\bbN}w_i^{\exit}\cdot\mu_i^{\exit}$ is an ordinary probability
  distribution.  Moreover,
  \[
    \sum_{x\in\Mem\times\bbN}
    \left(\de{\whl eC}_s^\dagger(\mu)\right)(x)
    =
    \sum_{x\in\Mem\times\bbN}
    \left(
      \sum_{i\in\bbN}
      w_i^{\exit}\cdot\mu_i^{\exit}
    \right)(x)
    =
    1
  \]
  Since $\de{\whl eC}_s^\dagger(\mu)\in\calD_\bot(\Mem\times\bbN)$, its
  $\bot$-mass is the complement of its total ordinary mass.  Hence
  $\left(\de{\whl eC}_s^\dagger(\mu)\right)(\bot)=0$.  Therefore
  \begin{equation}
    \de{\whl eC}_s^\dagger(\mu)
    =
    \sum_{i\in\bbN}w_i^{\exit}\cdot\mu_i^{\exit} .
    \label{eq:index-bounded-rank-final-exit-mixture}
  \end{equation}
  Let $I\triangleq\{i\in\bbN\mid w_i^{\exit}>0\}$ and let
  $\beta\in\calD(I)$ be $\beta(i)=w_i^{\exit}$.  For each $i\in I$, choose
  $\calQ_i$ from $\Exit(i)$, and define the final witness as
  $\calQ\triangleq\bigoplus_{i\sim\beta}\calQ_i$.
  For every $i\in I$, the same instance of $\Exit(i)$ gives
  \begin{align*}
    \Gamma,\calQ_i
    &\vDash
    \chi,
    &
    \calQ_i\otimes\calR_F
    &\preceq
    \mu_i^{\exit},\\
    V_{\calQ_i}
    &\subseteq
    V_\calR
  \end{align*}
  Let $J\notin\fv(\chi)$ be fresh.  Then
  \begin{align*}
    \calQ
    &=
    \bigoplus_{i\sim\beta}\calQ_i,
    &
    \Gamma[J\mapsto i],\calQ_i
    &\vDash
    \chi
    \qquad(i\in I).
      \tag{definition of $\calQ$, $J\notin\fv(\chi)$}
  \end{align*}

  By the satisfaction clause for $\bigoplus^\Leak$ and
  \Cref{lem:index-probabilistic-choice-idempotence}, since $\convex{\chi}$,
  we get $\Gamma,\calQ\vDash\chi$.

  By \eqref{eq:index-bounded-rank-final-exit-mixture}, the framed refinement is
  \[
    \calQ\otimes\calR_F
    \cong
    \bigoplus_{i\sim\beta}(\calQ_i\otimes\calR_F)
    \preceq
    \de{\whl eC}_s^\dagger(\mu)
  \]
  Therefore $\calQ$ is the required postcondition witness.
\end{proof}

\newpage
\small
\clearpage
\begin{landscape}

\renewcommand{\bern}[1]{\textbf{Bern}\big( #1 \big)}

\section{Case Studies}\label{app:examples}

We now present the full derivations for the examples in \Cref{sec:examples}.

\subsection{The Monty Hall Problem}
\label{app:monte}

Recall the Monty Hall Problem from \Cref{sec:monte}.
\[
  c \samp \unif{1, 3} \fatsemi
  p \gets [1, 2, 3] \fatsemi
  o \gets [1, 2, 3] \setminus [c, p] \fatsemi
  p \coloneqq 6 - p - o
\]
We now provide the full derivation for the above program. We start from the top level, using two applications of \ruleref{Seq} along with \ruleref{Samp} and \ruleref{NAssign} to handle the first two commands, as was shown in \Cref{sec:monte}. Note that the application of \ruleref{Frame} before \ruleref{NAssign} allows us to conclude that $c$ and $p$ are independent, since the player chooses $p$ obliviously.
\[
\inferrule*[right=Seq]{
  \inferrule*[right=Samp,vdots=7em,rightskip=25em]{\;}{
    \vdash_\weak\triple{\sure{\own(c, p, o)}}{c \samp \unif{1, 3}}{c \sim \unif{1,3} \sep \sure{\own(p,o)}}
  } 
  \inferrule*[Right=Seq]{
    \inferrule*[right=Frame]{
    \inferrule*[right=NAssign]{\;}{
        \vdash_\weak\triple{\sure{\own(p)}}{p \gets [1, 2, 3]}{
        \textstyle \bignd_{X \in \{1,2,3\}}{\sure{p \mapsto X}}}
    } 
    }{
      \vdash_\weak\triple{c \sim \unif{1,3} \sep \sure{\own(p,o)}}{p \gets [1, 2, 3]}{
        \textstyle c \sim \unif{1,3} \sep \sure{\own(o)} \sep \bignd_{X \in \{1,2,3\}}\sure{p \mapsto X}}
      }
      \\
      (\ref{eq:monte-cases})
  }{
    \vdash_\weak\triple{c \sim \unif{1,3} \sep \sure{\own(p,o)}}{
      p \gets [1, 2, 3] \fatsemi
      o \gets [1, 2, 3] \setminus [c, p] \fatsemi
      p \coloneqq 6 - p - o
    }{(c = p) \sim \bern{\tfrac23}}
  } 
}{
  \vdash_\weak\triple{\sure{\own(c, p, o)}}{
    c \samp \unif{1, 3} \fatsemi
    p \gets [1, 2, 3] \fatsemi
    o \gets [1, 2, 3] \setminus [c, p] \fatsemi
    p \coloneqq 6 - p - o
  }{(c = p) \sim \bern{\tfrac23}}
} 
\]
Next, we move on to handle the last two commands. As an abbreviation, let $s = \{1,2,3\}$, let $S_X = s \setminus \{X\}$, and let $d_X = \unif{S_X}$. The first step is to apply the following consequence, which first distributes the random information into the $\bignd$ modality and then breaks the inner distribution into two parts, where $X\neq Y$ and where $X=Y$.
\begin{align*}
  \textstyle c \sim \unif{1,3} \sep \sure{\own(o)} \sep \bignd_{X \in \{1,2,3\}}\sure{p \mapsto X}
  \quad\Rightarrow&\quad
  \textstyle \bignd_{X \in \{1,2,3\}} \bigoplus_{Y \sim \unif{1,3}} \sure{c\mapsto Y \sep p \mapsto X \sep \own(o)}
  \\  \Rightarrow&\quad
  \textstyle\bignd_{X \in \{1,2,3\}} \big(  \big( \bigoplus_{Y \sim d_X} \sure{c \mapsto Y \sep p\mapsto X\sep\own(o)} \big) \oplus_\frac23 \sure{c \mapsto X \sep p \mapsto X \sep\own(o)} \big)
\end{align*}
Next, we simply use \ruleref{ND-Split2} and \textsc{Split} to do case analysis on the nondeterministic outcomes and the top-level random outcomes.
\begin{equation}\label{eq:monte-cases}
\inferrule*[right=Consequence]{
  \inferrule*[Right=ND-Split2]{
    \inferrule*[Right=Split]{
      (\ref{eq:monte-neq})
      \\
      (\ref{eq:monte-eq})
    }{
      \vdash_\weak\triple{\textstyle
      \big( \bigoplus_{Y \sim d_X} \sure{c \mapsto Y \sep p\mapsto X\sep\own(o)} \big) \oplus_\frac23 \sure{c \mapsto X \sep p \mapsto X \sep\own(o)}}{
        o \gets [1, 2, 3] \setminus [c, p] \fatsemi p \coloneqq 6 - p - o
      }{(c = p) \sim \bern{\tfrac23}}
    } 
  }{
    \vdash_\weak\triple{\textstyle
      \bignd_{X \in s} \big(  \big( \bigoplus_{Y \sim d_X} \sure{c \mapsto Y \sep p\mapsto X\sep\own(o)} \big) \oplus_\frac23 \sure{c \mapsto X \sep p \mapsto X \sep\own(o)} \big)}{
        o \gets [1, 2, 3] \setminus [c, p] \fatsemi p \coloneqq 6 - p - o
      }{(c = p) \sim \bern{\tfrac23}}
    } 
}{
  \vdash_\weak\triple{
    \textstyle c \sim \unif{1,3} \sep \sure{\own(o)} \sep \bignd_{X \in \{1,2,3\}}\sure{p \mapsto X}
  }{
        o \gets [1, 2, 3] \setminus [c, p] \fatsemi p \coloneqq 6 - p - o
  }{(c = p) \sim \bern{\tfrac23}}
} 
\end{equation}
We begin by discussing the case where $c\neq p$. To start, we simply apply \ruleref{Seq} to break up the sequential composition.
\begin{equation}\label{eq:monte-neq}
\inferrule*[right=Seq]{
  (\ref{eq:monte-neq-1})
  \\
  (\ref{eq:monte-neq-2})
}{
  \vdash_\weak\triple{\textstyle\bigoplus_{Y \sim d_X} \sure{c \mapsto Y \sep p\mapsto X \sep \own(o)}
    }{
        o \gets [1, 2, 3] \setminus [c, p] \fatsemi p \coloneqq 6 - p - o
      }{(c = p) \mapsto 1}
} 
\end{equation}
To handle the nondeterministic assignment to $o$, we first apply \textsc{Split} to open the $\bigoplus$ modality and then \ruleref{NAssign}. Since $Y \sim d_X$, we know that $Y \in \supp(d_X) = \{1,2,3\}\setminus \{X\}$, \ie $Y \neq X$. Therefore, the set $\{1,2,3\}\setminus\{Y,X\}$ must be a singleton set, and we can therefore collapse the $\bignd$ modality to the singular value $6 - X - Y$ using the rule of \ruleref{Consequence}.
\begin{equation}\label{eq:monte-neq-1}
\inferrule*[right=Consequence]{
  \inferrule*[Right=Split]{
    \inferrule*[Right=NAssign]{\;}{
      \vdash_\weak\triple{\sure{c \mapsto Y \sep p\mapsto X \sep \own(o)}
      }{
        o \gets [1, 2, 3] \setminus [c, p]
      }{\textstyle\sure{c \mapsto Y \sep p\mapsto X} \sep \bignd_{Z \in \{1,2,3\}\setminus \{Y,X\}} \sure{o \mapsto Z}}
    }
  }{
    \vdash_\weak\triple{\textstyle\bigoplus_{Y \sim d_X} \sure{c \mapsto Y \sep p\mapsto X \sep \own(o)}
    }{
        o \gets [1, 2, 3] \setminus [c, p]
      }{\textstyle \bigoplus_{Y \sim d_X} \sure{c \mapsto Y \sep p\mapsto X} \sep \bignd_{Z \in \{1,2,3\}\setminus \{Y,X\}} \sure{o \mapsto Z}}
  } 
}{
  \vdash_\weak\triple{\textstyle\bigoplus_{Y \sim d_X} \sure{c \mapsto Y \sep p\mapsto X \sep \own(o)}
    }{
        o \gets [1, 2, 3] \setminus [c, p]
      }{\textstyle \bigoplus_{Y \sim d_X} \sure{c \mapsto Y \sep p\mapsto X \sep o \mapsto 6 - X - Y}}
} 
\end{equation}
Moving on to the final command, we first apply \ruleref{Split2} to reason under the $\bigoplus$ modality and then \ruleref{Assign} to handle the deterministic assignment. That gives us that $p = 6-X-(6 - X - Y) = Y$. Given that $c=Y$ also, that means that $c = p$, so we can use a \ruleref{Consequence} to conclude that $\sure{(c=p)\mapsto 1}$ in all cases.
\begin{equation}\label{eq:monte-neq-2}
\inferrule*[right=Split2]{
  \inferrule*[Right=Consequence]{
    \inferrule*[Right=Assign]{\;}{
      \vdash_\weak\triple{\sure{c \mapsto Y \sep p\mapsto X \sep o \mapsto 6 - X - Y}}{
        p \coloneqq 6 - p - o
      }{\sure{c \mapsto Y \sep p \mapsto 6 - X - (6 - X - Y) \sep o \mapsto 6 - X - Y}}
    }
  }{
    \vdash_\weak\triple{\sure{c \mapsto Y \sep p\mapsto X \sep o \mapsto 6 - X - Y}}{
      p \coloneqq 6 - p - o
    }{\sure{(c = p) \mapsto 1}}
  } 
}{
  \vdash_\weak\triple{\textstyle \bigoplus_{Y \sim d_X} \sure{c \mapsto Y \sep p\mapsto X \sep o \mapsto 6 - X - Y}}{
    p \coloneqq 6 - p - o
  }{\sure{(c = p) \mapsto 1}}
} 
\end{equation}
Finally, we move on to the second case from (\ref{eq:monte-cases}), where $c = p$. Using mechanical applications of \ruleref{Seq}, \ruleref{NAssign}, \ruleref{ND-Split}, and \ruleref{Assign}, we get that $\bignd_{Z \in S_X} \sure{c\mapsto X\sep p\mapsto 6 - X- Z}$. Since $Z \in S_X$, then $Z \neq X$ and therefore $6 - X - Z\neq X$, so $c \neq p$, which means that we can use \ruleref{Consequence} to conclude that $\sure{(c=p)\mapsto 0}$.
\begin{equation}\label{eq:monte-eq}
\inferrule*[right=Seq]{
  \inferrule*[right=NAssign,vdots=6.5em,rightskip=30em]{\;}{
    \vdash_\weak\triple{\sure{c \mapsto X \sep p \mapsto X \sep \own(o)}}{o \gets [1, 2, 3] \setminus [c, p]}{
      \textstyle\bignd_{Z \in S_X} \sure{o \mapsto Z \sep c \mapsto X \sep p \mapsto X}}
  } 
  \inferrule*[Right=Consequence]{
    \inferrule*[Right=ND-Split]{
      \inferrule*[Right=Assign]{\;}{
        \vdash_\weak\triple{\sure{o \mapsto Z \sep c \mapsto X \sep p \mapsto X}}{p \coloneqq 6 - p - o}{\sure{c \mapsto X \sep p \mapsto 6-X-Z}}
      }
    }{
      \vdash_\weak\triple{\textstyle\bignd_{Z \in S_X}\sure{o \mapsto Z \sep c \mapsto X \sep p \mapsto X}}{p \coloneqq 6 - p - o}{\textstyle\bignd_{Z \in S_X}\sure{c \mapsto X \sep p \mapsto 6-X-Z}}
    } 
  }{
    \vdash_\weak\triple{\textstyle\bignd_{Z \in S_X} \sure{o \mapsto Z \sep c \mapsto X \sep p \mapsto X}}{p \coloneqq 6 - p - o}
      {(c = p) \mapsto 0}
  } 
}{
  \vdash_\weak\triple{\sure{c \mapsto X \sep p \mapsto X \sep \own(o)}}{
        o \gets [1, 2, 3] \setminus [c, p] \fatsemi p \coloneqq 6 - p - o
      }{(c = p) \mapsto 0}
} 
\end{equation}

\subsection{Leader Election}
\label{app:leader}

Recall the two party leader election protocol below from \Cref{sec:leader}.
\[
\begin{array}{l}
  x \coloneqq 0\fatsemi\; \\
  \whl{|x| \le 1}{} \\
  \quad \underbrace{(x \samp \unif{x, x+1}) \nd (x \samp \unif{x-1, x})}_{C_\code{body}}
\end{array}
\]
The following is the loop invariant for analyzing the program.
\begin{align*}
  \varphi_3 &\triangleq \big(\sure{x \mapsto 1} \nd \sure{x \mapsto 0} \nd \sure{x \mapsto -1}\big) \sep \sure{R=3}
  &
  \psi_3 &\triangleq \varphi_2 \nd (\varphi_0 \oplus_\frac12 \varphi_3)
  \\
  \varphi_2 &\triangleq \big(x \sim \unif{0,1} \nd x\sim\unif{-1, 0} \big) \sep \sure{R=2}
  &
  \psi_2 &\triangleq (\varphi_0 \oplus_\frac14 \varphi_3) \nd (\varphi_1 \oplus_\frac12 \varphi_3)
  \\
  \varphi_1 &\triangleq \big(\sure{x \mapsto 1} \oplus_\frac12 \sure{x \mapsto -1}\big) \sep \sure{R=1}
  &
  \psi_1 &\triangleq \varphi_0 \oplus_\frac14 \varphi_3
  \\
  \varphi_0 &\triangleq \big(\sure{x \mapsto 2} \nd \sure{x \mapsto -2}\big) \sep \sure{R=0} \\
  \varphi &\triangleq \varphi_0 \vee \varphi_1 \vee \varphi_2 \vee \varphi_3
\end{align*}
We begin by giving the top-level derivation for the program, which is quite mechanical in that it simply breaks up the sequential composition using \ruleref{Seq} and then applies the consequence $\sure{x \mapsto 0} \Rightarrow \varphi[3/R] \Rightarrow \bignd_{R=0}^3 \varphi$ so that we can then apply \ruleref{Bounded-Rank}.
\[
\inferrule*[right=Seq]{
  \inferrule*[right=Assign]{\;}{
    \vdash_\weak\triple{\sure{\own(x)}}{x \coloneqq 0}{\sure{x \mapsto 0}}
  } 
  \\
  \inferrule*[Right=Consequence]{
    \inferrule*[Right=Bounded-Rank]{
      (\ref{eq:leader-loop-body})
    }{
      \vdash_\weak\triple{\textstyle\bignd_{R=0}^3 \varphi}{\whl{|x|\le 1}{C_\code{body}}}{\sure{x\mapsto 2}\nd \sure{x \mapsto -2}}
    } 
  }{
    \vdash_\weak\triple{\sure{x \mapsto 0}}{\whl{|x|\le 1}{C_\code{body}}}{\sure{x\mapsto 2}\nd \sure{x \mapsto -2}}
  } 
}{
  \vdash_\weak\triple{\sure{\own(x)}}{x \coloneqq 0 \fatsemi \whl{|x|\le 1}{C_\code{body}}}{\sure{x\mapsto 2}\nd \sure{x \mapsto -2}}
} 
\]
To analyze the loop body, we start with the precondition of $\varphi\sep\sure{N=R>0}$. Since the rank is greater than 0, then it must be 1, 2, or 3. The proof proceeds by case analysis on the rank using the \textsc{Disj} rule. Note that for all $i \in \{1, 2, 3\}$, it holds that $\psi_i$ implies that the rank strictly decreases with probability at least $\frac14$, in other words:
\[
  \textstyle
    \psi_i \sep \sure{N=i}
    \quad\Rightarrow\quad
     \big(\bignd_{R=0}^{N-1} \varphi\big) \oplus_{\ge\frac14}\big(\bignd_{R=N}^{3} \varphi\big)
\]
That implication above justifies the use of the \ruleref{Consequence} rule.
\begin{equation}\label{eq:leader-loop-body}
  \inferrule*[right=Consequence]{
    \inferrule*[Right=Disj]{
      \inferrule*[right=Frame]{(\ref{eq:leader-R1})}{
        \vdash_\weak\triple{\varphi_1\sep\sure{N=1}}{C_\code{body}}{\psi_1\sep\sure{N=1}}
      }
      \\
      \inferrule*[Right=Disj]{
        \inferrule*[right=Frame,vdots=3em,rightskip=10em]{(\ref{eq:leader-R2})}{
          \vdash_\weak\triple{\varphi_2\sep\sure{N=2}}{C_\code{body}}{\psi_2\sep\sure{N=2}}
        }
        \inferrule*[right=Frame]{(\ref{eq:leader-R3})}{
          \vdash_\weak\triple{\varphi_3\sep\sure{N=3}}{C_\code{body}}{\psi_3\sep\sure{N=3}}
        }
      }{
        \vdash_\weak\triple{(\varphi_2 \sep\sure{N=2}) \vee (\varphi_3\sep\sure{N=3})}{C_\code{body}}{(\psi_2\sep\sure{N=2}) \vee (\psi_3\sep\sure{N=3})}
      } 
    }{
      \vdash_\weak\triple{(\varphi_1\sep\sure{N=1}) \vee (\varphi_2 \sep\sure{N=2}) \vee (\varphi_3\sep\sure{N=3})}{C_\code{body}}{(\psi_1\sep\sure{N=1}) \vee (\psi_2\sep\sure{N=2}) \vee (\psi_3\sep\sure{N=3})}
    } 
  }{
    \vdash_\weak\triple{\varphi\sep \sure{N = R > 0}}{C_\code{body}}{\textstyle\big(\bignd_{R=0}^{N-1} \varphi\big) \oplus_{\ge\frac14}\big(\bignd_{R=N}^{3} \varphi\big)}
  } 
\end{equation}
It now remains to prove the three cases. We begin with the $R=3$ case. The first step is to do case analysis on whether $x$ is 0, 1 or -1. The cases for 1 and -1 are symmetric, so we use \ruleref{ND-Split2} and obtain a single postcondition that applies to both cases.
\begin{equation}\label{eq:leader-R3}
\inferrule*[right=Consequence]{
  \inferrule*[Right=ND-Split]{
    (\ref{eq:leader-1-x0})
    \\
    \inferrule*[Right=ND-Split2]{
      (\ref{eq:leader-1-x1})
      \\
      (\ref{eq:leader-1-x-1})
    }{
      \vdash_\weak\triple{\sure{x \mapsto 1}\nd\sure{x\mapsto -1}}{(x \samp \unif{x, x+1}) \nd (x \samp \unif{x-1, x})}{(\varphi_0 \oplus_\frac12 \varphi_3) \nd\varphi_2}
    } 
  }{
    \vdash_\weak\triple{\sure{x\mapsto 0} \nd (\sure{x \mapsto 1}\nd \sure{x \mapsto -1})}{(x \samp \unif{x, x+1}) \nd (x \samp \unif{x-1, x})}{\varphi_2 \nd ((\varphi_0 \oplus_\frac12 \varphi_3) \nd\varphi_2)}
  } 
}{
  \vdash_\weak\triple{\varphi_3}{(x \samp \unif{x, x+1}) \nd (x \samp \unif{x-1, x})}{\varphi_2 \nd (\varphi_0 \oplus_\frac12 \varphi_3)}
}
\end{equation}
We start with the $x=0$ case. A straightforward application of \ruleref{ND} and \ruleref{Samp} gives us the postcondition $\varphi_2$.
\begin{equation}\label{eq:leader-1-x0}
    \inferrule*[Right=ND]{
      \inferrule*[right=Samp]{\;}{
        \vdash_\weak\triple{\sure{x \mapsto 0}}{x \samp \unif{x, x+1}}{x \sim \unif{0, 1}}
      } 
      \quad
      \inferrule*[right=Samp]{\;}{
        \vdash_\weak\triple{\sure{x \mapsto 0}}{x \samp \unif{x-1, x}}{x \sim \unif{-1, 0}}
      } 
    }{
      \vdash_\weak\triple{\sure{x \mapsto 0}}{(x \samp \unif{x, x+1}) \nd (x
        \samp \unif{x-1, x})}{x \sim \unif{0, 1} \nd x \sim \unif{-1, 0}}
    } 
\end{equation}
Now, for the $x=1$ case, we again use \ruleref{ND} and \ruleref{Samp} to obtain the postcondition $x \sim\unif{1,2} \nd x\sim\unif{0,1}$. In the left case, Alice wins ($\varphi_0$) with probability $\frac12$, otherwise we are back in state $\varphi_3$. Meanwhile, the right case implies $\varphi_2$. Therefore the final \ruleref{Consequence} below holds.
\begin{equation}\label{eq:leader-1-x1}
\inferrule*[right=Consequence]{
    \inferrule*[Right=ND]{
      \inferrule*[right=Samp]{\;}{
        \vdash_\weak\triple{\sure{x \mapsto 1}}{x \samp \unif{x, x+1}}{x \sim \unif{1, 2}}
      } 
      \\
      \inferrule*[Right=Samp]{\;}{
        \vdash_\weak\triple{\sure{x \mapsto 1}}{x \samp \unif{x-1, x}}{x \sim \unif{0, 1}}
      } 
    }{
      \vdash_\weak\triple{\sure{x \mapsto 1}}{(x \samp \unif{x, x+1}) \nd (x
        \samp \unif{x-1,x})}{x \sim \unif{1, 2} \nd x \sim \unif{0, 1}}
    } 
}{
  \vdash_\weak\triple{\sure{x \mapsto 1}}{(x \samp \unif{x, x+1}) \nd (x \samp \unif{x-1,x})}{(\varphi_0 \oplus_\frac12 \varphi_3) \nd \varphi_2}
} 
\end{equation}
The case where $x = -1$ is completely symmetric.
\begin{equation}\label{eq:leader-1-x-1}
\inferrule*[right=Consequence]{
    \inferrule*[Right=ND]{
      \inferrule*[right=Samp]{\;}{
        \vdash_\weak\triple{\sure{x \mapsto -1}}{x \samp \unif{x, x+1}}{x \sim \unif{-1, 0}}
      } 
      \quad
      \inferrule*[Right=Samp]{\;}{
        \vdash_\weak\triple{\sure{x \mapsto -1}}{x \samp \unif{x-1, x}}{x \sim \unif{-2, -1}}
      } 
    }{
      \vdash_\weak\triple{\sure{x \mapsto -1}}{(x \samp \unif{x, x+1}) \nd (x \samp \unif{x-1, x})}{x \sim \unif{-1, 0} \nd x \sim \unif{-2, -1}}
    } 
}{
  \vdash_\weak\triple{\sure{x \mapsto -1}}{(x \samp \unif{x, x+1}) \nd (x \samp
    \unif{x-1, x})}{(\varphi_0 \oplus_\frac12 \varphi_3) \nd \varphi_2}
} 
\end{equation}
Now, we move on to the case where $R=2$. Since the precondition $\varphi_2$ is composed from two nondeterministic outcomes, we must start by doing case analysis on those outcomes.
\begin{equation}\label{eq:leader-R2}
\inferrule*[right=ND-Split2]{
  (\ref{eq:leader-R2-incr})
  \\
  (\ref{eq:leader-R2-decr})
}{
  \vdash_\weak\triple{\varphi_2}{(x \samp \unif{x, x+1}) \nd (x \samp \unif{x-1,x})}{(\varphi_0 \oplus_\frac14 \varphi_3) \nd (\varphi_1 \oplus_\frac12 \varphi_3)}
}
\end{equation}
In the first case, $x \sim \unif{0,1}$. Since that precondition is stable, we can take advantage of the obliviousness of the adversary and immediately apply \ruleref{ND}. We then have two cases corresponding to whether the adversary increments or decrements $x$. In the case that $x$ is incremented, we get the postcondition $\bigoplus_{X\sim\unif{0,1}} x \sim \unif{X, X+1}$. That implies that $x=2$ with probability $\frac14$ (satisfying $\varphi_0$), and otherwise $x$ is 0 or 1 (satisfying $\varphi_3$). In the case that $x$ is decremented, we get:
\[\textstyle
  \bigoplus_{X\sim\unif{0,1}} x \sim \unif{X-1, X}
  \quad\Rightarrow\quad
  (\sure{x \mapsto 1} \oplus_\frac12 \sure{x \mapsto -1}) \oplus_\frac12 \sure{x \mapsto 0}
  \quad\Rightarrow\quad
  \varphi_1 \oplus_\frac12 \varphi_3
\]
which justifies the final \ruleref{Consequence}.
\begin{equation}\label{eq:leader-R2-incr}
    \inferrule*[right=ND\phantom{XXXXXX}]{
      \inferrule*[right=Consequence,vdots=6.5em,rightskip=15em]{
      \inferrule*[Right=Priv-Split1]{
        \vdash_\weak\triple{\sure{x \mapsto X}}{x \samp \unif{x, x+1}}{x \sim \unif{X, X+1}}
      }{
        \vdash_\weak\triple{x \sim \unif{0, 1}}{x \samp \unif{x, x+1}}{\textstyle\bigoplus_{X \sim \unif{0,1}} x \sim \unif{X, X+1}}
      } 
      }{
        \vdash_\weak\triple{x \sim \unif{0, 1}}{x \samp \unif{x, x+1}}{\varphi_0 \oplus_\frac14 \varphi_3}
      } 
      \inferrule*[Right=Consequence]{
      \inferrule*[Right=Priv-Split1]{
        \vdash_\weak\triple{\sure{x \mapsto X}}{x \samp \unif{x-1, x}}{x \sim \unif{X-1, X}}
      }{
        \vdash_\weak\triple{x \sim \unif{0, 1}}{x \samp \unif{x-1, x}}{\textstyle\bigoplus_{X \sim \unif{0,1}} x \sim \unif{X-1, X}}
      } 
      }{
        \vdash_\weak\triple{x \sim \unif{0, 1}}{x \samp \unif{x-1, x}}{\varphi_1 \oplus_\frac12 \varphi_3}
      } 
    }{
      \vdash_\weak\triple{x \sim \unif{0, 1}}{(x \samp \unif{x, x+1}) \nd (x \samp \unif{x-1,x})}{(\varphi_0 \oplus_\frac14 \varphi_3) \nd (\varphi_1 \oplus_\frac12 \varphi_3)}
    } 
\end{equation}
The case where instead $x \sim \unif{-1,0}$ is exactly symmetrical.
\begin{equation}\label{eq:leader-R2-decr}
    \inferrule*[right=ND\phantom{XXXXXX}]{
      \inferrule*[right=Consequence,vdots=6.5em,rightskip=20em]{
      \inferrule*[Right=Priv-Split1]{
        \vdash_\weak\triple{\sure{x \mapsto X}}{x \samp \unif{x, x+1}}{x \sim \unif{X, X+1}}
      }{
        \vdash_\weak\triple{x \sim \unif{-1, 0}}{x \samp \unif{x, x+1}}{\textstyle\bigoplus_{X \sim \unif{-1,0}} x \sim \unif{X, X+1}}
      } 
      }{
        \vdash_\weak\triple{x \sim \unif{-1, 0}}{x \samp \unif{x, x+1}}{\varphi_1 \oplus_\frac12 \varphi_3}
      } 
      \inferrule*[Right=Consequence]{
      \inferrule*[Right=Priv-Split1]{
        \vdash_\weak\triple{\sure{x \mapsto X}}{x \samp \unif{x-1, x}}{x \sim \unif{X-1, X}}
      }{
        \vdash_\weak\triple{x \sim \unif{-1, 0}}{x \samp \unif{x-1, x}}{\textstyle\bigoplus_{X \sim \unif{-1,0}} x \sim \unif{X-1, X}}
      } 
      }{
        \vdash_\weak\triple{x \sim \unif{-1, 0}}{x \samp \unif{x-1, x}}{\varphi_0 \oplus_\frac14 \varphi_3}
      } 
    }{
      \vdash_\weak\triple{x \sim \unif{-1, 0}}{(x \samp \unif{x, x+1}) \nd (x \samp \unif{x-1,x})}{(\varphi_0 \oplus_\frac14 \varphi_3) \nd (\varphi_1 \oplus_\frac12 \varphi_3)}
    } 
\end{equation}
Finally, we move on to the case where $R=1$. Since $\varphi_1 = \sure{x\mapsto 1}\oplus_\frac12\sure{x \mapsto -1}$ is stable, we can immediately take advantage of obliviousness and apply \ruleref{ND} to handle the two nondeterministic cases.
\begin{equation}\label{eq:leader-R1}
\inferrule*[right=Consequence]{
\inferrule*[Right=ND]{
  (\ref{eq:leader-R1-incr})
  \\
  (\ref{eq:leader-R1-decr})
}{
  \vdash_\weak\triple{\varphi_1}{(x \samp \unif{x, x+1}) \nd (x \samp \unif{x-1,x})}{(\varphi_0 \oplus_\frac14 \varphi_3) \nd (\varphi_0 \oplus_\frac14 \varphi_3)}
} 
}{
  \vdash_\weak\triple{\varphi_1}{(x \samp \unif{x, x+1}) \nd (x \samp \unif{x-1,x})}{\varphi_0 \oplus_\frac14 \varphi_3}
}
\end{equation}
In the first case where $x$ is incremented, we start by breaking up the random outcomes using \ruleref{Priv-Split1} (note that the program contains no nondeterminism, so the number of `ticks' is clearly 0 in both random branches). After applying \ruleref{Samp}, we clearly see that $x=2$ with probability $\frac14$.
\begin{equation}\label{eq:leader-R1-incr}
\inferrule*[right=Consequence]{
  \inferrule*[Right=Priv-Split1]{
    \inferrule*[right=Samp]{\;}{
      \vdash_\weak\triple{\sure{x\mapsto 1}}{x \samp\unif{x, x+1}}{x \sim \unif{1, 2}}
    }
    \\
    \inferrule*[Right=Samp]{\;}{
      \vdash_\weak\triple{\sure{x\mapsto -1}}{x \samp\unif{x, x+1}}{x \sim \unif{-1, 0}}
    }  
  }{
    \vdash_\weak\triple{\sure{x\mapsto 1}\oplus_\frac12\sure{x \mapsto -1}}{x \samp\unif{x, x+1}}{x \sim \unif{1, 2} \oplus_\frac12 x \sim \unif{-1, 0}}
  }
}{
  \vdash_\weak\triple{\sure{x\mapsto 1}\oplus_\frac12\sure{x \mapsto -1}}{x \samp\unif{x, x+1}}{\varphi_0 \oplus_\frac14 \varphi_3}
}
\end{equation}
The case where $x$ is instead decremented is completely symmetric.
\begin{equation}\label{eq:leader-R1-decr}
\inferrule*[right=Consequence]{
  \inferrule*[Right=Priv-Split1]{
    \inferrule*[right=Samp]{\;}{
      \vdash_\weak\triple{\sure{x\mapsto 1}}{x \samp\unif{x-1, x}}{x \sim \unif{0, 1}}
    }
    \\
    \inferrule*[Right=Samp]{\;}{
      \vdash_\weak\triple{\sure{x\mapsto -1}}{x \samp\unif{x-1, x}}{x \sim \unif{-2, -1}}
    }  
  }{
    \vdash_\weak\triple{\sure{x\mapsto 1}\oplus_\frac12\sure{x \mapsto -1}}{x \samp\unif{x-1, x}}{x \sim \unif{0, 1} \oplus_\frac12 x \sim \unif{-2, -1}}
  }
}{
  \vdash_\weak\triple{\sure{x\mapsto 1}\oplus_\frac12\sure{x \mapsto -1}}{x \samp\unif{x-1, x}}{\varphi_0 \oplus_\frac14 \varphi_3}
}
\end{equation}

\subsection{Online Algorithms: Paging and Caching}
\label{app:paging}

Recall the paging algorithm from \Cref{sec:paging} below. Let $C_\code{if}$ be the contents of the `then' branch of the if statement and $C_\code{body}$ be the body of the while loop:
\begin{mathpar}
\begin{array}{l}
\quad  c \samp \bern{\tfrac12}\fatsemi i\coloneqq 0\fatsemi m \coloneqq 0\fatsemi \; \\
\quad  \whl{i < n}{} \\
  \qquad i \coloneqq i+1\fatsemi\; \\
  \qquad r \gets [0,1]\fatsemi\; \\
  \qquad \ift{r \neq c}{} \\
  \quad\qquad c \samp \bern{\frac12} \fatsemi\; \\
  \quad\qquad m \coloneqq m+1
\end{array}

  C_\code{if} \triangleq c\samp\bern{\tfrac12}\fatsemi m\coloneqq m+1

  C_\code{body} \triangleq i\coloneqq i+1 \fatsemi r \gets[0,1] \fatsemi \ift{r \neq c}{C_\code{if}}
\end{mathpar}
Recall also the loop invariant below. The rank $R$ indicates the number of iterations until termination, which is given by $n-i$. Each iteration, the rank decreases with probability exactly 1, since the program terminates deterministically.
\begin{mathpar}
  \varphi_{\textsf{pure}} \triangleq \sure{i \mapsto N-R \sep n \mapsto N \sep \own(r) \sep 0 \le R \le N}

  \varphi_{\textsf{miss}} \triangleq \sure{m \mapsto N-R} \sep c\sim\bern{\tfrac12}

  \varphi_{\textsf{hit}} \triangleq \textstyle\bignd_{X=0}^{N-R-1} \bignd_{Y\in\{0,1\}} \sure{m\mapsto X\sep c \mapsto Y}

  \varphi \triangleq \varphi_{\textsf{pure}} \sep \big( \varphi_{\textsf{miss}} \oplus_{\frac1{2^{N-R}}} \varphi_{\textsf{hit}} \big)
\end{mathpar}
We now show the derivation, starting with the top level program. Analyzing the initialization sequence involves straightforward applications of \ruleref{Seq}, \ruleref{Samp}, and \ruleref{Assign}. The loop invariant specialized to $R=0$ ($\varphi[0/R]$) clearly implies the final postcondition by removing information about the variables $i$, $n$, $c$, and $r$.
\[
\inferrule*[right=Consequence]{
\inferrule*[Right=Seq]{
  \inferrule*[right=Samp,vdots=10em,rightskip=36.5em]{\;}{
    \vdash_\weak\triple{\sure{\own(c, i, m, r) \sep n \mapsto N}}{ c \samp\bern{\tfrac12}}{\sure{\own(i, m, r) \sep n \mapsto N} \sep c \sim \bern{\tfrac12}}
  } 
  \inferrule*[Right=Seq]{
    \inferrule*[right=Assign,vdots=5em,rightskip=41.5em]{\;}{
      \vdash_\weak\triple{\sure{\own(i, m, r) \sep n \mapsto N} \sep c \sim \bern{\tfrac12}}{i\coloneqq 0}{\sure{i \mapsto 0\sep \own(m, r) \sep n \mapsto N} \sep c \sim \bern{\tfrac12}}
    } 
    \inferrule*[Right=Seq]{
      \inferrule*[right=Assign]{\;}{
        \vdash_\weak\triple{\sure{i \mapsto 0\sep \own(m, r) \sep n \mapsto N} \sep c \sim \bern{\tfrac12}}{m\coloneqq 0}{\sure{i \mapsto 0\sep m\mapsto 0\sep \own(r) \sep n \mapsto N} \sep c \sim \bern{\tfrac12}}
      } 
      \quad
      (\ref{eq:paging-loop})
    }{
      \vdash_\weak\triple{\sure{i \mapsto 0\sep \own(m, r) \sep n \mapsto N} \sep c \sim \bern{\tfrac12}}{m\coloneqq 0 \fatsemi \whl{i<n}{C_\code{body}}}{\varphi[0/R]}
    }
  }{
    \vdash_\weak\triple{\sure{\own(i, m, r) \sep n \mapsto N} \sep c \sim \bern{\tfrac12}}{i\coloneqq 0\fatsemi m\coloneqq 0 \fatsemi \whl{i<n}{C_\code{body}}}{\varphi[0/R]}
  } 
}{
  \vdash_\weak\triple{\sure{\own(c, i, m, r) \sep n \mapsto N}}{ c \samp\bern{\tfrac12}\fatsemi i\coloneqq 0\fatsemi m\coloneqq 0 \fatsemi \whl{i<n}{C_\code{body}}}{\varphi[0/R]}
} 
}{
  \vdash_\weak\triple{\sure{\own(c, i, m, r) \sep n \mapsto N}}{ c \samp\bern{\tfrac12}\fatsemi i\coloneqq 0\fatsemi m\coloneqq 0 \fatsemi \whl{i<n}{C_\code{body}}}{\textstyle\sure{m\mapsto N} \oplus_{\frac1{2^N}} \bignd_{X = 0}^{N-1} \sure{m \mapsto X}}
} 
\]
To handle the while loop, we first use the following \ruleref{Consequence} so that we can apply \ruleref{Bounded-Rank}.
\[\textstyle
  \sure{i \mapsto 0\sep m\mapsto 0\sep \own(r) \sep n \mapsto N} \sep c \sim \bern{\tfrac12}
  \quad\Leftrightarrow\quad
  \varphi[N/R]
  \quad\Rightarrow\quad
  \bignd_{R=0}^N\varphi
\]
Inside the loop, we use \ruleref{Seq} and \ruleref{Assign} to handle the first command of $C_\code{body}$, which is the increment operation $i\coloneqq i+1$. The derivation for the remainder of the body is shown in (\ref{eq:paging-body}). At the end of $C_\code{body}$, we reestablish the loop invariant with the rank decreased by 1 $\varphi[R-1/R]$, which gives us:
\[\textstyle
  \varphi[R-1/R]
  \quad\Rightarrow\quad
  \bignd_{R=0}^{R-1} \varphi  
  \quad\Rightarrow\quad
  \left(  \bignd_{R=0}^{R-1} \varphi \right) \oplus_{\ge 1}\left(  \bignd_{R=R}^{N} \varphi \right)
\]
which justifies the \ruleref{Consequence} on the postcondition of the loop body.
\begin{equation}\label{eq:paging-loop}
\inferrule*[right=Consequence]{
  \inferrule*[Right=Bounded-Rank]{
    \inferrule*[Right=Consequence]{
    \inferrule*[Right=Seq]{
      \inferrule*[right=Assign]{\;}{
        \vdash_\weak\triple{\varphi \sep \sure{R > 0}}{i \coloneqq i+1}{(\varphi_\mathsf{miss}\oplus_{\frac1{2^{N-R}}} \varphi_\mathsf{hit})\sep \sure{i \mapsto N-(R-1) \sep n \mapsto N \sep \own(r) \sep  0 \le R-1 \le N}}
      } 
      \\
      (\ref{eq:paging-body})
    }{
      \vdash_\weak\triple{\varphi \sep \sure{R > 0}}{C_\code{body}}{\varphi[R-1/R]}
    } 
    }{
      \vdash_\weak\triple{\varphi \sep \sure{R > 0}}{C_\code{body}}{\textstyle\big(\bignd_{R=0}^{R-1} \varphi\big) \oplus_{\ge 1}\big(\bignd_{R=R}^{N} \varphi\big)}
    } 
  }{
    \vdash_\weak\triple{\textstyle\bignd_{R=0}^N \varphi}{\whl{i<n}{C_\code{body}}}{\varphi[0/R]}
  } 
}{
  \vdash_\weak\triple{\sure{i \mapsto 0\sep m\mapsto 0\sep \own(r) \sep n \mapsto N} \sep c \sim \bern{\tfrac12}}{\whl{i<n}{C_\code{body}}}{\varphi[0/R]}
} 
\end{equation}
Now, moving on to the remainder of the loop body, the first step is to use the \ruleref{Frame} rule to remove the static information about $i$ and $n$. The next command is the adversarial choice $r \gets[0, 1]$, but the precondition is not stable since $\varphi_\mathsf{hit}$ contains $\bignd$'s. So we must proceed by case analysis using \ruleref{Priv-Split1}. At the end, we can re-associate the \textsc{hit} and \textsc{miss} terms using \ruleref{Consequence}, and using the fact that $\frac12\cdot \frac1{2^{N-R}}=\frac1{2^{N-(R-1)}}$, we get the following.
\begin{equation}\label{eq:paging-body}
  \inferrule*[Right=Frame]{
    \inferrule*[Right=Consequence]{
      \inferrule*[Right=Priv-Split1,leftskip=3em,rightskip=5em]{
        (\ref{eq:paging-miss})
        \\
        (\ref{eq:paging-hit})
      }{
        \vdash_\weak\triple{(\varphi_\mathsf{miss}\sep\sure{\own(r)}) \oplus_{\frac1{2^{N-R}}} (\varphi_\mathsf{hit}\sep\sure{\own(r)})}{r \gets [0, 1]\fatsemi \ift{r\neq c}{C_\code{if}}}{
          (( \varphi_\mathsf{miss} \oplus_\frac12 \varphi_\mathsf{hit})[R-1/R] \sep \sure{\own(r)} )
            \oplus_{\frac1{2^{N-R}}}
            (\varphi_\mathsf{hit}[R-1/R] \sep \sure{\own(r)})}
      } 
    }{
      \vdash_\weak\triple{(\varphi_\mathsf{miss}\oplus_{\frac1{2^{N-R}}} \varphi_\mathsf{hit})\sep \sure{\own(r)}}{ r \gets [0, 1]\fatsemi \ift{r\neq c}{C_\code{if}}}{(\varphi_\mathsf{miss}[R-1/R]\oplus_{\frac1{2^{N-(R-1)}}} \varphi_\mathsf{hit}[R-1/R])\sep \sure{\own(r)}}
    } 
  }{
    \vdash_\weak\triple{(\varphi_\mathsf{miss}\oplus_{\frac1{2^{N-R}}} \varphi_\mathsf{hit})\sep \sure{i \mapsto N-(R-1) \sep n \mapsto N \sep \own(r) \sep  0 \le R-1 \le N}}{ r \gets [0, 1]\fatsemi \ift{r\neq c}{C_\code{if}}}{\varphi[R-1/R]}
  } 
\end{equation}
In the first case of the \ruleref{Priv-Split1} from (\ref{eq:paging-body}), all of the prior requests were cache misses. Since $\varphi_\mathsf{miss}$ is stable, we can take advantage of the obliviousness of the scheduler and immediately use \ruleref{NAssign} to handle $r \gets[0,1]$. Now, for any fixed value of $X$, $c\sim\bern{\tfrac12}$, which means that $c$ is equally likely to be equal to $X$ (\ie $\sure{c \mapsto \xor(X, 0)}$) or not equal to $X$ (\ie $\sure{c \mapsto \xor(X, 1)}$), which motivates the \ruleref{Consequence} below.
\begin{equation}\label{eq:paging-miss}
\inferrule*[right=Seq]{
  \inferrule*[right=Consequence]{
      \inferrule*[Right=NAssign]{\;}{
        \vdash_\weak\triple{\varphi_\mathsf{miss}\sep  \sure{\own(r)}}{r \gets [0, 1]}{ \textstyle\bignd_{X \in\{0, 1\}} \varphi_\mathsf{miss}\sep \sure{r \mapsto X}}
      } 
  }{
    \vdash_\weak\triple{\varphi_\mathsf{miss}\sep  \sure{\own(r)}}{r \gets [0, 1]}{\textstyle\bignd_{X \in\{0, 1\}} \bigoplus_{Y\sim\bern{\frac12}}\sure{r \mapsto X \sep c \mapsto \xor(X,Y) \sep m \mapsto N-R}}
  } 
  \\
  (\ref{eq:paging-miss-if})
}{
  \vdash_\weak\triple{\varphi_\mathsf{miss}\sep  \sure{\own(r)}}{r \gets [0, 1]\fatsemi \ift{r\neq c}{C_\code{if}}}{{(\varphi_\mathsf{miss} \oplus_\frac12 \varphi_\mathsf{hit})[R-1/R] \sep\sure{\own(r)}}}
}
\end{equation}
Moving on to the if statement, the two nondeterministic outcomes over $X$ will yield the same result, so we use \ruleref{ND-Split2} to do case analysis on $X$. We then use \ruleref{Priv-Split1} to do case analysis on the random outcomes, which correspond to the `then' and `else' branches of the if statement.
\begin{equation}\label{eq:paging-miss-if}
  \inferrule*[right={ND-Split2\phantom{XXXX}}]{
    \inferrule*[Right=Consequence]{
    \inferrule*[Right=Priv-Split1]{
      (\ref{eq:paging-miss-if-then})
      \\
      (\ref{eq:paging-miss-if-else})
    }{
      \vdash_\weak\triple{\textstyle\bigoplus_{Y\sim\bern{\frac12}}\sure{r \mapsto X \sep c \mapsto \xor(X,Y) \sep m \mapsto N-R}}{\ift{r\neq c}{C_\code{if}}}{(\varphi_\mathsf{miss}[R-1/R] \sep\sure{\own(r)}) \oplus_\frac12 (\varphi_\mathsf{hit}[R-1/R] \sep\sure{\own(r)})}
    } 
    }{
      \vdash_\weak\triple{\textstyle\bigoplus_{Y\sim\bern{\frac12}}\sure{r \mapsto X \sep c \mapsto \xor(X,Y) \sep m \mapsto N-R}}{\ift{r\neq c}{C_\code{if}}}{(\varphi_\mathsf{miss} \oplus_\frac12 \varphi_\mathsf{hit})[R-1/R] \sep\sure{\own(r)}}
    } 
  }{
    \vdash_\weak\triple{
      \textstyle\bignd_{X \in\{0, 1\}} \bigoplus_{Y\sim\bern{\frac12}}\sure{r \mapsto X \sep c \mapsto \xor(X,Y) \sep m \mapsto N-R}
    }{
      \ift{r\neq c}{C_\code{if}}
    }{
      {(\varphi_\mathsf{miss} \oplus_\frac12 \varphi_\mathsf{hit})[R-1/R] \sep\sure{\own(r)}}
    }  
  } 
\end{equation}
For the `then' branch, we first use \ruleref{IfT} to move into the if statement. A simple application of \ruleref{Samp} and \ruleref{Assign} gets us the postcondition:
\[
  \sure{r \mapsto X \sep m \mapsto N-(R-1)} \sep c\sim\bern{\tfrac12}
  \quad\Rightarrow\quad
  \left(\sure{m \mapsto N-(R-1)} \sep c\sim\bern{\tfrac12} \right) \sep \sure{\own(r)}
  \quad=\quad
  \varphi_\mathsf{miss}[R-1/R] \sep\sure{\own(r)}
\]
Therefore, we can use the \ruleref{Consequence} below.
\begin{equation}\label{eq:paging-miss-if-then}
\inferrule*[right=IfT]{
  \inferrule*[Right=Seq]{
    \inferrule*[right=Samp,vdots=5em,rightskip=35em]{\;}{
      \vdash_\weak\triple{\sure{r \mapsto X \sep c \mapsto \xor(X,1) \sep m \mapsto N-R}}{c \samp \bern{\tfrac12}}{\sure{r \mapsto X \sep m \mapsto N-R} \sep c\sim\bern{\tfrac12}}
    } 
    \inferrule*[Right=Consequence]{
      \inferrule*[Right=Assign]{\;}{
        \vdash_\weak\triple{\sure{r \mapsto X \sep m \mapsto N-R} \sep c\sim\bern{\tfrac12}}{m \coloneqq m+1}{\sure{r \mapsto X \sep m \mapsto N-(R - 1)} \sep c\sim\bern{\tfrac12}}
      } 
    }{
      \vdash_\weak\triple{\sure{r \mapsto X \sep m \mapsto N-R} \sep c\sim\bern{\tfrac12}}{m \coloneqq m+1}{\varphi_\mathsf{miss}[R-1/R] \sep \sure{\own(r)}}
    } 
  }{
    \vdash_\weak\triple{\sure{r \mapsto X \sep c \mapsto \xor(X,1) \sep m \mapsto N-R}}{c \samp \bern{\tfrac12} \fatsemi m \coloneqq m+1}{\varphi_\mathsf{miss}[R-1/R] \sep \sure{\own(r)}}
  } 
}{
  \vdash_\weak\triple{\sure{r \mapsto X \sep c \mapsto \xor(X,1) \sep m \mapsto N-R}}{\iftf{r\neq c}{C_\code{if}}\skp}{\varphi_\mathsf{miss}[R-1/R] \sep \sure{\own(r)}}
} 
\end{equation}
The `else' branch is simply a $\skp$ command, so using the following \ruleref{Consequence}:
\[
  \sure{r \mapsto X \sep c \mapsto \xor(X,0) \sep m \mapsto N-R}
  \quad\Rightarrow\quad
  \textstyle\left(\bignd_{X=0}^{N-(R-1)-1} \bignd_{Y\in\{0,1\}} \sure{m\mapsto X\sep c \mapsto Y}\right) \sep \sure{\own(r)}
  \quad=\quad
  \varphi_\mathsf{hit}[R-1/R] \sep  \sure{\own(r)}
\]
We can complete the proof as follows using \ruleref{IfF} and \ruleref{Skip}.
\begin{equation}\label{eq:paging-miss-if-else}
\inferrule*[right=Consequence]{
  \inferrule*[Right=IfF]{
    \inferrule*[Right=Skip]{\;}{
      \vdash_\weak\triple{\sure{r \mapsto X \sep c \mapsto \xor(X,0) \sep m \mapsto N-R}}{\skp}{\sure{r \mapsto X \sep c \mapsto \xor(X,0) \sep m \mapsto N-R}}
    } 
  }{
    \vdash_\weak\triple{\sure{r \mapsto X \sep c \mapsto \xor(X,0) \sep m \mapsto N-R}}{\iftf{r\neq c}{C_\code{if}}\skp}{\sure{r \mapsto X \sep c \mapsto \xor(X,0) \sep m \mapsto N-R}}
  } 
}{
  \vdash_\weak\triple{\sure{r \mapsto X \sep c \mapsto \xor(X,0) \sep m \mapsto N-R}}{\iftf{r\neq c}{C_\code{if}}\skp}{\varphi_\mathsf{hit}[R-1/R] \sep \sure{\own(r)}}
} 
\end{equation}
We now proceed with the second case of the \ruleref{Priv-Split1} from (\ref{eq:paging-body}), where there has been at least one prior cache hit. Analyzing the adversarial choice $r \gets[0,1]$ is a bit harder because the precondition is not stable. We show that derivation in (\ref{eq:paging-hit-nd}), which essentially results in $r$ being nondeterministically chosen to be either equal or not equal to $c$. We then do case analysis with \ruleref{ND-Split2} to analyze the two branches of the if statement.
\begin{equation}\label{eq:paging-hit}
\inferrule*[right=Seq]{
  (\ref{eq:paging-hit-nd})
  \\
  \inferrule*[Right=ND-Split2]{
      (\ref{eq:paging-hit-if-then})
      \\
      (\ref{eq:paging-hit-if-else})
  }{
    \vdash_\weak\triple{
      \textstyle \bignd_{Z \in \{0,1\}} \bignd_{X=0}^{N-R-1} \bignd_{Y \in \{0, 1\}} \sure{m \mapsto X \sep c \mapsto Y \sep r \mapsto \xor(Y,Z)}
    }{
      \ift{r\neq c}{C_\code{if}}
    }{
      {\varphi_\mathsf{hit}[R-1/R] \sep\sure{\own(r)}}
    }
  } 
}{
  \vdash_\weak\triple{\varphi_\mathsf{hit}\sep  \sure{\own(r)}}{r \gets [0, 1]\fatsemi \ift{r\neq c}{C_\code{if}}}{{\varphi_\mathsf{hit}[R-1/R] \sep\sure{\own(r)}}}
} 
\end{equation}
To analyze the $r\gets[0,1]$ command, we use \ruleref{ND-Split} twice to get to a pure assertion, which is stable, so that we can apply \ruleref{NAssign}. After reapplying the $\bignd$ modalities, we use a \ruleref{Consequence} to rearrange them and to rewrite the assertion about $r$ to be $\sure{r\mapsto \xor(Y, Z)}$, similar to what we did in (\ref{eq:paging-miss}).
\begin{equation}\label{eq:paging-hit-nd}
\inferrule*[right=Consequence]{
  \inferrule*[Right=ND-Split]{
    \inferrule*[Right=ND-Split]{
      \inferrule*[Right=NAssign]{\;}{
        \vdash_\weak\triple{\sure{m \mapsto X \sep c \mapsto Y \sep \own(r)}}{r \gets [0, 1]}{
           \textstyle\bignd_{Z \in \{0,1\}} \sure{m \mapsto X \sep c \mapsto Y \sep r \mapsto Z}
        }
      }
    }{
      \vdash_\weak\triple{
        \textstyle\bignd_{Y \in \{0, 1\}} \sure{m \mapsto X \sep c \mapsto Y \sep \own(r)}
      }{
        r \gets [0, 1]
      }{
        \textstyle\bignd_{Y \in \{0, 1\}} \bignd_{Z \in \{0,1\}} \sure{m \mapsto X \sep c \mapsto Y \sep r \mapsto Z}
      }
    } 
  }{
    \vdash_\weak\triple{
      \textstyle\bignd_{X=0}^{N-R-1} \bignd_{Y \in \{0, 1\}} \sure{m \mapsto X \sep c \mapsto Y \sep \own(r)}
    }{
      r \gets [0, 1]
    }{
      \textstyle\bignd_{X=0}^{N-R-1} \bignd_{Y \in \{0, 1\}}\bignd_{Z \in \{0,1\}} \sure{m \mapsto X \sep c \mapsto Y \sep r \mapsto Z}
    }
  } 
}{
  \vdash_\weak\triple{\varphi_\mathsf{hit}\sep  \sure{\own(r)}}{r \gets [0, 1]}{
    \textstyle \bignd_{Z \in \{0,1\}} \bignd_{X=0}^{N-R-1} \bignd_{Y \in \{0, 1\}} \sure{m \mapsto X \sep c \mapsto Y \sep r \mapsto \xor(Y,Z)}
  }
} 
\end{equation}
Now, for the `then' branch of the if statement from (\ref{eq:paging-hit}), we first apply \ruleref{IfT}, \ruleref{Seq}, and \ruleref{Samp} to move inside the if statement and dispatch the cache eviction command $c \samp \bern{\tfrac12}$. We then apply \ruleref{ND-Split} and \ruleref{Assign} to handle the cache miss increment. Finally, since $X$ is at most $N-R-1$, then $X+1$ is at most $N- (R-1) -1$, so we can use a \ruleref{Consequence} to conclude that $\varphi_\mathsf{hit}\sep \sure{\own(r)}$ holds at the end.
\begin{equation}\label{eq:paging-hit-if-then}
\inferrule*[right=IfT]{
  \inferrule*[Right=Seq]{
    \inferrule*[right=Samp,vdots=6.5em,rightskip=45em]{\;}{
      \vdash_\weak\triple{
        \textstyle \bignd_{X=0}^{N-R-1} \bignd_{Y \in \{0, 1\}} \sure{m \mapsto X \sep c \mapsto Y \sep r \mapsto \xor(Y,1)}
      }{c \samp \bern{\tfrac12}}{
        \textstyle (\bignd_{X=0}^{N-R-1} \sure{m \mapsto X \sep \own(r)}) \sep c\sim\bern{\tfrac12}}
    } 
    \inferrule*[Right=Consequence]{
      \inferrule*[Right=ND-Split]{
        \inferrule*[Right=Assign]{\;}{
          \vdash_\weak\triple{\sure{m \mapsto X \sep \own(r)} \sep c\sim\bern{\tfrac12}}{m \coloneqq m+1}{\sure{m \mapsto X+1 \sep \own(r)} \sep c\sim\bern{\tfrac12}}
        } 
      }{
        \vdash_\weak\triple{
          \textstyle \bignd_{X=0}^{N-R-1} \sure{m \mapsto X \sep \own(r)} \sep c\sim\bern{\tfrac12}
        }{m \coloneqq m+1}{
          \textstyle \bignd_{X=0}^{N-R-1} \sure{m \mapsto X+1 \sep \own(r)} \sep c\sim\bern{\tfrac12}
        }
      } 
    }{
        \vdash_\weak\triple{
          \textstyle \bignd_{X=0}^{N-R-1} (\sure{m \mapsto X \sep \own(r)}) \sep c\sim\bern{\tfrac12}
        }{m \coloneqq m+1}{\varphi_\mathsf{hit}[R-1/R] \sep \sure{\own(r)}}    
    } 
  }{
    \vdash_\weak\triple{
      \textstyle \bignd_{X=0}^{N-R-1} \bignd_{Y \in \{0, 1\}} \sure{m \mapsto X \sep c \mapsto Y \sep r \mapsto \xor(Y,1)}
    }{c \samp \bern{\tfrac12} \fatsemi m \coloneqq m+1}{\varphi_\mathsf{hit}[R-1/R] \sep \sure{\own(r)}}
  } 
}{
  \vdash_\weak\triple{
    \textstyle \bignd_{X=0}^{N-R-1} \bignd_{Y \in \{0, 1\}} \sure{m \mapsto X \sep c \mapsto Y \sep r \mapsto \xor(Y,1)}
  }{\iftf{r\neq c}{C_\code{if}}\skp}{\varphi_\mathsf{hit}[R-1/R] \sep \sure{\own(r)}}
} 
\end{equation}
The `false' branch is simple because the command being executed is just $\skp$.
\begin{equation}\label{eq:paging-hit-if-else}
\inferrule*[right=IfF]{
  \inferrule*[Right=Consequence]{
    \inferrule*[Right=Skip]{\;}{
      \vdash_\weak\triple{
        \varphi_\mathsf{hit}[R-1/R] \sep \sure{\own(r)}
      }{\skp}{\varphi_\mathsf{hit}[R-1/R] \sep \sure{\own(r)}}
    } 
  }{
    \vdash_\weak\triple{
    \textstyle \bignd_{X=0}^{N-R-1} \bignd_{Y \in \{0, 1\}} \sure{m \mapsto X \sep c \mapsto Y \sep r \mapsto \xor(Y,0)}
  }{\skp}{\varphi_\mathsf{hit}[R-1/R] \sep \sure{\own(r)}}
  } 
}{
  \vdash_\weak\triple{
    \textstyle \bignd_{X=0}^{N-R-1} \bignd_{Y \in \{0, 1\}} \sure{m \mapsto X \sep c \mapsto Y \sep r \mapsto \xor(Y,0)}
  }{\iftf{r\neq c}{C_\code{if}}\skp}{\varphi_\mathsf{hit}[R-1/R] \sep \sure{\own(r)}}
} 
\end{equation}

\end{landscape}

\end{document}